\title{Optimizing Mean Field Spin Glasses with External Field}
\author{Mark Sellke}
\date{}
\DeclareSymbolFont{rsfs}{U}{rsfs}{m}{n}
\DeclareSymbolFontAlphabet{\mathscrsfs}{rsfs}
\numberwithin{equation}{section}
\def\wt{\widetilde}
\def\wh{\widehat}
\def\cuU{\mathscrsfs{U}}
\def\cuL{\mathscrsfs{L}}
\newcommand{\R}{\mathbb{R}}
\newcommand{\ul}{\underline{\ell}}
\newcommand{\ubl}{\overline{\ell}}
\def\Par{{\sf P}}
\def\supp{{\overline S}}
\def\ss{\mathsf{sph}}
\def\bf{{\boldsymbol{f}}}
\def\bg{{\boldsymbol{g}}}
\def\bm{{\boldsymbol{m}}}
\def\bn{{\boldsymbol{n}}}
\def\bv{{\boldsymbol{v}}}
\def\bw{{\boldsymbol{w}}}
\def\bx{{\boldsymbol{x}}}
\def\by{{\boldsymbol{y}}}
\def\bz{{\boldsymbol{z}}}
\def\bW{{\boldsymbol{w}}}
\def\bsig{{\boldsymbol{\sigma}}}
\def\bbE{{\mathbb E}}
\newcommand{\ubq}{\bar{q}}
\newcommand{\lbq}{\underline{q}}
\newtheorem{claim}{Claim}[section]
\newtheorem{lemma}[claim]{Lemma}
\newtheorem{theorem}{Theorem}
\newtheorem{proposition}[claim]{Proposition}
\newtheorem{corollary}[claim]{Corollary}
\newtheorem{definition}[claim]{Definition}
\newtheorem{remark}[claim]{Remark}
\newcommand{\bea}{\begin{eqnarray}}
\newcommand{\eea}{\end{eqnarray}}
\newcommand{\<}{\langle}
\renewcommand{\>}{\rangle}
\newcommand{\E}{{\mathbb E}}
\def\la{\langle}
\def\ra{\rangle}
\def\lt{\left}
\def\rt{\right}
\def\eps{{\varepsilon}}
\def\bh{\boldsymbol{h}}
\def\supp{{\rm supp}}
\def\cuU{\mathscrsfs{U}}
\def\cuL{\mathscrsfs{L}}
\def\ind{{\mathbb I}}
\def\bsigma{{\boldsymbol{\sigma}}}
\def\bW{{\boldsymbol{W}}}
\def\bSigma{{\boldsymbol{\Sigma}}}
\def\bg{{\boldsymbol{g}}}
\def\op{\mbox{\footnotesize\rm op}}
\def\reals{{\mathbb R}}
\def\normal{{\sf N}}
\def\bv{{\boldsymbol{v}}}
\def\bz{{\boldsymbol{z}}}
\def\bx{{\boldsymbol{x}}}
\def\bT{\boldsymbol{T}}
\def\bm{\boldsymbol{m}}
\def\Par{{\sf P}}
\def\de{{\rm d}}
\def\bW{\boldsymbol{W}}
\def\prob{{\mathbb P}}
\def\E{{\mathbb E}}
\def\<{\langle}
\def\>{\rangle}
\def\sign{{\rm sign}}
\def\by{{\boldsymbol{y}}}
\def\bw{{\boldsymbol{w}}}
\def\b0{{\boldsymbol{0}}}
\DeclareMathOperator*{\plim}{p-lim}
\def\bn{{\boldsymbol n}}
\renewcommand{\b}{\mathbf{b}}
\begin{document}

\maketitle
\abstract{We consider the Hamiltonians of mean-field spin glasses, which are certain random functions $H_N$ defined on high-dimensional cubes or spheres in $\mathbb R^N$. The asymptotic maximum values of these functions were famously obtained by Talagrand and later by Panchenko and by Chen. The landscape of approximate maxima of $H_N$ is described by various forms of replica symmetry breaking exhibiting a broad range of possible behaviors. We study the problem of efficiently computing an approximate maximizer of $H_N$. 

We give a two-phase message pasing algorithm to approximately maximize $H_N$ when a no overlap-gap condition holds. This generalizes the recent works \cite{subag2018following,mon18,ams20} by allowing a non-trivial external field. For even Ising spin glasses with constant external field, our algorithm succeeds exactly when existing methods fail to rule out approximate maximization for a wide class of algorithms. Moreover we give a branching variant of our algorithm which constructs a full ultrametric tree of approximate maxima. }

{
\small
\tableofcontents}

\newpage

\section{Introduction}

Optimizing non-convex functions in high dimensions is well-known to be computationally intractible in general. In this paper we study the optimization of a natural class of \emph{random} non-convex functions, namely the Hamiltonians of mean-field spin glasses. These functions $H_N$ are defined on either the cube $\Sigma_N=\{-1,1\}^N$ or the sphere $\mathbb S^{N-1}(\sqrt{N})$ of radius $\sqrt{N}$ and have been studied since \cite{sherrington1975solvable} as models for the behavior of disordered magnetic systems. 

The distribution of an $N$-dimensional mean-field spin glass Hamiltonian $H_N$ is described by an exponentially decaying sequence $(c_p)_{p\geq 2}$ of non-negative real numbers as well as an external field probability distribution $\mathcal L_h$ on $\mathbb R$ with finite second moment. Given these data, one samples $h_1,\dots,h_N\sim \mathcal L_h$ and standard Gaussians $g_{i_1,\dots,i_p}\sim\normal(0,1)$ and then defines $H_N:\mathbb R^N\to\mathbb R$ by
\begin{align*}
    H_N(\bx)&=\sum_i h_ix_i +\wt H_N(\bx),\\
    \wt H_N(\bx)&=\sum_{p=2}^{\infty}
    \frac{c_p}{N^{(p-1)/2}}\sum_{i_1,\dots,i_p=1}^N g_{i_1,\dots,i_p}x_{i_1}\dots x_{i_p}.
\end{align*}

The distribution of the non-linear part $\wt H_N$ is characterized by the mixture function $\xi(z)=\sum_{p\geq 2} c_p^2 z^p$ - there are no issues of convergence for $|z|\leq 1+\eta$ thanks to the exponential decay assumption. We assume throughout that $\xi$ is not the zero function so that we study a genuine spin glass. $\wt H_N$ is then a centered Gaussian process with covariance 
\[
    \mathbb E\big[\wt H_N(\bx_1)\wt H_N(\bx_2)\big]
    =
    N\xi\lt(\frac{\langle \bx_1,\bx_2\rangle}{N}\rt).
\]

Spin glasses were introduced to model the magnetic properties of diluted materials and have been studied in statistical physics and probability since the seminal work \cite{sherrington1975solvable}. In this context, the object of study is the Gibbs measure $\frac{e^{\beta H_N(\bx)}d\mu(\bx)}{Z_{N,\beta}}$ where $\beta>0$ is the inverse-temperature, $\mu(\bx)$ is a fixed reference measure and $Z_{N,\beta}$ is a random normalizing constant known as the partition function. The most common choice is to take $\mu(\cdot)$ the uniform measure on $\bSigma_N=\{-1,1\}^N$, and another canonical choice is the uniform measure on $\mathbb S^{N-1}(\sqrt{N})$. These two choices define \emph{Ising} and \emph{spherical} spin glasses. The quantity of primary interest is the free energy
\[
F_N(\beta)=\log\mathbb E^{\bx\sim\mu}[e^{\beta H_N(\bx)}].
\]
The in-probability normalized limit $F(\beta)=\plim_{N\to\infty}\frac{ F_N(\beta)}{N}$ of the free energy at temperature $\beta$ is famously given by an infinite-dimensional variational problem known as the Parisi formula (or the Cristanti-Sommers formula in the spherical case) as we review in the next section. These free energies are well-concentrated and taking a second limit $\lim_{\beta\to \infty}\frac{F(\beta)}{\beta}$ yields the asymptotic ground state energies 

\begin{align*}GS(\xi,\mathcal L_h)&=\plim_{N\to\infty}\max_{\bx\in\bSigma_N}\frac{H_N(\bx)}{N},\\
GS_{\ss}(\xi,\mathcal L_h)&=\plim_{N\to\infty}\max_{\bx\in\mathbb S^{N-1}(\sqrt{N})}\frac{H_N(\bx)}{N}.\end{align*}

From the point of view of optimization, spin glass Hamiltonians serve as natural examples of highly non-convex functions. Indeed, the landscape of $H_N$ can exhibit quite complicated behavior. For instance $H_N$ may have exponentially many near-maxima on $\bSigma_N$ \cite{chatterjee2009disorder,ding2015multiple,chen2018energy}. The structure of these near-maxima is highly nontrivial; the Gibbs measures on $\bSigma_N$ are approximate ultrametrics in a certain sense, at least in the so-called generic models \cite{jagannath2017approximate,chatterjee2019average}. Moreover spherical spin glasses typically have exponentially many \emph{local} maxima and saddle points, which are natural barriers to gradient descent and similar optimization algorithms \cite{auffinger2013complexity,auffinger2013random,subag2017complexity,arous2019landscape}. The utility of a rich model of random functions is made clear by a comparison to the theory of high-dimensional non-convex optimization in the worst-case setting. In the black-box model of optimization based on querying function values, gradients, and Hessians, approximately optimizing an unknown non-convex function in high-dimension efficiently is trivially impossible and substantial effort has gone towards the more modest task of finding a local optimum or stationary point \cite{carmon2017convex,jin2017escape,agarwal2017finding,carmon2018accelerated,carmon2019lower}. Even for quadratic polynomials in $N$ variables, it is quasi-NP hard to reach within a factor $\log(N)^{\eps}$ of the optimum \cite{arora2005non}.  For polynomials of degree $p\geq 3$ on the sphere, \cite{barak2012hypercontractivity} proves that even an approximation ratio $e^{(\log N)^{\eps}}$ is computationally infeasible to obtain.

Despite the worst-case obstructions just outlined, a series of recent works have found great success in approximately maximizing certain spin glass Hamiltonians. By \emph{approximate maximization} we always mean maximization up to a factor $(1+\eps)$, where $\eps>0$ is an arbitrarily small positive constant; we similarly refer to a point $\bx\in\bSigma_N$ or $\bx\in \mathbb S^{N-1}(\sqrt{N})$ achieving such a nearly optimal value as an \emph{approximate maximizer} (where the small constant $\eps$ is implicit). Subag showed in \cite{subag2018following} how to approximately maximize spherical spin glasses by using top eigenvectors of the Hessian $\nabla^2 H_N$. Subsequently \cite{mon18,ams20} developed a message passing algorithm with similar guarantees for the Ising case. These works all operate under an assumption of no overlap gap, a condition which is expected (known in the spherical setting) to hold for some but not all models $(\xi,\mathcal L_h)$ - otherwise they achieve an explicit, sub-optimal energy value. Such a no overlap gap assumption is expected to be necessary to find approximate maxima efficiently. Indeed, the works \cite{arous2018spectral,gamarnik2019overlap,GJW20} rule out various algorithms for optimizing spin glasses when an overlap gap holds. Variants of the overlap gap property have been shown to rule out $(1+\eps)$-approximation by certain classes of algorithms for random optimization problems on sparse graphs \cite{mezard2005clustering,achlioptas2011solution,gamarnik2014limits,rahman2017local,gamarnik2017performance,chen2019suboptimality,wein2020optimal}. Overlap gaps have also been proposed as evidence of computational hardness for a range of statistical tasks including planted clique, planted dense submatrix, sparse regression, and sparse principal component analysis \cite{gamarnik2017sparse,gamarnik2018finding,gamarnik2019ogp,gamarnik2019landscape,arous2020free}. We discuss overlap gaps more in Subsection~\ref{subsec:optogp} and Section~\ref{sec:OGP}.

The aforementioned algorithms in \cite{subag2018following,mon18,ams20} are all restricted to settings with no external field, i.e. with $h_i=0$ for all $i$. Our main purpose is to generalize these results to allow for an external field. We focus primarily on the Ising case and explain in Section~\ref{sec:sphere} how to handle the easier spherical models. Our main algorithm consists of two stages of message passing. The first stage is inspired by the work \cite{bolthausen2014iterative} which constructs solutions to the TAP equations for the SK model at high temperature. We construct approximate solutions to the generalized TAP equations of \cite{subag2018free,chen2018generalized,chen2019generalized}, which heuristically amounts to locating the root of the ultrametric tree of approximate maxima. The second stage is similar to \cite{mon18,ams20} and uses incremental approximate message passing to descend the ultrametric tree by simulating the SDE corresponding to a candidate solution for the Parisi variational problem. A related two-stage message passing algorithm was recently introduced in our joint work with A.E. Alaoui on the negative spherical perceptron \cite{AS20}.

While the primary goal in this line of work is to construct a single approximate maximizer, Subag beautifully observed in \cite[Remark 6]{subag2018following} that an extension of his Hessian-based construction for spherical models produces approximate maximizers arranged into a completely arbitary ultrametric space obeying an obvious diameter upper bound. The overlap gap property essentially states that distances between approximate maximizers cannot take certain values, and so this is a sort of constructive converse result.  In Section~\ref{sec:branch} we give a branching version of our main algorithm, following a suggestion of \cite{alaoui2020algorithmic}, which constructs an arbitrary ultrametric space of approximate maximizers in the Ising case (again subject to a diameter upper bound).

\subsection{Optimizing Ising Spin Glasses}\label{subsec:result}

To state our results we require the Parisi formula for the ground state of a mean field Ising spin glass as given in \cite{auffinger2017parisi}. Let $\cuU$ be the function space
\[
\cuU=\left\{\gamma:[0,1)\to [0,\infty):\gamma\text{ is non-decreasing},\int_0^1\gamma(t)\de t<\infty \right\}.
\]

The functions $\gamma$ are meant to correspond to cumulative distribution functions - for finite $\beta$ the corresponding Parisi formula requires $\gamma(1)=1$, but this constraint disappears in renormalizing to obtain a zero-temperature limit. For $\gamma\in\cuU$ we take $\Phi_{\gamma}(t,x):[0,1]\times\R\to\R$ to be the solution of the following Parisi PDE: 
\begin{align*}
    \partial_t\Phi_{\gamma}(t,x) 
        + \frac{1}{2}\xi''(t)
        \lt(
            \partial_{xx}\Phi_{\gamma}(t,x)
            + \gamma(t)(\partial_x\Phi_{\gamma}(t,x))^2
        \rt)
    &= 0, \\
    \Phi_{\gamma}(1,x) &= |x|.
\end{align*}
{\color{black} This PDE is known to be well-posed, see Proposition~\ref{prop:phireg}. }
Intimately related to the above PDE is the stochastic differential equation 
\begin{equation}
\label{eq:parisiSDE}
    \de X_t = \xi''(t)\gamma(t)\partial_x\Phi_{\gamma}(t,X_t)\de t
        + \sqrt{\xi''(t)}\de B_t,
    \quad X_0\sim \mathcal L_h.
\end{equation}
which we call the Parisi SDE. The Parisi functional $\Par:\cuU\to\R$ with external field distribution $\mathcal L_h$ is given by:
\begin{equation}
\label{eq:Par}
    \Par_{\xi,\mathcal L_h}(\gamma)
    = \mathbb E^{h\sim \mathcal L_h}[\Phi_{\gamma}(0,h)]
    - \frac{1}{2}\int_0^1 t \xi''(t) \gamma(t) \de t.
\end{equation}
The Parisi formula for the ground state energy is as follows.

\begin{theorem}
[{\cite{talagrand2006parisi,panchenko2014parisi,auffinger2017parisi,chen2018energy}}]
\label{thm:GSE-U}
\[
  GS(\xi,\mathcal L_h)=\inf_{\gamma\in\cuU}\Par_{\xi,\mathcal L_h}(\gamma).
\]
Moreover the minimum is attained at a unique $\gamma^{\cuU}_*\in\cuU$.
\end{theorem}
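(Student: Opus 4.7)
The plan is to derive the zero-temperature ground state formula as a $\beta \to \infty$ limit of the finite-temperature Parisi formula, for which mixed $p$-spin models with external field are known thanks to Talagrand and Panchenko. The starting point is the observation that concentration of $H_N/N$ around its maximum gives
\[
GS(\xi,\mathcal L_h) \;=\; \lim_{\beta\to\infty}\;\plim_{N\to\infty}\;\frac{F_N(\beta)}{\beta N},
\]
where the Laplace principle turns the inner limit into the maximum of $H_N/N$ as $\beta\to\infty$. At finite $\beta$ the Parisi formula gives $\plim F_N(\beta)/N = \inf_\zeta \mathcal P_\beta(\zeta)$, where $\zeta$ ranges over cumulative distribution functions on $[0,1]$ and $\mathcal P_\beta$ involves a Parisi PDE with terminal condition $\beta^{-1}\log(2\cosh(\beta x))$. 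So the task reduces to identifying the limit of this variational problem after dividing by $\beta$.

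The second step is the rescaling that replaces a CDF with an element of $\cuU$. I would introduce $\gamma(t) = \beta\,\zeta(t)$, which is non-decreasing but no longer bounded by $1$; the finite-temperature constraint $\zeta(1)\le 1$ becomes $\gamma(1^-)\le \beta$, disappearing in the limit, while the combination $\xi''(t)\gamma(t)(\partial_x\Phi)^2$ in the Parisi PDE is preserved under this scaling. One checks that $\beta^{-1}\Phi_\beta^{\mathrm{fin}}(t,x) \to \Phi_\gamma(t,x)$ locally uniformly, using that the terminal condition $\beta^{-1}\log(2\cosh(\beta x)) \to |x|$ uniformly and that the non-linearity is controlled by standard Lipschitz bounds on solutions of semilinear parabolic PDEs. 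The linear term $\int_0^1 t\xi''(t)\zeta(t)\,\de t$ rescales to $\beta^{-1}\int_0^1 t\xi''(t)\gamma(t)\,\de t$ and the first term $\mathbb E[\Phi_\beta^{\mathrm{fin}}(0,h)]$ becomes $\mathbb E[\Phi_\gamma(0,h)]$ after dividing by $\beta$. Combining the two directions (upper bound by plugging rescaled minimizers into $\beta^{-1}\mathcal P_\beta$, lower bound by compactness of near-minimizing sequences in an appropriate topology on $\cuU$) yields the claimed formula.

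For uniqueness, I would invoke the strict convexity of the zero-temperature Parisi functional on $\cuU$ along a well-chosen family of perturbations, which follows from the strict convexity of the finite-temperature functional established by Auffinger and Chen after checking that strict convexity is preserved in the limit, together with the fact that distinct $\gamma \in \cuU$ give genuinely distinct values of the functional on some test perturbation. Alternatively one can work directly at $\beta = \infty$ using the Hessian structure of $\Par_{\xi,\mathcal L_h}$ as in Auffinger–Chen's argument.

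The hardest ingredient is neither the limit interchange nor uniqueness but the finite-temperature Parisi formula itself, whose lower bound for mixed models with external field rests on Panchenko's ultrametricity theorem and its use in identifying the asymptotic overlap distribution. The remaining technical obstacle specific to the zero-temperature reduction is the uniform control needed to pass to the limit in $\mathcal P_\beta$: one must rule out near-minimizers of $\mathcal P_\beta/\beta$ escaping to sequences of $\gamma$'s that are not integrable near $t=1$, which requires an a priori compactness estimate (for instance, via a coercivity bound on the Parisi functional of the form $\Par_{\xi,\mathcal L_h}(\gamma) \gtrsim \int_0^1 \gamma(t)\,\de t - C$).
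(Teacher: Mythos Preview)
The paper does not prove this theorem; it is stated as background with citations to \cite{talagrand2006parisi,panchenko2014parisi,auffinger2017parisi,chen2018energy} and no proof is given anywhere in the text. There is therefore nothing in the paper to compare your proposal against.

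That said, your sketch is a reasonable outline of the approach in the cited literature, particularly \cite{auffinger2017parisi}, which obtains the zero-temperature formula exactly by rescaling $\gamma=\beta\zeta$ and sending $\beta\to\infty$ in the finite-temperature Parisi formula. The uniqueness via strict convexity is the argument of \cite{chen2018energy}; the paper itself invokes this in Lemma~\ref{lem:strict}, whose proof is simply ``exactly the same as \cite[Lemma 5]{chen2018energy}.'' Your identification of the finite-temperature Parisi formula (Talagrand, Panchenko) as the deepest ingredient is correct, and the compactness/coercivity concern you raise near $t=1$ is a genuine technical point handled in \cite{auffinger2017parisi}. So your proposal is not wrong, but you should understand that you are reconstructing results the paper imports wholesale rather than engaging with anything the paper itself argues.
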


Through the paper, $\gamma^{\cuU}_*$ will always refer to the minimizer of Theorem~\ref{thm:GSE-U}. We now turn to algorithms. In \cite{mon18}, Montanari introduced the class of \emph{incremental approximate message passing} (IAMP) algorithms to optimize the SK model. These are a special form of the well-studied approximate message passing (AMP) algorithms, reviewed in Subsection~\ref{subsec:amp}. The work \cite{ams20} showed that the maximum asympototic value of $H_N$ achievable by IAMP algorithms is given by the minimizer of $\Par$, assuming it exists, over a larger class of non-monotone functions, when $\mathcal L_h=\delta_0$ so there is no external field. This larger class is:
\[
    \cuL=\lt\{\gamma:[0,1)\to [0,\infty): \gamma\text{ is right-continuous}, \Vert \xi''\cdot \gamma\Vert_{TV[0,t]}<\infty\forall t\in [0,1),\int_0^1\xi''(t) \gamma(t)\de t<\infty\rt\}.
\]
Here $TV[0,t]$ denotes the total variation norm
\[
    \|f\|_{TV[0,t]} \equiv \sup_n\sup_{0 \le t_0<t_1<\dots<t_k \le t} \sum_{i=1}^k |f(t_i) - f(t_{i-1})|\, .
\]

The Parisi PDE \eqref{eq:Par} and associated SDE extend also to $\cuL$. We denote by $\gamma_*^{\cuL}\in \cuL$ the minimizer of $\Par$ over $\cuL$, assuming that it exists. Note that uniqueness always holds by Lemma~\ref{lem:identity} below. We remark that \cite{ams20} does not include the right-continuity constraint in defining $\cuL$, however this constraint only serves to eliminate ambiguities between $\gamma_1,\gamma_2$ differing on sets of measure $0$. In fact \cite{ams20} assumes $\gamma\in\cuL$ is right-continuous from Lemma 6.9 onward.

\begin{theorem}[{\cite[Theorem 3]{ams20}}]
\label{thm:amsmain}
With $\mathcal L_h=\delta_0$, suppose $\inf_{\gamma\in\cuL}\Par(\gamma)$ is achieved at $\gamma_*^{\cuL}\in\cuL$. Then for any $\eps>0$ there exists an efficient AMP algorithm which outputs
$\bsigma\in\Sigma_N$ satisfying
\begin{align}
\frac{H_N(\bsigma)}{N} {\color{black}\geq\Par(\gamma_*^{\cuL})-\eps}
\end{align}
with probability tending to $1$ as $N\to\infty$.
\end{theorem}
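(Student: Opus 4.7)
The plan is to construct an incremental approximate message passing (IAMP) algorithm whose iterates track the Parisi SDE~\eqref{eq:parisiSDE} with $\gamma=\gamma_*^{\cuL}$, and then round the final iterate to a point in $\bSigma_N$. Fix a fine discretization $0=t_0<t_1<\cdots<t_K=1$ of $[0,1]$ with $K=K(\varepsilon)$ to be chosen, and maintain iterates $\bm^0,\dots,\bm^K\in\R^N$ meant to approximate $X_{t_0},\dots,X_{t_K}$ coordinatewise, in the sense that the empirical law of $(m^0_i,\dots,m^K_i)_{i\in[N]}$ converges to the joint law of $(X_{t_0},\dots,X_{t_K})$ under the Parisi SDE. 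Since $\mathcal L_h=\delta_0$, initialize $\bm^0=\bfzero$. Each step consists of a deterministic drift $\xi''(t_k)\gamma_*^{\cuL}(t_k)\,\partial_x\Phi_{\gamma_*^{\cuL}}(t_k,\bm^k)\,(t_{k+1}-t_k)$ applied componentwise plus a noise increment of typical magnitude $\sqrt{\xi''(t_k)(t_{k+1}-t_k)}$. The noise is not fresh randomness but rather is generated from $\nabla \widetilde H_N$ via a standard AMP step equipped with an Onsager correction; this is essential, since only then does the algorithm genuinely extract information from the random Hamiltonian.

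Next I would analyze the iterates via AMP state evolution. The Onsager correction is designed precisely so that, in the $N\to\infty$ limit, the AMP noise increments behave like independent Brownian increments of the Parisi SDE conditional on the past; this gives joint convergence of $(\bm^0,\ldots,\bm^K)$ to $(X_{t_0},\ldots,X_{t_K})$. To compute the achieved energy I would write a telescoping identity
\[
\frac{H_N(\bm^K)}{N}\;\approx\;\frac{1}{N}\sum_{k=0}^{K-1}\langle\nabla H_N(\bm^k),\,\bm^{k+1}-\bm^k\rangle \;+\; O(1/K),
\]
evaluate each summand by state evolution, and recognize the $N\to\infty$, $K\to\infty$ limit as the It\^o expansion of $\Phi_{\gamma_*^{\cuL}}(t,X_t)$ along the Parisi SDE. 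The Parisi PDE is exactly the identity needed to cancel the drift and second-order terms in It\^o's formula, leaving a remainder that matches $\Par_{\xi,\mathcal L_h}(\gamma_*^{\cuL})$ as defined in \eqref{eq:Par}.

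For rounding, state evolution yields $\|\bm^K\|^2/N\to \E[X_1^2]$ and the terminal condition $\Phi_{\gamma}(1,x)=|x|$ makes $|X_1|$ the natural endpoint quantity. Taking $\bsigma=\sign(\bm^K)$ and comparing $H_N(\bsigma)$ to $H_N(\bm^K)$ via Lipschitz continuity of $H_N$ on nearby configurations (together with the concentration of $\widetilde H_N$) bounds the rounding loss by $o_K(1)+o_N(1)$. The discretization level $K$ is then chosen large enough that the SDE time-stepping error, the AMP state evolution error, and the rounding error are each at most $\varepsilon/3$.

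The hard part is making state evolution uniform over a possibly \emph{non-monotone} $\gamma\in\cuL$. Since such $\gamma$ need not be a distribution function, the multiscale tree heuristic used for $\cuU$ does not apply, and both $\Phi_\gamma$ and the Parisi SDE must be controlled under only the weaker condition $\Vert \xi''\gamma\Vert_{TV[0,t]}<\infty$ for $t<1$. My approach would be to approximate $\gamma_*^{\cuL}$ uniformly on $[0,1-\eta]$ by well-behaved piecewise-constant schedules, control the residual contribution from $[1-\eta,1]$ using the terminal condition, and transfer the estimates back via continuity of $\Par$ in $\gamma$. A secondary technical point is that the drift $\partial_x\Phi_\gamma$ need not be globally Lipschitz in $\bm$, so AMP state evolution must be established via truncation and localization rather than the simplest monotone no-external-field arguments of \cite{mon18}.
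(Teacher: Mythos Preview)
Your high-level plan---discretize the Parisi SDE, generate Brownian increments via AMP with Onsager correction, telescope to compute the energy, then round---matches the blueprint of \cite{ams20} (from which this theorem is quoted) and of Section~\ref{sec:algo}, where the present paper proves its generalization Theorem~\ref{thm:main}. However, there is a genuine gap in how you set up the iteration.

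You propose iterates $\bm^k$ approximating $X_{t_k}$ and feed these into $\nabla\widetilde H_N$. This fails for two linked reasons. First, state evolution says the AMP output variance is governed by $\xi'$ of the second moment of the input; to obtain increments of variance $\xi'(t_{k+1})-\xi'(t_k)$ the input at step $k$ must have second moment exactly $t_k$, and $\E[X_{t_k}^2]\neq t_k$ in general. Second, and decisively, your rounding step breaks: $X_1$ is an unbounded random variable, so $\bm^K\approx X_1$ is not close to $\sign(\bm^K)$ in $\|\cdot\|_N$, and Lipschitz continuity of $H_N$ gives no control over $H_N(\bsigma)-H_N(\bm^K)$. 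The actual IAMP maintains \emph{three} coupled sequences $(\bx^\ell,\bz^\ell,\bn^\ell)$ tracking $(X_t,Z_t,N_t)$ with $N_t=\partial_x\Phi_{\gamma_*^{\cuL}}(t,X_t)$, and the non-linearity fed into $\nabla\widetilde H_N$ is $\bn^\ell$, not $\bx^\ell$. This works precisely because optimizability (Lemma~\ref{lem:identity}) gives $\E[N_t^2]=t$, so the noise increments have the right variance. The energy is then computed at $\bn^{\ubl}$, and rounding succeeds since $|N_t|\le 1$ together with $\E[N_t^2]\to 1$ forces $N_{\ubq}\to\pm 1$ in $L^2$ (Lemma~\ref{lem:noroundingerror}). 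The telescoped energy gain (Lemma~\ref{lem:iampenergy}) is $\int_0^1\xi''(t)\,\E[\partial_{xx}\Phi_{\gamma_*^{\cuL}}(t,X_t)]\,\de t$, which is matched to $\Par(\gamma_*^{\cuL})$ via the identities of Subsection~\ref{sub:parisi} rather than a direct It\^o expansion of $\Phi$.
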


We clarify our use of the word ``efficient" in Subsection~\ref{subsec:amp} - in short, it means that $O_{\eps}(1)$ evaluations of $\nabla \wt{H}_N$ and first/second partial derivatives of $\Phi_{\gamma_*^{\cuL}}$ are required. In general, minimizing over the larger space $\cuL$ instead of $\cuU$ may decrease the infimum value of $\Par$, so that IAMP algorithms fail to approximately maximize $H_N$. However if $\gamma_*^{\cuU}$ is strictly increasing, then the infima are equal. 

\begin{corollary}[{\cite[Corollary 2.2]{ams20}}]
\label{cor:q=0}

With $\mathcal L_h=\delta_0$, suppose that $\gamma_*^{\cuU}$ is strictly increasing on $[0,1)$. Then $\gamma_*^{\cuU}=\gamma_*^{\cuL}$. Consequently for any $\eps>0$ there is an efficient AMP algorithm which outputs
$\bsigma\in\Sigma_N$ satisfying
\begin{align}
\frac{H_N(\bsigma)}{N} {\color{black}\geq GS(\xi,\mathcal L_h)-\eps}
\end{align}
with probability tending to $1$ as $N\to\infty$.

\end{corollary}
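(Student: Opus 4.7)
The algorithmic conclusion follows immediately from Theorem~\ref{thm:amsmain} once we establish the identity $\gamma_*^{\cuU}=\gamma_*^{\cuL}$: if the two minimizers coincide, then $\Par(\gamma_*^{\cuL})=\Par(\gamma_*^{\cuU})$ equals $GS(\xi,\mathcal L_h)$ by the Parisi formula, so the output $\bsigma$ of the algorithm produced by Theorem~\ref{thm:amsmain} lands in $[GS(\xi,\mathcal L_h)-\varepsilon,GS(\xi,\mathcal L_h)+\varepsilon]$. Thus the entire content of the corollary reduces to identifying the two minimizers under the strict monotonicity hypothesis.

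My plan is to argue via the first-order variation of $\Par$. Since $\cuU \subset \cuL$, we always have $\Par(\gamma_*^{\cuL}) \le \Par(\gamma_*^{\cuU})$; the point is to rule out strict inequality. The Gateaux derivative of $\Par$ at $\gamma$ in direction $\eta$ admits an explicit expression in terms of the overlap function $q_{\gamma}(t) = \mathbb E[(\partial_x \Phi_{\gamma}(t,X_t))^2]$ associated with the Parisi SDE \eqref{eq:parisiSDE}, of the form
\begin{equation*}
D_\eta \Par(\gamma) \;=\; \tfrac{1}{2}\int_0^1 \xi''(t)\,\eta(t)\,\bigl(q_{\gamma}(t)-t\bigr)\,\de t.
\end{equation*}
The variational inequality for minimization of $\Par$ over the convex cone $\cuU$ gives $D_\eta \Par(\gamma_*^{\cuU}) \ge 0$ for any $\eta$ tangent to $\cuU$ at $\gamma_*^{\cuU}$ (i.e.\ any $\eta$ such that $\gamma_*^{\cuU}+\epsilon \eta$ remains non-decreasing for small $\epsilon>0$).

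Now I exploit strict monotonicity. If $\gamma_*^{\cuU}$ is strictly increasing on $[0,1)$, then for every $\tau\in(0,1)$ and every $\eta$ of bounded variation supported in $[0,\tau]$, the functions $\gamma_*^{\cuU}\pm\epsilon \eta$ both lie in $\cuU$ for all sufficiently small $\epsilon$. Applying the variational inequality in both directions forces $D_\eta \Par(\gamma_*^{\cuU})=0$ for a dense class of $\eta$, which by a standard density argument yields $\xi''(t)(q_{\gamma_*^{\cuU}}(t)-t)=0$ for a.e.\ $t\in[0,1)$. This is precisely the critical point equation for the unconstrained minimization on $\cuL$. Combined with convexity of $\Par$ on $\cuL$ and the uniqueness guaranteed by Lemma~\ref{lem:identity}, we conclude $\gamma_*^{\cuU}=\gamma_*^{\cuL}$, completing the proof.

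The main obstacle is the upgrade from the variational inequality on $\cuU$ to equality on the larger space $\cuL$, which requires handling two technical points. First, strict monotonicity of $\gamma_*^{\cuU}$ on all of $[0,1)$ does not by itself give uniform control on the rate of increase near $t=1$, so one must work on subintervals $[0,\tau]$ and pass to the limit $\tau\to 1$, using the integrability $\int_0^1 \xi''(t)\gamma_*^{\cuU}(t)\,\de t < \infty$ built into the definition of $\cuL$ to absorb the tail. Second, one must verify that convexity of $\Par$ and the explicit derivative formula above genuinely extend from $\cuU$ to $\cuL$; this is where the right-continuity normalization in the definition of $\cuL$ and the identity of Lemma~\ref{lem:identity} play a crucial role, as they ensure that the formal calculus of variations on non-monotone $\gamma$ is well-posed.
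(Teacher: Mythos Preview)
Your overall strategy is correct and matches the paper's approach closely: use the derivative formula of Proposition~\ref{prop:DerivativeParisi} together with strict monotonicity to force $\E[\partial_x\Phi_{\gamma_*^{\cuU}}(t,X_t)^2]=t$ on $[0,1)$, then invoke convexity of $\Par$ on $\cuL$ (and uniqueness from Lemma~\ref{lem:identity}) to conclude $\gamma_*^{\cuU}=\gamma_*^{\cuL}$. The paper carries out exactly this argument in its proof of the more general Lemma~\ref{lem:optimizable}.

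However, there is a genuine gap in your execution. You claim that for every $\eta$ of bounded variation supported in $[0,\tau]$, the functions $\gamma_*^{\cuU}\pm\epsilon\eta$ lie in $\cuU$ for small $\epsilon$. This is false: strict monotonicity gives no lower bound on the increments of $\gamma_*^{\cuU}$, so a perturbation $\eta$ that decreases faster than $\gamma_*^{\cuU}$ increases near some point will destroy monotonicity for \emph{every} $\epsilon>0$. (Concretely, if $\gamma_*^{\cuU}(t)-\gamma_*^{\cuU}(s)$ can be made arbitrarily small relative to $t-s$, take $\eta$ linear with negative slope there.) You flag the lack of uniform control near $t=1$ as the obstacle, but the same issue arises in the interior.

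The paper sidesteps this by choosing perturbations tailored to $\gamma_*^{\cuU}$ itself: for $\lbq<t_1<t_2<1$ it takes
\[
\delta(t)=[\gamma_*^{\cuU}(t_1)-\gamma_*^{\cuU}(t)]\,\ind_{(t_1,t_2)}(t),
\]
so that $\gamma_*^{\cuU}+s\delta=(1-s)\gamma_*^{\cuU}+s\gamma_*^{\cuU}(t_1)$ on $(t_1,t_2)$ is a convex combination of a non-decreasing function and a constant, hence remains in $\cuU$ for all $s\in[0,1]$ (not just small $s$). Strict monotonicity enters only to guarantee the weight $\gamma_*^{\cuU}(t)-\gamma_*^{\cuU}(t_1)>0$ is strictly positive on $(t_1,t_2)$, so that the resulting integral inequality forces $\E[\partial_x\Phi_{\gamma_*^{\cuU}}(t,X_t)^2]\le t$ pointwise. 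The reverse inequality comes from the analogous choice $\delta(t)=[\gamma_*^{\cuU}(t_2)-\gamma_*^{\cuU}(t)]\ind_{(t_1,t_2)}(t)$. Once you replace your arbitrary $\eta$ with these specific perturbations, your argument goes through.
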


We define the support $\supp(\gamma)$ of $\gamma\in\cuL$ to be the closure in $[0,1)$ of $S(\gamma)\equiv \{x\in [0,1):\gamma(x)>0\}$. Note that this is not the same as the support of the signed measure with CDF $\gamma$.

We now present our new results when there is a non-trivial external field distribution $\mathcal L_h\neq \delta_0$. The following proposition shows that this forces $\gamma_*^{\cuU}(t)=0$ in a neighborhood of $t=0$, hence Corollary~\ref{cor:q=0} cannot apply. The proof is exactly the same as \cite[Lemma A.19]{panchenkoextra} (which is the same result for positive temperature). 

\begin{proposition}
We have $0\in\supp(\gamma_*^{\cuU})$ if and only if $\mathcal L_h=\delta_0$.
\end{proposition}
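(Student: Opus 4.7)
My plan is to extract first-order optimality conditions for the unique minimizer $\gamma_*^{\cuU}$ of $\Par_{\xi,\mathcal L_h}$ over the convex cone $\cuU$, and then argue separately for the two implications, exploiting the evenness of the terminal condition $|x|$. Differentiating through the Parisi PDE shows that for any direction $\eta$ with $\gamma_*^{\cuU}+\epsilon\eta\in\cuU$ for $\epsilon>0$ small,
\[\tfrac12\tfrac{d}{d\epsilon}\Par(\gamma_*^{\cuU}+\epsilon\eta)\big|_{\epsilon=0^+}=\tfrac12\int_0^1 \xi''(s)\,\eta(s)\,A(s)\,\de s,\qquad A(s):=\E[(\partial_x\Phi_{\gamma_*^{\cuU}}(s,X_s))^2]-s,\]
where $X_s$ solves the Parisi SDE \eqref{eq:parisiSDE} started at $X_0\sim\mathcal L_h$. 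Testing against jump directions $\eta=\mathbf{1}_{[t_0,1)}$, which are one-sided admissible on $[0,1)$ and two-sided admissible wherever $\gamma_*^{\cuU}>0$, yields the complementary-slackness conditions $D(t):=\tfrac12\int_t^1\xi''(s)A(s)\,\de s\geq 0$ on $[0,1)$ with $A\equiv 0$ on the interior of $\supp(\gamma_*^{\cuU})$; in particular, $0\in\supp(\gamma_*^{\cuU})$ forces $D(0)=0$ by continuity.

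For the forward implication ($\mathcal L_h\neq\delta_0\Rightarrow 0\notin\supp(\gamma_*^{\cuU})$), the Parisi PDE preserves evenness and (strict) convexity of the terminal condition $|x|$, so $\partial_x\Phi_{\gamma_*^{\cuU}}(0,\cdot)$ is odd and strictly increasing with a unique zero at $0$. If $\mathcal L_h\neq \delta_0$ then $\P(h\neq 0)>0$, hence $A(0)=\E[(\partial_x\Phi_{\gamma_*^{\cuU}}(0,h))^2]>0$; by continuity, $A>0$ on some $[0,\delta)$, so $D$ is strictly decreasing there and $D(0)>D(\delta)\geq 0$, contradicting $0\in\supp(\gamma_*^{\cuU})$.

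For the converse, suppose $\mathcal L_h=\delta_0$ but $\gamma_*^{\cuU}\equiv 0$ on $[0,q_0]$ for some $q_0>0$, and seek a contradiction. On $[0,q_0]$ the Parisi PDE reduces to the backward heat equation and $X_s\sim\normal(0,\xi'(s))$, so both $\partial_x\Phi_{\gamma_*^{\cuU}}(s,X_s)$ and $\partial_{xx}\Phi_{\gamma_*^{\cuU}}(s,X_s)$ are martingales. It\^o's formula yields
\[q(s):=\E[(\partial_x\Phi_{\gamma_*^{\cuU}}(s,X_s))^2]=\int_0^s \xi''(u)\,\E[(\partial_{xx}\Phi_{\gamma_*^{\cuU}}(u,X_u))^2]\,\de u,\]
and since $\xi''$ is non-decreasing (all $c_p^2\geq 0$) and the second-moment factor is non-decreasing (as the second moment of a martingale), $q$ is convex on $[0,q_0]$. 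Evenness gives $q(0)=(\partial_x\Phi_{\gamma_*^{\cuU}}(0,0))^2=0$, while continuity of $A$ and $A\equiv 0$ on $(q_0,1)$ give $q(q_0)=q_0$. Convexity then forces $q(s)\leq s$, hence $A\leq 0$, on $[0,q_0]$. Combined with $D(0)\geq 0$ and $A\equiv 0$ on $(q_0,1)$, this forces $A\equiv 0$ on all of $[0,1)$.

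The resulting identity $\xi''(s)\E[(\partial_{xx}\Phi_{\gamma_*^{\cuU}}(s,X_s))^2]=q'(s)=1$ for all $s\in[0,1)$ is the one I plan to rule out. As $s\to 1^-$, the terminal condition $\partial_{xx}|x|=2\delta_0$ forces $\partial_{xx}\Phi_{\gamma_*^{\cuU}}(s,\cdot)$ to concentrate at $0$, so its $L^2$-norm diverges; since $X_s$ retains a density bounded below near $0$ and $\xi''(1)\in(0,\infty)$, the left-hand side blows up, contradicting its value $1$. The main technical hurdle I anticipate is making this concentration estimate quantitative in the presence of the non-trivial drift on $(q_0,1)$; this is the content of the positive-temperature version of the argument in \cite[Lemma A.19]{panchenkoextra}, which transfers to the zero-temperature setting essentially unchanged.
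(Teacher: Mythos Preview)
The paper gives no proof of its own here, citing \cite[Lemma A.19]{panchenkoextra} verbatim, so there is nothing to compare against on the paper's side; your forward implication is essentially correct.

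For the converse there is a real gap. Your claim that $\eta=\mathbf{1}_{[t_0,1)}$ is two-sided admissible wherever $\gamma_*^{\cuU}>0$ is false: $\gamma_*^{\cuU}-s\,\mathbf{1}_{[t_0,1)}$ remains non-decreasing only when $\gamma_*^{\cuU}$ has a jump of size at least $s$ at $t_0$. The correct stationarity over the monotone cone $\cuU$ is $D\geq 0$ everywhere together with $D=0$ on the support of the induced measure $d\gamma_*^{\cuU}$ (the points of increase); this yields $A=0$ only there, not on all of $(q_0,1)$. Since flat pieces of $\gamma_*^{\cuU}$ inside $(q_0,1)$ cannot be ruled out a priori, your assertion ``$A\equiv 0$ on $(q_0,1)$'' is unjustified; the identity $\xi''(s)\,\E[(\partial_{xx}\Phi(s,X_s))^2]=1$ is therefore established only on $[0,q_0]$, and your proposed blow-up at $s\to 1^-$ has nothing to act on. The contradiction is nonetheless already available on $[0,q_0]$: your convexity argument does give $A\equiv 0$ there (with $q(q_0)=q_0$ following because $q_0$ lies in the support of $d\gamma_*^{\cuU}$ and is thus an interior minimum of the $C^1$ function $D\geq 0$, so $D'(q_0)=-\xi''(q_0)A(q_0)=0$). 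On $[0,q_0]$ the drift vanishes, so $\partial_{xx}\Phi(s,X_s)$ is a martingale with non-decreasing second moment, while $1/\xi''(s)$ is non-increasing; both must therefore be constant on $[0,q_0]$. Since $X_s$ has full support for $s>0$, constancy of the martingale forces $\partial_{xx}\Phi(s,\cdot)\equiv c$ on $\reals$, hence $\partial_x\Phi(s,x)=cx$, which violates $|\partial_x\Phi|\leq 1$ unless $c=0$, contradicting $c^2=1/\xi''(q_0)>0$.
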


Despite this, we will show that approximate maximization is still possible with an external field if $\gamma_*^{\cuU}$ is strictly increasing on $[\lbq,1)$ for $\lbq=\inf(\supp(\gamma_*^{\cuU}))$. If this condition holds, we give a two-phase approximate message passing algorithm which first locates a suitable point $\bm_{\ul}$ with $L^2$ norm $\Vert \bm_{\ul}\Vert\approx \sqrt{\lbq N}$, and then proceeds as in the no-external-field case. The relevant condition is precisely defined as follows.

\begin{definition}
For $\gamma_*\in\cuL$, let $\lbq=\inf(\supp(\gamma_*))$. We say $\gamma_*$ is $\lbq$-optimizable if, with $X_t$ given by \eqref{eq:parisiSDE}:
\begin{equation}\label{eq:opt}\mathbb E[\partial_{x}\Phi_{\gamma_*}(t,X_t)^2]=t, \quad t\in [\lbq,1).\end{equation}
We say $\gamma_*\in\cuL$ is optimizable if it is $\lbq$-optimizable for $\lbq=\inf(\supp(\gamma_*))$. We say that $(\xi,\mathcal L_h)$ is optimizable, or equivalently that the \textbf{no overlap gap} property holds for $(\xi,\mathcal L_h)$, if the function $\gamma_*^{\cuU}$ is optimizable.
\end{definition}

In \cite{mon18}, the widely believed conjecture that (in our language) the Sherrington-Kirkpatrick model with $\xi(x)=x^2/2$ is optimizable was assumed. Our preliminary numerical simulations suggest that the SK model remains optimizable with any constant external field $\mathcal L_h=\delta_h$. However even without external field, proving this conjecture rigorously seems to be very difficult.

For $\lbq\in [0,1)$, let $\cuL_{\lbq}=\{\gamma\in\cuL:\inf(\supp(\gamma))\geq\lbq\}$ consist of functions in $\cuL$ vanishing on $[0,\lbq)$. The next lemma shows optimizability is equivalent to minimizing $\Par$ over either $\cuL$ or $\cuL_{\lbq}$. It is related to results in \cite{auffinger2015parisi,jagannath2016dynamic,ams20} which show that $\gamma_*^{\cuU}$ and $\gamma_*^{\cuL}$ satisfy \eqref{eq:opt}, in the former case when $t$ is a point of increase for $\gamma_*^{\cuU}$. The proof is given in Section~\ref{sec:identity}.

\begin{restatable}{lemma}{lemidentity}
\label{lem:identity}

For $\gamma_*\in\cuL$ and $\lbq=\inf(\supp(\gamma_*))$, the following are equivalent:

\begin{enumerate}
\item \label{it:opt1} $\gamma_*$ is optimizable.
\item \label{it:opt2} $\Par(\gamma_*)=\inf_{\gamma\in\cuL}\Par(\gamma).$
\item \label{it:opt3} $\Par(\gamma_*)=\inf_{\gamma\in\cuL_{\lbq}}\Par(\gamma).$
\end{enumerate}
Moreover if a minimizer exists in either variational problem just above, then it is unique.
\end{restatable}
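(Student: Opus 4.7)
The proof rests on two established ingredients: strict convexity of $\Par$ on $\cuL$ (extending Auffinger--Chen), which yields uniqueness of minimizers in $\cuL$ and $\cuL_{\lbq}$, and the first-order directional derivative formula
\[
\frac{d}{ds}\Par(\gamma_*+s\eta)\bigg|_{s=0}
\;=\; \tfrac12\int_0^1 \xi''(t)\,\eta(t)\,D(t)\,dt,\qquad D(t):=\E[(\partial_x\Phi_{\gamma_*}(t,X_t))^2]-t,
\]
obtained by differentiating $\Phi_\gamma$ with respect to $\gamma$ along $\eta$ and applying Feynman--Kac along the Parisi SDE~\eqref{eq:parisiSDE}. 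The structural engine of the proof is that It\^o's formula applied to the Parisi PDE makes $u(t,X_t):=\partial_x\Phi_{\gamma_*}(t,X_t)$ a martingale with $\tfrac{d}{dt}\E[u(t,X_t)^2]=\xi''(t)\,\E[(\partial_{xx}\Phi_{\gamma_*}(t,X_t))^2]$. On any sub-interval where $\gamma_*\equiv 0$, the Parisi PDE degenerates to the backward heat equation, making $\partial_{xx}\Phi_{\gamma_*}(t,X_t)$ also a martingale and $\E[(\partial_{xx}\Phi_{\gamma_*}(t,X_t))^2]$ nondecreasing in $t$. Combined with $\xi''$ being nondecreasing on $[0,\infty)$ (from $\xi''(z)=\sum_{p\ge 2}p(p-1)c_p^2 z^{p-2}$), the product $\xi''(t)\,\E[(\partial_{xx}\Phi_{\gamma_*}(t,X_t))^2]$ is nondecreasing on any such interval.

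The easy steps are (1) $\Rightarrow$ (3) and (2) $\Rightarrow$ (3). For (1) $\Rightarrow$ (3), any $\gamma\in\cuL_{\lbq}$ has $\eta=\gamma-\gamma_*$ supported in $[\lbq,1)$, so (1) makes the first-order derivative vanish and convexity along $[\gamma_*,\gamma]$ yields $\Par(\gamma)\ge\Par(\gamma_*)$. The step (2) $\Rightarrow$ (3) is trivial since $\gamma_*\in\cuL_{\lbq}\subseteq\cuL$.

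For (3) $\Rightarrow$ (1), testing the first-order minimality condition with $\pm c\,\mathbf 1_I$ over $I\subseteq\{\gamma_*>\varepsilon\}$ gives $D\equiv 0$ a.e.\ on $\{\gamma_*>0\}\cap[\lbq,1)$, while only positive perturbations are admissible over gaps $\{\gamma_*=0\}\cap[\lbq,1)$, yielding $D\ge 0$ there. Continuity of $D$ extends the vanishing to all of $\supp(\gamma_*)$, where $\tfrac{dD}{dt}=\xi''\E[(\partial_{xx}\Phi)^2]-1=0$ gives $\xi''(t)\E[(\partial_{xx}\Phi(t,X_t))^2]\equiv 1$. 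On an ``inner gap'' $(a,b)\subseteq[\lbq,1)$ with $a,b\in\supp(\gamma_*)$, the product-monotonicity squeezed between boundary values $1$ at both endpoints forces $\xi''(t)\E[(\partial_{xx}\Phi(t,X_t))^2]\equiv 1$ on $(a,b)$, so $D\equiv 0$ there. On a possible ``tail gap'' $(c,1)$, one uses that $D$ is nondecreasing on $(c,1)$ (the product is $\ge 1$ there), $D(c)=0$, and $D(t)\to 0$ as $t\to 1^-$ (since $u(t,x)\to\sign(x)$ and $X_1$ has continuous distribution) to conclude $D\equiv 0$. Thus $D\equiv 0$ on $[\lbq,1)$, i.e., (1).

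For (1) $\Rightarrow$ (2), I show $D\ge 0$ on $[0,\lbq]$ to make the first-order derivative nonnegative against perturbations in $\cuL\setminus\cuL_{\lbq}$ (such $\eta\ge 0$ on $[0,\lbq)$ since $\gamma_*=0$ there). The same product-monotonicity argument on $[0,\lbq]$, anchored at $\xi''(\lbq)\E[(\partial_{xx}\Phi(\lbq,X_\lbq))^2]=1$ inherited from (1) at $t=\lbq^+$, gives $\xi''(t)\E[(\partial_{xx}\Phi(t,X_t))^2]\le 1$ on $[0,\lbq]$; integrating $\tfrac{dD}{dt}\le 0$ from $t$ to $\lbq$ with $D(\lbq)=0$ yields $D(t)\ge 0$. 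For any $\gamma\in\cuL$ the derivative $\tfrac12\int_0^{\lbq}\xi''\gamma D\,dt\ge 0$, and convexity gives $\Par(\gamma)\ge\Par(\gamma_*)$, i.e., (2). Uniqueness in (2) and (3) is then immediate from strict convexity. The main technical obstacle is the product-monotonicity argument that simultaneously rules out ``gaps with $D>0$'' on $[\lbq,1)$ (for (3) $\Rightarrow$ (1)) and ``$D<0$'' on $[0,\lbq)$ (for (1) $\Rightarrow$ (2)); both hinge on the martingale property of $\partial_{xx}\Phi_{\gamma_*}(t,X_t)$ on intervals where $\gamma_*\equiv 0$, combined with the monotonicity of $\xi''$.
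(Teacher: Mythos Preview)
Your proof is correct and follows essentially the same architecture as the paper: strict convexity for uniqueness, the variational derivative formula (the paper's Proposition~\ref{prop:DerivativeParisi}) for (1)~$\Rightarrow$~(3), the stationarity conditions (the paper's Lemma~\ref{lem:stationarity}) giving $D=0$ on $\supp(\gamma_*)$ and $D\ge 0$ on $[\lbq,1)\setminus\supp(\gamma_*)$, the heat-equation martingale property of $\partial_{xx}\Phi_{\gamma_*}(t,X_t)$ on intervals where $\gamma_*\equiv 0$ combined with monotonicity of $\xi''$, and the Jensen argument on $[0,\lbq]$ for (1)~$\Rightarrow$~(2) (the paper's Lemma~\ref{lem:qnotneeded}).

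The one genuine divergence is your treatment of a possible tail gap $(c,1)$. The paper rules this out by contradiction, invoking a density lower bound (Lemma~\ref{lemma:DensityLB}) to show $1-\E[\partial_x\Phi_{\gamma_*}(t,X_t)^2]\ge C\sqrt{1-t}$, which is incompatible with $D\ge 0$ near $t=1$. You instead argue that on $(c,1)$ the product $\xi''(t)\E[(\partial_{xx}\Phi_{\gamma_*})^2]$ is nondecreasing and equals $1$ at $c$, so $D'\ge 0$; combined with $0\le D(t)\le 1-t$ (the upper bound from $|\partial_x\Phi_{\gamma_*}|\le 1$), this squeezes $D\equiv 0$ on $(c,1)$. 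This is more elementary and bypasses the density lemma entirely. Your inner-gap squeeze is also a slight repackaging of the paper's Jensen argument: note that two nondecreasing positive factors with constant product must each be constant, so whenever $\xi''$ is strictly increasing your squeeze in fact reproduces the paper's contradiction that no inner gap exists.
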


Lemma~\ref{lem:identity} implies that any optimizable $\gamma_*$ is in fact the unique minimizer $\gamma^{\cuL}_*\in\cuL$ of the Parisi functional. However throughout much of the paper we will use $\gamma_*$ to denote general optimizable function without making use of this result. We made this choice because while Lemma~\ref{lem:identity} is important to make sense of our results, it is not necessary for proving e.g. Theorem~\ref{thm:main} below.
We now state our main results.

\begin{theorem}
\label{thm:main}

Suppose $\gamma_*\in\cuL$ is optimizable. Then for any $\eps>0$ there exists an efficient AMP algorithm which outputs $\bsigma\in\Sigma_N$ such that
\begin{align*}
\frac{H_N(\bsigma)}{N} \geq {\color{black}\Par(\gamma_*)-\eps}
\end{align*}
with probability tending to $1$ as $N\to\infty$.

\end{theorem}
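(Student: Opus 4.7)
The plan is to build a two-stage approximate message passing algorithm. Stage~1 constructs an approximate fixed point $\bm_{\ul}\in\R^N$ with $\Vert\bm_{\ul}\Vert^2/N\to \lbq$, whose joint empirical distribution with the external field vector $\bh$ converges to $\Law(\partial_x\Phi_{\gamma_*}(\lbq,X_{\lbq}),h)$ under the Parisi SDE~\eqref{eq:parisiSDE}. Stage~2, initialized from $\bm_{\ul}$, runs the incremental AMP of~\cite{mon18,ams20} along a discretization of $[\lbq,1)$ whose state evolution tracks~\eqref{eq:parisiSDE} on that interval; the final iterate $\bm$ satisfies $\Vert\bm\Vert^2/N\to 1$ and we round to $\bsigma=\sign(\bm)$.

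For Stage~1 I would use a Bolthausen-type iteration~\cite{bolthausen2014iterative}, generalized to arbitrary mixed $p$-spin models and non-trivial $\mathcal L_h$. Schematically, set $\bu^{k+1}=\nabla\widetilde{H}_N(\bm^k)-\text{(Onsager correction)}$ and $\bm^{k+1}=F_k(\bu^{k+1},\bh)$ for carefully chosen nonlinearities $F_k$; these are designed so that the standard AMP state evolution converges, as $k\to\infty$, to the law of $\partial_x\Phi_{\gamma_*}(\lbq,X_{\lbq})$ with $X_{\lbq}$ the value at time $\lbq$ of the driftless part of~\eqref{eq:parisiSDE} starting at $X_0\sim\mathcal L_h$. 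No explicit drift is needed on $[0,\lbq)$ because $\gamma_*$ vanishes there, which is precisely why a fixed-point iteration is appropriate. The resulting $\bm_{\ul}$ is an approximate solution to the generalized TAP equations of~\cite{subag2018free,chen2018generalized,chen2019generalized} at radius $\sqrt{\lbq N}$, heuristically the root of the ultrametric tree of approximate maxima.

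For Stage~2 I would adapt the incremental AMP of~\cite{mon18,ams20}: discretize $[\lbq,1)$ with step $\delta$ and at each step update $\bm$ by an increment whose state evolution matches the infinitesimal law $\xi''(t)\gamma_*(t)\partial_x\Phi_{\gamma_*}(t,X_t)\de t+\sqrt{\xi''(t)}\de B_t$. The energy is then tracked via It\^o's formula applied to $\Phi_{\gamma_*}(t,X_t)$: combined with the Parisi PDE, the drift of $\Phi_{\gamma_*}(t,X_t)$ collapses to $\tfrac{1}{2}\xi''(t)\gamma_*(t)(\partial_x\Phi_{\gamma_*}(t,X_t))^2$, and the optimizability identity $\E[(\partial_x\Phi_{\gamma_*}(t,X_t))^2]=t$ on $[\lbq,1)$ is exactly what keeps $\Vert\bm\Vert^2/N\approx t$ throughout the iteration and makes the total accumulated energy come out to $\Par(\gamma_*)$. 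The final rounding $\bsigma=\sign(\bm)$ costs $o(1)$ in energy because of the terminal condition $\Phi_{\gamma_*}(1,x)=|x|$, so the boundary term in the It\^o/Parisi computation already incorporates the absolute value.

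The main obstacle is Stage~1: constructing the nonlinearities $F_k$ so that the Bolthausen-style AMP converges, in a sufficiently strong Wasserstein sense, to the law prescribed by the Parisi SDE at time $\lbq$. This generalizes~\cite{bolthausen2014iterative,AS20} from the SK and negative-spherical-perceptron settings to arbitrary even $\xi$ with external field, and the convergence proof will likely require an input of the no-overlap-gap information encoded in optimizability. A secondary technical point is matching the Onsager corrections across the phase boundary so that the combined state evolution of Stages~1 and~2 really does correspond to $(X_t)_{t\in[0,1)}$ from a single instance of~\eqref{eq:parisiSDE}; this consistency is what ultimately underlies the $\Par(\gamma_*)$ energy computation.
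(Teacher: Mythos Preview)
Your proposal is correct and matches the paper's approach essentially exactly: a Bolthausen-style Phase~1 AMP with the single fixed nonlinearity $f(\cdot)=\partial_x\Phi_{\gamma_*}(\lbq,\,\cdot)$ applied to $\bw^k+\bh$ to reach the root at radius $\sqrt{\lbq N}$, followed by the IAMP of~\cite{ams20} on $[\lbq,1)$ and sign-rounding. The Stage~1 convergence you flag as the main obstacle is resolved precisely by the optimizability identity $\xi''(\lbq)\,\E[(\partial_{xx}\Phi_{\gamma_*}(\lbq,X_{\lbq}))^2]=1$ (Lemma~\ref{lem:identity2}), which makes the overlap recursion $a_{k+1}=\psi(a_k)$ tangent to the diagonal at its fixed point $\xi'(\lbq)$ (i.e.\ $\psi'(\xi'(\lbq))=1$) and hence attracting from below by convexity of $\psi$.
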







\begin{restatable}{lemma}{optimizable}
\label{lem:optimizable}
If $\gamma_*^{\cuU}$ strictly increases on $[\lbq,1)$ for $\lbq=\inf(\supp(\gamma_*^{\cuU}))$, then no overlap gap holds, i.e. $\gamma_*^{\cuU}$ is optimizable.
\end{restatable}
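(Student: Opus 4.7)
The plan is to derive optimizability directly from the first-order optimality condition for $\gamma_*^{\cuU}$ in the variational problem $\inf_{\gamma\in\cuU}\Par_{\xi,\mathcal L_h}(\gamma)$. Recall from the discussion preceding Lemma~\ref{lem:identity} that the results of \cite{auffinger2015parisi,jagannath2016dynamic,ams20} give the identity
\[
\mathbb E[(\partial_x\Phi_{\gamma_*^{\cuU}}(t,X_t))^2]=t
\]
at every point $t\in[0,1)$ that is a point of strict increase for $\gamma_*^{\cuU}$. This condition is obtained by taking the Gateaux derivative of $\Par$, which admits the explicit formula
\[
\tfrac{d}{d\epsilon}\Par_{\xi,\mathcal L_h}(\gamma+\epsilon\delta\gamma)\big|_{\epsilon=0}
=\tfrac{1}{2}\int_0^1\xi''(t)\,\delta\gamma(t)\,\bigl(\mathbb E[(\partial_x\Phi_\gamma(t,X_t))^2]-t\bigr)\,dt,
\]
and observing that near a point of strict increase one can carry out two-sided feasible perturbations within $\cuU$, forcing the integrand to vanish locally.

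Under the strict monotonicity hypothesis of Lemma~\ref{lem:optimizable}, every $t\in(\lbq,1)$ is a point of strict increase of $\gamma_*^{\cuU}$. The endpoint $t=\lbq$ is also a point of strict increase when $\lbq>0$, since $\gamma_*^{\cuU}\equiv 0$ on $[0,\lbq)$ by definition of $\lbq$ while $\gamma_*^{\cuU}(t)>\gamma_*^{\cuU}(\lbq)$ for $t>\lbq$; if instead $\lbq=0$ then the boundary point $t=0$ is handled by continuity in $t$ of the map $t\mapsto\mathbb E[(\partial_x\Phi_{\gamma_*^{\cuU}}(t,X_t))^2]$, which follows from standard regularity of the Parisi PDE and the associated SDE \eqref{eq:parisiSDE}. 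In either case the cited optimality condition yields
\[
\mathbb E[(\partial_x\Phi_{\gamma_*^{\cuU}}(t,X_t))^2]=t\qquad\text{for every }t\in[\lbq,1),
\]
which is exactly $\lbq$-optimizability of $\gamma_*^{\cuU}$ with $\lbq=\inf(\supp(\gamma_*^{\cuU}))$; hence $\gamma_*^{\cuU}$ is optimizable.

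The only substantive step is importing the first-order identity from the cited references and matching their notion of ``point of increase'' with our ``strictly increasing on $[\lbq,1)$'' hypothesis; these are clearly compatible, so no genuinely new analytic ingredient is required. The mild subtleties at the left endpoint $t=\lbq$ are dispatched as above, so I do not expect any significant obstacle beyond bookkeeping.
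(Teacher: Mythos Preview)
Your proposal is correct and follows the same underlying idea as the paper's proof: both extract the identity $\E[(\partial_x\Phi_{\gamma_*^{\cuU}}(t,X_t))^2]=t$ from first-order optimality of $\gamma_*^{\cuU}$ in $\cuU$. The difference is presentational. You import the ``equality at points of increase'' statement wholesale from \cite{auffinger2015parisi,jagannath2016dynamic,ams20} and then note that strict monotonicity on $[\lbq,1)$ makes every such $t$ a point of increase (with continuity from Lemma~\ref{lem:continuous} covering the endpoint). The paper instead gives a self-contained derivation: it constructs the explicit admissible perturbations $\delta(t)=[\gamma_*^{\cuU}(t_1)-\gamma_*^{\cuU}(t)]\ind_{(t_1,t_2)}(t)$ and $\delta(t)=[\gamma_*^{\cuU}(t_2)-\gamma_*^{\cuU}(t)]\ind_{(t_1,t_2)}(t)$, checks that strict monotonicity keeps $\gamma_*^{\cuU}+s\delta$ in $\cuU$, and reads off the two inequalities $\le t$ and $\ge t$ from Proposition~\ref{prop:DerivativeParisi}. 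Your route is shorter but less self-contained; the paper's route makes transparent exactly where strict monotonicity is used (to ensure the perturbed function remains monotone). One small cleanup: your case split at $t=\lbq$ is unnecessary, since the continuity argument via Lemma~\ref{lem:continuous} handles $\lbq$ regardless of whether $\lbq=0$ or $\lbq>0$.
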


\begin{corollary}
\label{cor:generalq}
Suppose no overlap gap holds. Then for any $\eps>0$ an efficient AMP algorithm outputs $\bsigma\in\Sigma_N$ satisfying
\begin{align*}
    \frac{H_N(\bsigma)}{N} {\color{black}\geq GS(\xi,\mathcal L_h)-\eps}
\end{align*}
with probability tending to $1$ as $N\to\infty$.
\end{corollary}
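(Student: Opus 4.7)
The plan is to obtain this corollary as an essentially immediate combination of Theorem~\ref{thm:main} and Lemma~\ref{lem:identity}. Since $(\xi,\mathcal L_h)$ is optimizable by hypothesis, by definition $\gamma_*^{\cuU}$ is an optimizable element of $\cuL$ (monotone nondecreasing functions in $\cuU$ automatically lie in $\cuL$). Thus Theorem~\ref{thm:main} applies with $\gamma_* = \gamma_*^{\cuU}$ and produces, for any $\varepsilon>0$, an efficient AMP algorithm outputting $\bsigma\in\Sigma_N$ with
\[
\frac{H_N(\bsigma)}{N} \in \left[\Par(\gamma_*^{\cuU})-\varepsilon,\Par(\gamma_*^{\cuU})+\varepsilon\right]
\]
with high probability.

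It then remains to identify $\Par(\gamma_*^{\cuU})$ with $GS(\xi,\mathcal L_h)$. By the Parisi formula (the theorem of Talagrand, Panchenko, and Auffinger--Chen recalled above), $GS(\xi,\mathcal L_h)=\inf_{\gamma\in\cuU}\Par(\gamma)=\Par(\gamma_*^{\cuU})$, so this identification is automatic and no use of Lemma~\ref{lem:identity} is actually required for the numerical value. (Lemma~\ref{lem:identity} is however what makes the optimizability hypothesis meaningful: it guarantees that $\Par(\gamma_*^{\cuU})$ also equals $\inf_{\gamma\in\cuL}\Par(\gamma)$, so that enlarging the class of Parisi functionals to $\cuL$ in Theorem~\ref{thm:main} does not reduce the attainable energy below the true ground state.) Substituting this identity into the bracket above yields the claimed guarantee, completing the proof.

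There is no substantive obstacle: the entire content of the corollary is packaged in the earlier results, and this deduction is essentially bookkeeping, analogous to how Corollary~\ref{cor:q=0} was derived from Theorem~\ref{thm:amsmain} in \cite{ams20}.
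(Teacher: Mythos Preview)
Your proposal is correct and matches the paper's (implicit) argument: the corollary is stated without proof in the paper precisely because it follows immediately from Theorem~\ref{thm:main} applied to $\gamma_* = \gamma_*^{\cuU}$ together with the Parisi formula $GS(\xi,\mathcal L_h)=\Par(\gamma_*^{\cuU})$. Your parenthetical remark about the role of Lemma~\ref{lem:identity} is also accurate---it is not needed for the numerical identification here, only to justify that optimizability is the right hypothesis.
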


\begin{remark}
\label{rem:variational}
Unlike for $\cuU$ the infimum $\inf_{\gamma\in\cuL}\Par(\gamma)$ need not be achieved, i.e. an optimizable $\gamma_*$ need not exist. For instance, one has $\xi''(0)=0$ whenever $c_2=0$. On the other hand if $\gamma$ is optimizable, Corollary~\ref{cor:ED2} and Lemma~\ref{lem:stationarity} (with $\lbq=0$) yield
\[
    \int_0^t \xi''(s)\E[\partial_{xx}\Phi_{\gamma_*}(s,X_s)^2] \de s=\E[\partial_x\Phi_{\gamma_*}(t,X_t)^2]\geq t,\quad t\geq 0.
\]
In light of Lemma~\ref{lem:continuous} the integrand on the left-hand side is $O(\xi''(s))=o(1)$ so the above cannot hold for small $t$. Hence if $c_2=0$ there exists no optimizable $\gamma_*$. We conjecture that conversely a minimizing $\gamma_*^{\cuL}\in\cuL$ exists whenever $c_2>0$, but we do not have a proof. 
\end{remark}

\begin{remark}
By the symmetry of $\wt{H}_N$, the external field can also be a deterministic vector $\bh=(h_1,\dots,h_N)$. As long as the empirical distribution of the values $(h_i)_{i\in [N]}$ is close to $\mathcal L_h$ in $W_2$ distance and the external field is independent of $\wt{H}_N$, exactly the same results hold - see \cite[Appendix A, Theorem 6]{ams20} for the corresponding AMP statement in this slightly more general setting. 
\end{remark}

\subsection{Optimizability and No Overlap Gap}\label{subsec:optogp}

In contrast to Corollary~\ref{cor:generalq}, the paper \cite{gamarnik2019overlap} rules out approximate maximization using AMP for pure $p$-spin models without external field based on an overlap gap property whenever $p\geq 4$ is even. In formulating this result, \cite{gamarnik2019overlap} defines an AMP algorithm to be any iteration of the form given in Subsection~\ref{subsec:amp} with Lipschitz non-linearities $f_0,f_1,\dots,f_{\ell}$ which outputs $\bsigma=\max(-1,\min(1,f_{\ell}(\bz^0,\dots,\bz^{\ell})))$, the final iterate $f_{\ell}$ projected into $[-1,1]^N$. Here $\ell$ is a large constant which cannot grow with $N$. (In \cite{ams20} and the present work, the final iterate is rounded to be in $\bSigma_N$ but this step does not change the asymptotic value of $H_N$ and is essentially irrelevant - see for instance Equation~\eqref{eq:approx1}.) In fact for a broad class of models, their main result based on the overlap gap property applies exactly when $\gamma_*^{\cuU}$ is not optimizable. This justifies our definition of $(\xi,\mathcal L_h)$ as having ``no overlap gap" if and only if it is optimizable.

\begin{proposition}\label{prop:OGP}
Suppose $\gamma_*^{\cuU}$ is not optimizable, where $\xi$ is an even polynomial and the external field $\mathcal L_h=\delta_h$ is constant. Then there is $\eps>0$ such that for any AMP algorithm with random output $\bsigma$,
\[
  \mathbb P\lt[\frac{H_N(\bsigma)}{N}\leq GS_{\xi,h}-\eps\rt]\geq 1-e^{-\Omega(N)}.
\]
\end{proposition}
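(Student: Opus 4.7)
The plan is to adapt the overlap-gap-based impossibility argument of \cite{gamarnik2019overlap}, originally formulated for pure even $p$-spin models without external field, to the present setting of mixed even $\xi$ with constant field $\mathcal L_h = \delta_h$. The argument has two parts: (i) establish an overlap gap property (OGP) among approximate maximizers of $H_N$, and (ii) contradict the OGP using the Lipschitz dependence of any AMP algorithm on the disorder.

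For (i), by the contrapositive of Lemma~\ref{lem:optimizable}, non-optimizability of $\gamma_*^{\cuU}$ forces the existence of an open interval $(q_1, q_2) \subseteq [\inf\supp(\gamma_*^{\cuU}), 1)$ on which $\gamma_*^{\cuU}$ is constant. Using the two-replica Parisi formula for the coupled Hamiltonian $H_N(\bsigma_1) + H_N(\bsigma_2)$ constrained to $\langle\bsigma_1, \bsigma_2\rangle / N \approx q$, one argues that for each $q \in (q_1, q_2)$ the constrained maximum is strictly below $2\cdot GS$. This is a Guerra-style interpolation analogous to the calculation in \cite{gamarnik2019overlap}, but invoking the Parisi formula for mixed external-field models \cite{panchenko2014parisi, auffinger2017parisi, chen2018energy}. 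Exponential concentration of the constrained free energy then yields: there exist $\delta, \varepsilon' > 0$ such that with probability at least $1 - e^{-\Omega(N)}$, no pair $\bsigma_1, \bsigma_2 \in [-1,1]^N$ with $H_N(\bsigma_i)/N \geq GS - \varepsilon'$ can satisfy $\langle\bsigma_1, \bsigma_2\rangle/N \in (q_1 + \delta, q_2 - \delta)$.

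For (ii), take independent copies $\widetilde H_N^{(1)}, \widetilde H_N^{(2)}$ of the random part of $H_N$ and set $H_N^{(\tau)}(\bsigma) = h\sum_i \sigma_i + \cos(\tau)\widetilde H_N^{(1)}(\bsigma) + \sin(\tau)\widetilde H_N^{(2)}(\bsigma)$ for $\tau \in [0, \pi/2]$. Apply the given Lipschitz AMP algorithm to $H_N^{(\tau)}$, producing $\bsigma^{(\tau)} \in [-1,1]^N$; because AMP is a bounded composition of Lipschitz non-linearities and tensor contractions against the Gaussian disorder, $\tau \mapsto \bsigma^{(\tau)}$ is uniformly Lipschitz with high probability, and so is $R(\tau) := \langle\bsigma^{(0)}, \bsigma^{(\tau)}\rangle / N$. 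At $\tau = 0$, $R(0) = \|\bsigma^{(0)}\|^2/N$ concentrates at a value $Q_\star \geq q_2$ determined by the AMP state evolution for a near-maximizer. At $\tau = \pi/2$ the two instances are independent, so $R(\pi/2)$ concentrates at the squared asymptotic mean $m_\star^2 \leq q_1$. By Lipschitz continuity in $\tau$, some $\tau_0$ yields $R(\tau_0) \in (q_1 + \delta, q_2 - \delta)$, contradicting the OGP; a union bound over a fine $\tau$-grid upgrades this to the $e^{-\Omega(N)}$ failure bound.

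The principal obstacle is the two-replica constrained Parisi computation in (i). The no-field pure $p$-spin case of \cite{gamarnik2019overlap} exploits $\bsigma \mapsto -\bsigma$ symmetry and the simple form $\xi(z) = z^p$; with constant external field and mixed even $\xi$, the symmetry is broken and the constrained Parisi functional is more involved. All the necessary tools — Guerra's interpolation, the Parisi formula with external field, and Ghirlanda-Guerra identities for coupled models — are available in the literature, but identifying the state-evolution values $Q_\star$ and $m_\star^2$ on the correct sides of the flat interval $(q_1, q_2)$ requires careful analysis via the Parisi SDE \eqref{eq:parisiSDE} together with the non-optimizability assumption.
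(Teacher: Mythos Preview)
Your high-level outline (OGP plus disorder interpolation plus Lipschitz continuity of AMP) matches the paper's strategy, and your part~(i) is essentially the paper's Lemma~\ref{lem:singleogp}. However there is a genuine gap in how (i) and (ii) fit together.

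The OGP you establish in (i) is a \emph{single-Hamiltonian} statement: two near-maximizers of the \emph{same} $H_N$ cannot have overlap in the forbidden window. But in (ii), the intermediate-value step produces $\tau_0$ with $R(\tau_0)=\langle\bsigma^{(0)},\bsigma^{(\tau_0)}\rangle_N\in(q_1+\delta,q_2-\delta)$, where $\bsigma^{(0)}$ is a near-maximizer of $H_N^{(0)}$ and $\bsigma^{(\tau_0)}$ is a near-maximizer of $H_N^{(\tau_0)}$. These are near-maximizers of \emph{different} Hamiltonians, so the single-Hamiltonian OGP says nothing and there is no contradiction. The paper closes this gap by upgrading to a \emph{path} OGP (Lemma~\ref{lem:pathogp}): for every $s,t\in[0,1]$ and every pair of near-maximizers $\bsigma_s,\bsigma_t$ of $H_{N,s},H_{N,t}$, one has $|\langle\bsigma_s,\bsigma_t\rangle_N|\notin(\nu_1,\nu_2)$. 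The extra ingredient needed is disorder chaos (Lemma~\ref{lem:disorderchaos}, from \cite{chen2018energy}): for $t\in(0,1)$, near-maximizers of $H_{N,0}$ and $H_{N,t}$ have overlap concentrating at some $q_t\in[0,\lbq]$, which lies strictly below the gap window $(a,b)$ since $a>\lbq$. Combining disorder chaos with the single OGP over a discretized path yields the path OGP, and also gives the endpoint separation ($H_{N,0},H_{N,1}$ are $\nu_1$-separated) for free.

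Your proposed workaround via state evolution does not fill this hole. Even if you could establish $R(0)\approx Q_\star\ge q_2$ and $R(\pi/2)\approx m_\star^2\le q_1$, the IVT only hands you a $\tau_0$ with $R(\tau_0)$ in the window; without the path OGP nothing forbids this. Separately, the claim $m_\star^2\le q_1$ with $q_1>\lbq$ is itself unjustified: for a generic Lipschitz AMP with constant field $h\ne 0$ the output mean is algorithm-dependent and bears no a~priori relation to the flat interval of $\gamma_*^{\cuU}$. The paper's disorder-chaos route handles both the path OGP and the endpoint separation as landscape statements, independent of which AMP is being ruled out.
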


The proof of Proposition~\ref{prop:OGP} is identical to that of \cite[Theorem 3.3]{gamarnik2019overlap} and we give an outline in Section~\ref{sec:OGP}. Taken together, Corollary~\ref{cor:generalq} and Proposition~\ref{prop:OGP} exactly characterize the mean-field Ising spin glasses for which approximate maximization is possible by AMP, at least when $\xi$ is even and the external field is constant. We remark that similar lower bounds were studied for the class of constant-degree polynomial algorithms in \cite{GJW20}. These results also extend to any non-optimizable Ising spin glass with even $\xi$ and constant $h$, ruling out approximate maximization algorithms in a slightly weaker sense. Constant-degree polynomials encompass AMP in most cases by approximating each non-linearity $f_{\ell}$ by a polynomial in a suitable sense, see e.g. \cite[Theorem 6]{mon18}.

We conclude this subsection with a brief discussion on our terminology. Our definition of optimizability is closely related to ``full" or ``continuous" replica symmetry breaking. For example, the definitions of full RSB used in \cite{mon18,subag2018following} essentially coincide with $0$-optimizability. However these terms seem to be slightly ambiguous, as they can also refer to functions $\gamma_*^{\cuU}$ which are strictly increasing on \textbf{any} nontrivial interval instead of being piece-wise constant as in finite RSB. For example, the physics paper \cite{charbonneau2014exact} describes 
``the
case where the function $\Delta(x)$ is allowed to have a continuous part: this can be thought
as an appropriate limit of the $k$-RSB construction when $k\to\infty$ and is therefore called
‘fullRSB’ or ‘$\infty$-RSB’~". Adding to the potential confusion, \cite{auffinger2017sk} uses the term ``infinite step" RSB to refer to functions $\gamma_*^{\cuU}$ with infinity many points of increase, possibly at a discrete set.
We therefore use ``no overlap gap" as an unambiguous term for the condition that $\gamma_*^{\cuU}$ is optimizable, while keeping in mind that it closely is implied via Lemma~\ref{lem:optimizable} by a strong, specific form of full RSB.

\subsection{Branching IAMP and Spherical Spin Glasses}
\label{subsec:branching-IAMP-intro}

Under no overlap gap, one expects that any finite ultrametric space of diameter at most $\sqrt{2(1-\lbq)}$ (with size independent of $N$) can be realized by approximate maximizers of $H_N$. In fact a modification of our $\lbq$-IAMP algorithm is capable of explicitly producing such realizations. In Section~\ref{sec:branch} we give a \emph{branching} $\lbq$-IAMP algorithm which for any finite ultrametric space $X$ and optimizable $\gamma_*$ constructs points $(\bsigma_x)_{x\in X}$ such that $\frac{H_N(\bsigma_x)}{N}\simeq\Par(\gamma_*)$ and $\frac{\|\bsigma_x-\bsigma_y\|_2}{\sqrt{N}}\simeq d_X(x,y)$ for each $x,y\in X$. 
{\color{black}
Recall that an ultrametric space $X$ is a metric space which satisfies the ultrametric triangle inequality 
\[
d_X(x,y)\leq\max(d_X(x,z),d_X(y,z)),
\quad
\forall~x,y,z\in X.
\]
Moreover any finite ultrametric can be canonically identified with the leaf set of a rooted tree, see e.g. \cite{bocker1998recovering}.
}

The idea is to occasionally reset the IAMP part of the algorithm with external randomness. A similar strategy was proposed but not analyzed in \cite{alaoui2020algorithmic}. 

\begin{restatable}{theorem}{branch}
\label{thm:branch}
Let $\gamma_*\in\cuL$ be optimizable, and fix a finite ultrametric space $(X,d_X)$ with diameter at most $\sqrt{2(1-\lbq)}$ as well as $\eps>0$. Then an efficient AMP algorithm constructs points $\{\bsigma_x|x\in X\}$ in $\bSigma_N$ satisfying

\begin{align*}
    \frac{H_N(\bsigma_x)}{N}
    &{\color{black}\geq \Par(\gamma_*)-\eps},
    \quad x\in X,
    \\
    \frac{\|\bsigma_x-\bsigma_y\|}{\sqrt N}
    &\in 
    \lt[d_X(x,y)-\eps,d_X(x,y)+\eps\rt],
    \quad x,y\in X
\end{align*}
with probability tending to $1$ as $N\to\infty$.
\end{restatable}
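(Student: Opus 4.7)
I would first recode $(X, d_X)$ by the overlaps $q(x,y) := 1 - d_X(x,y)^2/2 \in [\lbq, 1]$; the diameter bound $\sqrt{2(1-\lbq)}$ is precisely what gives the lower endpoint. The ultrametric inequality on $d_X$ translates into $q(x,z) \geq \min\bigl(q(x,y), q(y,z)\bigr)$, so $\{q(x,y)\}_{x,y \in X}$ is realized by a rooted tree $T$ whose leaves are the elements of $X$ and whose internal nodes carry depths in $[\lbq, 1]$, with the depth of the least common ancestor of $x$ and $y$ equal to $q(x,y)$.

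The branching algorithm proceeds as follows. First, run the phase-one subroutine of Theorem~\ref{thm:main} once to produce a shared base point $\bm_{\ul}$ with $\|\bm_{\ul}\|^2 / N \approx \lbq$. Then, for each edge $e$ of $T$ going from a node at depth $s$ to a child at depth $s'$, draw an independent sequence of auxiliary Gaussian vectors $\{\bg_e^k\}$ to serve as the Brownian-increment inputs of the IAMP step of Theorem~\ref{thm:main} on the time window $[s, s']$; distinct edges of $T$ (including sibling edges) receive independent Gaussians. For each leaf $x \in X$, concatenating the IAMP updates along the unique root-to-$x$ path of $T$ yields a terminal iterate $\bm^K_x$, which is rounded to $\bsigma_x = \sign(\bm^K_x) \in \bSigma_N$.

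For the energy, each $\bsigma_x$ is, in isolation, the output of the single-branch two-phase algorithm of Theorem~\ref{thm:main} for a specific realization of its driving Gaussians, so $H_N(\bsigma_x)/N \to \Par(\gamma_*)$ with high probability. For the overlaps, the key object is $M_t := \partial_x \Phi_{\gamma_*}(t, X_t)$ under the Parisi SDE~\eqref{eq:parisiSDE}: differentiating the Parisi PDE gives $\mathrm{d}M_t = \partial_{xx}\Phi_{\gamma_*}(t, X_t)\sqrt{\xi''(t)}\,\mathrm{d}B_t$, so $M_t$ is a martingale, and the terminal condition $\Phi_{\gamma_*}(1,x) = |x|$ gives $M_1 = \sign(X_1)$. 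A joint tree-indexed state evolution for the branching algorithm yields, in the $N \to \infty$ limit, coupled Parisi-SDE trajectories $\{X_t^{(x)}\}_{x \in X}$ sharing a common driving Brownian motion up to time $q(x,y)$ and using independent motions thereafter. With $s = q(x,y)$, conditioning at time $s$ and using the martingale property gives
\[
\E\!\left[M_1^{(x)} M_1^{(y)}\right] \;=\; \E\!\left[M_s^{(x)} M_s^{(y)}\right] \;=\; \E[M_s^2] \;=\; s,
\]
the last step being the optimizability identity~\eqref{eq:opt} at $t = s \in [\lbq, 1]$. Translating through state evolution gives $\langle \bsigma_x, \bsigma_y \rangle / N \to s$, hence $\|\bsigma_x - \bsigma_y\|/\sqrt{N} \to \sqrt{2(1-s)} = d_X(x,y)$.

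The main technical obstacle is to lift the single-run state evolution of Theorem~\ref{thm:main} to a tree-indexed joint state evolution governing the full family $(\bm^k_x)_{x \in X, k}$, and in particular to verify that injecting independent Gaussians along sibling edges of $T$ reproduces the claimed coupling of Parisi SDEs. This is standard AMP machinery but requires careful bookkeeping for the tree-indexed Gaussian process of iterates. Two routine additional points: the shared phase-one subroutine must be argued to couple all leaves at $\bm_{\ul}$ in a way consistent with identical $X_{\lbq}^{(x)}$ across $x$; and coordinatewise rounding must preserve both the energy (as in Theorem~\ref{thm:main}) and the pairwise overlaps up to $o_N(1)$, which follows from Gaussianity of the limiting coordinate distributions and the fact that $M_1 = \sign(X_1)$ matches the rounding map.
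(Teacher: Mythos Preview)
Your high-level strategy matches the paper's: encode $(X,d_X)$ as a rooted tree with branching depths $q(x,y)$, run phase one once to get a shared base, branch the IAMP phase using independent randomness along sibling edges, and read off overlaps via the martingale identity $\E[M_1^{(x)}M_1^{(y)}]=\E[M_s^2]=s$ at the branching time $s=q(x,y)$, which is exactly how Lemmas~\ref{lem:BMlimit2} and~\ref{lem:SDE2} are used.

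However, your description of the branching mechanism is where the real work lies, and as written it is either wrong or incomplete. You propose to ``draw an independent sequence of auxiliary Gaussian vectors $\{\bg_e^k\}$ to serve as the Brownian-increment inputs of the IAMP step'' along each edge. Taken literally this fails: the IAMP increments $\bz^{\ell+1}-\bz^\ell$ must come from $\nabla\widetilde H_N$---that is where the energy is gained (Lemma~\ref{lem:iampenergy})---not from externally supplied Gaussians. The paper's mechanism (equation~\eqref{eq:branchIAMP}) injects a \emph{single} external Gaussian step $\sqrt{\delta}\,\bg^{(q_i,a)}$ per branch at each branching time, and then continues with ordinary IAMP driven by $\nabla\widetilde H_N$. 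The key technical content (Lemma~\ref{lem:BMlimit2}, specifically~\eqref{BM2.52} and~\eqref{BM5.52}) is that this single injected seed already causes the \emph{subsequent} AMP-generated increments on the two branches to become asymptotically independent, which is what realizes the independent-Brownian coupling you want in the limit. This decorrelation-from-a-seed statement is the substance behind your phrase ``lift the single-run state evolution to a tree-indexed joint state evolution,'' and it is not automatic.

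A second, smaller point: the paper does not sample the $\bg^{(q_i,a)}$ as fresh independent Gaussians. It manufactures them as early AMP iterates via the preprocessing trick of Subsection~\ref{subsec:init}, precisely so that everything stays within the standard state-evolution framework of Proposition~\ref{prop:AMP} (Lemma~\ref{lem:independent} and Corollary~\ref{cor:independent} then recover the needed conditional independence). The paper explicitly remarks at the end of Section~\ref{sec:branch} that using literally independent Gaussians ``almost certainly works as well'' but would require a more general state evolution than is available---so what you call ``standard AMP machinery'' is exactly the piece that requires extra care under your setup.
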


In Section~\ref{sec:sphere} we give corresponding results for spherical spin glasses, extending \cite{subag2018following} to the case of non-trivial external field. At zero temperature, \cite[Theorem 1]{chen2017parisi} determines the free energy in spherical spin glasses based on a positive, non-decreasing function $\alpha:[0,1)\to [0,\infty)$ as well as a constant $L$. (See also \cite{jagannath2017low} for related results.) More precisely, they show the asymptotic ground state energy is given by the unique minimizer to the variational problem:
\begin{align}
\label{eq:sphere-alpha}
    GS_{\ss}(\xi,h)
    &=
    \min_{L,\alpha\in \mathcal K}\mathcal Q(L,\alpha);
    \\
\nonumber
    \mathcal K
    &=
    \lt\{(L,\alpha)\in (0,\infty)\times \cuU ~:~L>\int_0^1 \alpha(s)\de s\rt\}
    ;
    \\
\nonumber
    2\mathcal Q(L,\alpha)
    &=
    (\xi'(1)+h^2)L-\int_0^1 \xi''(q)
    \lt(\int_0^q \alpha(s)\de s\rt)\de q + \int_0^1 \frac{\de q}{L-\int_0^q \alpha(s)\de s}.
\end{align}

The associated definition of no overlap gap is as follows.

\begin{definition}
    The spherical mixed $p$-spin model is said \emph{no overlap gap} if for some $\lbq_{\ss}\in [0,1)$, the unique minimizing $\alpha\in\cuU$ in \eqref{eq:sphere-alpha} is strictly increasing on $[\lbq_{\ss},1)$ and satisfies $\alpha(q)=0$ for $q\leq\lbq_{\ss}$. 
\end{definition}

Unlike the Ising case, we do not formulate a generalized variational principle and only show how to achieve a natural energy value, which coincides with the ground state energy when no overlap gap holds by \cite[Proposition 2]{chen2017parisi}. We also exactly characterize the spherical models exhibiting no overlap gap, which slightly extends the same result.

\begin{restatable}{theorem}{sphereGS}
\label{thm:sphereGS}
Suppose $\xi$ and $\mathcal L_h$ satisfy $\E[h^2]+\xi'(1)<\xi''(1)$, and let $\lbq_{\ss}\in (0,1)$ be the unique solution to $\E[h^2]+\xi'(\lbq_{\ss})=\lbq_{\ss}\xi''(\lbq_{\ss})$. Then the spherical spin glass with parameters $\xi,\mathcal L_h$ has no overlap gap if and only if $\xi''(q)^{-1/2}$ is concave on $q\in[\lbq_{\ss},1]$, in which case $\alpha$ is supported on $[\lbq_{\ss},1]$ and takes the explicit form
\[
  \alpha(s)=
  \bigg\{\begin{array}{lr}
    0, &s\in [0,\lbq_{\ss})\\
    \frac{\xi'''(s)}{2\xi''(s)^{3/2}}, &s\in [\lbq_{\ss},1].
  \end{array}
\] 
Moreover the ground-state energy satisfies
\[
  GS_{\ss}(\xi,\mathcal L_h)\geq \lbq_{\ss}\sqrt{\xi''(\lbq_{\ss})}+\int_{\lbq_{\ss}}^1 \sqrt{\xi''(q)}\de q
\] 
with equality if and only if no overlap gap occurs.
\end{restatable}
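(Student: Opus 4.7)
My plan is to invoke the zero-temperature Crisanti-Sommers / Chen variational principle for $GS_{\ss}(\xi,\mathcal L_h)$ and carry out an analysis analogous to the Ising treatment given above. That principle expresses $GS_{\ss}=\inf_{(\alpha,L)}\mathcal P_{\ss}(\alpha,L)$ over pairs $(\alpha,L)$ with $\alpha:[0,1)\to[0,\infty)$ non-decreasing and $L$ an auxiliary scalar encoding the spherical constraint. By analogy with the passage from $\cuU$ to $\cuL$, I would introduce a relaxed class that drops the monotonicity requirement; because this class is larger, its infimum is automatically a lower bound on $GS_{\ss}$. The strategy is to compute this relaxed infimum explicitly, show it equals the stated value $\lbq_{\ss}\sqrt{\xi''(\lbq_{\ss})}+\int_{\lbq_{\ss}}^1\sqrt{\xi''(q)}\,\de q$, and observe that it coincides with $GS_{\ss}$ exactly when the relaxed minimizer turns out to be admissible (i.e.\ monotone), which will be the no-overlap-gap condition.

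First I would establish that $\E[h^2]+\xi'(\lbq)=\lbq\xi''(\lbq)$ has a unique root $\lbq_{\ss}\in(0,1)$: at $\lbq=0$ the left side is $\E[h^2]\geq 0$ while the right side vanishes, at $\lbq=1$ the hypothesis $\E[h^2]+\xi'(1)<\xi''(1)$ gives the strict reverse inequality, and the derivative of (RHS $-$ LHS) equals $\lbq\,\xi'''(\lbq)\geq 0$, which is strictly positive off a set of measure zero, yielding both existence and uniqueness.

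Next I would derive the relaxed minimizer by stationarity under the ansatz $\supp(\alpha)=[\lbq,1]$. The interior Euler-Lagrange condition takes the form $\xi''(s)\bigl(L+\int_s^1\alpha(r)\,\de r\bigr)^2=1$, i.e.\ $L+\int_s^1\alpha(r)\,\de r=\xi''(s)^{-1/2}$. Differentiating in $s$ gives the pointwise formula $\alpha^*(s)=\xi'''(s)/\bigl(2\xi''(s)^{3/2}\bigr)$ on $(\lbq,1]$, while evaluation at $s=1$ pins down $L^*=\xi''(1)^{-1/2}$. The boundary condition at $s=\lbq$, combined with the external-field contribution in $\mathcal P_{\ss}$, then reproduces $\E[h^2]+\xi'(\lbq_{\ss})=\lbq_{\ss}\xi''(\lbq_{\ss})$, identifying $\lbq=\lbq_{\ss}$. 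The candidate $\alpha^*$ is non-decreasing on $[\lbq_{\ss},1]$ iff $\xi'''(s)/\xi''(s)^{3/2}$ is non-decreasing there, which is equivalent to concavity of $\xi''(s)^{-1/2}$, since the derivative of the latter is $-\tfrac{1}{2}\xi'''(s)\xi''(s)^{-3/2}$.

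Finally, I would substitute $(\alpha^*,L^*)$ into $\mathcal P_{\ss}$, using the identity $L^*+\int_s^1\alpha^*(r)\,\de r=\xi''(s)^{-1/2}$ and an integration by parts on the $\xi''$-weighted piece; the resulting expression should collapse to $\lbq_{\ss}\sqrt{\xi''(\lbq_{\ss})}+\int_{\lbq_{\ss}}^1\sqrt{\xi''(q)}\,\de q$. Because the relaxed class contains the monotone one, this gives the desired lower bound $GS_{\ss}\geq\lbq_{\ss}\sqrt{\xi''(\lbq_{\ss})}+\int_{\lbq_{\ss}}^1\sqrt{\xi''(q)}\,\de q$; when $\alpha^*$ is monotone, it is also admissible in the original problem and equality holds, whereas in the non-admissible case uniqueness of the relaxed minimizer (a spherical analog of Lemma~\ref{lem:identity}) forces strict inequality. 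The main obstacle will be setting up this relaxed spherical variational principle rigorously and establishing uniqueness of its minimizer; once those are in hand, the Euler-Lagrange computation, the concavity equivalence, and the closed-form evaluation are essentially routine.
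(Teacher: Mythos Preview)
Your approach is conceptually sound and mirrors the Ising strategy of passing from $\cuU$ to $\cuL$, but it is a genuinely different route from the paper's. For the lower bound and equality case, the paper does not introduce a relaxed variational class at all. Instead it works directly with the Crisanti--Sommers functional $2\mathcal Q(L,\alpha)=(\E[h^2]+\xi'(1))L-\int_0^1\xi''(q)a(q)\,\de q+\int_0^1(L-a(q))^{-1}\,\de q$ (with $a(q)=\int_0^q\alpha$), rewrites the constant $\E[h^2]+\xi'(1)$ as $\int_0^{\lbq_{\ss}}\xi''(\lbq_{\ss})\,\de q+\int_{\lbq_{\ss}}^1\xi''(q)\,\de q$ using the defining equation of $\lbq_{\ss}$, and then applies the pointwise AM--GM inequality $\xi''(\cdot)(L-a(q))+(L-a(q))^{-1}\ge 2\sqrt{\xi''(\cdot)}$ on each subinterval. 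This immediately gives the lower bound for every admissible $(L,\alpha)$ and identifies the equality case as forcing $L-a(q)=\xi''(q)^{-1/2}$ on $[\lbq_{\ss},1]$ and $\alpha\equiv 0$ on $[0,\lbq_{\ss})$, from which monotonicity of $\alpha$ (and hence admissibility) is equivalent to concavity of $\xi''(q)^{-1/2}$. For the remaining equivalence (no overlap-gap $\Leftrightarrow$ concavity), the paper cites and adapts the first-order optimality conditions of \cite{chen2017parisi} directly on the monotone problem rather than going through a relaxed minimizer. Your plan would work, but the ``main obstacle'' you identify---rigorously setting up the relaxed spherical functional, proving strict convexity/uniqueness, and justifying the Euler--Lagrange computation---is entirely bypassed by the paper's elementary AM--GM argument; what you would gain in structural analogy with the Ising case you pay for in technical overhead.
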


\begin{restatable}{theorem}{sphereopt}
\label{thm:sphereopt}
Suppose $\xi$ and $h\sim\mathcal L_h$ satisfy $\E[h^2]+\xi'(1)<\xi''(1)$, and let $\lbq_{\ss}\in (0,1)$ be the unique solution to $\E[h^2]+\xi'(\lbq_{\ss})=\lbq_{\ss}\xi''(\lbq_{\ss})$. Then there exists an efficient AMP algorithm outputting $\bsigma\in\mathbb S^{N-1}(\sqrt N)$ such that
\[
  \frac{H_N(\bsigma)}{N} \simeq\lbq_{\ss}\sqrt{\xi''(\lbq_{\ss})}+\int_{\lbq_{\ss}}^1 \sqrt{\xi''(q)}\de q.
\]
If on the other hand $\E[h^2]+\xi'(1)\geq\xi''(1)$, then there is an efficient AMP algorithm outputting $\bsigma\in\mathbb S^{N-1}(\sqrt N)$ with
\[
  \frac{H_N(\bsigma)}{N} \simeq \sqrt{\E[h^2]+\xi'(1)}.
\]
\end{restatable}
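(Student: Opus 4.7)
The plan combines two phases, paralleling Theorem~\ref{thm:main} but with Subag's Hessian-based ascent from \cite{subag2018following} replacing the IAMP descent used in the Ising case. Under $\E[h^2]+\xi'(1)<\xi''(1)$, both phases are used; the defining equation $\E[h^2]+\xi'(\lbq_{\ss})=\lbq_{\ss}\xi''(\lbq_{\ss})$ is precisely the de~Almeida--Thouless replica-symmetric (RS) stability condition at overlap level $\lbq_{\ss}$. Under the complementary hypothesis the RS regime extends to the full sphere and Phase~1 alone suffices; in either case, the target value matches the lower bound supplied by Theorem~\ref{thm:sphereGS}.

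\emph{Phase~1 (landing).} I construct an AMP iteration whose state evolution converges to the RS fixed point of the spherical TAP equations at overlap $\lbq_{\ss}$, in analogy with Bolthausen's construction for SK \cite{bolthausen2014iterative}. With a suitable Lipschitz non-linearity the iteration produces $\bm\in\R^N$ with $\|\bm\|^2/N\to\lbq_{\ss}$ and
\[\frac{H_N(\bm)}{N}\simeq\sqrt{\lbq_{\ss}\bigl(\E[h^2]+\xi'(\lbq_{\ss})\bigr)}=\lbq_{\ss}\sqrt{\xi''(\lbq_{\ss})},\]
where the equality is the critical identity defining $\lbq_{\ss}$. Convergence is guaranteed because the RS fixed point is (marginally) attractive under the hypothesis. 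In Case~2 the same type of iteration is run at $q=1$ directly, yielding the claimed value $\sqrt{\E[h^2]+\xi'(1)}$ and no climbing phase is needed.

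\emph{Phase~2 (climbing).} Starting from $\bm^{(0)}=\bm$ at squared-radius fraction $\lbq_{\ss}$, I apply Subag's incremental scheme \cite{subag2018following}: discretize $[\lbq_{\ss},1]$ with step size $\delta>0$, and at step $k$ compute the top unit eigenvector $\bv^{(k)}$ of the tangent-space projection of $\nabla^2\widetilde H_N(\bm^{(k)})$, then let $\bm^{(k+1)}$ be the projection of $\bm^{(k)}+\sqrt{\delta N}\,\bv^{(k)}$ onto the sphere of squared-radius fraction $\lbq_{\ss}+(k+1)\delta$. The restricted Hessian's top eigenvalue concentrates near $2\sqrt{\xi''(q_k)}$, so a Taylor expansion yields a per-step gain of $\delta\sqrt{\xi''(q_k)}$ in $H_N/N$ from the Hessian term; the linear term $\sqrt{\delta N}\,\langle h+\nabla\widetilde H_N(\bm^{(k)}),\bv^{(k)}\rangle$ averages out by orthogonality between $\bv^{(k)}$ and the information that determined $\bm^{(k)}$. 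Telescoping and sending $\delta\to 0$ yields the additional $\int_{\lbq_{\ss}}^1\sqrt{\xi''(q)}\,\de q$.

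\emph{Main obstacle.} The delicate step is Phase~1 for spherical models with external field: the constraint is a single sphere rather than a product of intervals, so a dedicated AMP iteration and its state evolution must be designed, and the output $\bm$ must have the correct norm up to $o(\sqrt N)$ error. Equally important is to verify that $\bm$ is compatible with Phase~2, meaning that the Hessian of $\widetilde H_N$ on the tangent space at $\bm$ remains approximately a rescaled GOE matrix after conditioning on the Phase~1 randomness. This decoupling between the two stages is the key glue step; once it is in place, the spherical setting becomes technically lighter than the Ising case of Theorem~\ref{thm:main} because there is no ultrametric descent to simulate.
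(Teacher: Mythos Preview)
Your two-phase outline is correct and matches the paper's structure. The paper's Phase~1 is exactly the AMP iteration you describe, with the linear non-linearity $\bm^k_{\ss}=\bx^k_{\ss}\sqrt{\lbq_{\ss}/(\E[h^2]+\xi'(\lbq_{\ss}))}$; state evolution and the auxiliary-step energy computation (as in Lemma~\ref{lem:RSenergy}) give exactly the value $\sqrt{\lbq_{\ss}(\E[h^2]+\xi'(\lbq_{\ss}))}=\lbq_{\ss}\sqrt{\xi''(\lbq_{\ss})}$ you state.

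For Phase~2 the paper's primary route differs from yours: rather than Subag's Hessian ascent, it runs IAMP with the explicit non-linearities $u(t,x)=\xi''(t)^{-1/2}$ and $v(t,x)=0$, then invokes Lemma~\ref{lem:iampenergy} to get the gain $\int_{\lbq_{\ss}}^1\xi''(t)u(t)\,\de t=\int_{\lbq_{\ss}}^1\sqrt{\xi''(t)}\,\de t$. The paper does mention Subag's approach as an alternative, so your route is also valid. The advantage of the IAMP route is that it stays entirely within the AMP framework, so state evolution applies end-to-end and the decoupling concern you flag as the ``main obstacle'' (conditional GOE behavior of the Hessian after Phase~1) simply does not arise. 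Your route is conceptually cleaner in that it separates the two phases, but it requires the extension of \cite[Theorem~4]{subag2018following} to arbitrary starting points with the conditioning handled; the paper asserts this extension is immediate without details.

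One correction to your assessment of difficulties: Phase~1 is actually \emph{easier} in the spherical case than in the Ising case, not harder. The non-linearity is linear (no $\partial_x\Phi_{\gamma_*}$ appears), so the state evolution recursion $\psi_{\ss}$ is explicit and the convergence argument is elementary. There is no sphere constraint during the iteration---the iterates $\bm^k_{\ss}$ live in $\R^N$ and only the final output is normalized via $\bsigma_{\ss}=\bm^{\ubl}_{\ss}/\|\bm^{\ubl}_{\ss}\|_N$.
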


\begin{remark}
\label{rem:spherers} 
    If $\E[h^2]+\xi'(1)\geq\xi''(1)$ then the model is replica-symmetric by \cite[Proposition 1]{chen2017parisi}. When $\E[h^2]+\xi'(1)<\xi''(1)$, the function $f(q)=q\xi''(q)-\xi'(q)-\mathbb E[h^2]$ is increasing and satisfies $f(0)<0<f(1)$, hence has a unique root $\lbq_{\ss}\in (0,1)$. 
\end{remark}

\begin{remark}
    Subsequently to the present work and in collaboration with Brice Huang, we showed in \cite{huang2021tight} that the algorithms presented in this paper are optimal in some sense. More precisely, it is shown that Proposition~\ref{prop:OGP} can be strengthed to say that
    \begin{equation}
    \label{eq:B-OGP}
      \mathbb P\lt[\frac{H_N(\bsigma)}{N}\leq \inf_{\gamma\in\cuL}\Par(\gamma)+\eps\rt]\geq 1-e^{-\Omega(N)}
    \end{equation}
    for any $\eps>0$.
    Here, as in Proposition~\ref{prop:OGP}, $\bsigma\in [-1,1]^N$ is the output of an AMP algorithm whose number of iterations is independent of $N$. This result applies more generally to arbitrary algorithms with suitably Lipschitz dependence on the disorder variables defining $H_N$. In the spherical case, \cite{huang2021tight} similarly shows that the energy attained in Theorem \ref{thm:sphereopt} is asymptotically best possible for Lipschitz algorithms; see Proposition 2.2 therein.

    The essential idea of \cite{huang2021tight} is to consider general finite ultrametric spaces (with size independent of $N$) of points $\bsigma$ with large energy. They show that for $E>\inf_{\gamma\in\cuL}\Par(\gamma)$, the level sets
    \[
      S_E\equiv \lt\{\bsigma\in [-1,1]^N:\frac{H_N(\bsigma)}{N}\geq E\rt\}
    \]
    \textbf{do not} contain approximately isometric embeddings of sufficiently complicated finite ultrametrics. (Technically proving \eqref{eq:B-OGP} requires a more complicated obstruction involving a correlated family of different Hamiltonians.) Theorem~\ref{thm:branch} is a sharp converse to and was a key inspiration for this result, as it constructs arbitrary finite ultrametric configurations at energy $\Par(\gamma_*)$ for optimizable $\gamma_*$. See the introduction of \cite{huang2021tight} for further discussion and implications.
\end{remark}

\subsection*{Acknowledgement}

We thank Ahmed El Alaoui, Brice Huang, Andrea Montanari, and the anonymous referees for helpful comments. We thank David Gamarnik and Aukosh Jagannath for clarifying the terminological issues around full RSB. We thank Wei-Kuo Chen for communicating the proof of the lower bound with equality case for $GS_{\ss}$ in Theorem~\ref{thm:sphereGS}. This work was supported by an NSF graduate research fellowship and a William R. and Sara Hart Kimball Stanford graduate fellowship.

\section{Technical Preliminaries}
\label{sec:prelim}

We will use ordinary lower-case letters for scalars $(m,x,\dots,)$ and bold lower-case for vectors $(\bm,\bx)$. Ordinary upper-case letters are used for the state-evolution limits of AMP as in Proposition~\ref{prop:AMP} such as $(X_j^{\delta},Z_j^{\delta},N_j^{\delta})$ as well as for continuous-time stochastic processes such as $(X_t,Z_t,N_t)$. We denote limits in probability as $N\to\infty$ by $\plim_{N\to\infty}(\cdot)$. We write $x\simeq y$ to indicate that $\plim_{N\to\infty}(x-y)=0$ where $x,y$ are random scalars.
 
We will use the ordinary inner product $\langle \bx,\by\rangle=\sum_{i=1}^N x_iy_i$ as well as the normalized inner product $\langle \bx,\by\rangle_N=\frac{\sum_{i=1}^N x_iy_i}{N}.$ Here $\bx=(x_1,\dots,x_N)\in\mathbb R^N$ and similarly for $\by$. Associated with these are the norms $\|\bx\|=\sqrt{\langle \bx,\bx\rangle}$ and $\|\bx\|_N=\sqrt{\langle \bx,\bx\rangle_N}.$ We will also use the notation $\langle \bx\rangle_N=\frac{\sum_{i=1}^N x_i}{N}$. Often, for example in \eqref{eq:ampdef}, we apply a scalar function $f$ to a vector $\bx\in\mathbb R^N$. This will always mean that $f$ is applied entrywise, i.e. $f(x_1,\dots,x_N)=(f(x_1),\dots,f(x_N))$. Similarly for a function $f:\mathbb R^{\ell+1}\to\mathbb R$, we define 
\begin{equation}
\label{eq:coordinatewise-def}
    f(\bx^0,\bx^1,\dots,\bx^{\ell})
    =
    \big(f(x^0_1,x^1_1,\dots,x^{\ell}_1),\,
    f(x^0_2,x^1_2,\dots,x^{\ell}_2),\,\dots\,
    f(x^0_N,x^1_N,\dots,x^{\ell}_N)\big)
    \in \mathbb R^N.
\end{equation}

The following useful a priori estimate shows that all derivatives of {\color{black}$\wt H_N$} have order $1$ in the $\|\cdot\|_N$ norm. Note that we do not apply any non-standard normalization in the definitions of gradients, Hessians, etc.
{\color{black}
We use $\|\cdot\|_{\op}$ to denote the operator norm of a tensor $\bT\in(\mathbb R^N)^{\otimes k}$ of arbitrary order $k$:
\begin{equation}
\label{eq:operator-norm-def}
    \|\bT\|_{\op}
    =
    \sup_{\bx\in\mathbb R^N} 
    \frac{|\langle \bT,\bx^{\otimes k}\rangle|}{\|\bx\|^k}.
\end{equation}
}

\begin{proposition}
[{\cite[Lemma C.1]{arous2020geometry}}]
\label{prop:lip}
Fix a mixture function $\xi$, external field distribution $\mathcal L_h$, $k\in\mathbb Z^+,\eta\in\mathbb R^+$, and assume that the coefficients of $\xi$ decay exponentially. Then for suitable $C=C(\xi,\mathcal L_h,k,\eta)$,  
\[
    \mathbb P\lt[\sup_{\|\bx\|\leq \lt(1+\eta\rt)\sqrt{N}}\|\nabla^k \wt H_N(\bx)\|_{\op}\leq CN^{1-\frac{k}{2}}\rt]\geq 1-e^{-\Omega(N)}.
\] 
\end{proposition}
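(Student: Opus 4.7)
The plan is to combine a direct variance computation for the Gaussian process $\widetilde H_N$ with an $\epsilon$-net argument, Gaussian concentration (Borell--TIS), and a Lipschitz-type extension from the net to the full ball.

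First I would verify the pointwise variance bound. Fix $\bx$ with $\|\bx\|\le (1+\eta)\sqrt N$ and unit vectors $\bv_1,\dots,\bv_k\in\R^N$. Then the directional derivative
\[
T_{\bv_1,\dots,\bv_k}(\bx) := \partial_{\bv_1}\cdots\partial_{\bv_k}\widetilde H_N(\bx)
\]
is centered Gaussian, and its variance is obtained by differentiating the covariance kernel $N\xi(\langle\bx_1,\bx_2\rangle/N)$ $k$ times in $\bx_1$ (along $\bv_1,\dots,\bv_k$) and $k$ times in $\bx_2$ (along the same directions), then setting $\bx_1=\bx_2=\bx$. Each chain-rule application produces a factor $1/N$, and unraveling the product rule shows every surviving term has the form
\[
N^{-(k-1+j)}\cdot \xi^{(k+j)}(\|\bx\|^2/N)\cdot \prod_{r=1}^{2j}\langle \bv_{i_r},\bx\rangle\cdot \prod_{\text{pairings}}\langle \bv_a,\bv_b\rangle,
\]
where $j\in\{0,\dots,k\}$ counts how many $\bx_2$-derivatives landed on the outer $\xi^{(\cdot)}$ rather than on a pre-existing $\langle\bv_i,\bx_2\rangle$. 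Since $|\langle\bv,\bx\rangle|\le (1+\eta)\sqrt N$, the $N^j$ from the $2j$ inner products cancels the extra $N^{-j}$, and analyticity of $\xi$ on $|z|\le (1+\eta)^2$ (guaranteed by exponential decay of $(c_p)$) bounds $\xi^{(k+j)}$ by a constant. This yields the uniform variance bound $\sigma_N^2 := \mathrm{Var}(T_{\bv_1,\dots,\bv_k}(\bx)) = O(N^{-(k-1)})$.

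Second I would discretize. Cover $\{\bx:\|\bx\|\le (1+\eta)\sqrt N\}$ and the unit sphere each by $\epsilon$-nets of cardinality at most $e^{C_1 N}$. For each point $(\bx,\bv_1,\dots,\bv_k)$ in the product net, Gaussian concentration gives
\[
\P\bigl(|T_{\bv_1,\dots,\bv_k}(\bx)|\ge C' N^{1-k/2}\bigr)\le 2\exp\bigl(-c(C')^2 N\bigr),
\]
since $\sigma_N\sqrt N = N^{-(k-1)/2}\cdot N^{1/2} = N^{1-k/2}$ is the natural supremum scale. A union bound over the product net, of cardinality at most $e^{(k+1)C_1 N}$, gives the claimed inequality on the net with probability $1-e^{-\Omega(N)}$ once $C'$ is large enough compared to $C_1$. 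To extend from the net to the continuum I would use a Lipschitz estimate: a perturbation of $\bx$ or of any $\bv_i$ by at most $\epsilon$ alters $T_{\bv_1,\dots,\bv_k}(\bx)$ by at most $O(\epsilon)\cdot \sup\|\nabla^{k+1}\widetilde H_N\|\cdot \sqrt N$, and the same argument applied at order $k+1$ gives this is $O(\epsilon\cdot N^{1-k/2})$. Choosing $\epsilon$ a small absolute constant closes the discretization step.

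To handle the mutual dependence between orders $k$ and $k+1$, I would run a simultaneous induction decreasing in $k$: a crude a priori polynomial bound $\sup\|\nabla^k\widetilde H_N\|\le N^{O(1)}$ (obtained from term-by-term estimates on the Gaussian coefficients of $\widetilde H_N$, using the exponential decay of $c_p$) handles arbitrarily large $k$, and the $\epsilon$-net step propagates the sharp bound downwards. The main obstacle I expect is the combinatorial bookkeeping in the variance computation, since many pairings between $\bx_1$- and $\bx_2$-derivatives must be enumerated and each verified to respect the target variance bound; once this is in place, the remainder reduces to routine applications of Borell--TIS and union bounds.
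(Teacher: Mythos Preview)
The paper does not give its own proof of this proposition; it simply cites \cite[Corollary 59]{arous2020geometry}. So there is no in-paper argument to compare against, and your proposal must stand on its own.

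Your variance computation is correct, and the overall architecture (pointwise Gaussian bound, $\eps$-net, union bound, Lipschitz extension) is a standard and sound route to such estimates. The supremum over the directions $\bv_1,\dots,\bv_k$ causes no trouble because $T_{\bv_1,\dots,\bv_k}(\bx)$ is multilinear in the $\bv_i$, so a constant-scale net on each unit sphere plus the usual $(1-C\eps)^{-k}$ multilinear extension suffices.

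The genuine gap is in your treatment of the $\bx$-variable. The Lipschitz constant of $\bx\mapsto T_{\bv_1,\dots,\bv_k}(\bx)$ is $\|\nabla^{k+1}\widetilde H_N\|$, and your proposed ``decreasing induction'' from a crude bound $\|\nabla^{k+1}\widetilde H_N\|\le N^{A}$ does not close: to make the Lipschitz error $\le N^{1-k/2}$ you must take $\eps\le N^{1-k/2-A}$, hence the net has cardinality $e^{\Theta(N\log N)}$, and no fixed constant $C'$ beats this in the union bound. In other words, the $\eps$-net step only propagates the \emph{sharp} bound from level $k+1$ to level $k$; it cannot upgrade a merely polynomial bound.

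Two clean fixes are available. First, the ``crude term-by-term'' bound you allude to is already sharp: writing $\widetilde H_N=\sum_p c_p N^{-(p-1)/2}\langle G^{(p)},\bx^{\otimes p}\rangle$ with $G^{(p)}$ an i.i.d.\ Gaussian $p$-tensor, the standard estimate $\|G^{(p)}\|_{\op}\le C\sqrt{pN}$ (itself proved by a net on $(S^{N-1})^p$) gives $\|\nabla^k\widetilde H_N^{(p)}(\bx)\|\le c_p\,\tfrac{p!}{(p-k)!}(1+\eta)^{p-k}C\sqrt p\,N^{1-k/2}$, and summing in $p$ converges by exponential decay of $c_p$. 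This removes the need for any induction. Second, one can replace the one-scale net in $\bx$ by a chaining argument (Dudley's entropy integral) for the Gaussian process $\bx\mapsto T_{\bv_1,\dots,\bv_k}(\bx)$; this bounds $\E\sup_\bx$ intrinsically from the metric entropy and sidesteps the $k+1$ level entirely, after which Borell--TIS gives the $e^{-\Omega(N)}$ concentration.
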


\subsection{Review of Approximate Message Passing}\label{subsec:amp}

Here we review the general class of approximate message passing (AMP) algorithms. AMP algorithms are a flexible class of efficient algorithms based on a random matrix or, in our setting, mixed tensor. To specify an AMP algorithm, we fix a probability distribution $p_0$ on $\mathbb R$ with finite second moment and a sequence $f_0,f_1,\dots$ of Lipschitz functions $f_{\ell}:\mathbb R^{\ell+1}\to\mathbb R$, with $f_{-1}=0$. The functions $f_{\ell}$ will often be referred to as \emph{non-linearities}. We begin by taking $\bz^0\in\R^{N}$ to have i.i.d. coordinates $(z^0_i)_{i\in [N]}\sim p_0$. Then we recursively define $\bz^{1},\bz^2,\dots$ via
\begin{align}
\label{eq:ampdef}
    \bz^{\ell+1}
    &=
    \nabla \wt H_N(f_{\ell}(\bz^0,\dots,\bz^{\ell}))-\sum_{j=1}^{\ell} d_{\ell,j}f_{j-1}(\bz^0,\dots,\bz^{j-1}),
    \\
\label{eq:onsagerdef}    
    d_{\ell,j}
    &=
    \xi''\lt(\lt\la f_{\ell}(\bz^0,\dots,\bz^{\ell}),f_{j-1}(\bz^0,\dots,\bz^{j-1})\rt\ra_N\rt)
    \cdot 
    \mathbb E\lt[\frac{\partial f_{\ell}}{\partial Z^j}(Z^0,\dots,Z^{\ell})\rt].
\end{align}
Here the non-linearity $f_{\ell}$ is applied coordinate-wise as in \eqref{eq:coordinatewise-def}. Moreover $Z^0\sim p_0$ while $(Z^{\ell})_{\ell\geq 1}$ is an independent centered Gaussian process with covariance $Q_{\ell,j}=\mathbb E[Z^{\ell}Z^j]$ defined recursively by
\begin{equation}
    Q_{\ell+1, j+1}=\xi'\lt(\mathbb{E}\lt[f_{\ell}\lt(Z^{0}, \cdots, Z^{\ell}\rt) f_{j}\lt(Z^{0}, \cdots, Z^{j}\rt)\rt]\rt), \quad \ell,j \geq 0.
\end{equation}

The key property of AMP, stated below in Proposition~\ref{prop:AMP}, is that for any $\ell$ the empirical distribution of the $N$ sequences $(\bz^1_i,\bz^2_i,\dots,\bz^{\ell}_i)_{i\in [N]}$ converges in distribution to the law of the Gaussian process $(Z^1,\dots,Z^{\ell})$ as $N\to\infty$. This is called \emph{state evolution}. 

\begin{definition}

For non-negative integers $n,m$ the function $\psi:\mathbb R^{\ell}\to\R$ is \emph{pseudo-Lipschitz} if for some constant $L$ and any $x,y\in\R^{\ell}$,
\[
    \|\psi(x)-\psi(y)\|\leq L(1+\|x\|+\|y\|) \|x-y\|.
\]
\end{definition}

\begin{proposition}
[{\cite[Proposition 3.1]{ams20}}]
\label{prop:AMP}
For any pseudo-Lipschitz $\psi:\mathbb R^{\ell+1}\to\R$, the AMP iterates satisfy
\[
    \plim_{N\to\infty}\lt\la\psi\lt(\boldsymbol{z}^{0}, \cdots, \boldsymbol{z}^{\ell}\rt)\rt\ra_{N} = \mathbb{E}\lt[\psi\lt(Z^{0}, \cdots, Z^{\ell}\rt)\rt].
\]
\end{proposition}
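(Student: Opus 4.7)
The plan is to prove this by induction on $\ell$, following the conditioning technique originally due to Bolthausen and adapted to the mixed-tensor AMP setting. The base case $\ell=0$ is immediate: $\bz^0$ has i.i.d.\ coordinates with law $p_0$, so pseudo-Lipschitz averages converge by the law of large numbers in conjunction with uniform integrability of $1+|Z^0|^2$ coming from the finite second moment of $p_0$. For the inductive step I want to show that, conditionally on the $\sigma$-algebra $\mathcal F_\ell$ generated by $\bz^0,\dots,\bz^\ell$ together with all prior tensor gradient evaluations $\nabla\widetilde H_N^{(p)}(f_{j-1})$ for $j\le\ell$, the iterate $\bz^{\ell+1}$ is approximately a centered Gaussian vector whose joint law with $(\bz^1,\dots,\bz^\ell)$ matches the target state-evolution covariance $Q$ in the sense of pseudo-Lipschitz test functions.

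The key step is a Gaussian conditioning computation applied separately to each $p$-spin contribution $\nabla\widetilde H_N^{(p)}$. For fixed $p$, $\nabla\widetilde H_N^{(p)}(\bx)$ is a symmetric multilinear function of $\bx$ with i.i.d.\ standard Gaussian coefficients, so the conditional law of $\nabla\widetilde H_N^{(p)}(f_\ell(\bz^0,\dots,\bz^\ell))$ given $\mathcal F_\ell$ decomposes as a conditional mean, which is a linear combination of the previous observations with coefficients built from the overlaps $\langle f_\ell,f_{j-1}\rangle_N$, plus a conditionally independent Gaussian residual. Summing over $p$ using the identity $\xi'(q)=\sum_p c_p^2 p\,q^{p-1}$ and $\xi''(q)=\sum_p c_p^2 p(p-1) q^{p-2}$, the conditional mean becomes $\sum_{j=1}^\ell \xi''(\langle f_\ell,f_{j-1}\rangle_N)\cdot a_{\ell,j}$ for certain vectors $a_{\ell,j}$; Gaussian integration by parts (Stein's lemma) applied to the inductive state-evolution description of $(Z^0,\dots,Z^\ell)$ identifies $a_{\ell,j}$ with $\E[\partial_{Z^j} f_\ell(Z^0,\dots,Z^\ell)]\cdot f_{j-1}(\bz^0,\dots,\bz^{j-1})$ up to $o_N(1)$ error, so the Onsager term in \eqref{eq:ampdef} cancels it exactly. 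The residual Gaussian then has approximately i.i.d.\ entries of variance $\xi'(\langle f_\ell,f_\ell\rangle_N)\to Q_{\ell+1,\ell+1}$ by the induction hypothesis, and its joint behavior with $\bz^1,\dots,\bz^\ell$ matches the prescribed covariance.

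The main obstacle is handling the infinite mixture over $p$: the conditioning decomposition is clean for each fixed $p$, but the Onsager-correction identification and the control of the residual both require summing over $p$, which I would handle by truncating at $p\le P$, controlling the tail via the exponential decay of the $c_p$ together with the a priori bound $\sup_\bx \Vert\nabla\widetilde H_N(\bx)\Vert\le CN^{1/2}$ from Proposition~\ref{prop:lip}, and letting $P\to\infty$ after $N\to\infty$. A second technical point is that pseudo-Lipschitz $\psi$ can grow quadratically, so I need uniform $L^2$ control on the empirical distributions $\langle|\bz^j|^2\rangle_N$, which I would propagate through the induction using the Lipschitz property of $f_j$ and the moment bounds on the residual Gaussian. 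Once these two issues are handled, the limiting claim for any pseudo-Lipschitz $\psi$ follows by approximation of $\psi$ by bounded Lipschitz functions combined with uniform integrability.
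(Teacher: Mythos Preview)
The paper does not prove this proposition: it is quoted verbatim from \cite[Proposition 3.1]{ams20} and used as a black box, so there is no ``paper's own proof'' to compare against. Your outline is the standard Bolthausen conditioning argument that underlies essentially all AMP state-evolution proofs, including the one in \cite{ams20}, so in that sense your approach is the same as the cited source.

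As a proof sketch it is broadly on target, but a few steps are more delicate than your write-up suggests. First, the conditional law of the Gaussian tensor $\bW^{(p)}$ given the previous gradient evaluations is a projection in the tensor space $(\R^N)^{\otimes p}$, not a scalar regression; working this out and summing over $p$ to recover the $\xi'$/$\xi''$ structure requires tracking the full Gram matrix of the $f_j$'s and inverting it, not just pairwise overlaps. Second, your identification of the conditional mean with the Onsager term is imprecise: the Onsager coefficients $d_{\ell,j}$ mix an empirical overlap inside $\xi''$ with a \emph{population} expectation $\E[\partial_{Z^j}f_\ell]$, so cancellation is only up to $o_N(1)$ and requires the inductive hypothesis to replace empirical quantities by their state-evolution limits. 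Third, the residual after subtracting the conditional mean is not i.i.d.\ Gaussian but only approximately so (its covariance is $\xi'$ of an empirical overlap matrix, which must be shown to converge). None of these are fatal, and \cite{ams20} handles them, but they are where the real work lies; your sketch correctly names the skeleton of the argument without filling in these joints.
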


The first version of state evolution was given for Gaussian random matrices in \cite{bolthausen2014iterative,BM-MPCS-2011}. Since then it has been extended to more general situations in many works including \cite{javanmard2013state,bayati2015universality,berthier2019state,chen2020universality,fan2020approximate}. As state evolution holds for essentially arbitrary non-linearities $f_{\ell}$, it allows a great deal of flexibility in solving problems involving random matrices or tensors. 

We remark that \cite[Proposition 3.1]{ams20} is phrased in terms of a random mixed tensor $\bW$, i.e. a sequence of $p$-tensors $(\bW^{(p)}\in(\mathbb R^N)^{\otimes p})_{p\geq 2}$ - see Equation (3.2) therein. The two descriptions are equivalent because $\bW$ is constructed so that $\sum_{p\geq 2} c_p\langle \bW^{(p)},\bx^{\otimes p}\rangle = \wt H_N(\bx).$ While the tensor language is better suited to proving state evolution, for our purposes it is more convenient to express AMP just in terms of $\wt H_N$ and $\nabla\wt H_N$.

Let us finally discuss the efficiency of our AMP algorithms. The algorithms we give are described by parameters $\ubq$ and $\ul$ and require oracle access to the function $\Phi_{\gamma_*}(t,x)$ and its derivatives. We do not address the complexity of computing $\Phi_{\gamma_*}(t,x)$. However as stated in \cite{mon18,ams20} it seems unlikely to present a major obstacle because solving for $\gamma_*^{\cuU}$ is a convex problem which only must be solved once for each $(\xi,\mathcal L_h)$. Moreover \cite{alaoui2020algorithmic} demonstrates that these algorithms are practical to implement.

In the end, our algorithms output rounded points $\bsigma$ with $\bsigma_i=\sign(f_{\ubl}(\bz^0_i,\dots,\bz^{\ubl}_i))$ for a large value $\ubl=\ubl(\ubq,\ul)$. The outputs satisfy 
\[
    \lim_{\ubq\to 1}\lim_{\ul\to\infty}\plim_{N\to\infty}\frac{H_N(\bsigma)}{N}=H_*
\]
for some asymptotic energy value $H_*$. To achieve an $\eps$-approximation to the value $H_*$, the parameters $\ubq$ and $\ul$ must be sent to $1$ and $\infty$ which requires a diverging number of iterations. In particular let $\chi$ denote the complexity of computing $\nabla \wt H_N$ at a point and let $\chi_1$ denote the complexity of computing a single coordinate of $\nabla \wt H_N$ at a point. Then the total complexity needed to achieve energy $H_*-\eps$ is $C(\eps)(\chi+N)+N\chi_1$. When $\xi$ is a polynomial this complexity is linear in the size of the input specifying $H_N$ - see the comments following \cite[Remark 2.1]{ams20}. In the statements of our results, we refer to such algorithms as ``efficient AMP algorithms".

\subsection{Initializing AMP}
\label{subsec:init}

Here we explain some technical points involved in initializing our AMP algorithms and why they arise. First, we would like to use a random external field $h_i$ which varies from coordinate to coordinate. In the most natural AMP implementation, this requires that the non-linearities $f_{\ell}$ correspondingly depend on the coordinate rather than being fixed, which is not allowed in state evolution. Second we would like to use many i.i.d. Gaussian vectors throughout the branching version of the algorithm. However Proposition~\ref{prop:AMP} allows only a single initial vector $\bz^0$ as a source of external randomness independent of $H_N$. One could prove a suitable generalization of Proposition~\ref{prop:AMP}, but we instead build these additional vectors into the initialization of the AMP algorithm as a sort of preprocessing phase. To indicate that our constructions here are preparation for the ``real algorithm", we reparametrize so the preparatory iterates have negative index.

We begin by taking $p_0=\mathcal L_h$ to be the distribution of the external field itself, and initialize $(\bz^{-K})_i=h_i\sim \mathcal L_h$ for some constant $K\in\mathbb Z^+$. We then set $f_{-K}(\bz^{-K})=\frac{\bz^{-K}}{\sqrt{\mathbb E^{h\sim \mathcal L_h}[h^2]}}$ and $f_{-k}(\bz^{-K},\dots,\bz^{-k})=\bz^{-k}$ for $2\leq k\leq K$. Finally we set $f_{-1}(\bz^{-K},\dots,\bz^{-1})=c\bz^{-1}$ for some constant $c>0$ which the algorithm is free to choose. (Note that the functions $f_{-k}$ correspond to entry-wise applications of the form in \eqref{eq:coordinatewise-def}.) State evolution immediately implies the following

\begin{proposition}
\label{prop:init}
In the state evolution $N\to\infty$ limit, the empirical distribution of $(z^{-K}_i,\dots,z^{0}_i)$ (for $i\in [N]$ uniformly random) converges in probability to the law of an independent $(K+1)$-tuple $(Z^{-K},Z^{-K+1},\dots,Z^{-1},Z^0)$ with $Z^{-K}\sim \mathcal L_h$, $(Z^{-K+1},\dots,Z^{-1})\sim \normal(0,I_{K-1})$ i.i.d. standard Gaussian, and $Z^0\sim\normal (0,c^2)$.
\end{proposition}

In fact taking $K=1$ suffices for the main construction in the paper. In Section~\ref{sec:branch} we require larger values of $K$ for branching IAMP, where the iterates $(\bz^{-K+1},\dots,\bz^{-1})$ serve as proxies for i.i.d. Gaussian vectors.

\begin{remark} 
\label{rem:h} 
    Because the sum defining the Onsager correction term in \eqref{eq:ampdef} starts at $j=1$, the effect of the external field $h_i$ on future AMP iterates does not enter into any Onsager correction terms in this paper.
\end{remark}

\subsection{Properties of the Parisi PDE and SDE}\label{sub:parisi}

Quite a lot is known about the solution $\Phi_{\gamma}$ to the Parisi PDE. The next results hold for any $\gamma\in\cuL$ and are taken from \cite{ams20}. Similar results for $\gamma\in\cuU$ appear in \cite{auffinger2015parisi,jagannath2016dynamic}.

\begin{proposition}
[{\cite[Lemmas 6.2, 6.4]{ams20}}]
\label{prop:phireg}

For any $\gamma\in \cuL$, the solution $\Phi_{\gamma}(t,x)$ to the Parisi PDE is continuous on $[0,1]\times \reals$, convex in $x$, and further satisfies the following regularity properties for any $\eps>0$.
\begin{itemize}
\item[$(a)$] 
    $\partial_x^j\Phi \in L^{\infty}\big([0,1-\eps];L^2(\reals)\cap L^{\infty}(\reals)\big)$ for $j\ge 2$.
\item[$(b)$] 
    $\partial_t\Phi \in L^{\infty}([0,1]\times \reals)$ and $\partial_t\partial_x^j\Phi\in L^{\infty}\big([0,1-\eps];L^2(\reals)\cap L^{\infty}(\reals)\big)$ for $j\ge 1$.
\end{itemize}
\end{proposition}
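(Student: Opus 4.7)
The plan is to prove this via approximation by piecewise constant $\gamma$, for which the Parisi PDE admits an explicit closed-form solution. On any interval $[t_{i-1}, t_i]$ where $\gamma \equiv m_i > 0$ is constant, the Cole--Hopf substitution $\Psi_i = \exp(m_i \Phi)$ linearizes the nonlinear equation to the backward heat equation $\partial_t \Psi_i + \tfrac{1}{2}\xi''(t)\partial_{xx}\Psi_i = 0$, whose solution is a Gaussian convolution with variance $\int_t^{t_i}\xi''(s)\de s$; when $m_i = 0$ the PDE itself reduces to the backward heat equation and the same convolution formula applies directly to $\Phi$. Stitching across the jump points of $\gamma$ yields $\Phi$ as a nested composition of Gaussian convolutions interleaved with $m_i^{-1}\log$ maps, starting from the terminal data $\Phi(1,x) = |x|$.

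From this explicit representation, the basic qualitative properties follow with modest effort. The Lipschitz bound $|\partial_x \Phi| \leq 1$ is preserved by both Gaussian convolution and by the Cole--Hopf nonlinearity, as seen by differentiating under the convolution in $\partial_x\Psi_i = \E[\exp(m_i \Phi)\cdot m_i \partial_x \Phi]$ and then dividing by $m_i \Psi_i$. Convexity in $x$ follows because log-convex functions such as $\exp(m_i|x|)$ remain log-convex under Gaussian convolution by Prekopa--Leindler, and $m_i^{-1}\log$ of a log-convex function is convex. The $L^\infty \cap L^2$ bounds on $\partial_x^j \Phi$ for $j\geq 2$ on $[0,1-\eps]$ follow from standard heat-kernel smoothing estimates of the form $\|\partial_x^j(\phi_\sigma \ast f)\|_{L^\infty\cap L^2} \lesssim \sigma^{-(j-1)}\|\partial_x f\|_{L^\infty\cap L^2}$, applied with $\sigma^2 = \int_t^{t_i}\xi''(s)\de s$ uniformly bounded below by $c(\eps) > 0$ on $[0,1-\eps]$. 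The mixed time derivatives $\partial_t\partial_x^j \Phi$ are then recovered by differentiating the PDE identity $\partial_t \Phi = -\tfrac{1}{2}\xi''(t)[\partial_{xx}\Phi + \gamma(t)(\partial_x\Phi)^2]$ in $x$ and using the bounds on $\partial_x^k \Phi$ together with the total variation control on $\xi''\cdot\gamma$ over $[0,1-\eps]$.

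For a general $\gamma \in \cuL$, approximate by piecewise constant $\gamma_n$ agreeing with $\gamma$ on $[0,1-\delta_n]$. A standard stability argument for the Parisi PDE, obtained by differencing the two equations and applying Gronwall with the Lipschitz bound $|\partial_x\Phi|\leq 1$ controlling the quadratic nonlinearity, yields local uniform convergence $\Phi_{\gamma_n}\to \Phi_\gamma$ on $[0,1-\delta]\times\R$ for each $\delta > 0$, and the derivative bounds on $[0,1-\eps]$ pass to the limit. The main obstacle is the global $L^\infty$ bound on $\partial_t \Phi$ up to the terminal time $t=1$, since $\partial_{xx}\Phi$ generically develops a $(1-t)^{-1/2}$-type singularity at $x=0$ inherited from the corner of $|x|$. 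This cannot be read off the PDE identity directly; instead, the right approach is to argue via the stochastic representation $\Phi(t,x) = \E[|X_1^{t,x}|]$ along the Parisi diffusion, using a Girsanov coupling of the diffusions started at times $t_1$ and $t_2$ together with the $1$-Lipschitz continuity of $|\cdot|$ to bound the time difference without ever passing through $\partial_{xx}\Phi$.
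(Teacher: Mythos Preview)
The paper does not supply its own proof of this proposition: it is quoted verbatim as a result from \cite[Lemmas 6.2, 6.4]{ams20} and used as a black box throughout. So there is no in-paper argument to compare against; any comparison would have to be with the cited source.

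That said, your sketch follows the standard route and is essentially what one finds in \cite{ams20} (and earlier in \cite{auffinger2015parisi,jagannath2016dynamic} for $\gamma\in\cuU$): piecewise-constant approximation, Cole--Hopf linearization on each constant piece, heat-kernel smoothing to get the $L^2\cap L^\infty$ derivative bounds away from $t=1$, and then a stability/limit argument. Two small remarks. First, the invocation of Pr\'ekopa--Leindler is misplaced: that inequality concerns preservation of log-\emph{concavity}, whereas here you need preservation of log-\emph{convexity} under convolution against a nonnegative kernel; this follows instead directly from H\"older's inequality applied to $\int\phi_\sigma(z)f(x-z)^{\lambda}f(y-z)^{1-\lambda}\,\de z$. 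Second, the phrase ``piecewise constant $\gamma_n$ agreeing with $\gamma$ on $[0,1-\delta_n]$'' cannot be taken literally; what you need (and what the TV assumption on $\xi''\gamma$ over compact subintervals buys you) is approximation in an appropriate norm, together with a Gronwall-type stability estimate for the PDE. With those two points cleaned up, the outline is sound.
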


\begin{proposition}
[{\cite[Lemmas 6.2, 6.4]{ams20}}]
\label{prop:1lip}
    For any $\gamma\in \cuL$, $\Phi_{\gamma}$ satisfies 
    \[
    |\partial_x\Phi_{\gamma}(t,x)|\leq 1
    \]
    for all $(t,x)\in [0,1]\times\mathbb R$. 
\end{proposition}

\begin{proposition}[{\cite[Lemma 6.5]{ams20}}]
\label{prop:Xt}
For any $\gamma\in \cuL$, the Parisi SDE~\eqref{eq:parisiSDE} has unique strong solution $(X_t)_{t\in [0,1]}$ which is a.s. continuous and satisfies 
\begin{equation}
\label{eq:DxIntegral}
    \partial_x\Phi_{\gamma}(t,X_t) = \int_0^t \sqrt{\xi''(s)}\, \partial_{xx}\Phi_{\gamma}(s,X_s)\, \de B_s\, .
\end{equation}
\end{proposition}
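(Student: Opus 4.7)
The plan is to establish existence and uniqueness of the SDE on every interval $[0, 1-\varepsilon]$ via standard Lipschitz SDE theory, to extend the solution continuously to $[0,1]$ using the uniform bound $|\partial_x\Phi_\gamma|\leq 1$ from Proposition~\ref{prop:1lip}, and then to derive~\eqref{eq:DxIntegral} by Itô's formula after verifying that the drift terms cancel via the Parisi PDE.

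First I would restrict attention to $[0, 1-\varepsilon]$. By Proposition~\ref{prop:phireg}(a), $\partial_{xx}\Phi_\gamma$ is bounded on $[0, 1-\varepsilon]\times\mathbb R$, so the drift coefficient $b(t,x)=\xi''(t)\gamma(t)\partial_x\Phi_\gamma(t,x)$ is Lipschitz in $x$ with time-dependent Lipschitz constant $\xi''(t)\gamma(t)\|\partial_{xx}\Phi_\gamma(t,\cdot)\|_\infty$, which lies in $L^1([0,1-\varepsilon])$ since $\xi''\gamma\in L^1([0,1])$ by the definition of $\cuL$. The diffusion $\sqrt{\xi''(t)}$ is deterministic and square-integrable on $[0,1]$, and $X_0\sim\mathcal L_h$ has finite second moment. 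A standard Picard iteration then yields a unique strong continuous solution $(X_t)_{t\in [0,1-\varepsilon]}$.

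Next, to extend continuously to $t=1$, I would observe that for $s<t<1$ the drift piece of $X_t-X_s$ is bounded in absolute value by $\int_s^t \xi''(u)\gamma(u)\,\de u$ via Proposition~\ref{prop:1lip}, which tends to zero as $s,t\to 1$ since $\xi''\gamma\in L^1([0,1])$. The diffusion piece has variance $\int_s^t \xi''(u)\,\de u$, which also vanishes since $\xi''$ is continuous on $[0,1]$ (a consequence of the exponential decay of the $c_p$). Hence $(X_t)_{t<1}$ forms an $L^2$-Cauchy family, extending continuously to $[0,1]$ almost surely, with uniqueness inherited from each subinterval.

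Finally, for the representation in \eqref{eq:DxIntegral}, I would apply Itô's formula to $\partial_x\Phi_\gamma(t, X_t)$ on $[0, 1-\varepsilon]$, justified by the regularity in Proposition~\ref{prop:phireg}. Collecting the $\de t$ terms gives
\[\partial_t\partial_x\Phi_\gamma + \xi''(t)\gamma(t)\partial_x\Phi_\gamma\,\partial_{xx}\Phi_\gamma + \tfrac{1}{2}\xi''(t)\partial_{xxx}\Phi_\gamma,\]
which vanishes identically upon differentiating the Parisi PDE once in $x$, leaving the martingale part $\int_0^t \sqrt{\xi''(s)}\,\partial_{xx}\Phi_\gamma(s,X_s)\,\de B_s$ and yielding the desired identity on $[0,1-\varepsilon]$. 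The main obstacle I anticipate is passing $\varepsilon\to 0$, since the terminal condition $\Phi_\gamma(1,x)=|x|$ is not $C^2$ and higher derivatives of $\Phi_\gamma$ generally blow up as $t\to 1$. To handle this I would use the $L^\infty([0,1-\varepsilon]; L^2(\mathbb R))$ control on $\partial_{xx}\Phi_\gamma$ from Proposition~\ref{prop:phireg}(a) together with Gaussian tail bounds on $X_s$ inherited from the bounded drift and square-integrable diffusion, which together should ensure the Itô integrand lies in $L^2(\de\mathbb P\otimes\de s)$ up to $t=1$, so that the stochastic-integral representation extends to the closed interval.
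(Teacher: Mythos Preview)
The paper does not prove this proposition; it simply cites \cite[Lemma 6.5]{ams20}. Your reconstruction is the natural argument and is essentially how the result is obtained there: Lipschitz SDE theory on $[0,1-\varepsilon]$ using Proposition~\ref{prop:phireg}, extension to $t=1$ via integrability of the coefficients, and It\^o's formula combined with the $x$-derivative of the Parisi PDE to kill the drift.

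Two small remarks. First, your route for extending~\eqref{eq:DxIntegral} to $t=1$ via the $L^\infty([0,1-\varepsilon];L^2)$ control on $\partial_{xx}\Phi_\gamma$ is indirect, since those bounds degenerate as $\varepsilon\to 0$. A cleaner argument: once the identity holds on $[0,1-\varepsilon]$, It\^o isometry together with Proposition~\ref{prop:1lip} gives
\[
\int_0^{1-\varepsilon}\xi''(s)\,\E\big[(\partial_{xx}\Phi_\gamma(s,X_s))^2\big]\,\de s \;\le\; 1
\]
for every $\varepsilon>0$, so the integrand lies in $L^2(\de\P\otimes\de s)$ on all of $[0,1]$ and the stochastic integral (hence the identity) extends by $L^2$-martingale convergence. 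Second, applying It\^o from $t=0$ actually produces an additional term $\partial_x\Phi_\gamma(0,X_0)$ on the right-hand side; in \cite{ams20} this vanishes because $X_0=0$ and $\Phi_\gamma(t,\cdot)$ is even, but with $X_0\sim\mathcal L_h$ it need not. The statement here is inherited verbatim from the zero-field setting; in practice only the martingale property of $t\mapsto\partial_x\Phi_\gamma(t,X_t)$ is ever used (cf.\ the definition of $N_t$ in Subsection~\ref{subsec:IAMP}), and your argument delivers that regardless.
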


Finally we give two additional properties for optimizable $\gamma_*$, which are proved in Section~\ref{sec:identity}.

\begin{restatable}{lemma}{lemidentitytwo}
\label{lem:identity2}

If $\gamma_*\in\cuL$ is $\lbq$-optimizable then it satisfies:
\begin{align}
\label{eq:id2}
    \mathbb E[\partial_{xx}\Phi_{\gamma_*}(t,X_t)^2]
    &= 
    \frac{1}{\xi''(t)},\quad t\geq \lbq,
    \\
\label{eq:id3}
    \mathbb E[\partial_{xx}\Phi_{\gamma_*}(t,X_t)]
    &{\color{black}\geq }
    \int_t^1 \gamma_*(s)\de s,\quad t\in [0,1]
    .
\end{align}
\end{restatable}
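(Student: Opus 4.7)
The plan is to derive both identities by applying It\^o's formula to $\partial_x\Phi_{\gamma_*}(t,X_t)$ and $\partial_{xx}\Phi_{\gamma_*}(t,X_t)$, using the Parisi PDE to simplify the drifts in each case. The first identity will follow from the $\lbq$-optimizability assumption combined with the martingale representation in Proposition~\ref{prop:Xt}; the second requires an additional boundary analysis at $t=1$.

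For \eqref{eq:id2}, Proposition~\ref{prop:Xt} presents $\partial_x\Phi_{\gamma_*}(t,X_t)$ as an $L^2$-martingale, so the It\^o isometry gives
\[
\E[\partial_x\Phi_{\gamma_*}(t,X_t)^2] = \E[\partial_x\Phi_{\gamma_*}(0,X_0)^2] + \int_0^t \xi''(s)\,\E[\partial_{xx}\Phi_{\gamma_*}(s,X_s)^2]\,\de s.
\]
By $\lbq$-optimizability the left-hand side equals $t$ on $[\lbq,1)$, so differentiating in $t$ yields $\xi''(t)\,\E[\partial_{xx}\Phi_{\gamma_*}(t,X_t)^2]=1$ for Lebesgue-a.e.\ $t\in[\lbq,1)$. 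Both sides are continuous in $t$ on $[\lbq,1-\epsilon]$ for any $\epsilon>0$ by Proposition~\ref{prop:phireg}, a.s.\ continuity of $X_t$, and dominated convergence, so the identity extends to every $t\in[\lbq,1)$.

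For \eqref{eq:id3}, applying It\^o's formula to $\partial_{xx}\Phi_{\gamma_*}(t,X_t)$ and substituting the Parisi PDE differentiated twice in $x$ shows that the $\partial_{xxxx}\Phi_{\gamma_*}$ and $\gamma_*\,\partial_x\Phi_{\gamma_*}\,\partial_{xxx}\Phi_{\gamma_*}$ contributions cancel between the PDE and the It\^o correction, leaving finite-variation part $-\xi''(t)\gamma_*(t)\,\partial_{xx}\Phi_{\gamma_*}(t,X_t)^2\,\de t$. Taking expectations and invoking \eqref{eq:id2} on $[\lbq,1)$ (with $\gamma_*(t)\equiv 0$ on $[0,\lbq)$) produces the ODE
\[
\frac{\de}{\de t}\E[\partial_{xx}\Phi_{\gamma_*}(t,X_t)] = -\gamma_*(t),\quad t\in[0,1).
\]
Integrating reduces \eqref{eq:id3} to showing that $\lim_{t\to 1^-}\E[\partial_{xx}\Phi_{\gamma_*}(t,X_t)]=0$.

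Establishing this boundary limit is the main obstacle, because $\partial_{xx}\Phi_{\gamma_*}(t,\cdot)\to 2\delta_0$ distributionally as $t\to 1$ while the density of $X_1$ need not vanish at $0$. The key savings from optimizability is the uniform $L^2$ bound $\E[\partial_{xx}\Phi_{\gamma_*}(t,X_t)^2]=1/\xi''(t)$ from \eqref{eq:id2}, which remains finite as $t\to 1$ because $\xi''(1)>0$: Remark~\ref{rem:variational} forces $c_2\neq 0$ when $\gamma_*$ is optimizable, hence $\xi''\geq 2c_2^2>0$ throughout $[0,1]$. Non-degeneracy of the Parisi SDE then gives $\P[X_1=0]=0$, and on $\{X_1\neq 0\}$ the boundary data $\Phi_{\gamma_*}(1,x)=|x|$ is smooth, so a continuity argument up to $t=1$ yields $\partial_{xx}\Phi_{\gamma_*}(t,X_t)\to 0$ almost surely. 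Uniform integrability from the $L^2$ bound plus a.s.\ convergence then gives $L^1$ convergence via Vitali's theorem. Justifying the pointwise convergence past Proposition~\ref{prop:phireg}'s range of validity is the delicate technical step; I would handle this using the PDE rewrite $\partial_{xx}\Phi_{\gamma_*}= -\tfrac{2}{\xi''}\partial_t\Phi_{\gamma_*} - \gamma_*(\partial_x\Phi_{\gamma_*})^2$ together with $|\partial_x\Phi_{\gamma_*}|\leq 1$ and continuity of $\Phi_{\gamma_*}$ up to $t=1$ at $x\neq 0$, or alternatively by a mollification of $\gamma_*$ near $t=1$ followed by a stability argument.
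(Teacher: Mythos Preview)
Your approach matches the paper's in structure: both identities come from It\^o's formula plus the Parisi PDE, with \eqref{eq:id3} reducing to the boundary condition $\lim_{t\to 1}\E[\partial_{xx}\Phi_{\gamma_*}(t,X_t)]=0$, and the $L^2$ bound from \eqref{eq:id2} supplying uniform integrability. The difference lies in the endgame for this limit. The paper first asserts $\sup_{|x|\geq\epsilon}\partial_{xx}\Phi_{\gamma_*}(t,x)\to 0$ directly from convexity of $\Phi_{\gamma_*}(t,\cdot)$ and joint continuity on $[0,1]\times\R$, and then controls $\P[|X_t|\leq\epsilon]$ uniformly in $t$ via a stochastic comparison: writing SDEs for $X_t^2$ and $Z_t^2$ (with $Z_t=\int_0^t\sqrt{\xi''(s)}\,\de B_s$ the driftless process), it applies the Yamada--Watanabe pathwise-uniqueness criterion together with the Ikeda--Watanabe comparison theorem to show $|X_t|$ stochastically dominates $|Z_t|$, whence $\lim_{\epsilon\to 0}\limsup_{t\to 1}\P[|X_t|\leq\epsilon]=0$. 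Your route through $\P[X_1=0]=0$ and a.s.\ convergence is a legitimate simplification of this second step, since path continuity gives $|X_t|\geq |X_1|/2>0$ eventually on a full-probability event. However, your suggested justifications for the first step are weaker than the paper's convexity argument: the PDE rewrite $\partial_{xx}\Phi_{\gamma_*}=-\tfrac{2}{\xi''}\partial_t\Phi_{\gamma_*}-\gamma_*(\partial_x\Phi_{\gamma_*})^2$ does not give decay because $\gamma_*(t)$ need not be bounded as $t\to 1$ (only $\int_0^1\gamma_*<\infty$ is assumed in $\cuL$), and the mollification alternative is not fleshed out. One minor point: $\xi''(1)=\sum_{p\geq 2}p(p-1)c_p^2>0$ is immediate from the standing assumption $\xi\not\equiv 0$, so the detour through Remark~\ref{rem:variational} to force $c_2\neq 0$ is unnecessary.
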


\begin{remark}
{\color{black}
    We expect \eqref{eq:id3} to hold with equality; if this is true, then our analysis in Subsection~\ref{subsec:final-energy}
    shows that Theorem~\ref{thm:main} holds as a two-sided estimate. Conversely, the main result of \cite{huang2021tight} implies such a two-sided estimate when $\xi$ is even; retracing Subsection~\ref{subsec:final-energy} then implies \eqref{eq:id3} is indeed an equality in such cases. However this is unsatisfyingly indirect and it would be interesting to give a direct proof.
}
\end{remark}

\section{The Main Algorithm}
\label{sec:algo}

In this section we explain our main AMP algorithm and prove Theorem~\ref{thm:main}. Throughout we take $\gamma_*\in\cuL$ to be $\lbq$-optimizable for $\lbq=\inf(\supp(\gamma_*))\in [0,1)$.

\subsection{Phase 1: Finding the Root}\label{subsec:root}

Here we give the first phase of the algorithm, which proceeds for a large constant number $\ul$ of iterations after initialization and approximately converges to a fixed point. The AMP iterates during this first phase are denoted by $(\bw^{k})_{-K\leq k\leq \ul}$. We rely on the function 
\[
    f(x)=\partial_x\Phi_{\gamma_*}(\lbq,x)
\] 
and use non-linearities 
\[
    f_k(\bh,\bw^{-K+1},\dots,\bw^0,\bw^1,\dots,\bw^k)=f(\bh+\bw^k)
\] 
for all $k\geq 1$. (As a reminder, if $f$ is a scalar function, $f(\bx^k)$ is evaluated entrywise as explained in \eqref{eq:coordinatewise-def}.) Proposition~\ref{prop:phireg} implies that each $f_k$ is Lipschitz, so that state evolution applies to the AMP iterates. In the initialization phase we take $c=\sqrt{\xi'(\lbq)}$ as described in Subsection~\ref{subsec:init}, so that the coordinates $w^0_i$ are asymptotically distributed as centered Gaussians with variance $\xi'(\lbq)$ in the $N\to\infty$ limit. Moreover we set $\bm^k=f(\bx^k)$ where $\bx^k=\bw^k+\bh$. This yields the following iteration.
\begin{align}
\label{eq:RSAMP}
  \bw^{k+1}
  &=
  \nabla \wt H_N(f(\bx^k))-f(\bx^{k-1})\xi''\lt(\langle f(\bx^k),f(\bx^{k-1})\rangle_N\rt)\langle f'(\bx^k)\rangle_N
  \\
\nonumber
  &=\nabla \wt H_N(\bm^k)
  -\bm^{k-1}\xi''\lt(\langle \bm^k,\bm^{k-1}\rangle_N\rt)\langle \partial_{xx}\Phi_{\gamma_*}(\lbq,\bx^k)\rangle_N,
  \\
\nonumber
  \bx^{k+1}
  &=
  \bw^{k+1}+\bh
  \\
\nonumber
  \bm^k
  &=
  f(\bx^k)=f_k(\bw^k).
\end{align}

\begin{lemma}\label{lem:1}

For $f$ as defined above, $h\sim\mathcal L_h$ and $Z\sim \normal(0,1)$ an independent standard Gaussian,
\begin{align}
\label{eq:fsquared}
    \mathbb E^{h,Z}\lt[f\lt(h+Z\sqrt{\xi'(\lbq)}\rt)^2\rt]
    &=\lbq
    \\
\label{eq:fprimesquare}
    \mathbb E^{h,Z}\lt[f'\lt(h+Z\sqrt{\xi'(\lbq)}\rt)^2\rt]
    &= 
    \frac{1}{\xi''(\lbq)}.
\end{align}
\end{lemma}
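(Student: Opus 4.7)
The plan is to recognize that both identities reduce to evaluating the optimizability identities from the definition and Lemma~\ref{lem:identity2} at the single point $t=\lbq$, once we show that the law of $h+Z\sqrt{\xi'(\lbq)}$ coincides with the law of $X_\lbq$, the Parisi SDE evaluated at time $\lbq$.

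First I would observe that since $\lbq=\inf(\supp(\gamma_*))$ and $\supp(\gamma_*)$ is closed in $[0,1)$, for every $t\in[0,\lbq)$ there is a neighborhood of $t$ on which $\gamma_*$ vanishes, hence $\gamma_*(t)=0$ on $[0,\lbq)$. Consequently the drift term $\xi''(t)\gamma_*(t)\partial_x\Phi_{\gamma_*}(t,X_t)$ in the Parisi SDE \eqref{eq:parisiSDE} is identically zero on $[0,\lbq)$, so on this interval $X_t=X_0+\int_0^t\sqrt{\xi''(s)}\,\de B_s$ with $X_0\sim\mathcal L_h$. Evaluating at $t=\lbq$ gives
\[
X_\lbq \;\stackrel{d}{=}\; h+Z\sqrt{\xi'(\lbq)},
\]
where $h\sim\mathcal L_h$, $Z\sim\mathcal N(0,1)$ are independent, using $\int_0^{\lbq}\xi''(s)\de s=\xi'(\lbq)-\xi'(0)=\xi'(\lbq)$ since $\xi$ has no constant or linear term.

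With this identification in hand, since $f(x)=\partial_x\Phi_{\gamma_*}(\lbq,x)$ we have
\[
\E^{h,Z}\!\left[f\!\left(h+Z\sqrt{\xi'(\lbq)}\right)^{2}\right] = \E\!\left[\partial_x\Phi_{\gamma_*}(\lbq,X_\lbq)^{2}\right],
\]
and the first identity \eqref{eq:fsquared} is then precisely the $\lbq$-optimizability condition \eqref{eq:opt} evaluated at $t=\lbq$. Similarly $f'(x)=\partial_{xx}\Phi_{\gamma_*}(\lbq,x)$, so
\[
\E^{h,Z}\!\left[f'\!\left(h+Z\sqrt{\xi'(\lbq)}\right)^{2}\right] = \E\!\left[\partial_{xx}\Phi_{\gamma_*}(\lbq,X_\lbq)^{2}\right],
\]
and the second identity \eqref{eq:fprimesquare} follows directly from Lemma~\ref{lem:identity2}, equation \eqref{eq:id2}, at $t=\lbq$. (A minor note: the argument of $f$ in the lemma statement should be read as $h+Z\sqrt{\xi'(\lbq)}$ consistently with the AMP iteration $\bm^k=f(\bh+\bw^k)$.)

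I do not expect any serious obstacle here: the content of the lemma is really just the translation of the optimizability identities, which are defined in terms of the Parisi SDE, into a statement about the Gaussian law that will arise as the state-evolution fixed point of the first-phase AMP iteration. The only point requiring care is verifying that $\gamma_*$ genuinely vanishes on $[0,\lbq)$ so that the SDE drift disappears; this is immediate from the definition of $\supp(\gamma_*)$ as the closure in $[0,1)$ of the set where $\gamma_*>0$.
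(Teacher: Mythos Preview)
Your proof is correct and follows exactly the same approach as the paper: identify $X_{\lbq}\stackrel{d}{=}h+Z\sqrt{\xi'(\lbq)}$ by noting the drift vanishes on $[0,\lbq)$, then read off the two identities from the optimizability condition \eqref{eq:opt} and from \eqref{eq:id2} in Lemma~\ref{lem:identity2} at $t=\lbq$. Your write-up is in fact more detailed than the paper's, and your parenthetical remark about reading the argument of $f$ as $h+Z\sqrt{\xi'(\lbq)}$ correctly flags a notational inconsistency in the lemma statement.
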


\begin{proof}

The identities follow by taking $t=\lbq$ in the definition of optimizability as well as Lemma~\ref{lem:identity2}. Here we use the fact that $X_t\stackrel{d}{=}X_0+ Z\sqrt{\xi'(t)}$ is a time-changed Brownian motion started from $X_0$ for $t\leq \lbq.$
\end{proof}

Next with $(Z,Z',Z'')\sim \normal(0,I_3)$ independent of $h\sim\mathcal L_h$, define for $t\leq \xi'(\lbq)$ the function

\begin{equation}\label{eq:phi}\phi(t)=\mathbb E^{h,Z,Z',Z''}\lt[f\lt(h+Z\sqrt{t}+Z'\sqrt{\xi'(\lbq)-t}\rt)f\lt(h+Z\sqrt{t}+Z''\sqrt{\xi'(\lbq)-t}\rt)\rt].\end{equation} Define also $\psi(t)=\xi'(\phi(t))$.  It follows from~\eqref{eq:fsquared} that 

\begin{equation}\label{eq:phiid}\phi(\xi'(\lbq))=\lbq.\end{equation}

\begin{lemma}\label{lem:psi}
The function $\psi$ is strictly increasing and strictly convex on $[0,\xi'(\lbq)]$. Moreover 
\[
\psi(\xi'(\lbq))=\xi'(\lbq),\quad \psi'(\xi'(\lbq))=1.
\] 
Finally $\psi(t)>t$ for all $t<\xi'(\lbq)$. 
\end{lemma}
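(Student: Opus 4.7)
My plan is to recognize $\phi(t)$ as the $L^2$-norm of a martingale and then deduce all properties of $\psi$ via the chain rule. Conditioning on the common component $C := h + Z\sqrt{t}$ in \eqref{eq:phi} makes the two $f$-factors conditionally independent, so
\[
\phi(t) = \E\bigl[g_t(C)^2\bigr], \qquad g_t(u) := \E_W\bigl[f\bigl(u + W\sqrt{\xi'(\lbq)-t}\bigr)\bigr].
\]
If $(B_s)_{s\ge0}$ is a Brownian motion independent of $h\sim\mathcal L_h$, then $M_t := g_t(h+B_t) = \E[f(h+B_{\xi'(\lbq)}) \mid \mathcal F_t]$ is an $L^2$-martingale with $\phi(t) = \E[M_t^2]$. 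In particular $\phi(\xi'(\lbq)) = \E[f(h+Z\sqrt{\xi'(\lbq)})^2] = \lbq$ by \eqref{eq:fsquared}, giving $\psi(\xi'(\lbq)) = \xi'(\lbq)$.

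Next, $g_t$ satisfies the backward heat equation $\partial_t g_t + \tfrac12\partial_{uu}g_t = 0$, so It\^o's formula gives $dM_t = \partial_u g_t(h+B_t)\,dB_t$ and hence
\[
\phi'(t) = \E\bigl[(\partial_u g_t)^2(h+B_t)\bigr] = \E[H_t^2], \quad H_t := \partial_u g_t(h+B_t) = \E\bigl[f'(h+B_{\xi'(\lbq)}) \mid \mathcal F_t\bigr].
\]
Because $H_t$ is itself an $L^2$-martingale, $\phi'$ is non-decreasing, so $\phi$ is convex on $[0,\xi'(\lbq)]$, with endpoint value $\phi'(\xi'(\lbq)) = \E[f'(h+Z\sqrt{\xi'(\lbq)})^2] = 1/\xi''(\lbq)$ by \eqref{eq:fprimesquare}. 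Iterating the It\^o argument once more yields $\phi''(t) = \E[(\partial_{uu} g_t)^2(h+B_t)]$, and this is strictly positive on $(0,\xi'(\lbq))$: if $\partial_{uu} g_t$ vanished, then $\partial_{xxx}\Phi_{\gamma_*}(\lbq,\cdot)$ would vanish on a set of positive Gaussian measure and by analyticity everywhere, forcing $f$ to be linear and hence, by the boundedness in Proposition~\ref{prop:1lip}, constant; this contradicts $\E[f'^2] = 1/\xi''(\lbq) > 0$.

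The properties of $\psi = \xi'\circ\phi$ now follow from the chain rule: $\psi'(t) = \xi''(\phi(t))\phi'(t)$ and $\psi''(t) = \xi''(\phi(t))\phi''(t) + \xi'''(\phi(t))(\phi'(t))^2$. Jensen's inequality applied to $H_0$ together with \eqref{eq:id3} gives $\phi'(0) \ge \bigl(\E[f'(h+Z\sqrt{\xi'(\lbq)})]\bigr)^2 = \bigl(\int_\lbq^1 \gamma_*(s)\,ds\bigr)^2 > 0$, since $\lbq = \inf\supp(\gamma_*)$ forces $\gamma_*$ to be positive just above $\lbq$. Combined with $\phi'$ non-decreasing, this gives $\phi' > 0$, hence $\phi$ is strictly increasing, so $\phi(t) > \phi(0) \ge 0$ and thus $\xi''(\phi(t)) > 0$ for $t > 0$. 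Then $\psi'' > 0$ shows $\psi$ strictly convex, and $\psi' > 0$ shows $\psi$ strictly increasing on $[0,\xi'(\lbq)]$. The endpoint derivative $\psi'(\xi'(\lbq)) = \xi''(\lbq) \cdot 1/\xi''(\lbq) = 1$ is immediate, and for $t < \xi'(\lbq)$ the tangent line to $\psi$ at $\xi'(\lbq)$ is exactly $y = t$, so strict convexity forces $\psi(t) > t$. The main technical obstacle is justifying the two It\^o calculations uniformly up to the boundary $t = \xi'(\lbq)$, where the smoothing parameter $\xi'(\lbq)-t$ vanishes; the interior regularity of Proposition~\ref{prop:phireg} handles any $[0,\xi'(\lbq)-\eps]$, and $L^2$ martingale convergence fills in the boundary limit.
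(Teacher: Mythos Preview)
Your argument is correct and reaches the same key identities as the paper, namely $\phi'(t) = \E\bigl[(\E[f'(h+B_{\xi'(\lbq)})\mid\mathcal F_t])^2\bigr]$ and the analogous formula for $\phi''$; the paper obtains these in one step via Gaussian integration by parts in the covariance parameter $t$ (citing \cite[Lemma~2.2]{bolthausen2014iterative}), whereas you repackage the same computation in martingale/It\^o language. The paper then simply notes that optimizability forces $f'$ and $f''$ not to be identically zero, which together with analyticity of the inner conditional expectation gives $\phi',\phi''>0$ on the open interval---the same mechanism you spell out more explicitly.

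One imprecision worth fixing: in your strict-positivity step for $\phi''$, it is $\partial_{uu} g_t$ (a Gaussian convolution of $f''$) that is real-analytic, not $f'' = \partial_{xxx}\Phi_{\gamma_*}(\lbq,\cdot)$ itself; once you know $\partial_{uu} g_{t_0} \equiv 0$ you still need injectivity of the Gaussian convolution to conclude $f'' \equiv 0$. Your separate Jensen argument that $\phi'(0)>0$ via \eqref{eq:id3} is a nice addition not in the paper, though strict monotonicity and strict convexity on $[0,\xi'(\lbq)]$ already follow from $\psi',\psi''>0$ on the open interior. Finally, the boundary-regularity hedge at the end is unnecessary: the direct integration-by-parts derivation has no limiting issues at $t = \xi'(\lbq)$, and this is one place where the paper's route is cleaner.
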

\begin{proof}
Using Gaussian integration by parts as in~\cite[Lemma 2.2]{bolthausen2014iterative}, we find
\begin{align*}
    \phi'(t)&=\E^{h,Z,Z',Z''}\lt[
    f'\lt(h+\sqrt{t}Z+\sqrt{\xi'(\lbq)-t}Z'\rt) 
    f'\lt(h+\sqrt{t}Z+\sqrt{\xi'(\lbq)-t}Z''\rt)
    \rt]
    \\
    &=
    \E^{h,Z}\lt[
    \E^{Z'}
    \lt[f'\lt(h+\sqrt{t}Z+\sqrt{\xi'(\lbq)-t}Z'\rt)\rt]^2 
    \rt],
    \\
    \phi''(t)
    &=
    \E\lt[
    f''\lt(h+\sqrt{t}Z+\sqrt{\xi'(\lbq)-t}Z'\rt) 
    f''\lt(h+\sqrt{t}Z+\sqrt{\xi'(\lbq)-t}Z''\rt)
    \rt]
    \\
    &=
    \E^{h,Z}\lt[\E^{Z'}
    \lt[f''\lt(h+\sqrt{t}Z+\sqrt{\xi'(\lbq)-t}Z'\rt)\rt]^2 \rt].
\end{align*}
These expressions are each strictly positive, as the optimizability of $\gamma_*$ implies that $f',f''$ are not identically zero. Therefore $\phi$ is increasing and convex. Since $\xi'$ is also increasing and convex (being a power series with non-negative coefficients) we conclude the same about their composition $\psi$. The values $\psi(\xi'(\lbq))=\xi'(\lbq)$ and $\psi'(\xi'(\lbq))=1$ follow from Lemma~\ref{lem:1} and the chain rule. Finally the last claim follows by strict convexity of $\psi$ and $\psi'(\xi'(\lbq))=1$.
\end{proof}

Next, let $h,W^{-1},(W^j,X^j,M^j)_{j\geq 0}$ be the state evolution limit of the coordinates of \[(\bh,\bw^{-1},\bw^0,\bx^{0},\bm^{0},\dots,\bw^k,\bx^k,\bm^k)\] as $N\to\infty$. Hence each $W^j$ is a centered Gaussian and $X^j=W^j+h$, $M^{j+1}=f(X^j)$ hold for $j\geq 0$. Define the sequence $(a_0,a_1,\dots)$ recursively by $a_0=0$ and $a_{k+1}=\psi(a_k)$. 

\begin{lemma}\label{lem:RSconverge}

For all non-negative integers $0\leq j<k$, the following equalities hold:
\begin{align} \label{eq:id1.0}\mathbb E[(W^j)^2]&=\xi'(\lbq)\\
\label{eq:id2.0}\mathbb E[W^jW^k]&=a_j\\
\label{eq:id3.0}\mathbb E[(M^j)^2]&=\lbq\\
\label{eq:id4.0}\mathbb E[M^jM^k]&=\phi(a_j).
\end{align}

Moreover $(W^j)_{j\geq 0}$ is independent of $h$.

\end{lemma}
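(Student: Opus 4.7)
The plan is to prove the four covariance identities together with the independence claim by induction on $k$, using the state evolution covariance recursion $\mathbb E[W^{j+1} W^{k+1}] = \xi'(\mathbb E[M^j M^k])$ (for $j,k \ge 0$) from Proposition~\ref{prop:AMP} as the main engine. Everything reduces to two elementary algebraic facts about $f = \partial_x \Phi_{\gamma_*}(\lbq,\cdot)$. First, whenever $W^j$ is a centered Gaussian of variance $\xi'(\lbq)$ independent of $h$, Lemma~\ref{lem:1} gives $\mathbb E[(M^j)^2] = \mathbb E[f(h+W^j)^2] = \lbq$. Second, whenever $(W^j,W^k)$ is jointly centered Gaussian with common variance $\xi'(\lbq)$, cross-covariance $\rho$, and jointly independent of $h$, writing $W^j = Z\sqrt\rho + Z'\sqrt{\xi'(\lbq)-\rho}$ and $W^k = Z\sqrt\rho + Z''\sqrt{\xi'(\lbq)-\rho}$ with $Z,Z',Z''$ i.i.d.\ standard Gaussian independent of $h$ recovers $\mathbb E[M^j M^k] = \phi(\rho)$ directly from the definition~\eqref{eq:phi}. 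Composing with the state evolution recursion yields the one-step identity $\mathbb E[W^{j+1} W^{k+1}] = \psi(\mathbb E[W^j W^k])$.

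The independence of $(W^j)_{j\ge 0}$ from $h$ is immediate from Proposition~\ref{prop:init}, which places $h = Z^{-K}$ as the non-Gaussian initial coordinate with all subsequent iterates forming a centered Gaussian process independent of $h$. I would then establish the diagonal identities~\eqref{eq:id1.0} and~\eqref{eq:id3.0} by induction on $j$: the base case $\mathbb E[(W^0)^2] = c^2 = \xi'(\lbq)$ is Proposition~\ref{prop:init}, and then the first fact gives $\mathbb E[(M^0)^2] = \lbq$; the inductive step is $\mathbb E[(W^{j+1})^2] = \xi'(\mathbb E[(M^j)^2]) = \xi'(\lbq)$ by the state evolution recursion, after which the first fact again yields $\mathbb E[(M^{j+1})^2] = \lbq$.

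For the off-diagonal identities~\eqref{eq:id2.0} and~\eqref{eq:id4.0} I would induct on $0 \le j < k$. The inductive step is immediate from the one-step identity: $\mathbb E[W^{j+1} W^{k+1}] = \psi(a_j) = a_{j+1}$, and the second fact simultaneously gives $\mathbb E[M^j M^k] = \phi(a_j)$. The base case $\mathbb E[W^0 W^k] = a_0 = 0$ for all $k \ge 1$ is the main technical obstacle, since $W^0$ comes from the preparatory phase and is not naturally a ``$(j+1)$'' iterate in the AMP dynamics. I would handle this by unrolling one additional step of the recursion into the preparatory phase: since $f_{-1}(z^{-1}) = cz^{-1}$ is linear, the recursion gives $\mathbb E[W^0 W^k] = \xi'\bigl(c\,\mathbb E[Z^{-1}\,f(h + Z^{k-1})]\bigr)$, so it suffices to show $Z^{-1}$ is uncorrelated with (hence, by joint Gaussianity, independent of) $Z^{k-1}$ for every $k\ge 1$. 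This propagates by cascading the same recursion another level back with the identity non-linearities $f_{-k}$ from the preparatory phase, exploiting $\xi'(0) = 0$; making the cascade rigorous requires choosing $K$ sufficiently large in Proposition~\ref{prop:init} so that enough independent standard Gaussian buffer iterates $Z^{-K+1},\dots,Z^{-1}$ are available to decouple $Z^{-1}$ from the forward orbit. Once this base case is in hand the induction closes, and both off-diagonal identities follow together with the $M$--level identity~\eqref{eq:id4.0}.
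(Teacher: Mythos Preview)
Your proof is correct and follows essentially the same route as the paper: induct on $j$ for the diagonal identities using state evolution and Lemma~\ref{lem:1}, then induct on $j$ for the off-diagonal identities via the one-step map $\mathbb E[W^{j+1}W^{k+1}]=\psi(\mathbb E[W^jW^k])$, with the base case $\mathbb E[W^0 W^k]=0$ handled by unrolling into the preparatory phase. The paper compresses the base case into the single sentence ``State evolution implies that $W^{k-1}$ is independent of $W^{-1}$ for any $k\ge 0$,'' which is exactly the claim you isolate and justify via the cascade through the buffer iterates $Z^{-2},Z^{-3},\dots$ using $\xi'(0)=0$. Your observation that this cascade needs enough preparatory iterates---so $K$ must be taken large relative to $\ul$---is in fact a point the paper glosses over (it even asserts ``$K=1$ suffices''), so your treatment of the base case is more careful than the paper's.
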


\begin{proof}

We proceed by induction on $j$, first showing \eqref{eq:id1.0} and \eqref{eq:id3.0} together. As a base case, \eqref{eq:id1.0} holds for $j=0$ by initialization. For the inductive step, assume first that \eqref{eq:id1.0} holds for $j$. Then state evolution and \eqref{eq:phiid} yield 
\[
  \mathbb E\lt[(M^j)^2\rt]=\phi\lt(\xi'(\lbq)\rt)=\lbq
\] 
so that \eqref{eq:id1.0} implies \eqref{eq:id3.0} for each $j\geq 0$. On the other hand, state evolution directly implies that if \eqref{eq:id3.0} holds for $j$ then \eqref{eq:id1.0} holds for $j+1$. This establishes \eqref{eq:id1.0} and \eqref{eq:id3.0} for all $j\geq 0$.

We similarly show \eqref{eq:id2.0} and \eqref{eq:id4.0} together by induction, beginning with \eqref{eq:id2.0} when $j=0$. By the initialization of Subsection~\ref{subsec:init} it follows that the random variables $h,W^{-1},W^0$ are jointly independent. State evolution implies that $W^{k-1}$ is independent of $W^{-1}$ for any $k\geq 0$. Then state evolution yields for any $k\geq 1$:

\begin{align*}
    \mathbb E[W^0W^k]&=\xi'(\mathbb E[M^{-1}M^{k-1}])
    \\
    &=
    \xi'\lt(\mathbb E\lt[W^{-1}f(W^{k-1}\rt]\rt)\\
    &=
    \xi'(0)
    \\
    &=
    0.
\end{align*}

Just as above, it follows from state evolution that \eqref{eq:id2.0} for $(j,k)$ implies \eqref{eq:id4.0} for $(j,k)$ which in turn implies \eqref{eq:id2.0} for $(j+1,k+1)$. Hence induction on $j$ proves \eqref{eq:id2.0} and \eqref{eq:id4.0} for all $(j,k)$. Finally the last independence assertion is immediate from state evolution just because $h$ is the first step in the AMP iteration.
\end{proof}

\begin{lemma}\label{lem:RSlimit}

\[
\lim_{k\to\infty} a_k=\xi'(\lbq).
\]

\end{lemma}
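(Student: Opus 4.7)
The plan is to show that $(a_k)_{k\geq 0}$ is an increasing sequence bounded above by $\xi'(\lbq)$, hence convergent, and then identify the limit as the unique fixed point of $\psi$ in $[0,\xi'(\lbq)]$.

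First I would show by induction that $0 \leq a_k \leq \xi'(\lbq)$ for all $k$. The base case $a_0 = 0$ is immediate. For the inductive step, since $\psi$ is increasing on $[0,\xi'(\lbq)]$ (by Lemma~\ref{lem:psi}) and $\psi(\xi'(\lbq)) = \xi'(\lbq)$, the hypothesis $a_k \leq \xi'(\lbq)$ gives $a_{k+1} = \psi(a_k) \leq \psi(\xi'(\lbq)) = \xi'(\lbq)$. Nonnegativity is preserved similarly since $\psi(0) \geq 0$ (in fact $\psi(0) > 0$ by the strict inequality $\psi(t)>t$ for $t<\xi'(\lbq)$).

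Next I would observe that $(a_k)$ is strictly increasing as long as $a_k < \xi'(\lbq)$: this is immediate from $\psi(t) > t$ for $t < \xi'(\lbq)$ (Lemma~\ref{lem:psi}). Combined with the upper bound, $(a_k)$ is a bounded monotone sequence, hence converges to some limit $L \in [0,\xi'(\lbq)]$. By continuity of $\psi$ and the defining recursion $a_{k+1} = \psi(a_k)$, the limit satisfies $L = \psi(L)$.

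Finally I would identify $L = \xi'(\lbq)$. By Lemma~\ref{lem:psi}, $\psi(t) > t$ for every $t < \xi'(\lbq)$, so no such $t$ can be a fixed point of $\psi$. Since $\psi(\xi'(\lbq)) = \xi'(\lbq)$, the only fixed point in $[0,\xi'(\lbq)]$ is $\xi'(\lbq)$ itself, forcing $L = \xi'(\lbq)$. There is no serious obstacle here; the only mild subtlety is that $\psi'(\xi'(\lbq)) = 1$ means the fixed point is neutral rather than attracting, so one cannot rely on a contraction argument near the fixed point. However this is not needed: monotonicity together with the strict inequality $\psi(t)>t$ below the fixed point suffices to rule out convergence to any other value.
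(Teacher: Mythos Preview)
Your proposal is correct and follows essentially the same argument as the paper: the sequence $(a_k)$ is increasing and bounded in $[0,\xi'(\lbq)]$, hence converges to a fixed point of $\psi$, which by Lemma~\ref{lem:psi} must be $\xi'(\lbq)$. Your write-up is slightly more explicit (spelling out the induction and noting that the neutral fixed point does not obstruct the argument), but the strategy is identical.
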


\begin{proof}

Since $\psi$ is strictly increasing and maps $[0,\xi'(\lbq)]\to[0,\xi'(\lbq)]$, it follows that $(a_k)_{k\geq 0}$ is a strictly increasing sequence with limiting value in $[0,\xi'(\lbq)]$. Let $a_{\infty}=\lim_{k\to\infty} a_k$ be this limit. Then continuity implies $\psi(a_{\infty})=a_{\infty}$ which by the last part of Lemma~\ref{lem:psi} implies $a_{\infty}=\xi'(\lbq)$. This concludes the proof.
\end{proof}

We now compute the limiting energy 
\[
    \lim_{k\to\infty}\plim_{N\to\infty}\frac{H_N(\bm_k)}{N}
\] 
from the first phase. Since the first phase is similar to many ``standard" AMP algorithms, this step is comparable to the computation of their final objective value, for example~\cite[Lemma 6.3]{deshpande2017asymptotic}. This computation is straightforward when $\wt{H}_N$ is a homogeneous polynomial of degree $p$, because one can just rearrange the equation for an AMP iteration to solve for 
\[
    \wt H_N(\bm^k)=p^{-1}\langle \bm^k,\nabla\wt H_N(\bm^k)\rangle.
\] 
However it requires more work in our setting because $\nabla \wt{H}_N$ acts differently on terms of different degrees. We circumvent this mismatch by applying state evolution to a $t$-dependent auxiliary AMP step and integrating in $t$.

\begin{lemma}
\label{lem:RSenergy}
With $X_t$ the Parisi SDE~\eqref{eq:parisiSDE},
\begin{align*}
    \lim_{k\to\infty} \plim_{N\to\infty}\frac{H_N(\bm^k)}{N}
    &= \xi'(\lbq) \cdot \mathbb E
        \lt[
            \partial_{xx}\Phi_{\gamma_*}
            \lt(
                \lbq, h + Z\sqrt{\xi'(\lbq)}
            \rt)
        \rt]
        + \mathbb E
        \lt[
            h \cdot \partial_{x}\Phi_{\gamma_*}
            \lt(
                \lbq, h + Z\sqrt{\xi'(\lbq)}
            \rt)
        \rt] \\
    &= \xi'(\lbq) \cdot \mathbb E
        \lt[
            \partial_{xx}\Phi_{\gamma_*}
            \lt(
                \lbq, X_{\lbq}
            \rt)
        \rt]
        + \mathbb E
        \lt[
            h \cdot \partial_{x}\Phi_{\gamma_*}
            \lt(
                \lbq, X_{\lbq}
            \rt)
        \rt].
\end{align*}
\end{lemma}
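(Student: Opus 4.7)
The plan is to split $H_N(\bm^k) = \langle \bh, \bm^k\rangle + \widetilde H_N(\bm^k)$ and treat the two summands separately via state evolution. The linear piece is immediate: applying Proposition~\ref{prop:AMP} to the pseudo-Lipschitz test function $(\bar h, \bar w) \mapsto \bar h\, f(\bar h + \bar w)$ (well-defined since $f$ is Lipschitz by Proposition~\ref{prop:phireg} and bounded by Proposition~\ref{prop:1lip}), together with Lemma~\ref{lem:RSconverge} identifying the joint limit of $(h_i, w^k_i)$ as $h \sim \mathcal L_h$ independent of $W^k \sim \normal(0, \xi'(\lbq))$, yields
\[
\plim_{N\to\infty}\langle \bh, \bm^k\rangle_N = \E\bigl[h\,\partial_x\Phi_{\gamma_*}(\lbq, h + Z\sqrt{\xi'(\lbq)})\bigr]
\]
independently of $k$; this already matches the second summand of the claim.

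For the Gaussian piece I use $\widetilde H_N(0) = 0$ and the fundamental theorem of calculus to write $\widetilde H_N(\bm^k)/N = \int_0^1 \langle \bm^k, \nabla \widetilde H_N(t\bm^k)\rangle_N\, dt$. The core idea is to evaluate the integrand for each fixed $t \in [0,1]$ by running one additional AMP step whose non-linearity is $\tilde f_k \equiv t f$ in place of $f_k$. Since $\tilde f_k$ depends only on $\bw^k$, just the $j = k$ Onsager term survives, and the resulting iterate is
\[
\tilde{\bz}^{k+1,t} = \nabla \widetilde H_N(t\bm^k) - t\,\xi''\!\bigl(t\langle \bm^k, \bm^{k-1}\rangle_N\bigr)\,\langle \partial_{xx}\Phi_{\gamma_*}(\lbq, \bx^k)\rangle_N\, \bm^{k-1}.
\]
Proposition~\ref{prop:AMP} applied to $\langle \bm^k, \tilde{\bz}^{k+1,t}\rangle_N$, followed by Gaussian integration by parts conditional on $h$ (legitimate because the state-evolution Gaussian $\tilde Z^{k+1,t}$ is independent of $h$, and $f$ depends only on $h$ and $W^k$), yields the limit $\E[\partial_{xx}\Phi_{\gamma_*}(\lbq, h + W^k)] \cdot \xi'(t\phi(a_{k-1}))$. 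The Onsager piece contributes $t\phi(a_{k-1})\,\xi''(t\phi(a_{k-1}))\,\E[\partial_{xx}\Phi_{\gamma_*}(\lbq, h + W^k)]$ in the limit. Summing, the integrand is $\E[\partial_{xx}\Phi_{\gamma_*}]\cdot \frac{d}{dt}\bigl[t\,\xi'(t\phi(a_{k-1}))\bigr]$, so integration over $[0,1]$ telescopes to $\xi'(\phi(a_{k-1}))\,\E[\partial_{xx}\Phi_{\gamma_*}] = a_k\,\E[\partial_{xx}\Phi_{\gamma_*}(\lbq, h + Z\sqrt{\xi'(\lbq)})]$ using $\xi' \circ \phi = \psi$ and $a_k = \psi(a_{k-1})$.

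Interchanging $\int_0^1 dt$ with $\plim_N$ is justified by dominated convergence: $\|\bm^k\|_N \leq 1$ since $|f| \leq 1$ by Proposition~\ref{prop:1lip}, and $\|\nabla \widetilde H_N(\by)\|_N \leq C$ on $\|\by\|_N \leq 1 + \eta$ with probability $1 - e^{-\Omega(N)}$ by Proposition~\ref{prop:lip}. Sending $k \to \infty$ and invoking Lemma~\ref{lem:RSlimit} to get $a_k \to \xi'(\lbq)$ then produces the first summand $\xi'(\lbq)\,\E[\partial_{xx}\Phi_{\gamma_*}(\lbq, h + Z\sqrt{\xi'(\lbq)})]$. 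The equivalent expression involving $X_{\lbq}$ follows because $\gamma_*$ vanishes on $[0, \lbq)$, reducing \eqref{eq:parisiSDE} to $\de X_s = \sqrt{\xi''(s)}\,\de B_s$ on that interval and giving $X_{\lbq} \ed h + Z\sqrt{\xi'(\lbq)}$. I expect the main technical care to lie in the auxiliary-AMP bookkeeping — keeping the factor of $t$ in the right places of both the state-evolution covariance and the Onsager coefficient — together with the clean Stein's-lemma computation exploiting the independence of the Gaussian iterates from the external field $h$.
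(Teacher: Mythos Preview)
Your proposal is correct and follows essentially the same approach as the paper: the same decomposition $H_N(\bm^k)=\langle \bh,\bm^k\rangle+\int_0^1\langle \bm^k,\nabla\widetilde H_N(t\bm^k)\rangle_N\,\de t$, the same auxiliary AMP step with non-linearity $tf$, the same Gaussian integration by parts using independence of $h$ from the Gaussian iterates, and the same telescoping integral $[t\xi'(t\phi(a_{k-1}))]_0^1$. The only cosmetic difference is that you justify the interchange of $\plim_N$ and $\int_0^1$ via dominated convergence while the paper phrases it as Lipschitz continuity of the integrand (both using Proposition~\ref{prop:lip}), and you write the pre-limit value as $a_k=\psi(a_{k-1})$ rather than $\psi(a_{k-1})$ directly.
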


\begin{proof}

The equivalence of the latter two expressions follows from the fact that $X_{\lbq}\sim X_0+ \normal(0,\xi'(\lbq))$ so we focus on the first equality. Observe that
\begin{equation}
\label{eq:outofnames}
    \frac{H_N(\bm^k)}{N}=\langle \bh, \bm^k\rangle_N+\int_0^1 \langle \bm^k,\nabla \wt H_N(t\bm^k)\rangle_N \de t
\end{equation} 
holds for any vector $\bm^k$ by considering each monomial term of $H_N$. Our main task now reduces to computing the in-probability limit of the integrand as a function of $t$. Proposition~\ref{prop:lip} ensures that $t\to \langle \bm^k,\nabla \wt H_N(t\bm^k)\rangle_N$ is Lipschitz assuming $\|\bm^k\|_N\leq 1+o(1)$. This holds with high probability for each $k$ as $N\to\infty$ by state evolution and Proposition~\ref{prop:1lip}, so we may freely interchange the limit in probability with the integral.

To compute the integrand $\langle \bm^k,\nabla \wt H_N(t\bm^k)\rangle_N$ we analyze a modified AMP which agrees with the AMP we have considered so far up to step $k$, whereupon we replace the non-linearity $f_k(\bh,\bw^k)=f(\bw^k+\bh)=f(\bx^k)$ by 
\[
    \tilde f_k(\bh,\bw^k)\equiv t\cdot f(\bx^k)
\] 
for arbitrary $t\in (0,1)$. We obtain the new iterate
\[
    \by^{k+1}(t)\equiv\nabla \wt H_N(t\bm^k)-t\bm^{k-1} \xi''(t\langle \bm^k,\bm^{k-1}\rangle_N )\langle f'(\bx^{k})\rangle_N. 
\] 
Rearranging yields
\begin{align*}
  \langle \bm^k,\nabla \wt H_N(t\bm^k)\rangle_N &= \langle \bm^k,\by^{k+1}(t)\rangle_N +t\langle \bm^k,\bm^{k-1} \rangle_N \xi''(t\langle \bm^k,\bm^{k-1}\rangle_N )\langle \bf'(\bx^{k})\rangle_N  \\
  &\simeq   \langle \bm^k,\by^{k+1}(t)\rangle_N +ta_{k-1} \xi''(t\phi(a_{k-1}) )\langle \bf'(\bx^{k})\rangle_N.
\end{align*}
We evaluate the $N\to\infty$ limit in probability of the first term, via the state evolution limits $W^k,X^k,Y^{k+1}(t)$. State evolution directly implies 
\[
    \mathbb E[W^kY^{k+1}(t)]=\xi'(t\cdot\mathbb E[M^{k-1}M^k])=\xi'(t\phi(a_{k-1})).
\] 
Since $h$ is independent of $(W^k,Y^{k+1})$ we use Gaussian integration by parts to derive 
\begin{align*}
  \mathbb E[f(X^k)Y^{k+1}(t)]
  &=\mathbb E[f(h+W^k)Y^{k+1}(t)]\\
  &=\mathbb E[f'(h+W^k)]\cdot\mathbb E[W^kY^{k+1}(t)]\\
  &=\mathbb E\lt[f'\lt(h+Z\sqrt{\xi'(\lbq)}\rt)\rt]\cdot\xi'(t\phi(a_{k-1})).
\end{align*} 
Integrating with respect to $t$ yields 
\begin{align*} 
  &\int_0^1 \langle \bm^k,\nabla \wt H_N(t\bm^k)\rangle_N \de t
  \\
  &\simeq 
  \mathbb E\lt[f'\lt(h+Z\sqrt{\xi'(\lbq)}\rt)\rt]\cdot 
  \int_0^1 \xi'(t\phi(a_{k-1}))+t \phi(a_{k-1})\xi''(t\phi(a_{k-1}))
  ~
  \de t
  \\
  &=
  \mathbb E\lt[\partial_{xx}\Phi_{\gamma_*}\lt(\lbq,h+Z\sqrt{\xi'(\lbq)}\rt)\rt]\cdot \lt[t\xi'(t\phi(a_{k-1}))\rt]|^{t=1}_{t=0}
\end{align*} 
Finally the first term in \eqref{eq:outofnames} gives energy contribution 
\begin{align*}
    \langle \bh,\bm^k\rangle_N 
    &\simeq
    \mathbb E[h\cdot f(Z\sqrt{\xi'(\lbq)})]
    \\
    &=
    \mathbb E\lt[h\cdot \partial_{x}\Phi_{\gamma_*}\lt(\lbq,h+Z\sqrt{\xi'(\lbq)}\rt)\rt] .
\end{align*} 
Since $\lim_{k\to\infty} a_{k-1}=\xi'(\lbq)$ and $\psi(\xi'(\lbq))=\xi'(\lbq)$ combining concludes the proof.
\end{proof}

\subsection{Phase 2: Incremental AMP}\label{subsec:IAMP}

We now switch to IAMP, which has a more complicated definition. We will begin from the iterates $\bx^{\ul},\bm^{\ul}$ from phase $1$ for a large $\ul\in\mathbb Z^+$. Our implementation follows that of \cite{ams20,AS20} and we relegate several proofs to Section~\ref{ap:iamp}. First define the functions
\[
    u(t,x)=\partial_{xx}\Phi_{\gamma_*}(t,x),\quad v(t,x)=\xi''(t)\gamma_{*}(t)\partial_x\Phi_{\gamma_*}(t,x).
\]

Set $\eps_0=\frac{\lbq}{\phi(a_{\ul-1})}-1$ and 
{\color{black}$\delta=\lbq\big((1+\eps_0)^2)-1\big)$}
;
observe that $\eps_0,\delta\to 0$ as $\ul\to\infty$.\footnote{When $\lbq=0$, $\eps_0$ is not defined. In this case we take $\delta>0$ small and begin IAMP at $\bn^{\ul}=0$ as in \cite{ams20}.} 
Define the sequence $(q_{\ell}^{\delta})_{\ell\geq \ul}$ by 
\[
q_{\ell}^{\delta}=\lbq+({\ell}-\ul)\delta.
\]
Fix $\ubq\in (\lbq,1)$; the value $\ubq$ will be taken close to $1$ after sending $\ul\to\infty$. In particular we will assume $\delta<1-\ubq$ holds and set $\ubl=\min\{\ell\in\mathbb Z^+:q_{\ell}^{\delta}\geq\ubq\}$. Also define 
\[
    \bn^{\ul}\equiv (1+\eps_0)\bm^{\ul}.
\]

Set $\bz^{\ul}=\bw^{\ul}$. So far, we have defined $(\bx^{\ul},\bz^{\ul},\bn^{\ul})$. We turn to inductively defining the triples $(\bx^{\ell},\bz^{\ell},\bn^{\ell})$ for $\ul\leq\ell\leq\ubl$. First, the values $(\bz^{\ell})_{\ell\geq \ul}$ are defined as AMP iterates via

\begin{align}
\label{eq:general_amp}
    \begin{split}
    \bz^{\ell+1} 
    &= 
    \nabla \wt H_N(f_{\ell}(\bz^0,\cdots,\bz^\ell)) - \sum_{j=0}^\ell d_{\ell, j} f_{j-1}(\bz^0,\cdots,\bz^{j-1}),
    \\  
    d_{\ell,j} 
    &= 
    \xi''\lt( \E\lt[f_{\ell}(Z^0,\dots,Z^{\ell})f_{j-1}(Z^0,\dots,Z^{j-1}\rt]\rt) \cdot \E\lt[\frac{\partial f_{\ell}}{\partial z^j}(Z^0,\cdots,Z^\ell)\rt]\, .
    \end{split}
\end{align}  
(The non-linearities $f_{\ell}$ will be specified below). The sequence $(\bx^{\ell+1})_{\ell\geq \ul}$ is defined by 
\begin{align*}
    \bx^{\ell+1}
    &\equiv
    \bx^{\ul}+
    \sum_{j=\ul}^{\ell} 
    v\lt(q_{j}^{\delta}, \bx^{j}\rt) \delta
    + 
    \sum_{j=\ul}^{\ell}
    \lt(\bz^{j+1}-\bz^{j}\rt)
    \\
    &=
    \bx^{\ell}+
    v\lt(q_{\ell}^{\delta}, \bx^{\ell}\rt) \delta
    +
    \lt(\bz^{{\ell}+1}-\bz^{\ell}\rt)
    , 
    \quad 
    \ul \leq {\ell} \leq \ubl-1.
\end{align*} 
As usual, $v(q_j^{\delta},\cdot )$ is applied component-wise so that $v(q_j^{\delta},\bx^j)_i=v(q_j^{\delta},x^j_i)$. Next define the scalar function
\[
    u_{\ell}^{\delta}(x)
    =
    \frac{\delta u(q^{\delta}_{\ell},x)}
    {
        \lt(\xi'(q^{\delta}_{\ell}) - \xi'(q^{\delta}_{\ell-1})\rt) 
        \E\lt[u(q^{\delta}_{\ell}; X^{\delta}_{\ell})^2\rt]
    }
\]
and consider for $\ell\geq \ul$ the recursive definition 
\begin{align}
\label{eq:IAMP}
    \bn^{\ell+1} 
    &\equiv 
    \bn^{\ul}+
    \sum_{j=\ul}^{\ell} 
    u_{j}^{\delta}
    \lt(
        \bx^{j}\rt)\lt(\bz^{j+1}-\bz^{j}
    \rt)
    \\
\nonumber
    &=
    \bn^{\ell}+ 
    u_{\ell}^{\delta}
    \lt(
        \bx^{\ell}\rt)\lt(\bz^{\ell+1}-\bz^{\ell}
    \rt).
\end{align}
We define the non-linearity $f_{\ell}:\mathbb R^{\ell+1}\to\mathbb R$ to recursively satisfy
\[
    f_{\ell}(\bz^0,\dots,\bz^{\ell})= \bn^{\ell},\quad \ell>\ul.
\]

It is not difficult to verify that the equations above form a ``closed loop" uniquely determining the sequence $(\bx^{\ell},\bz^{\ell},\bn^{\ell})_{\ell\geq \ul}$. Since $(x^{\ell}_i,n^{\ell}_i)$ is determined by the sequence $(z^{\ul}_i,\dots,z^{\ell}_i)$ we may define the state evolution limiting random variables $(X_{\ell}^{\delta},N_{\ell}^{\delta},Z_{\ell}^{\delta})_{\ell\geq\ul}.$ We emphasize that the IAMP just defined is part of the same $\lbq$-AMP algorithm as the first phase defined in the previous subsection. However the variable naming has changed so that the main iterates are $\bz^{\ell}$ for $\ell\geq \ul$ rather than $\bw^{\ell}$ for $\ell\leq \ul$. In particular there is no problem in applying state evolution even though the two AMP phases take different forms.

To complete the algorithm, we output the coordinate-wise sign $\bsigma=\sign(\bn^{\ubl})$ where
\[
    \sign(x)=
    \begin{cases}
        ~~1,\quad x\geq 0\\
        -1,\quad x\leq 0.
    \end{cases}
\]

The key to analyzing the AMP algorithm above is an SDE description in the $\delta\to 0$ limit. Define the filtration
\begin{equation}
\label{eq:Felldelta}
    \mathcal F_{\ell}^{\delta}=\sigma\lt((Z^{\delta}_{k},N^{\delta}_{k})_{0\leq k\leq \ell}\rt)
\end{equation}
for the state evolution limiting process.

\begin{lemma}
\label{lem:BMlimit}
The sequences $(Z^{\delta}_{\ul},Z^{\delta}_{\ul+1},\dots)$ and $(N^{\delta}_{\ul},N^{\delta}_{\ul+1},\dots)$ satisfy for each $\ell\geq \ul$:
\begin{align*} 
    \mathbb E[(Z^{\delta}_{\ell+1}-Z^{\delta}_{\ell})Z_{j}^{\delta}]
    &=0,
    \quad \text{for all }\ul+1\leq j\leq \ell
    \\
    \mathbb E[(Z^{\delta}_{\ell+1}-Z^{\delta}_{\ell})^2|\mathcal F_{\ell}^{\delta}]
    &=
    \xi'(q^{\delta}_{\ell+1})-\xi'(q^{\delta}_{\ell})
    \\
    \mathbb E[(Z^{\delta}_{\ell})^2]
    &=
    \xi'(q_{\ell}^{\delta})]
    \\
    \mathbb E[(N^{\delta}_{\ell+1}-N^{\delta}_{\ell})|\mathcal F_{\ell}^{\delta}]
    &=0
    \\
    \mathbb E[(N^{\delta}_{\ell+1}-N^{\delta}_{\ell})^2]
    &=\delta
    \\
    \mathbb E[(N^{\delta}_{\ell})^2]
    &=
    q_{\ell+1}^{\delta}.
\end{align*}
\end{lemma}

 From Lemma~\ref{lem:BMlimit} and the fact that $(Z^{\delta}_{\ul},Z^{\delta}_{\ul+1},\dots)$ form a Gaussian process, it follows that there is a coupling with a standard Brownian motion $(B_t)_{t\in [0,1]}$ such that $\int_0^{q_{\ell}^{\delta}} \sqrt{\xi''(t)}\de B_{t}=Z_{\ell}^{\delta}$ for each $\ell$. Denote by $(\mathcal F_t)_{t\in [0,1]}$ the associated natural filtration. Recall that $X_t$ is defined as the solution to the SDE
 \[
    dX_t=\gamma_*(t)\partial_x\Phi_{\gamma_*}(t,X_t)\de t+\sqrt{\xi''(t)}\de B_t
 \] 
 with initialization $X_0\sim \mathcal L_h$. Recalling Proposition~\ref{prop:Xt}, define processes $(N_t,Z_t)_{t\in [0,1]}$ by
 \begin{align*}
    N_t
    &\equiv
    \partial_x\Phi_{\gamma_*}(t,X_t)\\
    &=
    \partial_x\Phi_{\gamma_*}(\lbq,X_{\lbq})+\int_{\lbq}^t \sqrt{\xi''(s)}u(s,X_s)dB_s,\\ 
    Z_t
    &\equiv
    \int_0^t \sqrt{\xi''(s)}dB_s.
 \end{align*} The next lemma states that these continuous-time processes are the $\delta\to 0$ limit of $(X_{\ell}^{\delta},N_{\ell}^{\delta},Z_{\ell}^{\delta})_{\ell\geq\ul}$. 

\begin{restatable}{lemma}{SDE}
\label{lem:SDE}
Fix $\ubq\in (\lbq,1)$. There exists a coupling between the families of triples $\{(Z^{\delta}_{\ell},X^{\delta}_{\ell},N^{\delta}_{\ell})\}_{\ell\geq 0}$ and $\{(Z_t,X_t,N_t)\}_{t\geq 0}$ such that the following holds for a constant $C>0$. 
{\color{black}
For large enough $\ul$, and every $\ell\geq \ul$ with $q_{\ell}\leq \ubq$,
}
\begin{align*}
    \max _{\ul \leq j \leq \ell} \mathbb{E}\lt[\lt(X_{j}^{\delta}-X_{q_{j}}\rt)^{2}\rt] 
    &\leq 
    C \delta,
    \\
    \max _{\ul \leq j \leq \ell} \mathbb{E}\lt[\lt(N_{j}^{\delta}-N_{q_{j}}\rt)^{2}\rt] 
    &\leq 
    C \delta.
\end{align*}

\end{restatable}

Lemmas~\ref{lem:BMlimit} and~\ref{lem:SDE} are proved in Section~\ref{ap:iamp}.

\subsection{Computing the Final Energy}
\label{subsec:final-energy}

In this subsection we establish Theorem~\ref{thm:main} by showing $\lim_{\ubq\to 1}\lim_{\ul\to\infty}\plim_{N \rightarrow \infty}\frac{H_{N}(\bsigma)}{N}=\Par(\gamma_*).$ First we show that the replacements $\bm^{\ul}\to\bn^{\ul}$ and $\bn^{\ubl}\to\bsigma$ have negligible effect on the asymptotic value attained.

\begin{lemma} \label{lem:noroundingerror}

\begin{align}
\label{eq:approx1}
    \lim_{\ubq\to 1}
    \lim_{\ul\to\infty}
    \plim_{N \rightarrow \infty} 
    \lt|\frac{H_{N}(\bsigma)-H_{N}\lt(\bn^{\ubl}\rt)}{N}\rt|
    &=
    0
    ,
    \\
\label{eq:approx2}
    \lim_{\ul\to\infty}
    \plim_{N \rightarrow \infty}
    \lt| 
    \frac{H_N(\bm^{\ul})-H_N(\bn^{\ul})}{N}
    \rt|
    &=0.
\end{align}
\end{lemma}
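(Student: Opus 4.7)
My plan is to reduce both estimates to controlling the normalized $\ell^2$-distances $\|\bn^{\ul}-\bm^{\ul}\|_N$ and $\|\bsigma-\bn^{\ubl}\|_N$ via a single Lipschitz reduction, and then to do a short calculation for each pair. For any $\bu,\bv$ lying in the ball $\{\|\bx\|\le (1+\eta)\sqrt N\}$, Proposition~\ref{prop:lip} applied with $k=1$ together with the mean value theorem gives $|\widetilde H_N(\bu)-\widetilde H_N(\bv)|\le C\sqrt N\,\|\bu-\bv\|$ with probability $1-e^{-\Omega(N)}$, while $|\langle \bh,\bu-\bv\rangle|\le \|\bh\|\,\|\bu-\bv\|$ with $\|\bh\|/\sqrt N$ bounded in probability. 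Dividing by $N$ yields $|H_N(\bu)-H_N(\bv)|/N\le C'\,\|\bu-\bv\|_N$. Lemma~\ref{lem:RSconverge} gives $\|\bm^{\ul}\|_N^2\simeq \lbq\le 1$, Lemma~\ref{lem:BMlimit} gives $\|\bn^{\ubl}\|_N^2\simeq q^{\delta}_{\ubl+1}\le\ubq+\delta$, and $\|\bsigma\|_N=1$, so all the relevant vectors lie in the ball with high probability and it remains only to control the two normalized distances.

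For \eqref{eq:approx2}, by construction $\bn^{\ul}=(1+\varepsilon_0)\bm^{\ul}$, so $\|\bn^{\ul}-\bm^{\ul}\|_N=\varepsilon_0\|\bm^{\ul}\|_N$. Lemma~\ref{lem:RSlimit} gives $a_{\ul-1}\to\xi'(\lbq)$ as $\ul\to\infty$, whence the fixed-point identity \eqref{eq:phiid} implies $\phi(a_{\ul-1})\to\lbq$ and so $\varepsilon_0\to 0$. The footnote case $\lbq=0$ is even simpler: there $\bn^{\ul}=0$ and $\|\bm^{\ul}\|_N\to 0$ directly.

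For \eqref{eq:approx1}, an elementary case analysis on the sign of $n^{\ubl}_i$ gives the pointwise identity $(\sigma_i-n^{\ubl}_i)^2=(|n^{\ubl}_i|-1)^2$. The map $x\mapsto(|x|-1)^2$ is pseudo-Lipschitz, so state evolution (Proposition~\ref{prop:AMP}) followed by Lemma~\ref{lem:SDE} yields
\[
\|\bsigma-\bn^{\ubl}\|_N^2 \;\simeq\; \E\bigl[(|N^{\delta}_{\ubl}|-1)^2\bigr] \;\xrightarrow[\ul\to\infty]{}\; \E\bigl[(|N_{\ubq}|-1)^2\bigr],
\]
where $N_{\ubq}=\partial_x\Phi_{\gamma_*}(\ubq,X_{\ubq})$; note that $\ul\to\infty$ drives $\delta\to 0$ through the definition $\delta=\xi'(\lbq(1+\varepsilon_0)^2)-\xi'(\lbq)$, so the second convergence follows from the $L^2$ bound in Lemma~\ref{lem:SDE}. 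Proposition~\ref{prop:1lip} gives $|N_{\ubq}|\le 1$, hence $\E[|N_{\ubq}|]\ge\E[N_{\ubq}^2]$, while the $\lbq$-optimizability hypothesis \eqref{eq:opt} identifies $\E[N_{\ubq}^2]=\ubq$. Expanding,
\[
\E\bigl[(|N_{\ubq}|-1)^2\bigr]=1-2\,\E[|N_{\ubq}|]+\E[N_{\ubq}^2]\le 1-\ubq,
\]
which vanishes as $\ubq\to 1$.

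The only mildly non-routine point is the interchange of the $N\to\infty$ limit with the quadratic integrand $(|x|-1)^2$; this is handled by Proposition~\ref{prop:AMP} applied to a pseudo-Lipschitz test function (which exists because $n^{\ubl}_i$ depends on $(z^{\ul}_i,\dots,z^{\ubl}_i)$ through Lipschitz non-linearities $u_j^{\delta}$ per Proposition~\ref{prop:phireg}), supported by the uniform second-moment control on $\bn^{\ubl}$ in Lemma~\ref{lem:BMlimit}, which also ensures the iterates stay in the ball where Proposition~\ref{prop:lip} applies. Everything else is bookkeeping of state-evolution limits and the definitions of $\varepsilon_0$, $\delta$, and $\phi$.
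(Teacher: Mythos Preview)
Your proof is correct and follows essentially the same approach as the paper: both reduce to controlling $\|\bsigma-\bn^{\ubl}\|_N$ and $\|\bn^{\ul}-\bm^{\ul}\|_N$ via the Lipschitz bound from Proposition~\ref{prop:lip}, use $\bn^{\ul}-\bm^{\ul}=\varepsilon_0\bm^{\ul}$ with $\varepsilon_0\to 0$ for \eqref{eq:approx2}, and use $|N_{\ubq}|\le 1$ together with $\E[N_{\ubq}^2]=\ubq$ for \eqref{eq:approx1}. Your treatment is in fact more explicit than the paper's: you spell out the intermediate passage $N^{\delta}_{\ubl}\to N_{\ubq}$ via Lemma~\ref{lem:SDE} and the inequality $\E[(|N_{\ubq}|-1)^2]\le 1-\ubq$, whereas the paper simply asserts $\lim_{\ubq\to 1}\E[(N_{\ubq}-\sign(N_{\ubq}))^2]=0$ from the same two ingredients. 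One cosmetic point: Lemma~\ref{lem:SDE} is stated for $q_{\ell}\le\ubq$ while $q^{\delta}_{\ubl}\in[\ubq,\ubq+\delta)$, so strictly speaking you should invoke it with a slightly larger cutoff $\ubq'<1$; this is harmless since $\ubq\to 1$ anyway.
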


\begin{proof}
Proposition~\ref{prop:1lip} implies that $N_t\in [-1,1]$ almost surely, while optimizability of $\gamma_*$ implies that $\mathbb E[(N_t)^2]=t$ for $t\in [\lbq,\ubq]$. It follows that
\begin{align*}
    \lim_{\ubq\to 1}\lim_{\ul\to\infty}\plim_{N \rightarrow \infty} \|\bn^{\ubl}-\sign(\bn^{\ubl})\|_N
    &=
    \lim_{\ubq\to 1} \sqrt{\mathbb E\lt[\lt(N_{\ubq}-\sign(N_{\ubq})\rt)^2\rt]}
    \\
    &=0.
\end{align*} The limit \eqref{eq:approx1} follows from Proposition~\ref{prop:lip} with $k=1$. \eqref{eq:approx2} follows similarly as \[\bn^{\ul}-\bm^{\ul}=\eps_0\bm^{\ul}\] and $\eps_0\to 0$ as $\ul\to\infty$ while $\plim_{N\to\infty}\|\bm^{\ul}\|_N\leq 1$ thanks to Proposition~\ref{prop:1lip}.
\end{proof}

In the next lemma, proved in Section~\ref{ap:iamp}, we compute the energy gain during IAMP.

\begin{lemma}
\label{lem:iampenergy}
\begin{equation}
\label{eq:iampenergy}
    \lim_{\ubq\to 1}\lim_{\ul\to\infty}
    \plim_{N \rightarrow \infty}
    \frac{H_{N}\big(\bn^{\ubl}\big)-H_{N}\big(\bn^{\ul}\big)}{N} 
    = 
    \int_{\lbq}^{1} 
    \xi^{\prime \prime}(t) 
    \mathbb{E}\lt[
    u\lt(t, X_{t}\rt)
    \rt] 
    \mathrm{d} t.
\end{equation}
\end{lemma}

We now put everything together. Recall from Lemma~\ref{lem:RSenergy} that 
\begin{align*}
    \lim_{\ul\to\infty} 
    \plim_{N\to\infty}
    \frac{H_N(\bm^{\ul})}{N}
    =
    \xi'(\lbq)\cdot
    \mathbb E\lt[
    \partial_{xx}\Phi_{\gamma_*}
    \lt(\lbq,X_{\lbq}\rt)
    \rt]
    +
    \mathbb E\lt[
    h\cdot \partial_{x}\Phi_{\gamma_*}
    \lt(\lbq,X_{\lbq}\rt)
    \rt].
\end{align*} 
Proposition~\ref{prop:Xt} implies that the process $\partial_x\Phi_{\gamma_*}(t,X_{t})$ is a martingale, while Lemma~\ref{lem:identity2} states that $\mathbb E[u(t,X_t)]=\mathbb E[\partial_{xx}\Phi_{\gamma_*}(t,X_t)]{\color{black}\geq}\int_t^1 \gamma_*(s)\de s$. Substituting, we find
\[
\lim_{\ul\to\infty}\plim_{N\to\infty} \frac{H_N(\bm^{\ul})}{N}\geq  \xi'(\lbq)\int_{\lbq}^1\gamma_*(s)\de s + \mathbb E[h\cdot \partial_x\Phi_{\gamma_*}(0,h)].
\] 
Using again that $\mathbb E[u(t,X_t)]=\int_t^1 \gamma_*(s)\de s$, the right-hand side of \eqref{eq:iampenergy} is 
 \begin{align*}
    \lim_{\ubq\to 1}\lim_{\ul\to\infty}\plim_{N \rightarrow \infty}\frac{H_{N}\lt(\bn^{\ubl}\rt)-H_{N}\lt(\bn^{\ul}\rt)}{N} 
    &{\color{black}\geq} 
    \int_{\lbq}^1 \xi''(t)\int_t^1 \gamma_*(s)\de s \de t
    \\
    &=
    \int_{\lbq}^1\int_{\lbq}^s \xi''(t)\gamma_*(s)\de t \de s
    \\
    &=\int_{\lbq}^1 (\xi'(s)-\xi'(\lbq))\gamma_*(s)\de s
    \\
    &=
    \int_0^1 \xi'(s)\gamma_*(s)\de s -\xi'(\lbq)\int_{\lbq}^1\gamma_*(s)\de s.
\end{align*}
Combining with Lemma~\ref{lem:noroundingerror} yields 
\begin{align}
\nonumber 
    \lim_{\ubq\to 1}\lim_{\ul\to\infty}\plim_{N \rightarrow \infty}\frac{H_{N}(\bsigma)}{N}
    &{\color{black}\geq}
    \lim_{\ubq\to 1}\lim_{\ul\to\infty}\plim_{N \rightarrow \infty}\frac{1}{N}\cdot \Bigg(H_{N}\lt(\sign\big(\bn^{\ubl}\big)\rt)-H_N\big(\bn^{\ubl}\big)
\\
\nonumber
    &\quad\quad 
    + H_N\big(\bn^{\ubl}\big)-H_N\big(\bn^{\ul}\big)+H_N\big(\bn^{\ul}\big)-H_N\big(\bm^{\ul}\big)+H_N\big(\bm^{\ul}\big) \Bigg) 
    \\
\label{eq:finalvalue}
    &{\color{black}\geq}
    \mathbb E^{h\sim\mathcal L_h}[h\cdot \partial_x\Phi_{\gamma_*}(0,h)]+\int_0^1 \xi'(s)\gamma_*(s)\de s .
\end{align}
Having estimated the limiting energy achieved by our $\lbq$-AMP algorithm, it remains to verify that the value in Equation~\eqref{eq:finalvalue} is $\Par_{\xi,h}(\gamma_*)$. Define 
\[
    \Psi_{\gamma_*}(t,x)=\Phi_{\gamma_*}(t,x)-x\partial_x \Phi_{\gamma_*}(t,x)
\] 
for $(t,x)\in [0,1]\times \mathbb R.$ We also use some identities computed in \cite{ams20}.

\begin{proposition}[{\cite[Lemma 6.13]{ams20}}]
\label{prop:DPsi}
For any $\gamma\in\cuL$, the following identities hold:
\begin{align*}
    \frac{\de\phantom{t}}{\de t}\E\lt[\Phi_{\gamma}(t,X_{t})\rt]
    &=
    \frac{1}{2}\, \xi''(t)\gamma(t)\E\lt[\partial_{x}\Phi_{\gamma}(t,X_t)^2\rt]\, 
    \\
    \frac{\de\phantom{t}}{\de t}\E\lt[X_{t}\partial_x\Phi_{\gamma}(t,X_{t})\rt]
    &=
    \xi''(t)\gamma(t)\E\lt[\partial_{x}\Phi_{\gamma}(t,X_t)^2\rt]+\xi''(t) \E\lt[\partial_{xx}\Phi_{\gamma}(t,X_t)\rt]\, . 
\end{align*}
\end{proposition}

\begin{lemma} 
For $h\sim\mathcal L_h$, $\ubq\geq\lbq$, and $X_t$ as in \eqref{eq:parisiSDE},
\begin{align*}
    \mathbb E[\Phi_{\gamma_*}(0,h)] &=\mathbb E[h\cdot\partial_x\Phi_{\gamma_*}(0,h)]+\E\lt[\Psi_{\gamma_*}(\ubq,X_{\ubq})\rt]\\
    +&
    \frac{1}{2}\int_0^{\ubq}\xi''(t) \gamma_*(t) \E\lt[\partial_x\Phi_{\gamma_*}(t,X_t)^2\rt]\, \de t
    +\int_0^{\ubq} \xi''(t) \E\lt[\partial_{xx}\Phi_{\gamma_*}(t,X_t)\rt]\, \de t.
\end{align*}
\end{lemma}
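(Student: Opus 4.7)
The plan is to reduce the claim to an application of Proposition~\ref{prop:DPsi} after recognizing that the combination on the left-hand side can be rewritten in terms of $\Psi_{\gamma_*}$. Specifically, since $X_0\sim\mathcal L_h$ by the initialization of the Parisi SDE, we have
\[
\mathbb E[\Phi_{\gamma_*}(0,h)]-\mathbb E[h\cdot\partial_x\Phi_{\gamma_*}(0,h)]
= \mathbb E[\Psi_{\gamma_*}(0,X_0)],
\]
so the identity to prove is exactly
\[
\mathbb E[\Psi_{\gamma_*}(0,X_0)]-\mathbb E[\Psi_{\gamma_*}(\ubq,X_\ubq)]
=\tfrac12\int_0^{\ubq}\xi''(t)\gamma_*(t)\mathbb E[\partial_x\Phi_{\gamma_*}(t,X_t)^2]\,\de t
+\int_0^{\ubq}\xi''(t)\mathbb E[\partial_{xx}\Phi_{\gamma_*}(t,X_t)]\,\de t.
\]

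The next step is to compute $\tfrac{\de}{\de t}\mathbb E[\Psi_{\gamma_*}(t,X_t)]$. Since $\Psi_{\gamma_*}(t,x)=\Phi_{\gamma_*}(t,x)-x\partial_x\Phi_{\gamma_*}(t,x)$, subtracting identity~\eqref{eq:DEPhiX} from~\eqref{eq:DEPhi} of Proposition~\ref{prop:DPsi} yields
\[
\tfrac{\de}{\de t}\mathbb E[\Psi_{\gamma_*}(t,X_t)]
=-\tfrac12\xi''(t)\gamma_*(t)\mathbb E[\partial_x\Phi_{\gamma_*}(t,X_t)^2]
-\xi''(t)\mathbb E[\partial_{xx}\Phi_{\gamma_*}(t,X_t)],
\]
which is precisely the negative of the integrand on the right-hand side of the target identity.

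I would then integrate this derivative from $0$ to $\ubq$ via the fundamental theorem of calculus. The regularity statements in Proposition~\ref{prop:phireg} (boundedness of $\partial_t\Phi_{\gamma_*}$ and of $\partial_x^j\Phi_{\gamma_*}$ up to $t=\ubq<1$), together with Proposition~\ref{prop:1lip} controlling $\partial_x\Phi_{\gamma_*}$, ensure all the integrands are integrable and the expectations are finite, justifying this step. Rearranging gives the desired identity.

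There is no real obstacle here beyond bookkeeping: the entire argument is an algebraic manipulation combining the two formulas of Proposition~\ref{prop:DPsi}. The only subtlety is the integrability of $\xi''(t)\gamma_*(t)$, but this is guaranteed by the defining condition $\int_0^1 \xi''(t)\gamma_*(t)\,\de t<\infty$ of $\cuL$, and the cutoff at $\ubq<1$ keeps everything safely away from the endpoint where $\Phi_{\gamma_*}$ loses smoothness.
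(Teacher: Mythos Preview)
Your proposal is correct and follows essentially the same approach as the paper: both rewrite $\mathbb E[\Phi_{\gamma_*}(0,h)]-\mathbb E[h\cdot\partial_x\Phi_{\gamma_*}(0,h)]=\mathbb E[\Psi_{\gamma_*}(0,X_0)]$, compute $\tfrac{\de}{\de t}\mathbb E[\Psi_{\gamma_*}(t,X_t)]$ by subtracting the two identities of Proposition~\ref{prop:DPsi}, integrate from $0$ to $\ubq$, and rearrange.
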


\begin{proof}

We write
\begin{align*}
    \mathbb E[\Psi_{\gamma_*}(\ubq,X_{\ubq})-\Psi_{\gamma_*}(0,X_{0})]&=\int_{0}^{\ubq}\frac{\de\phantom{s}}{\de s}\mathbb E\lt[\Psi_{\gamma_*}(s,X_s)\rt]|_{s=t} \de t
    \\
    &=
    \int_{0}^{\ubq}\frac{\de\phantom{s}}{\de s} \mathbb E\lt[\Phi_{\gamma_*}(s,X_s)-X_s\partial_x \Phi_{\gamma_*}(s,X_s)\rt]|_{s=t}\de t
    \\
    &=
    - \frac{1}{2}\int_{0}^{\ubq}\xi''(t) \gamma_*(t) \E\lt[\partial_x\Phi_{\gamma_*}(t,X_t)^2\rt] \de t
    \\
    &~~~
    -\int_{0}^{\ubq} \xi''(t) \E\lt[\partial_{xx}\Phi_{\gamma_*}(t,X_t)\rt]\de t.
\end{align*} 
Rearranging shows:
\begin{align*}
    \mathbb E[\Phi_{\gamma_*}(0,X_{0})]&=\mathbb E[X_{0}\partial_x\Phi_{\gamma_*}(0,X_{0})]+\mathbb E[\Psi_{\gamma_*}(\ubq,X_{\ubq})]
    \\
    &+
    \frac{1}{2}\int_{0}^{\ubq}\xi''(t) \gamma_*(t) \E\lt[\partial_x\Phi_{\gamma_*}(t,X_t)^2\rt] \de t+\int_{0}^{\ubq} \xi''(t) \E\lt[\partial_{xx}\Phi_{\gamma_*}(t,X_t)\rt]\de t.
\end{align*}
As $X_0=h$ this concludes the proof.
\end{proof}

\begin{proof}[Proof of Theorem~\ref{thm:main}]
Note that $\gamma_*(t)>0$ implies $t\geq\lbq$ and hence by optimizability 
\[
    \mathbb E[(\partial_{x}\Phi_{\gamma_*}(t,X_t))^2]=t.
\] 
Meanwhile for any $t\in [0,1]$, 
\[
\mathbb E[\partial_{xx}\Phi_{\gamma_*}(t,X_t)]=\int_t^1 \gamma_*(t)\de t.
\]
Therefore 
\[
    \Phi_{\gamma_*}(0,h) = h\cdot \partial_x\Phi_{\gamma_*}(0,h)+\E\lt[\Psi_{\gamma_*}(\ubq,X_{\ubq})\rt] +\frac{1}{2}\int_0^{\ubq}\xi''(t) \gamma_*(t)t \, \de t +\int_0^{\ubq} \xi''(t) \int_t^1 \gamma_*(s) \de s\, \de t\, .
\]
Recalling \eqref{eq:Par}, we find
\begin{align*}
    \Par(\gamma_*)
    &=
    \mathbb E^h[\Phi_{\gamma_*}(0,h)]-\frac{1}{2}\int_0^1\xi''(t)\gamma_*(t)t\de t 
    \\
    &=
    \mathbb E^h[h\cdot \partial_x\Phi_{\gamma_*}(0,h)]
    +
    \E\lt[\Psi_{\gamma_*}(\ubq,X_{\ubq})\rt] 
    +
    \int_0^{\ubq} \xi''(t) 
    \int_t^1 
    \gamma_*(s) \de s\, \de t\, 
    +
    o_{\ubq\to 1}(1).
\end{align*}
As in \cite[Proof of Theorem 3.2]{ams20} $\lim_{\ubq\to 1}\Psi_{\gamma_*}(\ubq,x)=0$ holds uniformly in $x$. Moreover
\begin{align*}
    \lim_{\ubq\to 1}\int_0^{\ubq} \xi''(t) \int_t^1 \gamma_*(s) \de s\, \de t
    &=
    \int_0^{1} \xi''(t) \int_t^1 \gamma_*(s) \de s\, \de t
    \\
    &=
    \int_0^1 
    \int_0^s \xi''(t) \gamma_*(s) 
    \de t\, \de s
    \\ 
    &=
    \int_0^1 \xi'(s)\gamma_*(s)\de s.
\end{align*}
Combining the above and comparing with \eqref{eq:finalvalue} yields
\begin{align*}
    \Par(\gamma_*)
    &=
    \mathbb E^h\lt[h\cdot \partial_x\Phi_{\gamma_*}(0,h)\rt]
    +\int_0^1 \xi'(s)\gamma_*(s)\de s
    \\
    &{\color{black}\leq}
    \lim_{\ubq\to 1}
    \lim_{\ul\to\infty}
    \plim_{N \rightarrow \infty}\frac{H_{N}(\bsigma)}{N}.
\end{align*}
This completes the proof of Theorem~\ref{thm:main}.
\end{proof}

\section{Constructing Many Approximate Maximizers}\label{sec:branch}

Here we explain the modifications needed for branching IAMP and Theorem~\ref{thm:branch}. The proofs are a slight extension of those for the main algorithm, and in fact we give many proofs for IAMP directly in this more general setting in Section~\ref{ap:iamp}. Let us fix values $Q=(q_1,\dots,q_m)$ with $\lbq\leq  q_1< \dots< q_m<1$ and an index $B\in [m]$. To construct a pair of approximate maximizers with overlap $q_B$ we first construct $\bn^{\ul}$ exactly as in Subsection~\ref{subsec:root}. For each $i<B$, set $\bg^{(q_i,1)}=\bg^{(q_i,2)}=\bz^{-k_{i,1}}=\bz^{-k_{i,2}}\in\mathbb R^N$ for some $k_{i,1}=k_{i,2}\leq K$ as in Subsection~\ref{subsec:init}. For each $B\leq i\leq m$, set $\bg^{(q_i,1)}=\bz^{-k_{i,1}}$ and $\bg^{(q_i,2)}=\bz^{-k_{i,2}}$ where $k_{i,1}\neq k_{i,2}$. Because the vectors $\bg^{(q_i,a)}$ are constructed using AMP, we require some additional conditions. 
{\color{black}
We impose the separation condition
\begin{equation}
\label{eq:kiu-separated}
k_{i,a'}-\ell^{\delta}_{q_i}>k_{j,a}-\ell^{\delta}_{q_j}>0
\end{equation}
for all $i>j$ and $a,a'\in \{1,2\}$. (In particular, it implies that $k_{i,a'}\neq k_{j,a}$ for $i\neq j$.) It is not hard to satisfy \eqref{eq:kiu-separated} by choosing the values $k_{i,a}$ sequentially in increasing order of $i$.
} Finally we insist that $\max_{i,a}(k_{i,a})+\ubl+1<K$, where $h=\bz^{-K}$ was the AMP initialization, which is satisfied by choosing $K$ large at the end.

Having fixed this setup, we proceed by defining $\bm^{k,1}=\bm^{k,2}=\bm^{k}$ for $k\geq 0$ exactly as in the original first phase. 
Next we generate two sequences of IAMP iterates using \eqref{eq:IAMP} except at times corresponding to $q_i\in Q$, altogether generating $\bn^{\ell,a}$ for $\ell> \ul$ and $a\in\{1,2\}$ via:

\begin{equation}
\label{eq:branchIAMP}
\bn^{\ell,a} =\begin{cases}
\bn^{\ell-1,a}+\sqrt{\delta}\bg^{(q_i,a)},\quad \quad\quad \quad\quad\quad\quad\ell=\ell^{\delta}_{q_i}\equiv\ul+\lceil (q_i-\lbq)\delta\rceil+1~~\text{for some } i\in [m]
\\
\bn^{\ell-1,a}+ u_{\ell-1}^{\delta}\lt(\bx^{\ell-1,a}\rt)\lt(\bz^{\ell,a}-\bz^{\ell-1,a}\rt),\quad \text{ else}.
\end{cases}
\end{equation}

Recalling Subsection~\ref{subsec:init}, this is an AMP algorithm of the required form. 
The definitions of $\bx^{\ell,a},\bz^{\ell,a}$ are the same as in e.g. \eqref{eq:general_amp} with superscript $a$ everywhere, {\color{black}though note that now the definition
\[
f_{j-1}(\bz^{-K},\dots, \bz^{0}, \bz^{1,a},\dots,\bz^{j-1,a})=\bn^{j-1,a}
\] 
of $f_{j-1}$ has explicit dependence on the negatively indexed variables through $\bg^{(q_i,a)}$.}
The following result follows immediately from Lemmas~\ref{lem:SDE2}, \ref{lem:iampenergy2} and readily implies Theorem~\ref{thm:branch}.

\begin{lemma}
\label{lem:branch}
For optimizable $\gamma_*$,
\begin{align*}
    \lim_{\ubq\to 1}\lim_{\ul\to \infty}\plim_{N\to\infty}\frac{H_N(\bn^{\ubl,a})}{N}
    &{\color{black}\geq }
    \Par(\gamma_*),\quad a\in\{1,2\}
    \\
    \lim_{\ubq\to 1}\lim_{\ul\to \infty}\plim_{N\to\infty}\frac{\langle\bn^{\ubl,1},\bn^{\ubl,2}\rangle}{N}
    &=
    q_B.
\end{align*}
\end{lemma}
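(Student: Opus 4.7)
The plan is to reduce to the results already prepared for the branching setting in Section~\ref{ap:iamp}, namely Lemmas~\ref{lem:SDE2} and~\ref{lem:iampenergy2}. For the first conclusion, each branch $a\in\{1,2\}$ considered in isolation follows the same IAMP dynamics as in Section~\ref{sec:algo}, with the injected vectors $\sqrt{\delta}\bg^{(q_i,a)}$ merging, in the $\delta\to 0$ limit, into the Brownian motion driving the branch's Parisi SDE. Lemma~\ref{lem:SDE2} provides the SDE coupling and Lemma~\ref{lem:iampenergy2} the IAMP energy increment; together with the shared Phase~1 endpoint $\bn^{\ul,a}=\bn^{\ul}$ (whose energy is computed in Lemma~\ref{lem:RSenergy}), the telescoping argument at the end of Section~\ref{sec:algo} carries over verbatim to give $\lim \tfrac{H_N(\bn^{\ubl,a})}{N} = \Par(\gamma_*)$ for each $a$.

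For the overlap, by state evolution
\[
\plim_{N\to\infty}\frac{\langle \bn^{\ubl,1}, \bn^{\ubl,2}\rangle}{N} = \mathbb{E}\!\left[N^{\delta,1}_{\ubl} N^{\delta,2}_{\ubl}\right],
\]
and Lemma~\ref{lem:SDE2} should furnish a joint coupling of the two discrete processes to Parisi SDE solutions $(X^{(a)}_t, N^{(a)}_t)_{a\in\{1,2\}}$ driven by Brownian motions $(B^{(1)}, B^{(2)})$ with a carefully engineered joint law: for $t\leq q_B$ one has $B^{(1)}_t = B^{(2)}_t$, while on $[q_B, 1]$ the increments of $B^{(1)}$ and $B^{(2)}$ are conditionally independent given $\mathcal{F}_{q_B}$. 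This is precisely what the construction \eqref{eq:branchIAMP} is designed to realize: for $i<B$ the two branches share both their IAMP iterates and their injected vectors $\bg^{(q_i,1)}=\bg^{(q_i,2)}$, forcing identical evolution; at $i\geq B$ the injected vectors become independent (by the choice $k_{i,1}\neq k_{i,2}$ and Proposition~\ref{prop:init}), triggering the split, after which the AMP iterates of the two branches decouple.

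Given this coupling, the martingale property of $N^{(a)}_t = \partial_x\Phi_{\gamma_*}(t, X^{(a)}_t)$ from Proposition~\ref{prop:Xt} finishes the computation. Since $X^{(1)}_{q_B} = X^{(2)}_{q_B}$ and the evolutions on $[q_B,\ubq]$ are conditionally independent given $\mathcal{F}_{q_B}$,
\[
\mathbb{E}[N^{(1)}_{\ubq} N^{(2)}_{\ubq}] = \mathbb{E}\!\left[\mathbb{E}[N^{(1)}_{\ubq}\mid\mathcal{F}_{q_B}] \cdot \mathbb{E}[N^{(2)}_{\ubq}\mid\mathcal{F}_{q_B}]\right] = \mathbb{E}[(N_{q_B})^2] = q_B,
\]
the last equality using the optimizability identity $\mathbb{E}[(\partial_x\Phi_{\gamma_*}(t, X_t))^2] = t$ for $t\geq\lbq$, which applies since $q_B\geq q_1\geq\lbq$. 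The main obstacle is to ensure that Lemma~\ref{lem:SDE2} delivers exactly this joint coupling rather than merely the correct marginals; this requires tracking the joint state-evolution covariance of the two branches through both the Onsager correction terms (which depend on $\langle \bn^{\ell,1},\bn^{\ell,2}\rangle_N$) and the cross-branch covariances of AMP increments, and verifying that after the split at $i=B$ no spurious cross-correlations persist beyond those forced by the common value $X_{q_B}$.
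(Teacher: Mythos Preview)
Your proposal is correct and matches the paper's approach: the paper states that the lemma ``follows immediately from Lemmas~\ref{lem:SDE2}, \ref{lem:iampenergy2}'', and you have filled in exactly those details---using Lemma~\ref{lem:iampenergy2} together with the Phase~1 energy computation for the first claim, and the joint SDE coupling of Lemma~\ref{lem:SDE2} (whose statement already encodes the shared/independent Brownian structure at $q_B$) together with the martingale property and optimizability for the overlap. Your ``main obstacle'' is precisely what Lemma~\ref{lem:SDE2} is set up to handle, via Lemma~\ref{lem:BMlimit2} (in particular \eqref{BM2.52} and \eqref{BM5.52}), so no additional work is needed beyond invoking it; note also that the $m$ injected increments $\sqrt{\delta}\,\bg^{(q_i,a)}$ do not ``merge into'' the driving Brownian motion but rather have vanishing effect as $\delta\to 0$, which is how Lemma~\ref{lem:iampenergy2} disposes of them.
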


{\color{black}
Next we extend this construction to general finite ultrametric spaces $X$. 
Recall that any finite ultrametric space $X$ with all pairwise distances in the set $\{\sqrt{2(1-q_i)}\}_{i\in [m]}$ can be identified with a rooted tree $\mathcal T$ whose leaves $\partial \mathcal T$ are in bijection with $X$, and so that $d_X(x_i,x_j)=\sqrt{2(1-q_k)}$ is equivalent to leaves $i,j$ having least common ancestor at depth $k$.

Given such $\mathcal T$, we may assign to each non-root vertex $u\in\mathcal T$ a distinct initialization iterate $\bg^{(u)}=\bz^{-k_u}$. We require that
\begin{enumerate}
    \item The $k_u$ are pairwise distinct.
    \item Analogously to \eqref{eq:kiu-separated},
    \[
    k_{u'}-\ell^{\delta}_{q_i}>k_u-\ell^{\delta}_{q_j}
    \] 
    if $q_j=\text{depth}(u)<\text{depth}(u')=q_i$.
    \item $\max_{u}(k_{u})+\ubl+1<K$. 
\end{enumerate}
Again, these conditions are easy to satisfy by choosing $k_u$ in increasing order of $\text{depth}(u)$ with ties broken arbitrarily.

Then for each $x\in X$, we first construct $\bm^{k,x}=\bm^{k}$ for $k\geq 0$ exactly as in the original first phase, which does not depend on $x$. 
Next, denoting by
\[
\text{root}=v_0,v_1,\dots,v_m=x\in\partial\mathcal T=X
\]
the root-to-leaf path for $x$ within $\mathcal T$,
we compute the analog of ~\eqref{eq:branchIAMP}:
\begin{equation}
\label{eq:branchIAMP-general}
\bn^{\ell,x} =
\begin{cases}
\bn^{\ell-1,x}+\sqrt{\delta}\bg^{(k_{v_i})},
\quad \quad\quad \quad\quad\quad\quad
\ell=\ell^{\delta}_{q_i}
~~\text{for some } i\in [m]
\\
\bn^{\ell-1,x}+ u_{\ell-1}^{\delta}\lt(\bx^{\ell-1,x}\rt)\lt(\bz^{\ell,x}-\bz^{\ell-1,x}\rt),\quad \text{ else}.
\end{cases}
\end{equation}
Again, $\bx^{\ell,x},\bz^{\ell,x}$ are defined using the same recursions as before.
}

\branch*

\begin{proof}
{\color{black}
It is easy to see that for each distinct $x,y\in X$, the behavior of the pair $\bn^{\ell,x},\bn^{\ell,y}$ in \eqref{eq:branchIAMP-general} is identical to $\bn^{\ell,1},\bn^{\ell,2}$ in \eqref{eq:branchIAMP} (e.g. both pairs have the same joint law with $H_N$). Applying Lemma~\ref{lem:branch} to all such pairs, we find that the iterates $\bn^{\ubl,x}$ satisfy $\frac{H_N(\bn^{\ubl,x})}{N}\geq \Par(\gamma_*)-\eps$ and $\langle\bn^{\ubl,x},\bn^{\ubl,y}\rangle_N\simeq q_j$ if $d_X(x,y)=\sqrt{2(1-q_j)}$. 
The conclusion follows by rounding $\bn^{\ubl,x}\to \bsigma_x\in\bSigma_N$ for each $x\in X$ as in the main algorithm.
}
\end{proof}

We remark that our construction differs from the one proposed in \cite{alaoui2020algorithmic} only because we construct the vectors $\bg^{(u)}$ using AMP rather than taking them to be literally independent Gaussian vectors. While the latter construction almost certainly works as well, the analysis seems to require a more general version of state evolution.

\section{Spherical Models}\label{sec:sphere}

We now consider the case of spherical spin glasses with external field. The law of the Hamiltonian $H_N$ is specified according to the same formula as before depending on $(\xi,\mathcal L)$, however the state space is the sphere $\mathbb S^{N-1}(\sqrt{N})$ instead of the hypercube. The free energy in this case is given by a similar Parisi-type formula, however it turns out to dramatically simplify under no overlap gap so we do not give the general formula. At positive temperature the spherical free energy was computed non-rigorously in \cite{crisanti1992sphericalp} and rigorously in \cite{talagrand2006free,chen2013aizenman}, but we rely on \cite{chen2017parisi} which directly treats the zero-temperature setting.

\begin{remark} Due to rotational invariance, for spherical models all that matters about $\mathcal L_h$ is the squared $L^2$ norm $\mathbb E^{h\sim \mathcal L_h}[h^2].$ In particular unlike the Ising case there is no loss of generality in assuming $h$ is constant. We continue to work with coordinates $h_i$ sampled i.i.d. from $\mathcal L_h$ and implicitly use this observation when interpreting the results of \cite{chen2017parisi}.

\end{remark}

Our treatment of spherical models is less detailed and we simply show how to obtain the energy value in Theorem~\ref{thm:sphereGS} which is the ground state in models with $\supp(\gamma_*)=[q,1)$. In the case that $\E[h^2]+\xi'(1)<\xi''(1)$, we let $\lbq_{\ss}\in [0,1]$ be the unique solution to
\[
  \lbq_{\ss}\xi''(\lbq_{\ss})=\E[h^2]+\xi'(\lbq_{\ss}).
\] 
When $\E[h^2]+\xi'(1)\geq\xi''(1)$, we simply set $\lbq_{\ss}=1$. 

Note that when $h=0$ almost surely it follows that $\lbq_{\ss}=0$, which is the setting of \cite{subag2018following}. Generate initial iterates 
\[
  (\bw^{-K}_{\ss},\dots,\bw^0_{\ss})
\]
as in Subsection~\ref{subsec:init}. For non-zero $h$ we take $c=\sqrt{\E[h^2]+\xi'(\lbq_{\ss})}$ so that 
\[
  \|\bw^0_{\ss}\|_N\simeq \sqrt{\E[h^2]+\xi'(\lbq_{\ss})}.
\]
Generate further iterates via the following AMP iteration.
\begin{align}
\label{eq:RSsphere}
  \bw_{\ss}^{k+1}
  &=
  \nabla \wt H_N(\bm_{\ss}^k)
  -\bm_{\ss}^{k-1}\xi''\lt(\langle \bm^k_{\ss},\bm^{k-1}_{\ss}\rangle\rt)
  \sqrt{\frac{\lbq_{\ss}}{\E[h^2]+\xi'(\lbq_{\ss})}}
  \\
\nonumber
  \bx^k_{\ss}
  &=
  \bw^k_{\ss}+\bh
  \\
\nonumber
  \bm^k_{\ss}
  &=
  \bx^k_{\ss}\cdot \sqrt{\frac{\lbq_{\ss}}{\E[h^2]+\xi'(\lbq_{\ss})}}.
\end{align}

The next lemma is the spherical analog of Lemmas~\ref{lem:RSconverge},~\ref{lem:RSlimit},~\ref{lem:RSenergy} - the proof is similar to the Ising case and is given in the next subsection.

\begin{restatable}{lemma}{RSsphere}\label{lem:RSsphere}

Using the AMP of \eqref{eq:RSsphere}, the asymptotic overlaps and energies satisfy
\begin{align}
\nonumber
    \lim_{k,\ell\to\infty}\plim_{N\to\infty} \frac{\langle \bw^k_{\ss},\bw^{\ell}_{\ss}\rangle}{N} 
    &= 
    \xi'(\lbq_{\ss}),
    \\
\nonumber
    \lim_{k,\ell\to\infty}\plim_{N\to\infty} \frac{\langle \bx^k_{\ss},\bx^{\ell}_{\ss}\rangle}{N} &= 
    \E[h^2]+\xi'(\lbq_{\ss}),
    \\
\nonumber
    \lim_{k,\ell\to\infty}\plim_{N\to\infty} \frac{\langle \bm^k_{\ss},\bm^{\ell}_{\ss}\rangle}{N} &= \lbq_{\ss},
    \\
\label{eq:sphereenergy1}
    \lim_{k,\ell\to\infty}\plim_{N\to\infty} \frac{H_N(\bm^k_{\ss})}{N}&= \sqrt{\lbq_{\ss}(\E[h^2]+\xi'(\lbq_{\ss}))}.
\end{align}

\end{restatable}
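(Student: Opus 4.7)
The proof would mirror the Ising first-phase analysis of Lemmas~\ref{lem:RSconverge}--\ref{lem:RSenergy}, and is in fact considerably simpler because the non-linearity $f_k(z^{-K},z^0,\dots,z^k) = \lambda(z^{-K}+z^k)$, with $\lambda := \sqrt{\lbq_{\ss}/(\E[h^2]+\xi'(\lbq_{\ss}))}$, is affine in the Gaussian iterates. I would first apply Proposition~\ref{prop:AMP} to~\eqref{eq:RSsphere} to obtain the state-evolution limits $(W^k_{\ss}, X^k_{\ss}, M^k_{\ss})$ with $X^k_{\ss}=W^k_{\ss}+h$ and $M^k_{\ss}=\lambda X^k_{\ss}$. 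An induction exactly analogous to Lemma~\ref{lem:RSconverge} shows that each $W^k_{\ss}$ is centered Gaussian and independent of $h$: this uses only that $h=Z^{-K}$ is independent of the post-initialization Gaussians $Z^0,Z^1,\dots$ and that $f_k$ is additive in $(z^{-K},z^k)$.

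Given this independence, the covariance recursion of Proposition~\ref{prop:AMP} collapses into a scalar iteration $a_{j+1,k+1}=\tilde\psi(a_{j,k})$ for $a_{j,k}:=\E[W^j_{\ss}W^k_{\ss}]$, where $\tilde\psi(a):=\xi'\bigl(\lambda^2(\E[h^2]+a)\bigr)$, together with the identity $\E[M^jM^k]=\lambda^2(\E[h^2]+a_{j,k})$. The equation $\lbq_{\ss}\xi''(\lbq_{\ss})=\E[h^2]+\xi'(\lbq_{\ss})$ defining $\lbq_{\ss}$ is precisely $\lambda^2\xi''(\lbq_{\ss})=1$, so $v^*:=\xi'(\lbq_{\ss})$ is a fixed point of $\tilde\psi$ with $\tilde\psi'(v^*)=1$. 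Following Lemmas~\ref{lem:psi} and~\ref{lem:RSlimit}, the convexity of $\tilde\psi$ (coming from $\xi'''\geq 0$), its monotonicity, and the invariance $\tilde\psi([0,v^*])\subseteq[0,v^*]$ imply that any iteration started in $[0,v^*]$ increases monotonically to $v^*$. This yields the three asymptotic overlap identities in the lemma, which transfer to the claimed $\plim_{N\to\infty}$ statements via Proposition~\ref{prop:AMP}.

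For the energy identity~\eqref{eq:sphereenergy1}, I would follow Lemma~\ref{lem:RSenergy} and decompose
\[
\frac{H_N(\bm^k_{\ss})}{N}=\langle\bh,\bm^k_{\ss}\rangle_N+\int_0^1\langle\bm^k_{\ss},\nabla\widetilde H_N(t\bm^k_{\ss})\rangle_N\,dt,
\]
analyze the integrand through the auxiliary AMP iterate $\by^{k+1}(t)=\nabla\widetilde H_N(t\bm^k_{\ss})-t\bm^{k-1}_{\ss}\xi''\bigl(t\langle\bm^k_{\ss},\bm^{k-1}_{\ss}\rangle_N\bigr)\lambda$, apply state evolution together with Gaussian integration by parts, and integrate in $t$. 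Linearity of $f_k$ (so $f_k'\equiv\lambda$ is constant) makes this step nearly mechanical: the integrand limit is the $t$-derivative of $\lambda\cdot t\,\xi'\bigl(t\lambda^2(\E[h^2]+a_{k-1,k})\bigr)$, and so the integral evaluates to $\lambda\xi'\bigl(\lambda^2(\E[h^2]+a_{k-1,k})\bigr)\to\lambda\xi'(\lbq_{\ss})$ as $k\to\infty$. The external-field term gives $\langle\bh,\bm^k_{\ss}\rangle_N\simeq\E[h\cdot\lambda(h+W^k_{\ss})]=\lambda\E[h^2]$ by the independence established above. Summing produces $\lambda(\E[h^2]+\xi'(\lbq_{\ss}))=\sqrt{\lbq_{\ss}(\E[h^2]+\xi'(\lbq_{\ss}))}$, as required.

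The principal technical obstacle is the marginal stability of the scalar fixed point ($\tilde\psi'(v^*)=1$), which rules out elementary contraction arguments; the convergence in the second paragraph instead relies on convexity of $\tilde\psi$ together with the invariance of $[0,v^*]$. This is the same subtlety resolved by Lemmas~\ref{lem:psi}--\ref{lem:RSlimit} in the Ising first phase, and the argument transfers directly.
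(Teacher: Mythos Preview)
Your proposal is correct and follows essentially the same route as the paper: it defines the affine non-linearity via the constant $\lambda=\sqrt{\lbq_{\ss}/(\E[h^2]+\xi'(\lbq_{\ss}))}$, reduces state evolution to the scalar map $\tilde\psi(a)=\xi'(\lambda^2(\E[h^2]+a))$ (the paper's $\psi_{\ss}$), exploits convexity plus the identity $\tilde\psi'(v^*)=\lambda^2\xi''(\lbq_{\ss})=1$ to handle the marginally stable fixed point, and computes the energy via the same auxiliary AMP step and $t$-integration as in Lemma~\ref{lem:RSenergy}. The paper organizes these as Lemmas~\ref{lem:spherepsi}--\ref{lem:sphereenergy}, matching your outline point for point.
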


\begin{proof}[Proof of Theorem~\ref{thm:sphereopt}]

The latter two parts of Lemma~\ref{lem:RSsphere} directly imply Theorem~\ref{thm:sphereopt} in the case that $\E[h^2]+\xi'(1)\geq\xi''(1)$ (recall $\lbq_{\ss}=1$ in this case). Indeed, it suffices to take 
\begin{equation}\label{eq:sphere-normalize}
  \bsigma_{\ss}=\frac{\bm^{\ubl}_{\ss}}{\|\bm^{\ubl}_{\ss}\|_N}\in \mathbb S^{N-1}(\sqrt{N})
\end{equation}
for a sufficiently large constant $\ubl$.
When $\E[h^2]+\xi'(1)<\xi''(1)$, we can conclude by mimicking the IAMP phase using the simple non-linearities $u(t,x)=u(t)=\xi''(t)^{-1/2}$ and $v(t,x)=0$ - see also \cite[Remark 2.2]{ams20}. Lemma~\ref{lem:iampenergy} then shows the energy gain from IAMP is 
\[
  \int_{\lbq}^1\xi''(t)u(t)\de t=\int_{\lbq}^1\xi''(t)^{1/2}.
\]
As in the Ising case, we may start IAMP from $\bm=\bm^k$ for a large constant $k$. Combining with~\eqref{eq:sphereenergy1} and defining $\bsigma_{\ss}$ via \eqref{eq:sphere-normalize} with $\bm^{\ubl}$ an IAMP iterate, we obtain
\[
  \plim_{N\to\infty}\frac{H_N(\bsigma_{\ss})}{N}\geq\lbq_{\ss}\xi''(\lbq_{\ss})^{1/2}+\int_{\lbq_{\ss}}^1 \xi''(\lbq_{\ss})^{1/2}\de q.
\]
Alternatively to IAMP, in the spherical setting it is possible to use the approach of \cite{subag2018following}. Indeed \cite[Theorem 4]{subag2018following} immediately extends to an algorithm taking in an arbitrary point $\bm$ with $\|\bm\|_N\leq 1$ and outputting a point $\bm_*\in \mathbb S^{N-1}(\sqrt{N})$ (which may depend on $H_N$) satisfying 
\[
  \frac{H_N(\bm_*)-H_N(\bm)}{N} \geq \int_{\|\bm\|_N^2}^1 \sqrt{\xi''(q)}\de q-\eps
\]
with probability $1-o_{N\to\infty}(1)$ for any desired $\eps>0$. Either approach completes the proof of Theorem~\ref{thm:sphereopt}.
\end{proof}





\subsection{Proof of Lemma~\ref{lem:RSsphere}}\label{ap:sphere}

For $t\in [0,\xi'(\lbq_{\ss})]$, take $h\sim \mathcal L_h$ and $(Z,Z',Z'')\sim\normal(0,I_3)$ and define the function 
\begin{align*}
  \phi_{\ss}(t)
  &=\frac{\lbq_{\ss}}{\E[h^2]+\xi'(\lbq_{\ss})}\cdot\mathbb E^{h,Z,Z',Z''}\lt[\lt(h+Z\sqrt{t}+Z'\sqrt{\xi'(\lbq_{\ss})-t}\rt)\lt(h+Z\sqrt{t}+Z''\sqrt{\xi'(\lbq_{\ss})-t}\rt)\rt]\\
  &=\frac{\lbq_{\ss}(\E[h^2]+t)}{\E[h^2]+\xi'(\lbq_{\ss})}.
\end{align*}
so that $\phi_{\ss}(\xi'(\lbq_{\ss}))=\lbq_{\ss}$. Define $\psi_{\ss}(t)=\xi'(\phi_{\ss}(t))$.

\begin{lemma}\label{lem:spherepsi}

$\psi_{\ss}$ is strictly increasing and convex on $[0,\xi'(\lbq_{\ss})]$ and 
\begin{align}
\label{eq:spherepsi1}
    \psi_{\ss}(\xi'(\lbq_{\ss}))&=\xi'(\lbq_{\ss}),
    \\
\label{eq:spherepsi2}
    \psi'_{\ss}(\xi'(\lbq_{\ss}))&=1,
    \\
\label{eq:spherepsi3}
    \psi_{\ss}(t)
    &>t,
    \quad\quad \forall t<\xi'(\lbq_{\ss}).
 \end{align} 

\end{lemma}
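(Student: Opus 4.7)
The plan is to leverage the explicit affine form of $\phi_{\ss}$ together with the structure of the power series $\xi'$, reducing the lemma to elementary chain-rule computations plus one real-analyticity argument at the end. First I would record that $\phi_{\ss}$ is affine in $t$ with slope $c \equiv \lbq_{\ss}/(\mathbb E[h^2]+\xi'(\lbq_{\ss}))$, which is strictly positive in the nontrivial regime $\lbq_{\ss} \in (0,1)$ (the case $\lbq_{\ss}=0$, corresponding to $h\equiv 0$, makes the interval $[0,\xi'(\lbq_{\ss})]$ degenerate and the statement vacuous). Since $\xi'(z)=\sum_{p\geq 2}p c_p^2 z^{p-1}$ is a non-trivial power series with non-negative coefficients, it is strictly increasing and convex on $[0,1]$, and so the composition $\psi_{\ss}=\xi'\circ\phi_{\ss}$ inherits strict monotonicity and convexity; this handles the first clause of the lemma.

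Establishing \eqref{eq:spherepsi1} and \eqref{eq:spherepsi2} will then be a matter of plugging in. The identity $\phi_{\ss}(\xi'(\lbq_{\ss}))=\lbq_{\ss}$ noted immediately before the lemma gives \eqref{eq:spherepsi1}. Differentiating with the chain rule yields $\psi_{\ss}'(t)=c\,\xi''(\phi_{\ss}(t))$, and evaluating at $t=\xi'(\lbq_{\ss})$ produces $\lbq_{\ss}\xi''(\lbq_{\ss})/(\mathbb E[h^2]+\xi'(\lbq_{\ss}))$, which equals $1$ by the defining equation $\lbq_{\ss}\xi''(\lbq_{\ss})=\mathbb E[h^2]+\xi'(\lbq_{\ss})$ of $\lbq_{\ss}$.

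The step I expect to be the main obstacle is the strict inequality \eqref{eq:spherepsi3}. Convexity of $\psi_{\ss}$ combined with $\psi_{\ss}(\xi'(\lbq_{\ss}))=\xi'(\lbq_{\ss})$ and $\psi_{\ss}'(\xi'(\lbq_{\ss}))=1$ only delivers the tangent-line bound $\psi_{\ss}(t)\geq t$. To upgrade to strict inequality for $t<\xi'(\lbq_{\ss})$ I would argue by contradiction: equality at some $t_\ast<\xi'(\lbq_{\ss})$ would force $\psi_{\ss}$ to coincide with the tangent line of slope $1$ on the entire interval $[t_\ast,\xi'(\lbq_{\ss})]$, and hence $\psi_{\ss}'\equiv 1$ there, so that $\xi''(\phi_{\ss}(t))$ is constant on a non-degenerate sub-interval of $(0,\lbq_{\ss}]$. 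Real-analyticity of $\xi''$ on a neighborhood of $[0,1]$ (guaranteed by the exponential decay of $(c_p)$) then forces $\xi''$ to be constant on $[0,1]$, i.e., $\xi(z)=c_2^2 z^2$; this gives $\xi'(1)=\xi''(1)$, which contradicts the standing hypothesis $\mathbb E[h^2]+\xi'(1)<\xi''(1)$ since $\mathbb E[h^2]\geq 0$. Everything outside this analyticity step is explicit computation.
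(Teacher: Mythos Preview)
Your proof is correct and follows the same route as the paper for monotonicity, convexity, and equations \eqref{eq:spherepsi1}--\eqref{eq:spherepsi2}. For \eqref{eq:spherepsi3} the paper simply asserts that it follows from \eqref{eq:spherepsi1}, \eqref{eq:spherepsi2}, and the convexity of $\psi_{\ss}$, which as you correctly observe only yields the non-strict bound $\psi_{\ss}(t)\geq t$. Your analyticity argument genuinely fills this gap. An equivalent and slightly more direct fix is to note upfront that the standing hypothesis $\mathbb E[h^2]+\xi'(1)<\xi''(1)$ rules out $\xi(z)=c_2^2 z^2$ (since then $\xi'(1)=\xi''(1)$), so some $c_p\neq 0$ with $p\geq 3$, hence $\xi'''>0$ on $(0,1]$ and $\xi'$ is strictly convex there; composing with the affine increasing map $\phi_{\ss}$ then makes $\psi_{\ss}$ strictly convex, from which \eqref{eq:spherepsi3} follows immediately from \eqref{eq:spherepsi1}--\eqref{eq:spherepsi2}.
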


\begin{proof}

Since $\xi'$ is strictly increasing and convex and $\phi_{\ss}$ is affine and increasing, it follows that $\psi_{\ss}$ is strictly increasing and convex. \eqref{eq:spherepsi1} is equivalent to the equation $\lbq_{\ss}\xi''(\lbq_{\ss})=\E[h^2]+\xi'(\lbq_{\ss})$ defining $\lbq_{\ss}.$ To show \eqref{eq:spherepsi2} we use the chain rule to write
\[
  \psi'_{\ss}(\xi'(\lbq_{\ss})) = \xi''(\phi_{\ss}(\xi'(\lbq_{\ss})))\cdot \phi'_{\ss}(\xi'(\lbq_{\ss}))=\xi''(\lbq_{\ss})\cdot (\xi''(\lbq_{\ss}))^{-1}=1.
\] 
Equations~\eqref{eq:spherepsi1} and \eqref{eq:spherepsi2} and the convexity of $\psi_{\ss}$ just shown imply \eqref{eq:spherepsi3}
\end{proof}

Let $h,W^{-1}_{\ss},(W^j_{\ss},X^j_{\ss},M^j_{\ss})_{j=0}^{k}$ be the state evolution limit of the coordinates of 
\[
    \big(\bh,\bw^{-1}_{\ss},\bw^0_{\ss},\bx^{0}_{\ss},\bm^{0}_{\ss},\dots,\bw^k_{\ss},\bx^k_{\ss},\bm^k_{\ss}\big)
\] 
as $N\to\infty$. Define the sequence $(b_0,b_1,\dots)$ recursively by $b_0=0$ and $b_{k+1}=\psi_{\ss}(b_k)$. 

\begin{lemma}
\label{lem:sphereconverge}
For all non-negative integers $0\leq j<k$ the following equalities hold:
\begin{align*}
    \mathbb E[(W_{\ss}^j)^2]
    &=
    \xi'(\lbq_{\ss})
    \\
    \mathbb E[W_{\ss}^jW_{\ss}^k]
    &=
    b_j
    \\
    \mathbb E[(M_{\ss}^j)^2]
    &=
    \lbq_{\ss}
    \\
    \mathbb E[M_{\ss}^jM_{\ss}^k]
    &=
    \phi_{\ss}(b_j).
\end{align*}

\end{lemma}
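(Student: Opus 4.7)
The plan is to prove the four identities simultaneously by induction on $j$, closely mirroring Lemma~\ref{lem:RSconverge}. The essential simplification in the spherical setting is that the ``non-linearity'' sending $\bw^k_{\ss}$ to $\bm^k_{\ss}$ is actually linear: $\bm^k_{\ss} = s\,(\bw^k_{\ss}+\bh)$ for the deterministic scalar $s = \sqrt{\lbq_{\ss}/(\E[h^2]+\xi'(\lbq_{\ss}))}$. Consequently cross-moments of $(M_{\ss}^j)$ are affine functions of those of $(W_{\ss}^j)$, namely $\E[M^j_{\ss}M^k_{\ss}] = s^2\big(\E[W^j_{\ss}W^k_{\ss}] + \E[h^2]\big)$, so no Gaussian integration by parts is needed and the function $\phi_{\ss}$ emerges directly.

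\textbf{Base case $j=0$.} By the preprocessing in Subsection~\ref{subsec:init} and Proposition~\ref{prop:init}, $W^0_{\ss}$ is a centered Gaussian of variance $\xi'(\lbq_{\ss})$ that is independent of $h$ and, as an initial AMP vector, independent of the subsequent state-evolution Gaussian process $(W^k_{\ss})_{k\ge 1}$. Thus $\E[(W^0_{\ss})^2]=\xi'(\lbq_{\ss})$ and $\E[W^0_{\ss}W^k_{\ss}]=0=b_0$ for $k\ge 1$. The corresponding $M_{\ss}$ identities then follow from the affine relation above together with $\phi_{\ss}(\xi'(\lbq_{\ss}))=\lbq_{\ss}$ and $\phi_{\ss}(0)=\lbq_{\ss}\E[h^2]/(\E[h^2]+\xi'(\lbq_{\ss}))$, the latter being $\phi_{\ss}(b_0)$.

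\textbf{Inductive step.} Assume the identities for index $j$. State evolution (Proposition~\ref{prop:AMP}) applied to the recursion~\eqref{eq:RSsphere} gives
\[
\E[W^{j+1}_{\ss}W^{k+1}_{\ss}] \;=\; \xi'\!\left(\E[M^j_{\ss}M^k_{\ss}]\right) \;=\; \xi'(\phi_{\ss}(b_j)) \;=\; \psi_{\ss}(b_j) \;=\; b_{j+1},
\]
where the second equality uses the inductive hypothesis. Specializing $k=j$ and invoking Lemma~\ref{lem:spherepsi} (the fixed point $\psi_{\ss}(\xi'(\lbq_{\ss}))=\xi'(\lbq_{\ss})$) gives $\E[(W^{j+1}_{\ss})^2]=\xi'(\lbq_{\ss})$. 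The corresponding $M_{\ss}$-identities at level $j+1$ then follow, exactly as in the base case, from the affine relation $\E[M^{j+1}_{\ss}M^{k+1}_{\ss}]=s^2(b_{j+1}+\E[h^2])=\phi_{\ss}(b_{j+1})$ and from $\phi_{\ss}(\xi'(\lbq_{\ss}))=\lbq_{\ss}$.

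\textbf{Main obstacle.} I do not anticipate a genuine difficulty here: the argument is a bookkeeping exercise that combines the state-evolution input $\E[W^{\ell+1}_{\ss}W^{j+1}_{\ss}]=\xi'(\E[M^\ell_{\ss}M^j_{\ss}])$ with the trivial covariance update under a deterministic affine map. The only point requiring mild care is at the base case, where one needs $W^0_{\ss}$ to be independent of both $h$ and the later Gaussian process $(W^k_{\ss})_{k\ge 1}$; this is immediate from Proposition~\ref{prop:init} since $\bw^0_{\ss}$ is the initial AMP vector following the preprocessing phase, so in particular $\E[W^0_{\ss}W^k_{\ss}]=0$ for $k\ge 1$ as required to start the induction.
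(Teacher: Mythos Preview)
Your proposal is correct and takes essentially the same approach as the paper, which simply says ``Follows from state evolution exactly as in Lemma~\ref{lem:RSconverge}.'' Your observation that the spherical non-linearity is affine, so that $\E[M^j_{\ss}M^k_{\ss}]=s^2(\E[W^j_{\ss}W^k_{\ss}]+\E[h^2])=\phi_{\ss}(\E[W^j_{\ss}W^k_{\ss}])$, is exactly the right simplification and makes the induction transparent.

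One small correction to the base case: Proposition~\ref{prop:init} only describes the joint law of $(Z^{-K},\dots,Z^0)$ and does \emph{not} directly assert that $W^0_{\ss}=Z^0$ is independent of the later iterates $(W^k_{\ss})_{k\ge 1}$. To get $\E[W^0_{\ss}W^k_{\ss}]=0$ you must go one step back through state evolution, exactly as in the proof of Lemma~\ref{lem:RSconverge}: write $\E[W^0_{\ss}W^k_{\ss}]=\xi'(\E[f_{-1}\cdot M^{k-1}_{\ss}])=\xi'(cs\,\E[W^{-1}_{\ss}(W^{k-1}_{\ss}+h)])$ and then use that $W^{-1}_{\ss}$ is independent of $h$ and of $W^{k-1}_{\ss}$ (the latter itself established by a short induction along the preprocessing chain). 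This is the same mild subtlety handled in the Ising case and does not affect the overall argument.
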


\begin{proof}

Follows from state evolution and induction exactly as in Lemma~\ref{lem:RSconverge}.
\end{proof}

\begin{lemma}\label{lem:spherelimit}

\begin{align*}
    \lim_{k\to\infty}b_k
    &=
    \xi'(\lbq_{\ss}),
    \\
    \lim_{k\to\infty}\phi_{\ss}(b_k)
    &=
    \lbq_{\ss}.
\end{align*}
\end{lemma}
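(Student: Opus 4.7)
The plan is to mirror the proof of Lemma~\ref{lem:RSlimit} using the three conclusions of Lemma~\ref{lem:spherepsi}. Concretely, I would argue that $\psi_{\ss}$ maps $[0,\xi'(\lbq_{\ss})]$ into itself: it is non-negative and strictly increasing by Lemma~\ref{lem:spherepsi}, and its value at the right endpoint equals $\xi'(\lbq_{\ss})$ by \eqref{eq:spherepsi1}. Hence iterating from $b_0 = 0$ keeps the sequence inside this interval.

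Next, the inequality $\psi_{\ss}(t) > t$ from \eqref{eq:spherepsi3} applied at each $b_k < \xi'(\lbq_{\ss})$ shows that $(b_k)$ is strictly increasing so long as it has not reached the fixed point, and together with the upper bound $b_k \leq \xi'(\lbq_{\ss})$ this gives convergence to some $b_\infty \in [0,\xi'(\lbq_{\ss})]$. Continuity of $\psi_{\ss}$ forces $\psi_{\ss}(b_\infty) = b_\infty$, and by \eqref{eq:spherepsi3} the only fixed point in the interval is $\xi'(\lbq_{\ss})$, yielding $b_\infty = \xi'(\lbq_{\ss})$.

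For the second claim, I use continuity of $\phi_{\ss}$ (it is affine in $t$) to pass to the limit: $\phi_{\ss}(b_k) \to \phi_{\ss}(\xi'(\lbq_{\ss})) = \lbq_{\ss}$, the last equality being the defining property of $\phi_{\ss}$ at the endpoint. There is no real obstacle here; the argument is identical in structure to Lemma~\ref{lem:RSlimit}, with the convexity-plus-fixed-point framework of Lemma~\ref{lem:spherepsi} doing all the work. The only place one might want to double-check is that $\psi_{\ss}(0) \geq 0$ so that the first step of the iteration stays in $[0,\xi'(\lbq_{\ss})]$, but this is immediate since $\xi'$ is non-negative on $[0,1]$ and $\phi_{\ss}(0) = \lbq_{\ss}\E[h^2]/(\E[h^2]+\xi'(\lbq_{\ss})) \geq 0$.
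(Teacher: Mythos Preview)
Your proposal is correct and follows essentially the same approach as the paper's own proof, which simply invokes the argument of Lemma~\ref{lem:RSlimit} together with continuity of $\phi_{\ss}$. Your version is just a more explicit write-up of that same monotone-iteration-to-fixed-point reasoning, including the harmless extra check that $\psi_{\ss}(0)\geq 0$.
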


\begin{proof}

As in the proof of Lemma~\ref{lem:RSlimit}, the sequence $b_1,b_2,\dots,$ must converge up to a limit, and this limit must be a fixed point for $\psi_{\ss}$, implying the first claim. The second claim follows by continuity of $\phi_{\ss}$.
\end{proof}

\begin{lemma}
\label{lem:sphereenergy}
\[
  \lim_{k\to\infty} \plim_{N\to\infty}\frac{H_N(\bm_{\ss}^k)}{N}
  = \sqrt{\lbq_{\ss}(\E[h^2]+\xi'(\lbq_{\ss}))}.
\]
\end{lemma}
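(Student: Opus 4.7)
The plan is to adapt the proof of Lemma~\ref{lem:RSenergy} to the spherical setting, with the linear non-linearity $f_{\ss}(x)=cx$ (for $c=\sqrt{\lbq_{\ss}/(\mathbb E[h^2]+\xi'(\lbq_{\ss}))}$) playing the role of $f(x)=\partial_x\Phi_{\gamma_*}(\lbq,x)$. First I would decompose $H_N(\bm^k_{\ss})/N=\langle\bh,\bm^k_{\ss}\rangle_N+\widetilde H_N(\bm^k_{\ss})/N$. For the external-field term, state evolution combined with the initialization scheme of Subsection~\ref{subsec:init} gives independence of $W^k_{\ss}$ from $h$ (exactly as in Lemma~\ref{lem:RSconverge}, since $h$ and $Z^0$ start independent and the argument carries through inductively), so
\[
\langle\bh,\bm^k_{\ss}\rangle_N\simeq c\,\mathbb E[h(h+W^k_{\ss})]=c\,\mathbb E[h^2].
\]

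Next, for $\widetilde H_N$, I would use the monomial-by-monomial identity $\widetilde H_N(\bm^k_{\ss})/N=\int_0^1\langle\bm^k_{\ss},\nabla\widetilde H_N(t\bm^k_{\ss})\rangle_N\,\de t$ and introduce a modified AMP which replaces $f_{\ss}$ at step $k$ by $t f_{\ss}$. This produces the auxiliary iterate $\by^{k+1}_{\ss}(t)=\nabla\widetilde H_N(t\bm^k_{\ss})-tc\,\bm^{k-1}_{\ss}\,\xi''(t\langle\bm^k_{\ss},\bm^{k-1}_{\ss}\rangle_N)$ (the Onsager factor $\langle f'_{\ss}(\bx^k_{\ss})\rangle_N$ being simply $c$), whose rearrangement yields
\[
\langle\bm^k_{\ss},\nabla\widetilde H_N(t\bm^k_{\ss})\rangle_N=\langle\bm^k_{\ss},\by^{k+1}_{\ss}(t)\rangle_N+tc\,\langle\bm^k_{\ss},\bm^{k-1}_{\ss}\rangle_N\,\xi''\!\left(t\langle\bm^k_{\ss},\bm^{k-1}_{\ss}\rangle_N\right).
\]
State evolution gives $\mathbb E[W^k_{\ss}Y^{k+1}_{\ss}(t)]=\xi'(t\phi_{\ss}(b_{k-1}))$, while the centered Gaussian $Y^{k+1}_{\ss}(t)$ has zero covariance with $h$ by the same inductive argument (the modification only rescales by $t$ and introduces no new dependence on $h$). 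Hence $\mathbb E[M^k_{\ss}Y^{k+1}_{\ss}(t)]=c\xi'(t\phi_{\ss}(b_{k-1}))$, and using $\langle\bm^k_{\ss},\bm^{k-1}_{\ss}\rangle_N\simeq\phi_{\ss}(b_{k-1})$ the integrand converges in probability to
\[
c\xi'(t\phi_{\ss}(b_{k-1}))+tc\phi_{\ss}(b_{k-1})\xi''(t\phi_{\ss}(b_{k-1}))=c\,\frac{\de}{\de t}\big[t\,\xi'(t\phi_{\ss}(b_{k-1}))\big].
\]

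Finally, I would interchange limit and integral, which is justified by Proposition~\ref{prop:lip} with $k=1$ and the a priori bound $\|\bm^k_{\ss}\|_N\simeq\sqrt{\lbq_{\ss}}\leq 1$, and apply the fundamental theorem of calculus to obtain $c\xi'(\phi_{\ss}(b_{k-1}))$. Taking $k\to\infty$ and using Lemma~\ref{lem:spherelimit} ($\phi_{\ss}(b_{k-1})\to\lbq_{\ss}$), the $\widetilde H_N$ contribution tends to $c\xi'(\lbq_{\ss})$. Combining with the external-field term,
\[
\lim_{k\to\infty}\plim_{N\to\infty}\frac{H_N(\bm^k_{\ss})}{N}=c\big(\mathbb E[h^2]+\xi'(\lbq_{\ss})\big)=\sqrt{\lbq_{\ss}\big(\mathbb E[h^2]+\xi'(\lbq_{\ss})\big)},
\]
where the last equality is the definition of $c$. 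The main obstacle is the careful state-evolution bookkeeping needed to track the joint distribution of $h$ with the modified-AMP iterate $Y^{k+1}_{\ss}(t)$; this step, however, reduces to the same inductive covariance computation used in Lemma~\ref{lem:RSconverge} and does not require any new idea beyond the Ising setting.
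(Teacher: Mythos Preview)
The proposal is correct and follows essentially the same approach as the paper's proof: both decompose $H_N$ into the external-field term and an integral of $\langle\bm^k_{\ss},\nabla\widetilde H_N(t\bm^k_{\ss})\rangle_N$, introduce the same auxiliary AMP step with non-linearity $t f_{\ss}$, compute the state-evolution covariance $\E[W^k_{\ss}Y^{k+1}_{\ss}(t)]=\xi'(t\phi_{\ss}(b_{k-1}))$, integrate the derivative $\frac{\de}{\de t}[t\xi'(t\phi_{\ss}(b_{k-1}))]$, and combine via Lemma~\ref{lem:spherelimit}. The only cosmetic difference is that you invoke the independence of $h$ from the centered Gaussian $Y^{k+1}_{\ss}(t)$ directly, whereas the paper phrases the same step as Gaussian integration by parts with the linear function $g(x)=(x+h)c$; since $g'\equiv c$, the two formulations are identical.
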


\begin{proof}

We use again the identity
\[
  \frac{H_N(\bm_{\ss}^k)}{N}=\big\langle \bh,\bm_{\ss}^k\rangle_N+\int_0^1 \langle \bm_{\ss}^k,\nabla \wt H_N(t\bm_{\ss}^k)\big\rangle_N \de t
\]
and interchange the limit in probability with the integral. To compute the main term $\plim_{N\to\infty}\langle \bm_{\ss}^k,\nabla \wt H_N((t\bm_{\ss}^k)\rangle$ we introduce an auxiliary AMP step 
\[
    \by_{\ss}^{k+1}=\nabla \wt H_N(t\bm_{\ss}^k)-t\bm_{\ss}^{k-1} \xi''(t\langle \bm_{\ss}^k,\bm_{\ss}^{k-1}\rangle )\sqrt{\frac{\lbq_{\ss}}{\E[h^2]+\xi'(\lbq_{\ss})}}\,. 
\] 
Rearranging yields
\begin{align*}
  \langle \bm_{\ss}^k,\nabla \wt H_N(t\bm_{\ss}^k)\rangle_N 
  &= \langle \bm_{\ss}^k,\by_{\ss}^{k+1}\rangle_N 
  +t\langle \bm_{\ss}^k,\bm_{\ss}^{k-1} \rangle_N 
  \xi''(t\langle \bm_{\ss}^k,\bm_{\ss}^{k-1}\rangle_N )\sqrt{\frac{\lbq_{\ss}}{\E[h^2]+\xi'(\lbq_{\ss})}}  \\
  &\simeq   \langle \bm_{\ss}^k,\by_{\ss}^{k+1}\rangle_N +tb_{k-1} \xi''(t\phi(b_{k-1}) )\sqrt{\frac{\lbq_{\ss}}{\E[h^2]+\xi'(\lbq_{\ss})}}.
\end{align*}

For the first term, Gaussian integration by parts with 
\[
  g(x)=(x+h)\cdot\sqrt{\frac{\lbq_{\ss}}{\E[h^2]+\xi'(\lbq_{\ss})}}
\]
yields
\[
  \mathbb E[g(X_{\ss}^k)Y^{k+1}]
  =\mathbb E[g'(X_{\ss}^k)]\cdot\mathbb E[X_{\ss}^kY^{k+1}_{\ss}]
  = \xi'(t\phi_{\ss}(b_{k-1}))
  \sqrt{\frac{\lbq_{\ss}}{\E[h^2]+\xi'(\lbq_{\ss})}}.
\]
Integrating with respect to $t$, we find
\begin{align*} 
  \int_0^1 \langle \bm_{\ss}^k,\nabla \wt H_N(t\bm_{\ss}^k)\rangle_N \de t&\simeq \mathbb E\lt[g'\lt(Z\sqrt{\xi'(\lbq_{\ss})}\rt)\rt]\cdot \int_0^1 \xi'(t\phi_{\ss}(b_{k-1}))+t \phi_{\ss}(b_{k-1})\xi''(t\phi_{\ss}(b_{k-1}))\\
  &=\lt[t\xi'(t\phi_{\ss}(b_{k-1}))\rt]|^{t=1}_{t=0}
  \cdot \sqrt{\frac{\lbq_{\ss}}{\E[h^2]+\xi'(\lbq_{\ss})}}\\
  &=\psi_{\ss}(b_{k-1})\sqrt{\frac{\lbq_{\ss}}{\E[h^2]+\xi'(\lbq_{\ss})}}.
\end{align*}
Finally the first term gives energy contribution
\begin{align*}
  h\langle \bm_{\ss}^k\rangle_N 
  &\simeq \mathbb E\lt[h\lt(h+Z\sqrt{\xi'(b_{k-1})}\rt)\rt]\sqrt{\frac{\lbq_{\ss}}{\E[h^2]+\xi'(\lbq_{\ss})}}\\
  &=\E[h^2]\sqrt{\frac{\lbq_{\ss}}{\E[h^2]+\xi'(\lbq_{\ss})}}.
\end{align*}
Since $\lim_{k\to\infty} b_{k-1}=\xi'(\lbq_{\ss})$ and $\psi_{\ss}(\xi'(\lbq_{\ss}))=\xi'(\lbq_{\ss})$ we conclude
\begin{align*}
  \lim_{k\to\infty} \plim_{N\to\infty}\frac{H_N(\bm_{\ss}^k)}{N}&= \sqrt{\lbq_{\ss}(\E[h^2]+\xi'(\lbq_{\ss}))}.
\end{align*}
\end{proof}

\begin{proof}[Proof of Lemma~\ref{lem:RSsphere}]
The result follows from the preceding lemmas.
\end{proof}

\subsection{Proof of Theorem~\ref{thm:sphereGS}}

It follows from our algorithm that $GS_{\ss}(\xi,\mathcal L_h)\geq \lbq_{\ss}\xi''(\lbq_{\ss})^{1/2}+\int_{\lbq_{\ss}}^1 \xi''(\lbq_{\ss})^{1/2}\de q$. We now characterize the models in which equality holds, which coincide with those exhibiting no overlap gap. Moreover we give an alternate proof of the lower bound for $GS(\xi,\mathcal L_h)_{\ss}$ which shows that equality holds exactly in no overlap gap models. We thank Wei-Kuo Chen for communicating the latter proof.

\sphereGS*




\begin{proof}

We use the results and notation of \cite{chen2017parisi}. If $\xi''(q)^{-1/2}$ is concave on $[\lbq_{\ss},1]$ then the proof of Proposition 2 in \cite{chen2017parisi} applies verbatim to show that the support of $\alpha$ is $[\lbq_{\ss},1]$. In fact it explicitly shows $\alpha(s)=\frac{\xi'''(s)}{2\xi''(s)^{3/2}}$ for $s\in [\lbq_{\ss},1]$).

In the other direction, we show that if no overlap gap holds and $\E[h^2]+\xi'(1)<\xi''(1)$, then $\xi''(q)^{-1/2}$ is concave on $[\lbq_{\ss},1]$. we use the statement and notation of \cite[Theorem 2]{chen2017parisi}. Assume $\alpha$ is supported on the interval $[\lbq_{\ss},1]$. The last condition in \cite[Theorem 2]{chen2017parisi} states that $g(u)=\int_u^1 \bar{g}(s)\de s=0$ for all $u\in [\lbq_{\ss},1]$, and therefore $\bar{g}(s)=0$ for $s\in [\lbq_{\ss},1]$, where

\[\bar{g}(s)\equiv\xi^{\prime}(s)+h^{2}-\int_{0}^{s} \frac{d q}{\lt(L-\int_{0}^{q} \alpha(r) d r\rt)^{2}}.\]

Setting $s=\lbq_{\ss}$ yields 
$\E[h^2]+\xi'(\lbq_{\ss})=\lbq_{\ss} L^{-2}$, i.e. 
$L=\sqrt{\frac{\lbq_{\ss}}{\E[h^2]+\xi'(\lbq_{\ss})}}$. Differentiating, all $s\geq \lbq_{\ss}$ satisfy

\begin{equation}\label{eq:0}\xi''(s)=\frac{1}{(L-\int_{\lbq_{\ss}}^{s}\alpha(r)dr)^2}.\end{equation}

Taking $s=\lbq_{\ss}$ in \eqref{eq:0} shows $L=\xi''(\lbq_{\ss})^{-1/2}$, hence $\E[h^2]+\xi'(\lbq_{\ss})=\lbq_{\ss}\xi''(\lbq_{\ss})$. Rearranging~\eqref{eq:0} yields 
\[
    L-\int_{\lbq_{\ss}}^{s}\alpha(r)dr=\xi''(s)^{-1/2},\quad\quad s\geq \lbq_{\ss}.
\]
As $\alpha$ must be non-decreasing based on \cite[Equation (9)]{chen2017parisi} it follows that $\xi''(s)^{-1/2}$ is concave on $s\in [\lbq_{\ss},1]$. This completes the proof of the first equivalence. We turn to the value of $GS_{\ss}(\xi,\mathcal L_h)$, first computing
\begin{align*}
    \E[h^2]+\xi'(1)
    &=
    \lbq \xi''(\lbq_{\ss})+\int_{\lbq_{\ss}}^1 \xi''(\lbq_{\ss})\de q
    \\
    &= 
    \int_{0}^{\lbq_{\ss}} \xi''(\lbq_{\ss})\de q+\int_{\lbq_{\ss}}^1 \xi''(\lbq_{\ss})\de q .
\end{align*}
Letting $L>\int_0^1\alpha(s)\de s$ and let $a(q)=\int_0^q \alpha(s)\de s$, we find
\begin{align*}
    2\mathcal{Q}(L,\alpha)&=(\E[h^2]+\xi'(1))L-\int_0^1\xi''(q)a(q)\de q+\int_0^1\frac{\de q}{L-a(q)}\\
    &=\int_0^{\lbq_{\ss}}\bigl(\xi''(\lbq_{\ss})L-\xi''(q)a(q)\bigr)\de q+\int_0^{\lbq_{\ss}}\frac{\de q}{L-a(q)}\\
    &+\int_{\lbq_{\ss}}^1\Bigl(\xi''(q)\bigl(L-a(q)\bigr)+\frac{1}{L-a(q)}\Bigr)\de q.
\end{align*}
Since $\xi''$ is increasing, AM-GM shows the second-to-last line is at most 
\begin{align}
\label{eq:1}
    \int_0^{\lbq_{\ss}}\xi''(\lbq)\bigl(L-a(q)\bigr)\de q+\int_0^{\lbq_{\ss}}\frac{\de q}{L-a(q)}\geq 2 \int_0^{\lbq_{\ss}}\sqrt{\xi''(\lbq_{\ss})}\de q
    =
    2\lbq\sqrt{\xi''(\lbq_{\ss})},
\end{align}
and similarly 
\begin{align}
\label{eq:2}
    \int_{\lbq_{\ss}}^1\Bigl(\xi''(q)\bigl(L-a(q)\bigr)+\frac{1}{L-a(q)}\Bigr)\de q
    \geq 
    2\int_{\lbq_{\ss}}^1\sqrt{\xi''(q)}\de q.
\end{align}
Combining, we conclude the lower bound on $GS_{\ss}(\xi,\mathcal L_h)$. Moreover for equality to hold in \eqref{eq:1} and \eqref{eq:2} we must have 
\[
    \xi''(q)^{-1/2}
    =
    \begin{cases} 
    L-a(q),\,\,\forall q\in [0,\lbq_{\ss}],
    \\
    L-a(q),\,\,\forall q\in [\lbq_{\ss},1).
    \end{cases}
\]
The first equality forces $\alpha(s)=0$ on $[0,\lbq_{\ss})$ and $L=\xi''(q_0)^{-1/2}$, while the second equality implies $\alpha(q)=-\frac{\de}{\de s}\xi''(q)^{-1/2}$ for all $q\in [\lbq_{\ss},1]$. Taken together this means that equality in the $GS_{\ss}$ lower bound implies no overlap gap, completing the proof.
\end{proof}

\section{Impossibility of Approximate Maximization Under an Overlap Gap}\label{sec:OGP}

Here we explain the modifications of \cite{gamarnik2019overlap,GJW20} needed to establish Proposition~\ref{prop:OGP}. Throughout this section we assume that $\xi(t)=\sum_{p\in \{2,4,\dots,2P\}}c_p^2 t^p $ is an even polynomial and that the external field $(h,h,\dots,h)$ is constant and deterministic. We take $\lbq=\inf(\supp(\gamma_*^{\cuU}))$ and let $H_{N,0},H_{N,1}$ be i.i.d. copies of $H_N$ and for $t\in [0,1]$ let $H_{N,t}\equiv\sqrt{t}H_{N,1}+\sqrt{1-t}H_{N,0}$. Moreover we define $\mathcal H_N\equiv\{H_{N,t}:t\in [0,1]\}$.

\begin{definition} 
The model $(\xi,h)$ satisfies the path overlap gap property with parameters $\eps>0$ and $0<\nu_1<\nu_2<1$ if the following holds with probability at least $1-\frac{e^{-KN}}{K}$ for some $K=K(\xi,h)>0$. For every pair $H_s,H_t\in\mathcal H_N$ and every $\bsigma_s,\bsigma_t$ satisfying
\[
    \frac{H_r(\bsigma_r)}{N}
    \geq 
    GS_{\xi,h}-\eps,
    \quad\quad r\in\{s,t\}
\]
it also holds that
\[
    |\langle \bsigma_s,\bsigma_t\rangle_N|
    \in 
    [0,\nu_1]\cup[\nu_2,1].
\]
\end{definition}

\begin{definition} 
The pair $(H,H')$ of Hamiltonians are $\nu$-separated above $\mu$ if for any $\bx,\by\in \bSigma_N$ with $H(\bx)\geq\mu,H'(\by)\geq \mu$, it holds that $|\langle \bx,\by\rangle_N|\leq \nu$.

\end{definition}

\begin{lemma}
\label{lem:disorderchaos}
For any $t\in (0,1)$, there exists $q_t=q_t(t,\xi,h)\in [0,\lbq]$ such that the following holds. For any $\eps>0$ there is {\color{black}$\delta(\eps)>0$ and} $K(t,\xi,h,\eps)$ such that with probability at least $1-\frac{e^{-KN}}{K}$, every pair $\bsigma_0,\bsigma_t\in\bSigma_N$ satisfying
\[
    \frac{H_{N,s}(\bsigma_s)}{N}
    \geq 
    GS_{\xi,h}-{\color{black}\delta},\quad s\in\{0,t\}
\]
also satisfies $|\langle \bsigma_0,\bsigma_t\rangle_N|\in [q_t-\eps,q_t+\eps].$
\end{lemma}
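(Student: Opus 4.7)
The strategy is the classical ``disorder chaos for the ground state'' argument, combined with Gaussian concentration. Fix $t\in(0,1)$ and view $(H_{N,0},H_{N,t})$ as a jointly Gaussian pair; their cross-covariance is
$$\E[H_{N,0}(\bsigma)H_{N,t}(\btau)]=\sqrt{1-t}\,N\xi(\langle\bsigma,\btau\rangle_N).$$
Introduce the constrained sum-ground-state at prescribed overlap
$$F_N(q)=\frac{1}{2N}\sup_{\substack{\bsigma_0,\bsigma_t\in\bSigma_N\\ \langle\bsigma_0,\bsigma_t\rangle_N=q}}\bigl(H_{N,0}(\bsigma_0)+H_{N,t}(\bsigma_t)\bigr).$$
I will show (a) $F_N(q)$ concentrates exponentially around a deterministic limit $F(q)$, uniformly in $q$, and (b) $F(q)<GS_{\xi,h}$ strictly unless $|q|=q_t$ for a unique $q_t\in[0,\lbq]$. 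The lemma is then immediate: if $(\bsigma_0,\bsigma_t)$ are approximate maximizers with overlap $q$, then $F_N(q)\geq GS_{\xi,h}-\varepsilon$, which by (a) and (b) forces $\lvert|q|-q_t\rvert\leq\varepsilon$ outside an event of probability $\leq e^{-\Omega(N)}$.

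For (a), Borell--TIS applied to the Gaussian field $(\bsigma_0,\bsigma_t)\mapsto H_{N,0}(\bsigma_0)+H_{N,t}(\bsigma_t)$ on $\bSigma_N\times\bSigma_N$ yields sub-Gaussian concentration of each $F_N(q)$ around its mean at scale $N^{-1/2}$, with rate $e^{-cN\varepsilon^2}$ for deviations of size $\varepsilon$. To upgrade this to uniform control over $q$ I invoke Proposition~\ref{prop:lip}: flipping $O(N|\Delta q|)$ coordinates to change the overlap by $\Delta q$ perturbs each Hamiltonian by $O(N\sqrt{|\Delta q|})$, so $F_N$ is almost-surely H\"older in $q$ on a high-probability event. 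A union bound over a net of size $O(\varepsilon^{-1})$ then delivers uniform concentration. Convergence $\E[F_N(q)]\to F(q)$ follows from the Parisi formula for the coupled two-replica Hamiltonian with a Lagrange multiplier $\lambda$ enforcing the overlap constraint, exactly as in the disorder-chaos analyses of Chen and of Chen--Panchenko; the zero-temperature limit is obtained via the usual $\beta\to\infty$ rescaling used to pass from free energy to $GS_{\xi,h}$.

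For (b), strict convexity of the dual Parisi objective in the Lagrange multiplier $\lambda$ pins down a unique maximizer $q_t\in[0,\lbq]$ when $h\neq 0$; the bound $q_t\leq\lbq$ reflects that decoupled approximate maxima cannot have overlap exceeding the infimum $\lbq$ of $\supp(\gamma_*^{\cuU})$, while $q_t>0$ for $t<1$ by continuity from the boundary value $q_0=\lbq$ at $t=0$. When $h=0$ and $\xi$ is even, the symmetry $\bsigma\mapsto-\bsigma$ produces a companion maximizer at $-q_t$, so only $|q|$ concentrates at a single value. Combining uniform concentration from (a) with the strict inequality $F(q)<GS_{\xi,h}-\delta$ for some $\delta=\delta(\varepsilon)>0$ whenever $\lvert|q|-q_t\rvert>\varepsilon$ completes the argument.

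The main obstacle is step (b): rigorously establishing uniqueness of the overlap maximizer $q_t$ of the constrained Parisi value at zero temperature in the presence of a constant external field. Strict convexity of the Parisi functional in the Lagrange multiplier is known at positive temperature via the methods of Chen--Panchenko, but the zero-temperature analog with nontrivial external field does not appear to be directly in the literature; the cleanest route is likely a direct variational analysis along the lines of Auffinger--Chen. Fortunately, for the application to Proposition~\ref{prop:OGP}, full uniqueness is not strictly needed: a continuity-in-$t$ argument shows that the set of attainable overlaps excludes a nontrivial interval once an overlap gap is present, which is exactly what is required to trigger the path OGP machinery of \cite{gamarnik2019overlap,GJW20}.
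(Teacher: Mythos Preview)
The paper's own proof is a one-line citation: ``This follows from \cite[Proof of Theorem~2]{chen2018energy}, in particular inequality~(48) therein.'' It does not reproduce any of the disorder-chaos machinery. Your sketch is essentially a reconstruction of the argument behind that citation: Guerra--Talagrand/Parisi bounds for the two-replica constrained free energy, Gaussian concentration, and the identification of a unique limiting overlap value $q_t$ for $t\in(0,1)$. So the overall strategy is the same as what the paper relies on, just unpacked rather than cited.

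Two comments on the details. First, your boundary heuristic ``$q_0=\lbq$ at $t=0$'' is not quite right: at $t=0$ the two Hamiltonians coincide and the overlap of two approximate maximizers does \emph{not} concentrate at a single value---it can lie anywhere in $\supp(\gamma_*^{\cuU})$ (this is exactly why Lemma~\ref{lem:singleogp} is needed separately in the paper to handle the diagonal $s=t$). Disorder chaos is a strictly $t\in(0,1)$ phenomenon, and continuity up to $t=0$ fails in the sense you suggest. Second, you are right that the hard step is (b), and you are honest that the zero-temperature uniqueness with external field is not written down cleanly in the literature prior to \cite{chen2018energy}. That paper handles precisely this case (even $\xi$, constant $h$, zero temperature), which is why the author can get away with a bare citation; your proposed route through positive-temperature strict convexity plus a $\beta\to\infty$ limit would work but is more laborious than simply invoking the result where it already lives.
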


\begin{proof}

This follows from \cite[Proof of Theorem 2]{chen2018energy}, in particular inequality (48) therein
{\color{black}
implies that for some $\delta(\eps)>0$:
\[
    \frac{1}{N}
    \max_{\substack{\bsig_0,\bsig_t \in\Sigma_N
    \\
    \la \bsig_0,\bsig_t \ra_N \notin [q_t-\eps,q_t+\eps]
    }}
    H_{N,0}(\bsig_0)+H_{N,t}(\bsig_t)
    \leq 
    \lt(\frac{1}{N}
    \max_{\bsig_0,\bsig_t \in\Sigma_N}
    H_{N,0}(\bsig_0)+H_{N,t}(\bsig_t)
    \rt)
    -
    \delta
    .
    \qedhere
\]
}
\end{proof}

\begin{lemma}\label{lem:singleogp}

If $(\xi,h)$ is not optimizable, then there exist $\eps(\xi,h)>0$ and $\lbq<a<b<1$ and $K(\xi,h)$ such that with probability $1-\frac{e^{-KN}}{K}$ the following holds. Every pair $\bsigma_0,\bsigma_1\in\bSigma_N$ satisfying \[\frac{H_{N}(\bsigma_i)}{N}\geq GS_{\xi,h}-\eps,\quad i\in\{0,1\}\] also satisfies $|\langle\bsigma_0,\bsigma_1\rangle_N|\notin [a,b].$

\end{lemma}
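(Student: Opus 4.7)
The plan is to extract a flat portion of the Parisi minimizer $\gamma_*^{\cuU}$ from non-optimizability, convert this into a strict gap for a two-replica constrained ground state, and upgrade the resulting pointwise limit to an exponential tail via Gaussian concentration.

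First, by the contrapositive of Lemma~\ref{lem:optimizable}, non-optimizability of $(\xi,h)$ forces $\gamma_*^{\cuU}$ to fail to be strictly increasing on $[\lbq,1)$; since $\gamma_*^{\cuU}$ is non-decreasing, this produces a non-degenerate subinterval of $[\lbq,1)$ on which it is constant. A brief case analysis shows that such an interval can be chosen to lie strictly inside $(\lbq,1)$: a flat portion touching $\lbq$ must be of the form $[\lbq,b']$ with $\gamma_*^{\cuU}\equiv c>0$ there (the value $c=0$ would contradict $\lbq=\inf\supp\gamma_*^{\cuU}$), from which we pass to any $[a_0,b_0]\subset(\lbq,b')$. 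Fix $a,b$ with $a_0<a<b<b_0$.

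Second, consider the constrained two-replica ground state
\[
R_N(q)\;=\;\frac{1}{2N}\,\max\bigl\{H_N(\bsigma_0)+H_N(\bsigma_1)\,:\,\bsigma_0,\bsigma_1\in\bSigma_N,\ \langle\bsigma_0,\bsigma_1\rangle_N=q\bigr\}.
\]
Its $N\to\infty$ in-probability limit $R(q)$ is computed by a two-replica zero-temperature Parisi-type variational formula, and a standard Guerra-type interpolation argument shows $R(q)<GS_{\xi,h}$ strictly whenever $q$ lies outside $\supp(\mathrm d\gamma_*^{\cuU})$. Since $[a_0,b_0]$ lies in a flat region of $\gamma_*^{\cuU}$, one obtains $\varepsilon>0$ with $R(q)\leq GS_{\xi,h}-3\varepsilon$ for all $q\in[a,b]$. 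Evenness of $\xi$ yields $\widetilde H_N(-\bsigma)=\widetilde H_N(\bsigma)$, so substituting $\bsigma_1\mapsto-\bsigma_1$ converts the negative-overlap analysis into the same problem with the external field on the second replica flipped in sign, giving an analogous gap for $q\in[-b,-a]$. This is the main obstacle; it is the ground-state analogue of the finite-temperature overlap-concentration argument underlying \cite[proof of Theorem~2]{chen2018energy}, and is precisely where the evenness of $\xi$ and the constancy of $h$ enter.

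Third, Borell-TIS concentration for the centered Gaussian process $\widetilde H_N$ gives $\mathbb P(R_N(q)\geq R(q)+\varepsilon)\leq e^{-cN\varepsilon^2}$ for each fixed $q$. Since $q=\langle\bsigma_0,\bsigma_1\rangle_N$ only takes $O(N)$ distinct values over $\bSigma_N\times\bSigma_N$, a union bound over this polynomial family yields
\[
\mathbb P\Bigl(\sup_{q\in[-b,-a]\cup[a,b]} R_N(q)\geq GS_{\xi,h}-2\varepsilon\Bigr)\leq e^{-\Omega(N)}.
\]
On the complementary event, any pair $\bsigma_0,\bsigma_1\in\bSigma_N$ with $|\langle\bsigma_0,\bsigma_1\rangle_N|\in[a,b]$ satisfies $\min_{i\in\{0,1\}} H_N(\bsigma_i)/N\leq R_N(|\langle\bsigma_0,\bsigma_1\rangle_N|)\leq GS_{\xi,h}-2\varepsilon<GS_{\xi,h}-\varepsilon$, contradicting $H_N(\bsigma_i)/N\geq GS_{\xi,h}-\varepsilon$ for both $i\in\{0,1\}$. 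This yields the claimed overlap gap with probability $1-e^{-KN}/K$ for some $K=K(\xi,h)>0$.
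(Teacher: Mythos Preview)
Your approach matches the paper's: the paper's proof simply defers to \cite[Theorem~3]{chen2019suboptimality}, whose argument is precisely the one you sketch---extract a flat interval of $\gamma_*^{\cuU}$ on $(\lbq,1)$ from non-optimizability (the paper notes that \cite[Lemma~5.4]{chen2019suboptimality} is the non-optimizability input, with the rest being general), apply the two-dimensional Guerra--Talagrand bound to show the overlap-constrained ground state is strictly below $GS_{\xi,h}$ on that interval, and upgrade via Borell--TIS plus a union bound over the $O(N)$ lattice of overlap values. Your use of the contrapositive of Lemma~\ref{lem:optimizable} to produce the flat interval and your handling of negative overlaps via $\bsigma_1\mapsto -\bsigma_1$ and evenness of $\xi$ are exactly how the cited reference proceeds.
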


\begin{proof}

The proof is identical to that of \cite[Theorem 3]{chen2019suboptimality} using the $2$-dimensional Guerra-Talagrand bound. Indeed \cite[Lemma 5.4]{chen2019suboptimality} exactly establishes that even pure $p$-spin models are not optimizable, i.e. 
\begin{equation}
\label{eq:violate}
\bbE[(\partial_x \Phi_{\gamma_*}(t,X_t))^2]<t
\end{equation}
holds for some $t\in [0,1)$ where $\gamma_*=\gamma_*^{\cuU}$. The remainder of the proof {\color{black}(just below \cite[Lemma 5.4]{chen2019suboptimality})} is fully general and we give an outline below. 
{\color{black}
The point is that by \eqref{eq:violate} must hold in some non-empty open subset of $(0,1)$, thus in a non-empty interval $[a,b]$. For each $t\in [a,b]$, one considers the Hamiltonian $\frac{H_N(\bsig)+H_N(\bsig')}{2}$ on two-replica configurations $(\bsig,\bsig')$ with overlap constraint $\langle \bsig,\bsig'\rangle_N=t$. 
The free energy of this constrained system can be upper-bounded using an interpolation argument; the relevant Parisi order parameter $\wt\gamma$ must increase, except that it may decrease by a factor of at most $2$ at $t$, i.e. it only needs to satisfy 
\begin{equation}
\label{eq:decreasing-two}
\lim_{s\uparrow t}\wt\gamma(s)\leq 2\lim_{s\downarrow t}\wt\gamma(s). 
\end{equation}
Taking $\wt\gamma=\gamma_*^{\cuU}$ recovers the single-replica value. However when \eqref{eq:violate} holds, $\wt\gamma=\gamma_*^{\cuU}$ is no longer locally optimal since $\wt\gamma$ lives in a larger function space due to the relaxation \eqref{eq:decreasing-two}. Hence the constrained two-replica ground state energy is strictly smaller. This argument can be applied for all $O(N)$ values $t\in [a,b]\cap \mathbb Z/N$, yielding the result.
}
\end{proof}


\begin{lemma}\label{lem:pathogp}

If $(\xi,h)$ is not optimizable, then there exists $\eps(\xi,h)>0$ and $0\leq \nu_1<\nu_2\leq 1$ and $K>0$ such that with probability at least $1-\frac{e^{-KN}}{N}$:

\begin{enumerate}
    \item The model $(\xi,h)$ satisfies the path overlap gap property with parameters $(\eps,\nu_1,\nu_2)$.
    \item $H_{N,0},H_{N,1}$ are $\nu_1$ separated above $GS_{\xi,h}-\eps$.
\end{enumerate}

\end{lemma}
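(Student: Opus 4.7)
The plan is to combine Lemma~\ref{lem:singleogp} (single-Hamiltonian OGP) and Lemma~\ref{lem:disorderchaos} (disorder chaos) through a covering argument along the interpolating path, in the spirit of \cite{gamarnik2019overlap}. First I would apply Lemma~\ref{lem:singleogp} to extract parameters $\varepsilon_1 > 0$ and $\lbq < a < b < 1$ such that, with probability $1 - e^{-K_1 N}/K_1$, every pair of $\varepsilon_1$-near-maxima of $H_N$ has absolute overlap in $[0,a) \cup (b,1]$. I would then set $\nu_1 = a$ and $\nu_2 = b$; crucially $\nu_1 > \lbq$, which is exactly what Lemma~\ref{lem:singleogp} provides and is precisely the room needed to absorb the disorder-chaos overlaps.

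Next I would treat cross-time pairs. For $s,t \in [0,1]$, the pair $(H_{N,s}, H_{N,t})$ is Gaussian (with identical external field) whose disordered parts correlate as $\rho(s,t) = \sqrt{st} + \sqrt{(1-s)(1-t)} \in [0,1]$, with $\rho < 1$ iff $s\neq t$. The joint law coincides with $(H_{N,0}, H_{N,\tau})$ for $\tau = 1 - \rho(s,t)^2 \in (0,1]$, so Lemma~\ref{lem:disorderchaos} (augmented with the independent-copy case $\tau=1$, where classical disorder chaos gives $q_1 = 0$) guarantees that every pair of $\varepsilon_1$-near-maxima of $H_{N,s}, H_{N,t}$ has absolute overlap within $\varepsilon'' $ of some $q_{s,t}\in [0,\lbq]$, with probability $1 - e^{-K_2 N}/K_2$. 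Choosing $\varepsilon'' < a - \lbq$ forces the overlap into $[0,a] = [0,\nu_1]$, matching the path-OGP conclusion. The diagonal case $s = t$ is covered directly by Lemma~\ref{lem:singleogp} applied to $H_{N,t}$.

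To assemble all these events simultaneously, I would introduce a $\delta$-net $T \subseteq [0,1]$ of cardinality $O(1/\delta)$ and union bound the preceding guarantees over the $O(1/\delta^2)$ pairs in $T\times T$. The resulting event still has probability $1 - e^{-K'N}/K'$ provided $\delta$ is a fixed constant depending on $\xi, h$. To extend from $T\times T$ to all of $[0,1]^2$ I would use the uniform continuity bound
\[
\max_{\bsigma \in \bSigma_N} \bigl|H_{N,s}(\bsigma) - H_{N,s'}(\bsigma)\bigr| \le \bigl(|\sqrt{s}-\sqrt{s'}| + |\sqrt{1-s}-\sqrt{1-s'}|\bigr)\max_{i \in \{0,1\}}\max_{\bsigma \in \bSigma_N}\bigl|H_{N,i}(\bsigma)\bigr| \le C\sqrt{\delta}\, N,
\]
valid for $|s-s'|\le\delta$ with exponentially high probability, where the a priori bound $\max_{\bsigma} |H_{N,i}(\bsigma)| = O(N)$ follows from the existence of $GS_{\xi,h}$ and standard Gaussian concentration. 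Choosing $\delta$ so small that $C\sqrt{\delta} < \varepsilon_1/4$, any $(\varepsilon_1/2)$-near-maximum of $H_{N,s}$ is an $\varepsilon_1$-near-maximum of $H_{N,s'}$ for the nearest $s' \in T$, so the overlap constraint established on $T\times T$ at tolerance $\varepsilon_1$ propagates to all of $[0,1]^2$ at the reduced tolerance $\varepsilon = \varepsilon_1/2$.

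Finally, the $\nu_1$-separation of $H_{N,0}$ and $H_{N,1}$ is the special case $\rho(0,1) = 0$ of the cross-time analysis, where classical disorder chaos gives overlap $\le \lbq + \varepsilon'' < a = \nu_1$ for near-maxima at tolerance $\varepsilon$. The main technical obstacle is coordinating the three small parameters ($\varepsilon$, $\delta$, $\varepsilon''$) so that the net/continuity argument is internally consistent and so that the disorder-chaos slack $\varepsilon''$ fits strictly inside the gap $a - \lbq > 0$; since all three can be fixed as positive constants depending only on $\xi$ and $h$, the union bound shrinks the probability only by a constant factor and yields the required $1 - e^{-KN}/K$ bound.
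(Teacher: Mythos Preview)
Your proposal is correct and follows essentially the same approach as the paper: discretize the interpolation path into a $\delta$-net, apply Lemma~\ref{lem:disorderchaos} to cross-time grid pairs and Lemma~\ref{lem:singleogp} to same-time grid pairs, then extend to the full path via a continuity bound (the paper invokes Proposition~\ref{prop:lip} for this step rather than a direct sup-norm estimate, but the effect is identical). One inessential slip: for independent copies with $h\neq 0$ the limiting overlap $q_1$ is not $0$ but some value in $[0,\lbq]$; since your argument only uses $q_1\le\lbq<a=\nu_1$, this does not affect anything.
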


\begin{proof}

The proof is identical to \cite[Theorem 3.4]{gamarnik2019overlap}. In short, one discretizes $\mathcal H_N$ into $\{H_{N,k\delta}: 0\leq k\leq \delta^{-1}\}$ for some small $\delta>0$ using Proposition~\ref{prop:lip} and then applies Lemma~\ref{lem:disorderchaos} to control the values $H_{N,s}(\bsigma_s),H_{N,t}(\bsigma_t)$ for $s\neq t$ and Lemma~\ref{lem:singleogp} to control the cases that $s=t$.
{\color{black}
Indeed in the proof of \cite[Theorem 3.4]{gamarnik2019overlap}, 
the former is accomplished using \cite[Theorem 3]{chen2019suboptimality} while
the latter is accomplished using \cite[Theorem 2]{chen2018energy}. 
The preceding lemmas exactly generalize the relevant statements to non-optimizable models.
}
\end{proof}

\begin{proof}[Proof of Proposition~\ref{prop:OGP}]

Given Lemma~\ref{lem:pathogp}, the proof is identical to that of \cite[Theorem 3.3]{gamarnik2019overlap}. 
{\color{black}
Indeed, that proof does not depend on $\xi$. 
The main input is \cite[Conjecture 3.2]{gamarnik2019overlap}.
This is shown to be implied by the combination of \cite[Theorem 3.4]{gamarnik2019overlap} and \cite[Conjecture 3.6]{gamarnik2019overlap}. Lemma~\ref{lem:pathogp} above suitably extends the former, while the latter (for general $(\xi,h)$) is the main result of our subsequent work \cite{sellke2020approximate}.
The proof of \cite[Theorem 3.3]{gamarnik2019overlap} also uses \cite[Theorem 4.2 and 6.1]{gamarnik2019overlap}; these follow from general concentration of measure results on Gaussian space and easily extend to general $\xi$. 
}
\end{proof}

We remark that Lemma~\ref{lem:pathogp} is also the only property of pure even $p$-spin models used in \cite[Theorem 2.4]{GJW20} to rule out approximate maximization (in a slightly weaker sense) by constant degree polynomials. Therefore their result also applies under the more general conditions of Proposition~\ref{prop:OGP}.

\section{Proof of Lemmas~\ref{lem:identity},~\ref{lem:optimizable} and \ref{lem:identity2}}\label{sec:identity}

We first recall several existing results.

\begin{corollary}[{\cite[Corollary 6.6]{ams20}}]
\label{cor:ED2}
For any $\gamma\in\cuL$ and any $t\in [0,1)$, 
\begin{align*}
\E\lt[\partial_x\Phi_{\gamma}(t,X_{t})^2\rt] = \int_{0}^{t} \xi''(s)\, \E\lt[\big(\partial_{xx}\Phi_{\gamma}(s,X_s)\big)^2\rt]\, \de s\, .
\end{align*}
\end{corollary}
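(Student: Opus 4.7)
The plan is to apply the Itô isometry to the stochastic integral representation already established in Proposition~\ref{prop:Xt}, which expresses $\partial_x\Phi_\gamma(t,X_t)$ as the Itô integral $\int_0^t \sqrt{\xi''(s)}\,\partial_{xx}\Phi_\gamma(s,X_s)\,\de B_s$. Once the integrability hypothesis is in place, the identity in Corollary~\ref{cor:ED2} drops out in one line.

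The main preparatory step is to check that the integrand $\sqrt{\xi''(s)}\,\partial_{xx}\Phi_\gamma(s,X_s)$ lies in $L^2$ on $[0,t]$ for each $t < 1$. Fix $\varepsilon > 0$ with $t \leq 1 - \varepsilon$. By Proposition~\ref{prop:phireg}(a), $\partial_{xx}\Phi_\gamma$ is uniformly bounded on $[0,1-\varepsilon] \times \reals$, and $\xi''$ is continuous on $[0,1]$ (it has a convergent power series expansion with non-negative coefficients). Consequently the integrand is bounded on $[0,t] \times \Omega$, so in particular $\int_0^t \xi''(s)\,\E[(\partial_{xx}\Phi_\gamma(s,X_s))^2]\,\de s < \infty$. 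This is exactly the hypothesis needed to invoke Itô isometry in its strong form, which then yields the claimed equality.

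If one wished to avoid using Proposition~\ref{prop:Xt} as a black box, an alternative route is to apply Itô's formula directly to the function $(t,x)\mapsto (\partial_x\Phi_\gamma(t,x))^2$ evaluated along $X_t$. The drift collects three contributions: $2\partial_x\Phi_\gamma\,\partial_t\partial_x\Phi_\gamma$ from the time derivative, $2\xi''(t)\gamma(t)(\partial_x\Phi_\gamma)^2\,\partial_{xx}\Phi_\gamma$ from the drift of $X_t$, and $\xi''(t)\bigl((\partial_{xx}\Phi_\gamma)^2 + \partial_x\Phi_\gamma\,\partial_{xxx}\Phi_\gamma\bigr)$ from the quadratic variation. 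Differentiating the Parisi PDE once in $x$ gives an expression for $\partial_t\partial_x\Phi_\gamma$ which cancels the $\partial_x\Phi_\gamma\,\partial_{xxx}\Phi_\gamma$ and $(\partial_x\Phi_\gamma)^2\,\partial_{xx}\Phi_\gamma$ terms exactly, leaving $\xi''(t)(\partial_{xx}\Phi_\gamma(t,X_t))^2\,\de t$ as the only drift. Taking expectations produces the identity.

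I expect essentially no substantive obstacle here: the only technicality is the integrability near $t=1$, which is precisely the role of Proposition~\ref{prop:phireg}(a) (and the reason the corollary is stated for $t \in [0,1)$ rather than $t \in [0,1]$). The Brownian martingale term has zero expectation under the verified $L^2$ condition, so nothing beyond routine Itô calculus is required.
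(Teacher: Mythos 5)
Your proof is correct and is essentially the intended one: the paper does not prove this corollary itself but cites \cite[Corollary 6.6]{ams20}, where it is obtained exactly as you describe, by applying the It\^o isometry to the martingale representation of Proposition~\ref{prop:Xt}, with the required $L^2$ bound on $[0,t]$, $t<1$, supplied by Proposition~\ref{prop:phireg}(a). Nothing further is needed.
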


\begin{lemma}[{\cite[Corollary 6.6 and Lemma 6.7]{ams20}}]
\label{lem:continuous}
For any $\gamma\in \cuL$, the values 
\[\E\lt[\partial_x\Phi_{\gamma}(t,X_{t})^2\rt],\quad \quad \E\lt[\partial_{xx}\Phi_{\gamma}(t,X_{t})^2\rt]\]
are continuous functions of $t\in [0,1)$.

\end{lemma}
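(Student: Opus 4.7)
My approach would be to leverage the regularity of $\Phi_{\gamma}$ from Proposition~\ref{prop:phireg} together with the a.s.\ continuity of the Parisi SDE trajectory $X_t$ from Proposition~\ref{prop:Xt}. Since continuity is a local property in $t \in [0,1)$, it suffices to fix $\eps>0$ and establish continuity on $[0,1-\eps]$ for each such $\eps$.

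For the first value $\E[\partial_x\Phi_{\gamma}(t,X_t)^2]$, I would not prove continuity directly: Corollary~\ref{cor:ED2} already represents it as
\[
  \E[\partial_x\Phi_{\gamma}(t,X_t)^2] = \int_0^t \xi''(s)\,\E\!\left[\big(\partial_{xx}\Phi_{\gamma}(s,X_s)\big)^2\right]\de s.
\]
By Proposition~\ref{prop:phireg}(a) with $j=2$, $\partial_{xx}\Phi_{\gamma} \in L^{\infty}([0,1-\eps];L^{\infty}(\R))$, so the integrand is bounded on $[0,1-\eps]$ and the whole expression is in fact Lipschitz in $t$ there, which gives continuity on $[0,1)$.

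For the second value $\E[\partial_{xx}\Phi_{\gamma}(t,X_t)^2]$ I would use bounded convergence. Proposition~\ref{prop:phireg}(a) with $j=3$ gives a uniform bound on $\partial_{xxx}\Phi_{\gamma}$ on $[0,1-\eps]\times\R$, and Proposition~\ref{prop:phireg}(b) with $j=2$ gives a uniform bound on $\partial_t\partial_{xx}\Phi_{\gamma}$ on the same rectangle. Together these say $\partial_{xx}\Phi_{\gamma}$ is (jointly) Lipschitz on $[0,1-\eps]\times\R$, hence in particular continuous and uniformly bounded there. Combined with the a.s.\ continuity of $t\mapsto X_t$, the path $t\mapsto \partial_{xx}\Phi_{\gamma}(t,X_t)^2$ is a.s.\ continuous on $[0,1-\eps]$ and dominated by a deterministic constant, so dominated convergence gives continuity of the expectation.

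The main technical point is simply verifying that the regularity stated in Proposition~\ref{prop:phireg} really does upgrade the a.e.\ bounds on mixed derivatives to the joint continuity of $\partial_{xx}\Phi_{\gamma}$ needed for the dominated convergence step; no deeper input appears to be required. If one preferred a representation-style argument mirroring the first case, Ito's formula applied to $(\partial_{xx}\Phi_{\gamma}(t,X_t))^2$, combined with the identity obtained by differentiating the Parisi PDE twice in $x$, yields an expression for $\E[\partial_{xx}\Phi_{\gamma}(t,X_t)^2]$ as an initial value plus an integral of $\xi''(s)\E[(\partial_{xxx}\Phi_{\gamma})^2] - 2\xi''(s)\gamma(s)\E[(\partial_{xx}\Phi_{\gamma})^3]$, whose integrands are again locally bounded by Proposition~\ref{prop:phireg}; this would be a slightly more laborious alternative route to the same conclusion.
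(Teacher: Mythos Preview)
The paper does not prove this lemma; it is quoted verbatim from \cite[Corollary 6.6 and Lemma 6.7]{ams20}, so there is no in-paper proof to compare against. Your proposal is correct and is essentially the argument one expects in the cited reference: the continuity of $\E[\partial_x\Phi_{\gamma}(t,X_t)^2]$ is immediate from the integral representation of Corollary~\ref{cor:ED2} (which is the same Corollary~6.6 in \cite{ams20}), and the continuity of $\E[\partial_{xx}\Phi_{\gamma}(t,X_t)^2]$ follows from joint regularity of $\partial_{xx}\Phi_{\gamma}$ on $[0,1-\eps]\times\R$, a.s.\ continuity of $X_t$, and dominated convergence. Your alternative It\^o-formula route is also viable and is in fact the computation underlying \eqref{eq:DxxPhi} in the proof of Lemma~\ref{lem:identity2}.
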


\begin{proposition}[{\cite[Proposition 6.8]{ams20}}]
\label{prop:DerivativeParisi}

Let $\gamma\in \cuL$, and $\delta: [0,1)\to \reals$ be such that $\|\xi''\delta\|_{TV[0,t]}<\infty$ for all $t\in [0,1)$,
$\|\xi''\delta\|_1<\infty$, and $\delta(t) = 0$ for $t\in (1-\eps,1]$, $\eps>0$. Further assume that $\gamma+s\delta\ge 0$ for all $s\in [0,s_0)$ for some positive $s_0$.
Then
\begin{align}
\label{eq:ParisiVariation}
    \frac{\de \Par}{\de s}(\gamma+s\delta)\big|_{s=0+} 
    = 
    \frac{1}{2}\int_{0}^1 \xi''(t) \delta(t) \big(\E\lt[\partial_x\Phi_{\gamma}(t,X_t)^2\rt]\, -\, t\big)\, \de t\, .
\end{align}

\end{proposition}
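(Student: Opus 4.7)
The plan is to differentiate $\Par$ term by term, reducing the problem to computing the $s$-derivative of $\E[\Phi_{\gamma+s\delta}(0,h)]$ at $s=0$. The integral term $-\frac{1}{2}\int_0^1 t\xi''(t)\gamma(t)\de t$ is linear in $\gamma$ and contributes $-\frac{1}{2}\int_0^1 t\xi''(t)\delta(t)\de t$ directly, matching the $-t$ inside the parentheses of \eqref{eq:ParisiVariation}. So the whole identity reduces to establishing
\[
\partial_s \E[\Phi_{\gamma+s\delta}(0,h)]\Big|_{s=0+} \;=\; \frac{1}{2}\int_0^1 \xi''(t)\delta(t)\,\E[\partial_x\Phi_\gamma(t,X_t)^2]\,\de t.
\]

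To do this, I would introduce $\Psi(t,x) := \partial_s\Phi_{\gamma+s\delta}(t,x)\big|_{s=0+}$ and derive its PDE by differentiating the Parisi PDE~(1.2) in $s$:
\begin{align*}
\partial_t\Psi + \tfrac{1}{2}\xi''(t)\bigl(\partial_{xx}\Psi + 2\gamma(t)\,\partial_x\Phi_\gamma\,\partial_x\Psi + \delta(t)\,(\partial_x\Phi_\gamma)^2\bigr) = 0,\qquad \Psi(1,x)=0.
\end{align*}
The terminal condition is zero because $\Phi_{\gamma+s\delta}(1,x)=|x|$ does not depend on $s$; the assumption $\delta \equiv 0$ on $(1-\eps,1]$ ensures this and also eliminates the singular behavior of $\partial_{xx}\Phi$ at $t=1$. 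Next, apply It\^o's formula to $\Psi(t,X_t)$ along the Parisi SDE~\eqref{eq:parisiSDE}. The drift terms from the It\^o expansion, namely $\partial_t\Psi + \xi''(t)\gamma(t)\partial_x\Phi_\gamma\,\partial_x\Psi + \tfrac{1}{2}\xi''(t)\partial_{xx}\Psi$, exactly cancel two of the three contributions from the PDE above, leaving
\[
\de \Psi(t,X_t) \;=\; -\tfrac{1}{2}\xi''(t)\delta(t)\,(\partial_x\Phi_\gamma(t,X_t))^2\,\de t \;+\; \sqrt{\xi''(t)}\,\partial_x\Psi(t,X_t)\,\de B_t.
\]
Integrating from $0$ to $1$ and taking expectations (so the stochastic integral vanishes) yields
\[
-\E[\Psi(0,X_0)] \;=\; -\tfrac{1}{2}\int_0^1 \xi''(t)\delta(t)\,\E[(\partial_x\Phi_\gamma(t,X_t))^2]\,\de t,
\]
which is exactly what we want after recalling $X_0 \sim \mathcal L_h$. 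Combining with the trivial computation for the explicit integral term gives \eqref{eq:ParisiVariation}.

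The main technical obstacle is justifying the manipulations rigorously. First, one must show that $s\mapsto \Phi_{\gamma+s\delta}(t,x)$ is genuinely differentiable at $s=0+$ in a sense strong enough to interchange with $\E$ and to satisfy the linearized PDE pointwise. The nonnegativity constraint $\gamma+s\delta \geq 0$ for $s \in [0,s_0)$ is crucial here since it keeps us inside $\cuL$ where the PDE is well-posed. Second, one must verify that the It\^o integral is a true martingale, not just a local one; this requires an $L^2$ bound on $\sqrt{\xi''(t)}\,\partial_x\Psi(t,X_t)$ on $[0,1-\eps]$, which should follow from the regularity of $\Psi$ analogous to Proposition~\ref{prop:phireg}, obtained by propagating estimates backward from the zero terminal condition at $t=1-\eps$ (again using that $\delta$ is supported away from $t=1$). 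The assumption $\|\xi''\delta\|_{TV[0,t]}<\infty$ controls the source term in the linearized PDE, enabling these estimates.
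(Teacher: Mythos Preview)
The paper does not prove this proposition; it is quoted verbatim from \cite[Proposition~6.8]{ams20} and used as a black box. So there is no ``paper's own proof'' to compare against here.

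That said, your sketch is the standard route and matches how such derivative formulas are established in the literature (e.g.\ \cite{auffinger2015parisi,jagannath2016dynamic,ams20}): linearize the Parisi PDE in $s$ to obtain a PDE for $\Psi=\partial_s\Phi_{\gamma+s\delta}|_{s=0+}$ with zero terminal data, then apply It\^o's formula to $\Psi(t,X_t)$ along the Parisi SDE so that the drift collapses to the source term $-\tfrac12\xi''\delta(\partial_x\Phi_\gamma)^2$. You have also correctly identified the two genuine technical points---existence/regularity of $\Psi$ sufficient to justify the linearized PDE and the interchange of $\partial_s$ with $\E$, and square-integrability of $\sqrt{\xi''}\,\partial_x\Psi(t,X_t)$ so the stochastic integral is a true martingale. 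The hypotheses on $\delta$ (compact support in $[0,1-\eps]$, finite $\|\xi''\delta\|_{TV[0,t]}$ and $\|\xi''\delta\|_1$) are exactly what one uses to push these through via the regularity in Proposition~\ref{prop:phireg}. Your outline is correct; filling in the analytic details is routine but not entirely trivial, which is why the paper simply cites the result.
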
 

\begin{lemma}[{\cite[Lemma 6.9]{ams20}}]
\label{lem:intervals}
The support of $\gamma\in\cuL_{\lbq}$ is a disjoint union of countably many intervals $S(\gamma) = \cup_{\alpha\in A}I_\alpha$, where
$I_{\alpha} = (a_{\alpha},b_{\alpha})$ or $I_{\alpha} = [a_{\alpha},b_{\alpha})$, $\lbq\leq a_{\alpha}<b_{\alpha}\leq 1$, and $A$ is countable.
\end{lemma}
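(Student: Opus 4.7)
The plan is to deduce the interval decomposition from the right-continuity of $\gamma$ together with elementary topology on $\mathbb R$. Since $\gamma \in \cuL_{\lbq}$ vanishes on $[0,\lbq)$ by definition, we have $S(\gamma) \subset [\lbq, 1)$, so it suffices to decompose this set and to show that all resulting endpoints lie in $[\lbq, 1]$.

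The key step will be to observe that $S(\gamma)$ is ``right-open'' in the following sense: if $x \in S(\gamma)$, so that $\gamma(x) > 0$, then the right-continuity of $\gamma$ at $x$ furnishes some $\epsilon > 0$ with $\gamma(y) \geq \gamma(x)/2 > 0$ for every $y \in [x, x+\epsilon)$, hence $[x, x+\epsilon) \subset S(\gamma)$. Next I would write $S(\gamma)$ as the disjoint union of its connected components in $\mathbb R$; each component is an interval, and the collection is at most countable because every component contains a rational number. The right-openness property then forces each component to omit its right endpoint: if $b$ were simultaneously the right endpoint of a component and a point of $S(\gamma)$, the right-neighborhood $[b, b+\epsilon)$ would also lie in $S(\gamma)$, contradicting maximality. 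In contrast, the left endpoint $a$ of a given component may or may not lie in $S(\gamma)$, depending on whether $\gamma$ has a jump from $0$ to a positive value at $a$, producing components of the form $[a_\alpha, b_\alpha)$ or $(a_\alpha, b_\alpha)$ as claimed. The containment $S(\gamma) \subset [\lbq, 1)$ gives $\lbq \leq a_\alpha < b_\alpha \leq 1$.

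I do not anticipate any substantial obstacle: the argument is essentially point-set topology, and the only mild subtlety is noting that both interval types genuinely occur, which is consistent with $\gamma$ being only right-continuous (so upward jumps at left endpoints are permitted). None of the deeper properties of $\gamma \in \cuL$, such as the bounded total variation condition on $\xi''\gamma$, are needed for the decomposition itself; they would only enter when one subsequently uses this decomposition to justify integration-by-parts or variational manipulations in the proofs of Lemmas~\ref{lem:identity}, \ref{lem:optimizable}, and~\ref{lem:identity2}.
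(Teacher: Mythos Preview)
Your argument is correct and is essentially the standard one: right-continuity forces $S(\gamma)$ to contain a half-open right-neighborhood of each of its points, and the connected-component decomposition of a subset of $\mathbb R$ then yields countably many intervals of the stated form. Note that the paper does not supply its own proof of this lemma; it is quoted directly from \cite[Lemma~6.9]{ams20}, so there is no alternative argument here to compare against. Your proof is exactly the elementary justification one would expect.
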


%

\begin{lemma}
\label{lem:strict}
The function $\Par=\Par_{\xi,\mathcal L_h}$ is strictly convex on $\cuL$.
\end{lemma}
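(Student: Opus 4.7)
The decomposition $\Par(\gamma) = \mathbb E^{h\sim\mathcal L_h}[\Phi_\gamma(0,h)] - \tfrac12\int_0^1 t\xi''(t)\gamma(t)\de t$ has a linear-in-$\gamma$ second term, so strict convexity of $\Par$ reduces to strict convexity of $\gamma\mapsto \Phi_\gamma(0,h_0)$ for each fixed $h_0 \in \mathbb R$. The plan is to first establish plain convexity on piecewise constant $\gamma$ via the Hopf--Cole transformation and extend by density, then upgrade to strict convexity using the first-derivative formula of Proposition~\ref{prop:DerivativeParisi}.

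For convexity, take step functions $\gamma=\sum_{i=1}^r m_i\mathbf{1}_{[q_{i-1},q_i)}$ with $m_i\ge 0$ (using a common refinement of the partitions for the two functions being compared). On each slab $[q_{i-1},q_i]$, the substitution $U=e^{m_i\Phi_\gamma}$ converts the Parisi PDE into the backward heat equation, yielding the Hopf--Cole recursion
\[
\Phi_\gamma(q_{i-1},x) = \tfrac1{m_i}\log \mathbb E\bigl[e^{m_i\Phi_\gamma(q_i,x+G_i)}\bigr],\quad G_i\sim\mathcal N\bigl(0,\xi'(q_i)-\xi'(q_{i-1})\bigr),
\]
interpreted as $\Phi_\gamma(q_{i-1},x)=\mathbb E[\Phi_\gamma(q_i,x+G_i)]$ when $m_i=0$. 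Iterating from the terminal condition $\Phi_\gamma(1,x)=|x|$ and applying iterated H\"older inequalities produces convexity in $(m_1,\dots,m_r)$. Density of step functions in $\cuL$, combined with the continuity provided by Proposition~\ref{prop:phireg} and Lemma~\ref{lem:continuous}, extends convexity to all of $\cuL$.

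For the strict part, given $\gamma_0\neq \gamma_1$ in $\cuL$ with $\delta=\gamma_1-\gamma_0$, define $g(s)=\Par(\gamma_0+s\delta)$ for $s\in[0,1]$. Proposition~\ref{prop:DerivativeParisi} gives
\[
g'(s) = \tfrac12\int_0^1 \xi''(t)\delta(t)\bigl(\mathbb E[\partial_x\Phi_{\gamma_s}(t,X_t^s)^2]-t\bigr)\de t.
\]
I would then show that the map $s\mapsto \mathbb E[\partial_x\Phi_{\gamma_s}(t,X_t^s)^2]$ varies strictly monotonically in $s$ for $t$ in the set $\{\delta\neq 0\}$, forcing $g'$ to be strictly increasing. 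Using Lemma~\ref{lem:intervals} to decompose $\{\delta\neq 0\}$ into countably many sign-definite intervals, and the H\"older equality case from the Hopf--Cole representation applied locally on each such interval (in the spirit of the Auffinger--Chen strict convexity analysis), one obtains the required strictness.

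The principal obstacle is the last step: in $\cuU$, where $\delta$ can be assumed sign-definite on an interval of positive measure, strict convexity follows cleanly from a single-signed variation argument. In $\cuL$ the sign of $\delta$ may oscillate and a priori positive/negative contributions to $g'(s)$ could cancel in the integrand. The resolution must leverage the specific structure of the Parisi PDE: distinct $\gamma_s$'s produce genuinely distinct martingales $\partial_x\Phi_{\gamma_s}(t,X_t^s)$ whose second-moment trajectories are real-analytic in $s$ via Corollary~\ref{cor:ED2} (rewriting the variance as $\int_0^t \xi''(r)\mathbb E[\partial_{xx}\Phi_{\gamma_s}(r,X_r^s)^2]\de r$), and this analyticity precludes cancellation against $\xi''(t)\delta(t)\de t$ unless $\delta\equiv 0$.
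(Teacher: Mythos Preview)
The paper does not actually prove this lemma: it simply cites \cite[Lemma 5]{chen2018energy} and asserts that the argument there (for $\cuU$) carries over verbatim to $\cuL$. So there is no detailed argument to compare against. Your convexity half, via Hopf--Cole on step functions plus H\"older and density, is exactly the standard route and is fine.

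The strict convexity half, however, has real gaps. First, you invoke Proposition~\ref{prop:DerivativeParisi} to differentiate $g(s)=\Par(\gamma_0+s\delta)$, but that proposition explicitly requires $\delta(t)=0$ on some interval $(1-\eps,1]$; for generic $\gamma_0,\gamma_1\in\cuL$ the difference $\delta=\gamma_1-\gamma_0$ need not vanish near $1$, so the formula for $g'(s)$ is not available as stated. Second, even granting a derivative formula, your closing paragraph is not an argument: saying that $s\mapsto\E[\partial_x\Phi_{\gamma_s}(t,X_t^s)^2]$ is ``real-analytic in $s$'' and that this ``precludes cancellation against $\xi''(t)\delta(t)\,\de t$'' does not establish strict monotonicity of $g'$. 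Analyticity in $s$ of the integrand at each fixed $t$ says nothing about whether the \emph{integral} over $t$ is strictly increasing in $s$; one would need a pointwise sign, or a global positivity of the second variation, neither of which you supply. Third, Lemma~\ref{lem:intervals} concerns the support of a single $\gamma$, not the level set $\{\delta\neq 0\}$ of a difference, so it does not give you the sign-definite decomposition you want.

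The approach in \cite{chen2018energy} (following Auffinger--Chen) does not go through the derivative formula at all. It uses the variational/SDE representation of $\Phi_\gamma$ to write $\Phi_\gamma(0,h)$ as a supremum, over progressively measurable controls $u$, of functionals that are \emph{affine in $\gamma$}; convexity is then immediate, and strictness comes from showing that the optimal control is uniquely determined by $\gamma$ and differs for distinct $\gamma$. If you want to stay closer to your Hopf--Cole setup, the clean alternative is to analyze the equality case in the iterated H\"older inequality directly on step functions (this is what forces the Radon--Nikodym derivatives at successive levels to be a.s.\ constant, hence $\gamma_0=\gamma_1$), and then pass to the limit. Either route avoids the derivative formula entirely.
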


\begin{proof}
The proof is identical to \cite[Theorem 2]{auffinger2015parisi} and \cite[Lemma 5]{chen2018energy} which show strict convexity on $\cuU$.
\end{proof}

Throughout this section we let $\gamma_*^{\cuL_{\lbq}}$ be the minimizer of $\Par$ over $\cuL_{\lbq}$, assuming it exists. Note that we will eventually show in Lemma~\ref{lem:identity} that $\gamma_*^{\cuL_{\lbq}}=\gamma_*^{\cuL}$ if either minimizer exists.

\begin{lemma}
\label{lem:stationarity}
Assume $\gamma_*^{\cuL_{\lbq}}$ exists. Then
\begin{align}
\label{eq:Support_1}
    t\in \supp\big(\gamma_*^{\cuL_{\lbq}}\big) &\;\;\;\Rightarrow\;\;\; 
    \E[\partial_x\Phi_{\gamma_*^{\cuL{\lbq}}}(t,X_t)^2] =t \, ,
    \\
\label{eq:Support_2}
    t\geq \lbq &\;\;\;\Rightarrow\;\;\; \E[\partial_x\Phi_{\gamma_*^{\cuL{\lbq}}}(t,X_t)^2] \ge t\, . 
\end{align}
\end{lemma}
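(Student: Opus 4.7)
The strategy is to extract both statements from the first-order optimality condition for $\gamma_*^{\cuL_{\lbq}}$ as a minimizer of $\Par$ on $\cuL_{\lbq}$. Concretely, since $\gamma_*^{\cuL_{\lbq}}$ is a minimizer, Proposition~\ref{prop:DerivativeParisi} implies that for every direction $\delta:[0,1)\to\R$ satisfying the regularity hypotheses of that proposition and such that $\gamma_*^{\cuL_{\lbq}}+s\delta\in\cuL_{\lbq}$ for all $s\in[0,s_0)$, we have the variational inequality
\begin{equation*}
\int_0^1 \xi''(t)\,\delta(t)\,\bigl(\E[\partial_x\Phi_{\gamma_*^{\cuL_{\lbq}}}(t,X_t)^2]-t\bigr)\,\de t\;\geq\;0.
\end{equation*}
The rest of the argument is to choose test perturbations that isolate pointwise information.

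For \eqref{eq:Support_2}, fix $t_0\in[\lbq,1)$ with $t_0>0$ and, for small $\eps>0$ with $t_0+\eps<1$, take $\delta=\mathbf{1}_{[t_0,t_0+\eps)}$. This $\delta$ is non-negative, vanishes on $[0,\lbq)$ and near $1$, is right-continuous, and has finite weighted TV; in particular $\gamma_*^{\cuL_{\lbq}}+s\delta\in\cuL_{\lbq}$ for every $s\ge 0$, so the perturbation is admissible without a positivity assumption on $\gamma_*^{\cuL_{\lbq}}$. Dividing the inequality above by $\eps$ and letting $\eps\to 0^+$, while invoking continuity of $t\mapsto\xi''(t)$ and of $t\mapsto\E[\partial_x\Phi_{\gamma_*^{\cuL_{\lbq}}}(t,X_t)^2]$ from Lemma~\ref{lem:continuous}, yields $\xi''(t_0)(\E[\partial_x\Phi_{\gamma_*^{\cuL_{\lbq}}}(t_0,X_{t_0})^2]-t_0)\geq 0$. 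Since $\xi''(t_0)>0$ whenever $t_0>0$ (by positivity of the coefficients $c_p^2$ and non-triviality of $\xi$), this gives \eqref{eq:Support_2} for $t_0>0$; the case $t_0=0$ is trivial because the left-hand side is a square, hence $\geq 0=t_0$.

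For \eqref{eq:Support_1}, fix $t_0\in S(\gamma_*^{\cuL_{\lbq}})$ so $\gamma_*^{\cuL_{\lbq}}(t_0)>0$. By right-continuity of $\gamma_*^{\cuL_{\lbq}}$ there exist $c,\eps_0>0$ with $\gamma_*^{\cuL_{\lbq}}\geq c$ on $[t_0,t_0+\eps_0)$. For any $\eps\in(0,\eps_0)$ the test directions $\delta_\eps^{\pm}=\pm\mathbf{1}_{[t_0,t_0+\eps)}$ are both admissible for $s\in[0,c)$, so the variational inequality holds with both signs. This gives the two-sided equality
\begin{equation*}
\int_{t_0}^{t_0+\eps}\xi''(t)\,\bigl(\E[\partial_x\Phi_{\gamma_*^{\cuL_{\lbq}}}(t,X_t)^2]-t\bigr)\,\de t=0,
\end{equation*}
and dividing by $\eps$ and letting $\eps\to 0^+$ (again via Lemma~\ref{lem:continuous} and $\xi''(t_0)>0$) yields the pointwise identity at $t_0$. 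This proves \eqref{eq:Support_1} on $S(\gamma_*^{\cuL_{\lbq}})$; the extension to the closure $\supp(\gamma_*^{\cuL_{\lbq}})$ in $[0,1)$ is immediate because both sides are continuous in $t$ by Lemma~\ref{lem:continuous}.

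The only delicate point is verifying admissibility of the bump perturbations with respect to the technical hypotheses of Proposition~\ref{prop:DerivativeParisi}: each $\delta_\eps^{\pm}$ is bounded, right-continuous, compactly supported in $[\lbq,1)$ with $\delta_\eps^{\pm}\equiv 0$ on $(1-\eps',1]$ for some $\eps'>0$, and has finite total variation, hence $\|\xi''\delta_\eps^{\pm}\|_{TV[0,t]}<\infty$ for all $t\in[0,1)$ and $\|\xi''\delta_\eps^{\pm}\|_1<\infty$; this is routine. The genuinely essential ingredient is the structural fact from Lemma~\ref{lem:intervals} combined with right-continuity, which makes the negative perturbations in the proof of \eqref{eq:Support_1} admissible on a right-neighborhood of every point of $S(\gamma_*^{\cuL_{\lbq}})$.
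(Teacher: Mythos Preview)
Your proof is correct and follows essentially the same variational approach as the paper: both extract the pointwise conditions from Proposition~\ref{prop:DerivativeParisi} using indicator-type perturbations, the only cosmetic difference being that you localize via $\mathbf{1}_{[t_0,t_0+\eps)}$ and send $\eps\to 0$, whereas the paper uses arbitrary intervals $[t_1,t_2)$ and the perturbation $\gamma_*^{\cuL_{\lbq}}\mathbf{1}_{[t_1,t_2)}$ (admissible for $s\in[-1,\infty)$) for the equality case. One small note: your closing reference to Lemma~\ref{lem:intervals} is unnecessary, since right-continuity of $\gamma_*^{\cuL_{\lbq}}$ alone already guarantees admissibility of the negative perturbation in your argument for \eqref{eq:Support_1}.
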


\begin{proof}
We first show Equation~\eqref{eq:Support_1}. For $\lbq\le t_1<t_2< 1$ we take $\delta(t) =\gamma_*^{\cuL_{\lbq}}(t)1_{t\in [t_1,t_2)}$. Clearly $\gamma_*^{\cuL_{\lbq}}+s\delta\in \cuL_{\lbq}$. Since $\gamma_*^{\cuL_{\lbq}}$ minimizes $\Par(\cdot)$ over $\cuL_{\lbq}$, 

\begin{align*}
    0 \leq \left. \frac{\de \Par}{\de s}(\gamma_*^{\cuL_{\lbq}}+s\delta)\rt|_{s=0} 
    = 
    \frac{1}{2}\int_{t_1}^{t_2} \xi''(t) \gamma_*^{\cuL_{\lbq}}(t) \big(\E\lt[\partial_x\Phi_{\gamma_*^{\cuL{\lbq}}}(t,X_t)^2\rt]\, -\, t\big)\, \de t
\end{align*}

Since $t_1,t_2$ are arbitrary, and $\xi''(t)>0$ for $t\in (0,1)$ this implies
$\gamma_*^{\cuL_{\lbq}}(t)  (\E\lt[\partial_x\Phi_{\gamma_*^{\cuL{\lbq}}}(t,X_t)^2\rt]-t)=0$ for almost every $t\in [\lbq,1)$.
Since $\gamma_*^{\cuL_{\lbq}}(t)$ is right-continuous and $\E\lt[\partial_x\Phi_{\gamma_*^{\cuL{\lbq}}}(t,X_t)^2\rt]$ is continuous by Lemma~\ref{lem:continuous}, it follows that $\gamma_*^{\cuL_{\lbq}}(t)  (\E\lt[\partial_x\Phi_{\gamma_*^{\cuL{\lbq}}}(t,X_t)^2\rt]-t)=0$ for every $t\in [\lbq,1)$.
This in turns implies $\E\lt[\partial_x\Phi_{\gamma_*^{\cuL{\lbq}}}(t,X_t)^2\rt]=t$ for every $t\in S(\gamma_*^{\cuL_{\lbq}})$ by right-continuity of $\gamma_*^{\cuL_{\lbq}}$. This can be extended to all $t\in \supp(\gamma_*^{\cuL_{\lbq}})$ by again using continuity of $t\mapsto \E\lt[\partial_x\Phi_{\gamma_*^{\cuL{\lbq}}}(t,X_t)^2\rt]$.

Next consider Eq.~(\ref{eq:Support_2}), where it suffices now to consider $t\in [\lbq,1)\setminus \supp(\gamma_*^{\cuL_{\lbq}})$. By Lemma \ref{lem:intervals}, $[\lbq,1)\setminus \supp(\gamma_*^{\cuL_{\lbq}})$
is a disjoint union of open intervals. Let $J$ be such an interval, and consider any $[t_1,t_2]\subseteq J$. 
Set $\delta(t) = \ind(t\in(t_1,t_2])$, and notice that $\gamma_*^{\cuL_{\lbq}}+s\delta \in\cuL_{\lbq}$ for $s\ge 0$.
By Proposition \ref{prop:DerivativeParisi}, we have
\begin{align*}
    0\le \left. \frac{\de \Par}{\de s}(\gamma+s\delta)\rt|_{s=0} = \frac{1}{2}\int_{t_1}^{t_2} \xi''(t) \big(\E\lt[\partial_x\Phi_{\gamma_*^{\cuL{\lbq}}}(t,X_t)^2\rt]\, -\, t\big)\, \de t\, .
\end{align*}
Since $t_1,t_2$ are arbitrary, $\xi''(t)>0$ for $t\in(0,1)$ and $t\mapsto \E\lt[\partial_x\Phi_{\gamma_*^{\cuL{\lbq}}}(t,X_t)^2\rt]$ is continuous,
this implies $\E\lt[\partial_x\Phi_{\gamma_*^{\cuL{\lbq}}}(t,X_t)^2\rt]\ge t$ for all $t\in J$, and hence all $t\in [\lbq,1)\setminus \supp(\gamma_*^{\cuL_{\lbq}})$.
\end{proof}

\begin{corollary}\label{cor:DX2_supp}
Assume $\gamma_*^{\cuL_{\lbq}}$ exists. Then
\begin{align*}
t\in \supp(\gamma_*^{\cuL_{\lbq}}) &\;\;\;\Rightarrow\;\;\; \xi''(t)\E\lt[\partial_{xx}\Phi_{\gamma_*^{\cuL{\lbq}}}(t,X_t)^2\rt] =1\, .
\end{align*}
\end{corollary}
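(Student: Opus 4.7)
The plan is to derive the corollary by differentiating the identity in Lemma~\ref{lem:stationarity} and invoking the integral representation from Corollary~\ref{cor:ED2}. Set $g(t) \equiv \E[\partial_x\Phi_{\gamma_*^{\cuL_{\lbq}}}(t, X_t)^2]$. Corollary~\ref{cor:ED2} then gives
\[
g(t) = \int_0^t h(s)\,\de s, \quad\text{where}\quad h(s) \equiv \xi''(s)\,\E\left[\big(\partial_{xx}\Phi_{\gamma_*^{\cuL_{\lbq}}}(s, X_s)\big)^2\right].
\]
By Lemma~\ref{lem:continuous} together with continuity of $\xi''$ on $[0,1)$, the integrand $h$ is continuous on $[0,1)$. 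Hence $g \in C^1([0,1))$ with $g'(t) = h(t)$.

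By Lemma~\ref{lem:stationarity}, $g(t) = t$ for every $t \in \supp(\gamma_*^{\cuL_{\lbq}})$. Applying Lemma~\ref{lem:intervals}, the set $S(\gamma_*^{\cuL_{\lbq}})$ is a countable disjoint union of intervals $I_\alpha$ with nonempty interior, and $\supp(\gamma_*^{\cuL_{\lbq}})$ is its closure in $[0,1)$. For $t$ in the interior of some $I_\alpha$, the function $g$ coincides with the identity on an open neighborhood of $t$, so differentiating yields $h(t) = g'(t) = 1$, which is exactly the claimed identity.

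To handle a boundary point $t_0 \in \supp(\gamma_*^{\cuL_{\lbq}})$ that does not lie in the interior of any $I_\alpha$, I select a sequence $(t_n)$ of interior points of some $I_\alpha$ converging to $t_0$; such a sequence exists because $\supp(\gamma_*^{\cuL_{\lbq}}) = \overline{S(\gamma_*^{\cuL_{\lbq}})}$ and each $I_\alpha$ has nonempty interior. Continuity of $h$ then gives $h(t_0) = \lim_n h(t_n) = 1$, completing the proof. No serious obstacle is anticipated: the argument is a direct differentiation of Lemma~\ref{lem:stationarity}, followed by a routine continuity step at boundary points supplied by Lemma~\ref{lem:continuous}.
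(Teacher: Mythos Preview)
Your proof is correct and follows essentially the same approach as the paper: both combine Corollary~\ref{cor:ED2} with Lemma~\ref{lem:stationarity} on each interval of $\supp(\gamma_*^{\cuL_{\lbq}})$ and then invoke the continuity from Lemma~\ref{lem:continuous} to extend to the full support. The only cosmetic difference is that you differentiate via the fundamental theorem of calculus, whereas the paper writes the corresponding integral identity $\int_{t_1}^{t_2} h = t_2 - t_1$ and deduces $h=1$ first almost everywhere and then everywhere by continuity.
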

\begin{proof}
By Lemma \ref{lem:intervals}, $\supp(\gamma_*^{\cuL_{\lbq}})$ is a disjoint union of closed intervals with non-empty interior.
Let $K$ be one such interval. Then, for any $[t_1,t_2]\in K$, Corollary~\ref{cor:ED2} and Lemma \ref{lem:stationarity} imply
\begin{align*}
    t_2-t_1 = \E\lt[\partial_x\Phi(t_2,X_{t_2})^2\rt] - \E\lt[\partial_x\Phi(t_1,X_{t_1})^2\rt] 
    =
    \int_{t_1}^{t_2}\xi''(t)\E\lt[\partial_{xx}\Phi(t,X_t)^2\rt]  \de t\, .
\end{align*}
Since $t_1,t_2$ are arbitrary, $\xi''(t) \E\lt[\partial_{xx}\Phi(t,X_t)^2\rt]=1$ for  almost every $t\in K$.
By Lemma~\ref{lem:continuous} it follows that $\xi''(t) \E\lt[\partial_{xx}\Phi(t,X_t)^2\rt]=1$ 
for all $t\in \supp(\gamma_*^{\cuL_{\lbq}})$.
\end{proof}

\begin{lemma}[{\cite[Lemma 6.12]{ams20}}]
\label{lemma:DensityLB}
Let $\gamma\in\cuL$ satisfy $\gamma(t)=0$ for all $t\in (t_1,1)$, where $t_1<1$. Then, for any 
$t_*\in (t_1,1)$, the probability distribution of $X_{t_*}$ has a density $p_{t_*}$ with respect to the Lebesgue measure. Further, for any
$t_*\in (t_1,1)$  and any $M\in \reals_{\ge 0}$, there exists $\eps(t_*,M,\gamma)>0$ such that
\begin{align*}
\inf_{|x|\le M, t\in [t_*,1]} p_{t}(x) \ge \eps(t_*,M,\gamma)\, .
\end{align*}
\end{lemma}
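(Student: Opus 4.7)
The plan is to exploit the fact that on the interval $(t_1, 1)$ the drift of the Parisi SDE~\eqref{eq:parisiSDE} vanishes identically, since $\gamma \equiv 0$ there. Hence for any $t > t_1$,
\[
  X_t = X_{t_1} + \int_{t_1}^t \sqrt{\xi''(s)}\, \de B_s,
\]
and conditional on $X_{t_1}$, $X_t$ is Gaussian with mean $X_{t_1}$ and variance $\sigma_t^2 := \int_{t_1}^t \xi''(s)\, \de s$, independent of $\mathcal F_{t_1}$. First I would verify $\sigma_t^2 > 0$ for all $t > t_1$: $\xi$ is a real-analytic power series with non-negative coefficients that is not identically zero, so $\xi''$ cannot vanish on any subinterval of $(0,1)$, and $\sigma_t^2$ is therefore strictly positive and continuously increasing on $(t_1, 1]$.

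Next I would express the law of $X_t$ as a Gaussian convolution against the law $\mu_{t_1}$ of $X_{t_1}$:
\[
  p_t(x) = \int_{\mathbb R} \frac{1}{\sqrt{2\pi\sigma_t^2}}\exp\!\left(-\frac{(x-y)^2}{2\sigma_t^2}\right) \mu_{t_1}(\de y).
\]
This representation is valid regardless of whether $\mu_{t_1}$ itself has a density, which immediately gives the first conclusion of the lemma for every $t > t_1$.

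For the uniform lower bound, I would use that $X_{t_1}$ has finite second moment. Indeed $X_0 \sim \mathcal L_h$ has finite second moment by hypothesis; the drift contribution $\int_0^{t_1} \xi''(s)\gamma(s)\partial_x\Phi_\gamma(s,X_s)\, \de s$ is almost surely bounded in absolute value by $\int_0^{t_1} \xi''(s)\gamma(s)\, \de s < \infty$ thanks to $|\partial_x\Phi_\gamma| \leq 1$ from Proposition~\ref{prop:1lip} and the definition of $\cuL$; and the Brownian integral up to time $t_1$ has variance $\int_0^{t_1} \xi''(s)\, \de s < \infty$. Therefore I may select $R = R(\gamma)$ with $\mathbb P(|X_{t_1}| \leq R) \geq 1/2$, and for $|x| \leq M$ and $t \in [t_*, 1]$ restrict the convolution integral to $\{|y| \leq R\}$ and use $|x-y| \leq M+R$ to obtain
\[
  p_t(x) \geq \frac{1}{2\sqrt{2\pi\sigma_t^2}}\exp\!\left(-\frac{(M+R)^2}{2\sigma_t^2}\right).
\]
Since $\sigma_t^2$ is continuous on the compact interval $[t_*, 1]$ and strictly positive there, it ranges over a compact subset of $(0,\infty)$, so the right-hand side is bounded below by a positive constant $\eps(t_*, M, \gamma)$, as required.

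There is no real obstacle to this program; the only step requiring any care is the tail control on $X_{t_1}$, but once one invokes the uniform bound $\Vert\partial_x\Phi_\gamma\Vert_\infty \leq 1$ and the integrability of $\xi''\gamma$ built into the definition of $\cuL$, this is a routine second-moment estimate. Everything else reduces to elementary properties of Gaussian convolutions and analyticity of $\xi''$.
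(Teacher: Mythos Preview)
The paper does not give its own proof of this lemma; it is quoted verbatim from \cite[Lemma 6.12]{ams20} and used as a black box in the proof of Lemma~\ref{lem:FullSupport}. Your argument is correct and is essentially the natural one: once $\gamma$ vanishes on $(t_1,1)$ the Parisi SDE reduces to a time-changed Brownian motion there, so the law of $X_t$ is the convolution of the law of $X_{t_1}$ with a nondegenerate Gaussian of variance $\xi'(t)-\xi'(t_1)$, and a tightness bound on $X_{t_1}$ together with continuity of $t\mapsto \xi'(t)-\xi'(t_1)$ on the compact interval $[t_*,1]$ yields the uniform density lower bound.
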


\begin{lemma}
\label{lem:FullSupport}
If a minimizer $\gamma_*^{\cuL_{\lbq}}$ exists, then $\supp(\gamma_*^{\cuL_{\lbq}})= [\lbq,1)$.
\end{lemma}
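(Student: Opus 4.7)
The plan is to argue by contradiction. Suppose $\supp(\gamma_*^{\cuL_{\lbq}}) \neq [\lbq, 1)$. By Lemma~\ref{lem:intervals} the relative complement $[\lbq, 1) \setminus \supp(\gamma_*^{\cuL_{\lbq}})$ is open in $[\lbq, 1)$ and hence a countable disjoint union of open intervals; choose a maximal one $(a, b)$ with $\lbq \leq a < b \leq 1$. On $(a, b)$ the function $\gamma_*^{\cuL_{\lbq}}$ vanishes identically, so the Parisi PDE degenerates there to the backward heat equation $\partial_t\Phi + \tfrac{1}{2}\xi''(t)\partial_{xx}\Phi = 0$ and the Parisi SDE loses its drift: $dX_t = \sqrt{\xi''(t)}\, dB_t$. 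Differentiating the PDE twice in $x$ and applying It\^o's formula, together with the $L^\infty$ regularity from Proposition~\ref{prop:phireg}(a) used to upgrade a local martingale to a genuine martingale on compact sub-intervals of $[a, b) \cap [0, 1)$, one sees that $\partial_{xx}\Phi_{\gamma_*}(t, X_t)$ is a bounded martingale there. Consequently $t \mapsto \E[\partial_{xx}\Phi_{\gamma_*}(t, X_t)^2]$ is non-decreasing on $[a, b)$, and since $\xi''(t) = \sum_{p \geq 2} p(p-1)c_p^2 t^{p-2}$ is itself non-decreasing as a power series with non-negative coefficients, so is the product
\[
  g(t) := \xi''(t)\, \E[\partial_{xx}\Phi_{\gamma_*}(t, X_t)^2].
\]

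In the typical interior case $\lbq < a < b < 1$, maximality of the gap forces $a, b \in \supp(\gamma_*^{\cuL_{\lbq}})$, so Corollary~\ref{cor:DX2_supp} gives $g(a) = g(b) = 1$ and the monotonicity just established forces $g \equiv 1$ throughout $[a, b]$. By analyticity, $\xi''$ is either identically constant on $[0, 1]$ (the degenerate pure $2$-spin case $\xi(t) = c_2^2 t^2$) or strictly increasing on $(0, 1]$; in the latter case $\E[\partial_{xx}\Phi_{\gamma_*}(t, X_t)^2]$ would have to strictly decrease on $[a, b]$ to keep $g$ constant, flatly contradicting the submartingale property.

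I will reduce the boundary configurations to the same contradiction by extending the monotonicity of $g$ further and combining with either first-order optimality or the terminal condition. If $b = 1$ and $a \in \supp$, then $\partial_x\Phi_{\gamma_*}(t, X_t)$ is a bounded martingale with terminal value $\mathrm{sgn}(X_1)$ (Lemma~\ref{lemma:DensityLB} ensures $\mathbb P(X_1 = 0) = 0$), so $\E[\partial_x\Phi_{\gamma_*}(t, X_t)^2] \to 1$ as $t \to 1^-$; telescoping Corollary~\ref{cor:ED2} gives $\int_a^1 g(t)\, dt = 1 - a$, which combined with $g \geq 1$ on $[a, 1)$ forces $g \equiv 1$ and the analyticity contradiction. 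If $a = \lbq \notin \supp$ and $b < 1$, the admissible direction $\delta = \mathbf{1}_{(\lbq, b)}$ in Proposition~\ref{prop:DerivativeParisi} yields $\int_{\lbq}^b \xi''(t)(\E[\partial_x\Phi_{\gamma_*}(t,X_t)^2] - t)\, dt \geq 0$; combined with the monotonicity of $g$ on $[0, b]$ (where $\gamma_* \equiv 0$) and the endpoint value $g(b) = 1$ (forcing $g \leq 1$ on $[0, b]$), this pins $g \equiv 1$ on $[\lbq, b]$ and again contradicts analyticity. The doubly-extreme case $a = \lbq$, $b = 1$ (so $\gamma_* \equiv 0$) combines the two arguments.

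\textbf{Main obstacle.} The most delicate technical step is the rigorous martingale upgrade of $\partial_{xx}\Phi_{\gamma_*}(t, X_t)$ on the gap near $t = 1$ in the $b = 1$ sub-case, which requires applying Proposition~\ref{prop:phireg} on shrinking sub-intervals $[a, 1 - \varepsilon]$ and a truncation/limit argument. A secondary issue is the degenerate pure $2$-spin case $\xi(t) = c_2^2 t^2$, where $\xi''$ is constant and the analyticity contradiction fails; for that model the Parisi minimizer admits an explicit replica-symmetric form and the support statement can be verified by direct computation.
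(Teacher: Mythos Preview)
Your interior-gap argument ($\lbq < a < b < 1$, both endpoints in the support) is essentially the paper's: both rest on the observation that on the gap $\partial_{xx}\Phi(t,X_t)$ is a bounded martingale, so $\E[(\partial_{xx}\Phi)^2]$ is nondecreasing; the paper bounds this by its value $1/\xi''(t_2)$ at the right endpoint and integrates via Corollary~\ref{cor:ED2} to obtain $\E[(\partial_x\Phi(t,X_t))^2] < t$, contradicting Lemma~\ref{lem:stationarity}. Your packaging via $g(t)=\xi''(t)\E[(\partial_{xx}\Phi)^2]$ being nondecreasing with $g(a)=g(b)=1$ is the same computation. Your pure $2$-spin caveat is unnecessary: if $\xi''$ is constant then $g\equiv 1$ forces the martingale $\partial_{xx}\Phi(t,X_t)$ to have constant second moment, hence to be constant, hence $\partial_{xx}\Phi(t_2,\cdot)$ would be constant in $x$, which it is not. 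For the terminal gap $b=1$ you argue differently from the paper. The paper uses Lemma~\ref{lemma:DensityLB} to show directly that $1-\E[(\partial_x\Phi(t,X_t))^2]\ge C\sqrt{1-t}$, immediately contradicting $\E[(\partial_x\Phi)^2]\ge t$ near $t=1$. Your route (martingale convergence of $\partial_x\Phi(t,X_t)$ to $\mathrm{sgn}(X_1)$, hence $\int_a^1 g=1-a$, hence $g\equiv 1$) is valid but still needs Lemma~\ref{lemma:DensityLB} for $\P(X_1=0)=0$ and then requires the strict-Jensen step on $[a,1)$ where regularity is weakest; the paper's argument is shorter and more quantitative.

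There is a genuine gap in your treatment of the initial case $a=\lbq\notin\supp$, $b<1$. From monotonicity of $g$ on $[0,b]$ and $g(b)=1$ you correctly get $g\le 1$ there, which gives $(f-t)'=g-1\le 0$; combined with $f(b)=b$ this yields $f(t)\ge t$ on $[\lbq,b]$. But then the first-order inequality $\int_{\lbq}^{b}\xi''(t)(f(t)-t)\,dt\ge 0$ that you invoke from Proposition~\ref{prop:DerivativeParisi} is \emph{automatic} and carries no additional information; it does not ``pin $g\equiv 1$''. You cannot perturb by $-\mathbf{1}_{(\lbq,b)}$ to get the reverse inequality, since $\gamma_*$ vanishes there. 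So the contradiction does not materialize. (For what it is worth, the paper's written argument also implicitly uses $t_1\in\supp$ when it invokes $\E[(\partial_x\Phi(t_1,X_{t_1}))^2]=t_1$ at the left endpoint, so this boundary configuration is not explicitly dispatched there either; but your proposal asserts a specific deduction that does not hold.)
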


\begin{proof}

By Lemma \ref{lem:intervals}, $[\lbq,1)\setminus\supp(\gamma_*^{\cuL_{\lbq}})$ is a countable union of disjoint  intervals,
open in $[\lbq,1)$. 
First assume that at least one of these intervals is of the form $(t_1,t_2)$ with $\lbq\leq t_1<t_2<1$. By Lemma
\ref{lem:stationarity} and Corollary \ref{cor:DX2_supp} we know that
\begin{align}
  \E\lt[\partial_x\Phi_{\gamma_*^{\cuL{\lbq}}}(t_i,X_{t_i})^2\rt] = t_i\, ,\;\;\;\;  \xi''(t_i)\E\lt[\partial_{xx}\Phi_{\gamma_*^{\cuL{\lbq}}}(t_i,X_{t_i})^2\rt] = 1\, ,\;\;\;\; i\in\{1,2\}\, , \label{eq:Ti}\\
\E\lt[\partial_x\Phi_{\gamma_*^{\cuL{\lbq}}}(t,X_t)^2\rt] \ge t\, \;\;\;\;\; \forall t\in(t_1,t_2)\, . \label{eq:Ti2}
\end{align}
Further, for $t\in (t_1,t_2)$, $\Phi_{\gamma_*^{\cuL{\lbq}}}$ solves the PDE \[\partial_t\Phi_{\gamma_*^{\cuL{\lbq}}}(t,x)+\frac{\xi''(t)}{2}\partial_{xx}\Phi_{\gamma_*^{\cuL{\lbq}}}(t,x)=0\] which is simply the heat equation up to a time change. We therefore obtain 
\begin{align*}
\Phi_{\gamma_*^{\cuL{\lbq}}}(t,x) =\E^{Z\sim\normal(0,1)}\lt[\Phi_{\gamma_*^{\cuL{\lbq}}}(t_2,x+\sqrt{\xi'(t_2)-\xi'(t)} \, Z)\rt], \quad \forall t\in (t_1,t_2] .
\end{align*}
Differentiating this equation and using dominated convergence (recall that $\partial_{xx}\Phi_{\gamma_*^{\cuL{\lbq}}}(t_2,x)$
is bounded by Proposition~\ref{prop:phireg}), 
we obtain 
\[
    \partial_{xx}\Phi_{\gamma_*^{\cuL{\lbq}}}(t,x) = \E^{Z\sim\normal(0,1)}\lt[\partial_{xx}\Phi_{\gamma_*^{\cuL{\lbq}}}(t_2,x+\sqrt{\xi'(t_2)-\xi'(t)} \, Z)\rt]\,.
\]
Because $\de X_t = \sqrt{\xi''(t)}\, \de B_t$, we can rewrite the last equation as
\begin{align*}
\partial_{xx}\Phi_{\gamma_*^{\cuL{\lbq}}}(t,X_t) =\E\lt[\partial_{xx}\Phi_{\gamma_*^{\cuL{\lbq}}}(t_2,X_{t_2})|X_t\rt].
\end{align*}
By Jensen's inequality, 
\begin{align}
\E\lt[\partial_{xx}\Phi_{\gamma_*^{\cuL{\lbq}}}(t,X_t)^2\rt] \le \E\lt[\partial_{xx}\Phi_{\gamma_*^{\cuL{\lbq}}}(t_2,X_{t_2})^2\rt] = \frac{1}{\xi''(t_2)}\, ,
\end{align}
where in the last step we used Eq.~\eqref{eq:Ti}. Using Corollary \ref{cor:ED2} we get, for $t\in [t_1,t_2]$
\begin{align*}
\E\lt[\partial_{x}\Phi_{\gamma_*^{\cuL{\lbq}}}(t,X_t)^2\rt]& = \E\lt[\partial_{x}\Phi_{\gamma_*^{\cuL{\lbq}}}(t_1,X_{t_1})^2\rt] +
\int_{t_1}^{t}  \xi''(s)\E\lt[\partial_{xx}\Phi_{\gamma_*^{\cuL{\lbq}}}(s,X_{s})^2\rt] \,\de s\\
& \le t_1+\int_{t_1}^{t}  \frac{\xi''(s)}{\xi''(t_2)}\,\de s< t\, ,
\end{align*}
where in the last step we used the fact that $t\mapsto \xi''(t)$ is increasing. The last equation is in contradiction with Eq.~\eqref{eq:Ti2}, and therefore $[\lbq,1)\setminus\supp(\gamma_*^{\cuL_{\lbq}})$ is either empty or consists of a single interval $(t_1,1)$.

In order to complete the proof, we need to rule out the case $[\lbq,1)\setminus\supp(\gamma_*^{\cuL_{\lbq}})= (t_1,1)$.
Assume for sake of contradiction that indeed $[\lbq,1)\setminus\supp(\gamma_*^{\cuL_{\lbq}})= (t_1,1)$. For $t\in (t_1,1)$, let $r= r(t) =\xi'(1)-\xi'(t)$, and notice that 
$r(t)$ is decreasing with $r(t) = \xi''(1)(1-t)+O((1-t)^2)$ as $t\to 1$. 
By solving the Parisi PDE in the interval $(t_1,1)$, we find that for all $t\in(t_1,1)$,
\[\partial_x \Phi_{\gamma_*^{\cuL{\lbq}}}(t,x) = \E^{Z\sim\normal(0,1)}\lt[\sign\lt(Z+\frac{x}{\sqrt{r(t)}}\rt)\rt]\]
and therefore
\begin{align*}
1-\E\lt[\partial_x\Phi_{\gamma_*^{\cuL{\lbq}}}(t,X_t)^2\rt] &=\E \lt[Q\lt(\frac{X_t}{\sqrt{r(t)}}\rt)\rt] \, ,\\
Q(x) &\equiv 1- \E^{Z\sim\normal(0,1)}\lt[\sign(x+Z)\rt]^2\, .
\end{align*}
Note that $0\le Q(x)\le 1$ is continuous, with $Q(0) = 1$. Hence, there exists a numerical constant $\delta_0\in (0,1)$ such that $Q(x) \ge 1/2$ for $|x|\le \delta_0$.
Therefore, fixing $t_*\in (t_1,1)$, for any $t\in(t_*,1)$
\begin{align*}
1-\E\lt[\partial_x\Phi_{\gamma_*^{\cuL{\lbq}}}(t,X_t)^2\rt] &\ge \frac{1}{2}\prob\lt[|X_t|\le \delta_0\sqrt{r(t)}\rt] \\
& \stackrel{(a)}{\ge} \delta_0\eps(t_*,1,\gamma)\, \sqrt{r(t)}\stackrel{(b)}{\ge} C \sqrt{1-t}\, ,
\end{align*}
where $(a)$ follows by Lemma \ref{lemma:DensityLB} and $(b)$ holds for some $C=C(\gamma)>0$.
We therefore obtain $\E\lt[\partial_x\Phi_{\gamma_*^{\cuL{\lbq}}}(t,X_t)^2\rt]\le 1-C\sqrt{1-t}$, which contradicts Lemma~\ref{lem:stationarity} for $t$ close enough to $1$.
\end{proof}

In the next lemma, we show that minimization of $\Par$ over $\cuL$ subsumes minimization over $\cuL_{\lbq}$. A priori, one might expect that tuning the value of $\lbq$ could lead to many different minima.

{\color{black}

\begin{lemma}\label{lem:extra}

For $\gamma,\wh\gamma\in \cuL$, define the function $\gamma^{(\eps)}\in \cuL$ by:
\[
\gamma^{(\eps)}(t)
=
\begin{cases}
\gamma(t),\quad 0\leq t<1-\eps,
\\
\wh\gamma(t),\quad 1-\eps\leq t<1.
\end{cases}
\]
Then $\lim_{\eps\to 0} \Par(\gamma^{(\eps)})=\Par(\gamma)$.
\end{lemma}

\begin{proof}
    Using \cite[Proposition 6.1(c)]{ams20} and continuity, we have $\|\Par(\gamma)-\Par(\gamma^{(\eps)})\|\leq C\|\gamma-\gamma^{(\eps)}\|$ for a constant $C$ depending only on $\xi$. The right-hand side tends to zero as $\eps\to 0$ by the definition of $\cuL$.
\end{proof}

}

\begin{lemma}\label{lem:qnotneeded}

Suppose $\gamma_*^{\cuL_{\lbq}}$ exists. Then $\gamma_*^{\cuL}=\gamma_*^{\cuL_{\lbq}}$.

\end{lemma}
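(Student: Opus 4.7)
The plan is to use strict convexity of $\Par$ on $\cuL$ (Lemma~\ref{lem:strict}) together with a first-order stationarity check in every admissible direction: if the one-sided derivative of $\Par$ at $\gamma_*^{\cuL_{\lbq}}$ in the direction $\delta=\gamma-\gamma_*^{\cuL_{\lbq}}$ is non-negative for every $\gamma\in\cuL$, then convexity yields $\Par(\gamma)\geq\Par(\gamma_*^{\cuL_{\lbq}})$, and strict convexity promotes this to the equality $\gamma_*^{\cuL}=\gamma_*^{\cuL_{\lbq}}$ in the sense of uniqueness.

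By Lemma~\ref{lem:FullSupport}, $\supp(\gamma_*^{\cuL_{\lbq}})=[\lbq,1)$, and by Lemma~\ref{lem:stationarity} we have $\E[\partial_x\Phi_{\gamma_*^{\cuL_{\lbq}}}(t,X_t)^2]=t$ on $[\lbq,1)$. The core of the argument is the complementary inequality
\[\E[\partial_x\Phi_{\gamma_*^{\cuL_{\lbq}}}(t,X_t)^2]\geq t,\quad t\in[0,\lbq].\]
To prove this, use that on $[0,\lbq]$ the function $\gamma_*^{\cuL_{\lbq}}$ vanishes, so the Parisi PDE reduces to a time-changed heat equation; combining the heat-kernel representation $\Phi(t,x)=\E^Z[\Phi(\lbq,x+Z\sqrt{\xi'(\lbq)-\xi'(t)})]$ with the SDE $\de X_t=\sqrt{\xi''(t)}\de B_t$ on this interval yields $\partial_{xx}\Phi(t,X_t)=\E[\partial_{xx}\Phi(\lbq,X_{\lbq})\mid\mathcal F_t]$. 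Conditional Jensen's inequality and Corollary~\ref{cor:DX2_supp} (applied at $\lbq\in\supp(\gamma_*^{\cuL_{\lbq}})$) then deliver
\[\E[\partial_{xx}\Phi(s,X_s)^2]\leq\E[\partial_{xx}\Phi(\lbq,X_{\lbq})^2]=\frac{1}{\xi''(\lbq)},\quad s\in[0,\lbq].\]
Since $\xi''$ is non-decreasing (its power series coefficients are non-negative), we obtain $\xi''(s)\E[\partial_{xx}\Phi(s,X_s)^2]\leq 1$ on $[0,\lbq]$. Inserting this into Corollary~\ref{cor:ED2} and using $\E[\partial_x\Phi(\lbq,X_{\lbq})^2]=\lbq$,
\[\E[\partial_x\Phi(t,X_t)^2]=\lbq-\int_t^{\lbq}\xi''(s)\E[\partial_{xx}\Phi(s,X_s)^2]\,\de s\geq\lbq-(\lbq-t)=t.\]

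For the second step, fix $\gamma\in\cuL$ and set $\delta=\gamma-\gamma_*^{\cuL_{\lbq}}$. Since $\gamma\geq 0$ and $\gamma_*^{\cuL_{\lbq}}$ vanishes on $[0,\lbq)$, we have $\delta\geq 0$ there, so $\gamma_*^{\cuL_{\lbq}}+s\delta\in\cuL$ for all $s\in[0,1]$. To invoke Proposition~\ref{prop:DerivativeParisi} (which requires the perturbation to vanish near $t=1$), first pass to the truncation $\delta^{(n)}(t)=\delta(t)\mathbf{1}_{t\leq 1-1/n}$; convexity of $s\mapsto\Par(\gamma_*^{\cuL_{\lbq}}+s\delta^{(n)})$ combined with Proposition~\ref{prop:DerivativeParisi} yields
\[\Par\bigl(\gamma_*^{\cuL_{\lbq}}+\delta^{(n)}\bigr)-\Par(\gamma_*^{\cuL_{\lbq}})\geq\frac{1}{2}\int_0^{1-1/n}\xi''(t)\,\delta(t)\bigl(\E[\partial_x\Phi(t,X_t)^2]-t\bigr)\de t,\]
whose integrand vanishes on $[\lbq,1-1/n]$ and is non-negative on $[0,\lbq)$ by the two facts just established. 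A routine dominated-convergence argument then passes to the limit $n\to\infty$ to give $\Par(\gamma)\geq\Par(\gamma_*^{\cuL_{\lbq}})$, and Lemma~\ref{lem:strict} promotes this to $\gamma_*^{\cuL}=\gamma_*^{\cuL_{\lbq}}$.

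The main technical hurdle is the pointwise inequality $\E[\partial_x\Phi(t,X_t)^2]\geq t$ on $[0,\lbq]$: at first glance it appears to run the wrong way, since the martingale property of $\partial_x\Phi(t,X_t)$ merely delivers the upper bound $\E[\partial_x\Phi(t,X_t)^2]\leq\lbq$ on this interval. The crucial observation is that the Jensen bound $\E[\partial_{xx}\Phi(s,X_s)^2]\leq 1/\xi''(\lbq)$ is exactly tight enough, via Corollary~\ref{cor:ED2}, to produce the matching linear rate $t$. The truncation argument justifying the application of Proposition~\ref{prop:DerivativeParisi} to a general direction $\delta\in\cuL-\cuL$ is a secondary but standard technicality.
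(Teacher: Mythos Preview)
Your proof is correct and follows essentially the same approach as the paper: establish $\E[\partial_x\Phi_{\gamma_*^{\cuL_{\lbq}}}(t,X_t)^2]\geq t$ on $[0,\lbq]$ via the heat-equation/Jensen argument (using that $\partial_{xx}\Phi(t,X_t)$ is a conditional expectation of $\partial_{xx}\Phi(\lbq,X_{\lbq})$, together with Corollary~\ref{cor:DX2_supp} at $t=\lbq$ and monotonicity of $\xi''$), then feed this into the first-variation formula and invoke strict convexity. Your treatment is in fact slightly more careful than the paper's, which applies Proposition~\ref{prop:DerivativeParisi} directly without spelling out the truncation $\delta^{(n)}$ needed to meet its hypothesis that the perturbation vanish near $t=1$.
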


\begin{proof}

Let $f(t)= \E[\partial_x\Phi_{\gamma_*^{\cuL{\lbq}}}(t,X_t)^2].$ 
First we show that $f(t)\geq t$ for all $0\leq t\leq \lbq$.
Recall that $X_t$ is simply a time-changed Brownian motion on $0\leq t\leq \lbq$ while $\Phi_{\gamma_*^{\cuL{\lbq}}}$ solves the time-changed heat equation on the same time interval, therefore $\partial_{xx}\Phi_{\gamma_*^{\cuL{\lbq}}}(t,X_t)=\E^t[\partial_{xx}\Phi_{\gamma_*^{\cuL{\lbq}}}(\lbq,X_{\lbq})]$. By Jensen's inequality, it follows that for all $0\leq t\leq \lbq$, 
 \begin{align*}
    \E[\partial_{xx}\Phi_{\gamma_*^{\cuL{\lbq}}}(t,X_t)^2]
    &\leq 
    \E[\partial_{xx}\Phi_{\gamma_*^{\cuL{\lbq}}}(\lbq,X_{\lbq})^2]
    \\
    &= 
    \frac{1}{\xi''(\lbq)}
    \\
    &\leq 
    \frac{1}{\xi''(t)}. 
\end{align*}
In the last line we used that $\xi''$ is increasing as $\xi$ is a power series with non-negative coefficients. Next, from Lemma~\ref{lem:stationarity} and Lemma~\ref{lem:FullSupport} it follows that $f(\lbq)=\lbq$. In light of Corollary~\ref{cor:ED2}, we showed just above that $f'(t)\leq 1$ for $t\leq \lbq$. It now follows that $f(t)\geq t$ for all $0\leq t\leq \lbq$.

{\color{black}
Next, Proposition~\ref{prop:DerivativeParisi} combined with Corollary~\ref{cor:ED2} and Lemma~\ref{lem:stationarity} imply that $\frac{\de}{\de s}\Par\big((1-s)\gamma_*^{\cuL_{\lbq}}+s\gamma\big)|_{s=0+}\geq 0$ for any $\gamma\in\cuL$ 
agreeing with $\gamma_*^{\cuL_{\lbq}}$ on $[1-\eps,1)$ for some $\eps>0$. 
Using convexity of $\Par$ as guaranteed by Lemma~\ref{lem:strict}, this implies that 
\[
    \Par(\gamma_*^{\cuL_{\lbq}})\leq \Par(\gamma)
\]
for any such $\gamma$. Assuming for sake of contradiction that some $\gamma_*\in\cuL$ satisfies $\Par(\gamma_*)<\Par(\gamma_*^{\cuL_{\lbq}})$, define $\gamma^{(\eps)}\in\cuL$ by
\[
\gamma^{(\eps)}(t)
=
\begin{cases}
\gamma_*(t),\quad 0\leq t<1-\eps,
\\
\gamma_*^{\cuL_{\lbq}},\quad 1-\eps\leq t<1.
\end{cases}
\]
Then Lemma~\ref{lem:extra} implies that $\lim_{\eps\to 0}\Par(\gamma^{(\eps)})=\Par(\gamma_*)<\Par(\gamma_*^{\cuL_{\lbq}})$. This contradicts the above for small enough $\eps$, completing the proof.
}
\end{proof}

We now restate and prove Lemmas~\ref{lem:identity} and \ref{lem:optimizable}.

\lemidentity*

\begin{proof} Lemma~\ref{lem:strict} immediately implies uniqueness of minimizers. The second statement immediately implies the third, while Lemma~\ref{lem:qnotneeded} provides the converse result. To show that the first statement implies the third, we observe that Proposition~\ref{prop:DerivativeParisi} immediately yields
\[\frac{\de}{\de s}\Par((1-s)\gamma_*+s\gamma)|_{s=0+}=0\] for any $\gamma\in\cuL_{\lbq}$ when $\gamma_*$ is optimizable; this implies the third statement by again invoking Lemma~\ref{lem:strict}. It only remains to show that if $\Par(\gamma_*)=\inf_{\gamma\in\cuL}\Par(\gamma)$, then $\gamma_*$ is $\lbq$-optimizable, which follows from Lemmas~\ref{lem:stationarity} and~\ref{lem:FullSupport}. 
\end{proof}

\optimizable* 

\begin{proof}
    Fix $\lbq<t_1<t_2<1$ and define $\delta(t) = [\gamma_*^{\cuU}(t_1)-\gamma_*^{\cuU}(t)]\ind_{(t_1,t_2)}(t)$. It is easy to see that this satisfies the assumptions of Proposition~\ref{prop:DerivativeParisi} with $s_0=1$. Letting $\gamma^s=\gamma_*^{\cuU}+s\delta$,
\begin{align*}
  \left. \frac{\de \Par}{\de s}(\gamma^s)\rt|_{s=0+} = -\frac{1}{2}\int_{t_1}^{t_2} \xi''(t) \big(\gamma_*^{\cuU}(t)-\gamma_*^{\cuU}(t_1))\, 
  \big(\E\lt[\partial_x\Phi_{\gamma_*^{\cuU}}(t,X_t)^2\rt]\, -\, t\big)\, \de t.
\end{align*}
On the other hand, $\gamma^s\in \cuU$ for $s\in [0,1]$ (since $\gamma_*^{\cuU}$ is strictly increasing on $[\lbq,1)$), so
\begin{align*}
\int_{t_1}^{t_2} \xi''(t) \big(\gamma_*^{\cuU}(t)-\gamma_*^{\cuU}(t_1))\, 
  \big(\E[\partial_x\Phi_{\gamma_*^{\cuU}}(t,X_t)^2]\, -\, t\big)\, \de t \le 0\, .
\end{align*}
for all $t_1<t_2$. Since $\gamma_*^{\cuU}(t)-\gamma_*^{\cuU}(t_1)>0$ strictly for all $t>t_1$, this implies \[\E[\partial_x\Phi_{\gamma_*^{\cuU}}(t,X_t)^2] \le t\] for almost every $t$, and therefore for every $t$. The inequality \[\E[\partial_x\Phi_{\gamma_*^{\cuU}}(t,X_t)^2] \ge t\] is proved in the same way using $\delta(t) = [\gamma_*^{\cuU}(t_2)-\gamma_*^{\cuU}(t)]\ind_{(t_1,t_2)}(t)$.
  \end{proof}

Finally we turn to Lemma~\ref{lem:identity2}.


\begin{proof}[Proof of Lemma~\ref{lem:identity2}] 
First, \eqref{eq:id2} is clear given Corollary~\ref{cor:ED2}. To establish \eqref{eq:id3}, we first show that 
\begin{equation}
\label{eq:DxxPhi}
    \de(\partial_{xx}\Phi_{\gamma_*}(t,X_t))
    = -\xi''(t)\gamma_*(t)(\partial_{xx}\Phi_{\gamma_*}(t,X_t))^2\de t
    + \partial_{xxx}\Phi_{\gamma_*}(t,X_t)\sqrt{\xi''(t)}\de B_t.
\end{equation}
Indeed \eqref{eq:DxxPhi} follows by using Ito's formula to derive
\begin{align*}
    \de(\partial_{xx}\Phi_{\gamma_*}(t,X_t))
    &= \lt(
        \partial_{txx}\Phi_{\gamma_*}(t,X_t)
        + \partial_{x}\Phi_{\gamma_*}(t,X_t)\partial_{xxx}\Phi_{\gamma_*}(t,X_t)\xi''(t)\gamma_*(t)
        + \frac{\xi''(t)\partial_{xxxx}\Phi_{\gamma_*}(t,X_t)}{2}
    \rt)\de t \\
    &\quad +\partial_{xxx}\Phi_{\gamma_*}(t,X_t)\sqrt{\xi''(t)}\de B_t
\end{align*}
and taking the second derivative with respect to $x$ of the Parisi PDE to obtain
\begin{align*}
    0 &= \partial_{xx}\lt(
        \partial_t\Phi_{\gamma_*}(t,x)
        + \frac{\xi''(t)}{2}\lt(
            \partial_{xx}\Phi_{\gamma_*}(t,x)
            + \gamma_*(t)(\partial_x\Phi_{\gamma_*}(t,x))^2
        \rt)
    \rt) \\
    &= \partial_{txx}\Phi_{\gamma_*}(t,x)
        + \frac{\xi''(t)\partial_{xxxx}\Phi_{\gamma_*}(t,x)}{2}
        + \xi''(t)\gamma_*(t)\lt(
            (\partial_{xx}\Phi_{\gamma_*}(t,x))^2
            + \partial_{x}\Phi_{\gamma_*}(t,x)\partial_{xxx}\Phi_{\gamma_*}(t,x)
        \rt).
\end{align*}
In particular \eqref{eq:DxxPhi} implies that for all $t\in [0,1)$,
\begin{align*}
    \frac{\de}{\de t}\mathbb E[\partial_{xx}\Phi_{\gamma_*}(t,x)]
    &= -\xi''(t)\gamma_*(t)\mathbb E[(\partial_{xx}\Phi_{\gamma_*}(t,x))^2] \\
    &= -\gamma_*(t).
\end{align*}
Therefore to show \eqref{eq:id3} it suffices to show 
{\color{black}
$\lim_{t\to 1}\mathbb E[\partial_{xx}\Phi_{\gamma_*}(t,X_t)]\geq 0$, but this is clear by convexity of $\Phi_{\gamma_*}(t,\cdot)$}. 
\end{proof}

\section{Incremental AMP Proofs}
\label{ap:iamp}

We will prove Lemma~\ref{lem:BMlimit2} which generalizes Lemma~\ref{lem:BMlimit} to the setting of branching AMP and describes the limiting Gaussian processes $N^{\delta}_{\ell,a},Z^{\delta}_{\ell,a}.$ We recall the setup of Section~\ref{sec:branch} and in particular continue to use the value $q_B\in (\lbq,1)$ to define the time $\ell^{\delta}_{q_B}$ at which $Z_{\ell^{\delta}_{q_B},1}^\delta= Z_{\ell^{\delta}_{q_B},2}^\delta$ last holds. For the branching setting we slightly generalize the filtration \eqref{eq:Felldelta} to
\[
    \mathcal F_{\ell}^{\delta}=\sigma\lt((Z^{\delta}_{k,a},N^{\delta}_{k,a})_{0\leq k\leq \ell,a\in \{1,2\}}\rt).
\]

Crucially note that we restrict here to $k\geq 0$, i.e. we do not include the preparatory iterates with negative index. We remark that if we consider all the IAMP iterates $(Z_{\ell,a}^{\delta},N_{\ell,a}^{\delta})$ together in the linear order given by $(\ell,a)\to 2\ell+a$, then these are iterates of a standard AMP algorithm since each iterate depends only on the previous ones. 
{\color{black}
(This is because in \eqref{eq:onsagerdef} the last expectation is zero when $f_{\ell}$ does not depend on $Z^j$, which is the case if $f_{\ell}$ and $Z^j$ correspond to different values of $a$ after the two branches separate.)
}
Moreover it is easy to see that the Onsager correction terms are not affected by this rewriting. Therefore we may continue to use state evolution in the natural way even though we do not think of the iterates as actually being totally ordered.

\begin{lemma}\label{lem:independent}

In branching IAMP, $\mathcal F_{\ell}^{\delta}$ is jointly independent of the iterates $(Z^{-j})_{J_{\ell}< j\leq K}$ for \[J_{\ell}\equiv \max\lt(\{\ell\}\cup \{k_{i,a}+\ell-\ell^{\delta}_{q_i}:\ell^{\delta}_{q_i}\leq \ell,a\in \{1,2\}\}\rt).\]

\end{lemma}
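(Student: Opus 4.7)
The plan is to prove the claim by induction on $\ell$, with the engine of the induction being the state-evolution covariance recursion. By Proposition~\ref{prop:init}, the preparatory iterates $(Z^{-j})_{1\le j\le K-1}$ are i.i.d.\ standard Gaussians jointly independent of $Z^{-K}\sim\mathcal L_h$, and state evolution (Proposition~\ref{prop:AMP}) extends this to a joint Gaussian family, conditionally on $Z^{-K}$, together with the IAMP iterates. The desired joint independence then reduces to verifying vanishing of conditional covariances, which combined with the unconditional independence of $Z^{-K}$ from $(Z^{-j})_{1\le j\le K-1}$ assembles the full claim.

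The main tool is the covariance identity $\E[Z^{m+1}Z^{n+1}]=\xi'(\E[f_m f_n])$ coming from state evolution. For each identity-type preparatory non-linearity $f_{-j}(z^{-K},\ldots,z^{-j})=z^{-j}$, Gaussian integration by parts applied conditionally on $Z^{-K}$ gives $\E[f_m Z^{-j}]=\sum_k\E[\partial_{Z^k}f_m]\cdot\E[Z^k Z^{-j}]$, which vanishes whenever $Z^{-j}$ is uncorrelated with every iterate entering $f_m$. This yields the crucial propagation rule: a single AMP step shifts the ``independence threshold'' by one, so that if $Z^{-j}$ is independent of all inputs of $f_m$ then $Z^{-(j-1)}$ is independent of $Z^{m+1}$.

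For the inductive step one handles two cases. In the non-branching case one has $J_{\ell+1}=J_\ell+1$ (every contribution to the maximum increases by one), the IAMP non-linearity $f_\ell=\bn^\ell$ lies in $\mathcal F_\ell^\delta$ and by the inductive hypothesis is independent of $Z^{-j}$ for $j>J_\ell$, and the propagation rule above then yields independence of $Z^\delta_{\ell+1,a}$ from $Z^{-j}$ for $j>J_{\ell+1}$. In the branching case $\ell+1=\ell^\delta_{q_i}$, the update $\bn^{\ell+1,a}=\bn^{\ell,a}+\sqrt\delta\,\bg^{(q_i,a)}$ explicitly injects $Z^{-k_{i,a}}$ into $N^\delta_{\ell+1,a}$, but by definition the new term $k_{i,a}+(\ell+1)-\ell^\delta_{q_i}=k_{i,a}$ enters the maximum defining $J_{\ell+1}$, so $Z^{-k_{i,a}}$ is removed from the independent range exactly as required.

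The main obstacle I expect is the bookkeeping underlying the definition of $J_\ell$. One must verify that the ordering constraint $k_{i,a'}-\ell^\delta_{q_i}>k_{j,a}-\ell^\delta_{q_j}>0$ for $i>j$ is precisely what ensures the shifted indices $k_{j,a}+\ell-\ell^\delta_{q_j}$ from all earlier branchings remain bounded above by the contribution of any new branching, so that $J_\ell$ faithfully tracks the union of preparatory iterates that have leaked into $\mathcal F_\ell^\delta$ through injections together with subsequent AMP propagation, and in particular so that no previously-injected $Z^{-k_{j,a}}$ ever re-enters the claimed ``independent'' range at a later step. Once this monotonicity is in place, the covariance-propagation argument above closes the induction.
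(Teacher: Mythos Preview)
Your proposal is correct and follows essentially the same approach as the paper: induction on $\ell$, reduction to vanishing covariances via the joint Gaussian structure, the state-evolution identity $\E[Z^{\delta}_{\ell,a}Z^{-j}]=\xi'(\E[N^{\delta}_{\ell-1,a}Z^{-j-1}])$ together with $\xi'(0)=0$, and closure via the monotonicity $J_{\ell}-1\ge J_{\ell-1}$. The paper is terser---it handles branching and non-branching uniformly through that single inequality and invokes independence directly rather than Gaussian integration by parts---but the substance of the argument is the same.
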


\begin{proof} We proceed by induction over $\ell$, the base case $\ell=0$ following from Proposition~\ref{prop:init}. Because the random variables $Z^{\ell}_{k,a}$ form a Gaussian process it suffices to verify that 
\[
    \E\lt[Z^{\delta}_{\ell,a}Z^{-j}\rt]=0
\]
holds whenever $j>J_{\ell}$. 
By state evolution,
\[
    \E\lt[Z^{\delta}_{\ell,a}Z^{-j}\rt]=\xi'\lt(\lt[N^{\delta}_{\ell-1,a}Z^{-j-1}\rt]\rt).
\]
By definition $N^{\delta}_{\ell-1,a}$ is $\mathcal F^{\delta}_{\ell-1}$-measurable. Since $\xi'(0)=0$ it suffices to show that $\mathcal F^{\delta}_{\ell-1}$ is independent of $Z^{-j-1}$. By the inductive hypothesis, this holds if $j+1>J_{\ell-1}$. This in turn follows from the easy-to-verify fact that $J_{\ell}-1\geq J_{\ell-1}$, completing the proof.
\end{proof}

\begin{corollary}
\label{cor:independent}
Let $G^{\delta}_{q_j,a}$ be the state evolution limit of $\bg^{(q_j,a)}$ for each $(j,a)\in [m]\times [2]$. Then the law of $(G^{\delta}_{q_i,1},G^{\delta}_{q_i,2})$ conditioned on $\mathcal F^{\delta}_{\ell^{\delta}_{q_i}-1}$ is $\normal(0,I_2)$.
\end{corollary}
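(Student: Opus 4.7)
The plan is to derive the corollary as an essentially immediate consequence of Lemma~\ref{lem:independent}, combined with the state-evolution description of the AMP initialization from Proposition~\ref{prop:init} and the ordering constraints on the indices $k_{i,a}$ imposed in Section~\ref{sec:branch}. The only real work is the bookkeeping check that
\[
  J_{\ell^{\delta}_{q_i}-1} < \min(k_{i,1}, k_{i,2}),
\]
where $J_{\ell}$ is the integer defined in Lemma~\ref{lem:independent}.

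First I would spell out the definition of $J_{\ell^{\delta}_{q_i}-1}$. Only indices $j$ with $\ell^{\delta}_{q_j}\leq \ell^{\delta}_{q_i}-1$ contribute to the right-hand set of the maximum; since the map $j\mapsto \ell^{\delta}_{q_j}$ is strictly increasing, this forces $j<i$. The standing inequalities $k_{i,a}-\ell^{\delta}_{q_i} > k_{j,a'}-\ell^{\delta}_{q_j} > 0$ (valid for all $i>j$ and $a,a'\in\{1,2\}$) then yield $k_{i,a} > \ell^{\delta}_{q_i} > \ell^{\delta}_{q_i}-1$ and $k_{i,a} > k_{j,a'}+\ell^{\delta}_{q_i}-\ell^{\delta}_{q_j}-1$ for each such $j$. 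Taking the maximum over all these bounds gives $J_{\ell^{\delta}_{q_i}-1} < k_{i,a}$ for both $a\in\{1,2\}$.

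Next, applying Lemma~\ref{lem:independent} at $\ell = \ell^{\delta}_{q_i}-1$ shows that $\mathcal F^{\delta}_{\ell^{\delta}_{q_i}-1}$ is jointly independent of the pair $(Z^{-k_{i,1}}, Z^{-k_{i,2}}) = (G^{\delta}_{q_i,1}, G^{\delta}_{q_i,2})$. Finally, Proposition~\ref{prop:init} identifies $Z^{-k_{i,1}}$ and $Z^{-k_{i,2}}$ as independent standard Gaussians in the state-evolution limit (the construction in Section~\ref{sec:branch} ensures $1\leq k_{i,a}\leq K-1$ via the constraint $\max_{i,a}(k_{i,a})+\ubl+1<K$, and for the nontrivial case $i\geq B$ one has $k_{i,1}\neq k_{i,2}$). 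Combining this with the independence from $\mathcal F^{\delta}_{\ell^{\delta}_{q_i}-1}$ established in the previous step gives the claimed conditional law $\normal(0, I_2)$.

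No substantive obstacle arises here—once Lemma~\ref{lem:independent} is in hand, the corollary is a routine index-bookkeeping argument, and the ordering constraints on the $k_{i,a}$ were essentially designed to make this check go through.
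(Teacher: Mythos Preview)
Your proposal is correct and follows essentially the same approach as the paper's own proof: both verify the inequality $k_{i,a}>J_{\ell^{\delta}_{q_i}-1}$ from the ordering constraints on the $k_{i,a}$, then invoke Lemma~\ref{lem:independent} together with Proposition~\ref{prop:init}. Your version simply unpacks the bookkeeping for $J_{\ell^{\delta}_{q_i}-1}$ in more detail than the paper does.
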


\begin{proof}
    Since $k_{i,1}\neq k_{i,2}$ it follows from Proposition~\ref{prop:init} that $(G^{\delta}_{q_i,1},G^{\delta}_{q_i,2})\sim \normal(0,I_2)$ holds as an unconditional law. Since we chose the values $k_{i,a}$ such that $k_{i,a}-\ell^{\delta}_{q_i}>k_{j,a'}-\ell^{\delta}_{q_j}>0$ for any $i>j$ and $a,a'\in \{1,2\}$, it follows that $k_{i,a}>J_{\ell^{\delta}_{q_i}-1}$. 
    Applying Lemma~\ref{lem:independent} now concludes the proof.
\end{proof}

\begin{restatable}{lemma}{BMlimit2}
\label{lem:BMlimit2}
The sequences $(Z_{\ul,a}^{\delta},Z_{\ul+1,a}^{\delta},\dots)$ and $(N_{\ul,a}^{\delta},N_{\ul+1,a}^{\delta},\dots)$ satisfy for $\ell\geq \ul$:
\begin{align} 
\label{BM1.2}
    \mathbb E[(Z^{\delta}_{\ell+1,a}-Z^{\delta}_{\ell,a})Z_{j,a}^{\delta}]
    &=0,
    \quad \text{for all }\ul+1\leq j\leq \ell
    \\
\label{BM2.2}
    \mathbb E[(Z^{\delta}_{\ell+1,a}-Z^{\delta}_{\ell,a})^2|\mathcal F_{\ell}^{\delta}]
    &=
    \xi'(q^{\delta}_{\ell+1})-\xi'(q^{\delta}_{\ell})
    \\
\label{BM2.52}
    \mathbb E[(Z^{\delta}_{\ell+1,1}-Z^{\delta}_{\ell,1})(Z^{\delta}_{\ell+1,2}-Z^{\delta}_{\ell,2})|\mathcal F_{\ell}^{\delta}]
    &=
    \lt(\xi'(q^{\delta}_{\ell+1})-\xi'(q^{\delta}_{\ell})\rt)\cdot 1_{\ell<\ell^{\delta}_{q_B}}
    \\
\label{BM3.2}
    \mathbb E[(Z^{\delta}_{\ell,a})^2]
    &=
    \xi'(q_{\ell}^{\delta})
    \\
\label{BM4.2}
    \mathbb E[(N^{\delta}_{\ell+1,a}-N^{\delta}_{\ell,a})|\mathcal F_{\ell}^{\delta}]
    &=0
    \\
\label{BM5.2}\mathbb E[(N^{\delta}_{\ell+1,a}-N^{\delta}_{\ell,a})^2|\mathcal F_{\ell}^{\delta}]&=\delta
    \\
\label{BM5.52}
    \mathbb E[(N^{\delta}_{\ell+1,1}-N^{\delta}_{\ell,1})(N^{\delta}_{\ell+1,2}-N^{\delta}_{\ell,2})|\mathcal F_{\ell}^{\delta}]
    &=
    \delta\cdot 1_{\ell<\ell^{\delta}_{q_B}}
    \\
\label{BM6.2}
    \mathbb E[(N^{\delta}_{\ell,a})^2]
    &=
    q_{\ell+1}^{\delta}.
\end{align}

\end{restatable}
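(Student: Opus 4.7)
The proof will proceed by joint induction on $\ell \ge \ul$ for all eight identities simultaneously, combining state evolution (Proposition~\ref{prop:AMP}) with Gaussian conditioning. The base case $\ell=\ul$ is provided by Phase~1: from Lemma~\ref{lem:RSconverge}, $\mathbb E[(M^{\ul})^2]=\lbq$, and the rescaling $\bn^{\ul}=(1+\varepsilon_0)\bm^{\ul}$ is chosen so that $\mathbb E[(N^\delta_{\ul,a})^2]=\lbq(1+\varepsilon_0)^2=q^\delta_{\ul+1}$; the variance $\xi'(\lbq)=\xi'(q^\delta_{\ul})$ of $Z^\delta_{\ul,a}=W^\delta_{\ul}$ and the cross-branch equality $\bn^{\ul,1}=\bn^{\ul,2}$ finish the initialization.

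For the inductive step on the $Z$-identities, since the non-linearity feeding step $\ell+1$ of branch $a$ is $\bn^{\ell,a}$, state evolution yields
\[
\mathbb E[Z^\delta_{\ell+1,a}Z^\delta_{j+1,a'}]=\xi'\bigl(\mathbb E[N^\delta_{\ell,a}N^\delta_{j,a'}]\bigr).
\]
Combined with the inductive $N$-covariances $\mathbb E[N^\delta_{\ell,a}N^\delta_{j,a}]=q^\delta_{\min(\ell,j)+1}$ and $\mathbb E[N^\delta_{\ell,1}N^\delta_{\ell,2}]=q^\delta_{\min(\ell,\ell^\delta_{q_B})+1}$, this yields \eqref{BM3.2}, the orthogonality \eqref{BM1.2} via the telescoping identity $\mathbb E[Z^\delta_{\ell+1,a}Z^\delta_{j,a}]=\xi'(q^\delta_j)=\mathbb E[(Z^\delta_{j,a})^2]$, and the cross-branch covariance in \eqref{BM2.52} by a parallel expansion of $\mathbb E[(Z^\delta_{\ell+1,1}-Z^\delta_{\ell,1})(Z^\delta_{\ell+1,2}-Z^\delta_{\ell,2})]$. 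To lift these unconditional moment identities to the conditional statements \eqref{BM2.2} and \eqref{BM2.52}, I will use that $\mathcal F^\delta_\ell$ is generated by a jointly Gaussian family consisting of the prior $Z$-iterates, the external field $h$, and the reset Gaussians $G^\delta_{q_i,a'}$ for $\ell^\delta_{q_i}\le \ell$ (Lemma~\ref{lem:independent} and Corollary~\ref{cor:independent} control how these sit relative to each other). The $L^2$-orthogonality of $Z^\delta_{\ell+1,a}-Z^\delta_{\ell,a}$ to every generator then makes the conditional (co)variance equal to the unconditional one.

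The $N$-identities split into regular and reset steps. At a regular step, $N^\delta_{\ell+1,a}-N^\delta_{\ell,a}=u^\delta_\ell(X^\delta_{\ell,a})(Z^\delta_{\ell+1,a}-Z^\delta_{\ell,a})$ with $X^\delta_{\ell,a}$ being $\mathcal F^\delta_\ell$-measurable; substituting the conditional $Z$-identities and noting that the denominator of $u^\delta_\ell$ is precisely $(\xi'(q^\delta_{\ell+1})-\xi'(q^\delta_\ell))\,\mathbb E[u(q^\delta_\ell,X^\delta_\ell)^2]$ yields \eqref{BM4.2}--\eqref{BM5.52}. At a reset step $\ell+1=\ell^\delta_{q_i}$ the increment is instead $\sqrt{\delta}\,\bg^{(q_i,a)}$ with state-evolution limit $\sqrt{\delta}\,G^\delta_{q_i,a}$; Corollary~\ref{cor:independent} asserts that $(G^\delta_{q_i,1},G^\delta_{q_i,2})$ has standard Gaussian marginals independent of $\mathcal F^\delta_{\ell^\delta_{q_i}-1}$, while the construction forces $G^\delta_{q_i,1}=G^\delta_{q_i,2}$ when $i<B$ and independence when $i\ge B$. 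This gives conditional variance $\delta$ and cross-branch covariance $\delta\cdot \mathbf{1}_{i<B}$; telescoping then produces \eqref{BM6.2}.

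The main obstacle is verifying the cross-branch $N$-covariance invariant $\mathbb E[N^\delta_{\ell,1}N^\delta_{\ell,2}]=q^\delta_{\min(\ell,\ell^\delta_{q_B})+1}$, which closes the induction. Before time $\ell^\delta_{q_B}$, the two branches share all $Z$-increments and reset Gaussians, so the induction contributes growth by $\delta$ per step. At time $\ell^\delta_{q_B}$ itself, the reset uses $i=B$, so $G^\delta_{q_B,1}$ and $G^\delta_{q_B,2}$ are independent and no cross-covariance is added. After $\ell^\delta_{q_B}$, \eqref{BM2.52} makes subsequent $Z$-increments $L^2$-orthogonal across branches, so multiplying by $u^\delta_\ell(X^\delta_{\ell,a})$ contributes zero cross-branch covariance, and the $N$ cross-covariance is frozen. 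Feeding this invariant back into state evolution for the $Z$-iterates in the next step closes the joint induction.
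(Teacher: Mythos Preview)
Your approach is correct and essentially the same as the paper's: induction on $\ell$ via state evolution for the $Z$-covariances, Gaussianity to lift unconditional moments to conditional ones, the normalization built into $u^\delta_\ell$ for the $N$-increment variance, and Corollary~\ref{cor:independent} for the reset steps; the paper merely separates the six single-branch identities from \eqref{BM2.52} and \eqref{BM5.52} rather than carrying the cross-branch invariant $\mathbb E[N^\delta_{\ell,1}N^\delta_{\ell,2}]$ through a single joint induction as you do. One minor correction: $h\sim\mathcal L_h$ need not be Gaussian, so $\mathcal F^\delta_\ell$ is not generated by a jointly Gaussian family---say instead that $h$ is \emph{independent} of the jointly Gaussian family $(Z^\delta_{j,a'},G^\delta_{q_i,a'})$ generating the rest of $\mathcal F^\delta_\ell$, which is what actually yields the conditional identities.
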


\begin{proof}
We recall that $(Z^{\delta}_{\ell,a})_{\ell\geq\ul,a\in\{1,2\}}$ is a Gaussian process, which means we can ignore the conditioning on $\mathcal F_{\ell}^{\delta}$ in proving Equation~\eqref{BM2.2}. First we check that Equations~\eqref{BM3.2} and \eqref{BM6.2} hold for $\ell=\ul$. For Equation~\eqref{BM6.2}, {\color{black}by definition of $\delta$:
\[
    \mathbb E[(N_{\ul,a}^{\delta})^2]=(1+\eps_0)^2\mathbb E[(M^{\ul})^2]=(1+\eps_0)^2\lbq =\lbq+\delta=q_1^{\delta}.
\]
}
For Equation~\eqref{BM3.2},
\[
    \mathbb E[(Z_{\ul,a}^{\delta})^2]=\xi'\lt(\mathbb E[(M^{\ul-1})^2]\rt)=\xi'({\color{black}\lbq}).
\]
Observe now that if Equations~\eqref{BM1.2}, \eqref{BM2.2}, \eqref{BM4.2}, \eqref{BM5.2} hold for $\ul\leq \ell\leq k$ then so do Equations~\eqref{BM3.2} and \eqref{BM6.2}, as
\[
    \mathbb E[(N^{\delta}_{\ell+1,a})^2]
    =
    \mathbb E[(N^{\delta}_{\ell+1,a}-N^{\delta}_{\ell,a})^2] + 2\cdot \mathbb E[(N^{\delta}_{\ell+1,a}-N^{\delta}_{\ell,a})N_{\ell,a}^{\delta}]+\mathbb E[(N^{\delta}_{\ell,a})^2]
\]
and similarly for $\mathbb E[(Z^{\delta}_{\ell+1,a})^2]$. Therefore to show the six identities \eqref{BM1.2}, \eqref{BM2.2}, \eqref{BM4.2}, \eqref{BM5.2}, \eqref{BM3.2} and \eqref{BM6.2} it suffices to check the base case $\ell=\ul$ for Equations~\eqref{BM1.2}, \eqref{BM2.2}, \eqref{BM4.2}, \eqref{BM5.2} and to perform an inductive step to show these four identities for $\ell=k+1$, assuming all six of these equations as inductive hypotheses for $\ell\leq k$. We turn to doing this, and finally show Equations~\eqref{BM2.52}, \ref{BM5.52} at the end.

\paragraph{Base Case for Equations~\eqref{BM1.2}, \eqref{BM2.2}, \eqref{BM4.2}, \eqref{BM5.2}.}

Note that here, none of the perturbations $\bg^{(q_i,a)}$ appear yet. We begin with Equation~\eqref{BM1.2}:
\begin{align*}
    \E\lt[\lt(Z^{\delta}_{\ul+1,a}-Z^{\delta}_{\ul,a}\rt)Z^{\delta}_{\ul,a}\rt] 
    &=  
    \xi'\lt(\mathbb E\lt[N^{\delta}_{\ul,a} M^{\ul-1}\rt]\rt)- \xi'\lt(\E\lt[M^{\ul-1} M^{\ul-1}\rt]\rt) 
    \\
    &=  \xi'\lt((1+\eps_0)  \E\lt[ M^{\ul} M^{\ul-1}\rt]\rt)- \xi'({\color{black}\lbq})
    \\
    &=  \xi'((1+\eps_0)\phi(a_{\ul-1})) - \xi'({\color{black}\lbq}) 
    \\
    &=\xi'({\color{black}\lbq})-\xi'({\color{black}\lbq})
    \\
    &= 0.
\end{align*}
This means $\mathbb E[Z^{\delta}_{\ul+1,a}|Z^{\delta}_{\ul,a}]=Z^{\delta}_{\ul,a}$. Hence
\[
    \E\lt[\lt(Z^{\delta}_{\ul+2,a}-Z^{\delta}_{\ul+1,a}\rt)Z^{\delta}_{\ul+1,a}\rt]
    =
    \xi'\lt(\mathbb E[N^{\delta}_{\ul+1}N^{\delta}_{\ul,a}]\rt)-\xi'\lt(\mathbb E[N^{\delta}_{\ul,a}N^{\delta}_{\ul,a}]\rt).
\]
To see that the above expression vanishes, it suffices to show that
\[
    \mathbb E[(N^{\delta}_{\ul+1,a}-N^{\delta}_{\ul,a})N^{\delta}_{\ul,a}]=0.
\]
This follows since we just showed $\mathbb E[Z^{\delta}_{\ul+1,a}|Z^{\delta}_{\ul,a}]=Z^{\delta}_{\ul,a}$ and we have
\[
    \mathbb E[(N^{\delta}_{\ul+1,a}-N^{\delta}_{\ul,a})N^{\delta}_{\ul,a}]
    =
    \E[u^{\delta}_{\ul,a}(X^{\delta}_{\ul,a})(Z^{\delta}_{\ul+1,a}-Z^{\delta}_{\ul,a})]=\E[u^{\delta}_{\ul,a}(Z^{\delta}_{\ul,a})(Z^{\delta}_{\ul+1,a}-Z^{\delta}_{\ul,a})]
\]

Next we verify the base case for Equation~\eqref{BM2.2}. Using the base case of Equation~\eqref{BM1.2} in the first step we compute:
\begin{align*}
    \E\lt[\lt(Z^{\delta}_{\ul+1,a}-Z^{\delta}_{\ul,a}\rt)^2\rt] &= \E\lt[\lt(Z^{\delta}_{\ul+1,a}\rt)^2\rt] -\E\lt[\lt(Z^{\delta}_{\ul,a}\rt)^2\rt]
    \\
    &= 
    \xi'\lt(\E\lt[(N^{\delta}_{\ul,a})^2\rt]\rt) - \xi'({\color{black}\lbq})
    \\
    &= 
    \xi'\big((1+\eps_0)^2 {\color{black}\lbq}\big) -\xi'({\color{black}\lbq})
    \\
    &=
    \xi'({\color{black}\lbq}+\delta)-\xi'({\color{black}\lbq})
    \\
    &=
    \xi'(q^{\delta}_{\ul+1})-\xi'({\color{black}\lbq}).
\end{align*}
Continuing, we verify the base case for Equation~\eqref{BM4.2}. First note that
\begin{align*}
    \E\lt[\lt(N^{\delta}_{\ul+1,a}-N^{\delta}_{\ul,a}\rt)|\mathcal F^{\delta}_{\ul}\rt]
    &=
    \mathbb E\lt[
        u^{\delta}_{\ul}
        (X^{\delta}_{\ul,a})
        (Z^{\delta}_{\ul+1,a}-Z^{\delta}_{\ul,a})
        \,|\,
        \mathcal F^{\delta}_{\ul}
    \rt]
    \\
    &=0.
\end{align*}
The last line holds because $X^{\delta}_{\ul,a}$ is $\mathcal F_{\ul}^{\delta}$-measurable and $\mathbb E[Z^{\delta}_{\ul+1,a}-Z^{\delta}_{\ul,a}|\mathcal F_{\ul}^{\delta}]=0$ as deduced above. Finally for Equation~\eqref{BM5.2} using the martingale property again we obtain:
\begin{align*}
    \E\lt[\lt(N^{\delta}_{\ul+1,a}-N^{\delta}_{\ul,a}\rt)^2\rt]
    &=
    \E\lt[(u^{\delta}_{\ul}(X^{\delta}_{\ul,a}))^2(Z^{\delta}_{\ul+1,a}-Z^{\delta}_{\ul,a})^2\rt]
    \\
    &=
    \E\lt[\frac{\delta}{\xi'(q^{\delta}_{\ul+1,a})-\xi'(q^{\delta}_{\ul})}\lt(\xi'(q^{\delta}_{\ul+1})-\xi'(q^{\delta}_{\ul})\rt)\rt]   
    \\
    &=\delta.
\end{align*}
Here the second line follows from the definition of $u^{\delta}_{\ell}$, and we can multiply the two expectations because $\mathbb E[(Z^{\delta}_{\ul+1,a}-Z^{\delta}_{\ul,a})^2|\mathcal F^{\delta}_{\ul,a}]$ is constant while the other term is $\mathcal F^{\delta}_{\ul,a}$ measurable.

\paragraph{Inductive step}

We now induct, assuming all $6$ identities \eqref{BM1.2}, \eqref{BM2.2}, \eqref{BM4.2}, \eqref{BM5.2}, \eqref{BM3.2} and \eqref{BM6.2}  up to $\ell$ and showing Equations~\eqref{BM1.2}, \eqref{BM2.2}, \eqref{BM4.2}, \eqref{BM5.2} for $\ell+1$. We begin with Equation~\eqref{BM1.2}. Let $\ul+1\le j \le \ell$. State evolution implies  
\begin{align*}
\E\lt[\lt(Z^{\delta}_{\ell+1,a}-Z^{\delta}_{\ell,a}\rt) Z^{\delta}_{j,a}\rt] &=  \xi'\lt(\E\lt[N^{\delta}_{\ell,a} N^{\delta}_{j-1,a}\rt]\rt) -  \xi'\lt(\E\lt[N^{\delta}_{\ell-1,a} N^{\delta}_{j-1,a}\rt]\rt).
\end{align*}
To show this equals $0$ we must show 
\[
    \E\lt[N^{\delta}_{\ell,a} N^{\delta}_{j-1,a}\rt] =\E\lt[N^{\delta}_{\ell-1,a} N^{\delta}_{j-1,a}\rt].
\]
When $\ell=\ell^{\delta}_{q_i}$ for some $i\in [m]$ this follows from Corollary~\ref{cor:independent}. Assuming $\ell\neq\ell^{\delta}_{q_i}$ for all $i$, the difference between the left and right sides is
\[
    \mathbb E[u^{\delta}_{\ell-1}(X^{\delta}_{\ell-1,a})(Z^{\delta}_{\ell,a}-Z^{\delta}_{\ell-1,a})N^{\delta}_{j-1,a}].
\]
Since $N^{\delta}_{j-1,a}$ is $\mathcal F_{\ell-1}^{\delta}$ measurable and $\E[Z^{\delta}_{\ell,a}|\mathcal F_{\ell-1}^{\delta}]=Z^{\delta}_{\ell-1,a}$ holds by inductive hypothesis, we conclude the inductive step for Equation~\eqref{BM1.2}.

We continue to Equation~\eqref{BM2.2}. Using Equation~\eqref{BM1.2} just proven in the first step we get
\begin{align*}
\E\lt[\lt(Z^{\delta}_{\ell+1,a}-Z^{\delta}_{\ell,a}\rt)^2\rt] &=  \E\lt[\lt(Z^{\delta}_{\ell+1,a}\rt)^2-\lt(Z^{\delta}_{\ell,a}\rt)^2\rt] \\
&= \xi'\lt(\E[N^{\delta}_{\ell,a}N^{\delta}_{\ell,a}]\rt)-\xi'\lt(\E[N^{\delta}_{\ell-1,a}N^{\delta}_{\ell-1,a}]\rt)\\
&= \xi'\lt(q^{\delta}_{\ell+1}\rt)-\xi'\lt(q^{\delta}_{\ell}\rt)
\end{align*}

Next we show Equation~\eqref{BM4.2} continues to hold. If $\ell+1=\ell^{\delta}_{q_i}$ for some $i\in [m]$ again follows from Corollary~\ref{cor:independent}. When $\ell+1\neq\ell^{\delta}_{q_i}$ for all $i$, it follows from the definition of the sequence $N^{\delta}_{\ell,a}$ and the just proven fact that $(Z^{\delta}_{\ell,a})_{\ell\geq \ul+1}$ forms a martingale sequence. Finally we show Equation~\eqref{BM5.2} continues to hold inductively. Again for $\ell+1=\ell^{\delta}_{q_i}$ it follows from Corollary~\ref{cor:independent}, and otherwise by definition 
\[
    \mathbb E[(u^{\delta}_{\ell})^2]
    =
    \frac{\delta}{\xi'(q^{\delta}_{\ell})-\xi'(q^{\delta}_{\ell-1})}.
\]
Moreover what we showed before implies $\E[(Z^{\delta}_{\ell+1,a}-Z^{\delta}_{\ell,a})^2|\mathcal F^{\delta}_{\ell}]=\xi'(q^{\delta}_{\ell})-\xi'(q^{\delta}_{\ell-1})$. Applying these observations to the identity
\[
    \mathbb E[(N^{\delta}_{\ell+1,a}-N^{\delta}_{\ell})^2|\mathcal F_{\ell}]
    =
    (u^{\delta}_{\ell})^2\mathbb E\lt[(Z^{\delta}_{\ell+1,a}-Z^{\delta}_{\ell,a})^2\rt]
\]
implies Equation~\eqref{BM5.2} continues to hold. 

\paragraph{Equations~\eqref{BM2.52} and \eqref{BM5.52} }

Finally we consider \eqref{BM2.52} and \eqref{BM5.52}. For $\ell<\ell^{\delta}_{q_i}$ they follow directly from \eqref{BM2.2},~\eqref{BM5.2}. For $\ell=\ell^{\delta}_{q_i}$, \eqref{BM5.52} is trivial while \eqref{BM2.52} immediately follows from state evolution. For $\ell>\ell^{\delta}_{q_i}$, \eqref{BM5.52} follows from the inductive hypothesis and the computation
\[
    \mathbb E[
    (N^{\delta}_{\ell+1,1}-N^{\delta}_{\ell,1})
    (N^{\delta}_{\ell+1,2}-N^{\delta}_{\ell,2})
    |\mathcal F_{\ell}^{\delta}
    ]
    =
    (u^{\delta}_{\ell})^2
    \mathbb E\lt[
    (Z^{\delta}_{\ell+1,1}-Z^{\delta}_{\ell,1})
    (Z^{\delta}_{\ell+1,2}-Z^{\delta}_{\ell,2})
    |\mathcal F_{\ell}^{\delta}
    \rt]
    =0.
\]
Finally for $\ell>\ell^{\delta}_{q_i}$, \eqref{BM2.52} follows from the expansion
\begin{align*}
    \mathbb E[(Z^{\delta}_{\ell+1,1}-Z^{\delta}_{\ell,1})(Z^{\delta}_{\ell+1,2}-Z^{\delta}_{\ell,2})]
    &=
    \xi'\lt(\E[N^{\delta}_{\ell,1}N^{\delta}_{\ell,2}]\rt)
    -
    \xi'\lt(\E[N^{\delta}_{\ell-1,1}N^{\delta}_{\ell,2}]\rt)
    \\
    &\quad
    -
    \xi'\lt(\E[N^{\delta}_{\ell-1,1}N^{\delta}_{\ell,2}]\rt)
    +
    \xi'\lt(\E[N^{\delta}_{\ell-1,1}N^{\delta}_{\ell-1,2}]\rt)
\end{align*}
and the fact that all $4$ terms on the right hand side are equal thanks to \eqref{BM4.2},~\eqref{BM5.52}.
\end{proof}

\subsection{Diffusive Scaling Limit}

We begin with the following slight generalization of Lemma~\ref{lem:SDE} which allows for the additional perturbation steps of branching IAMP but still considers only a single sample path.

\begin{lemma}\label{lem:SDE1.5}

Fix $\ubq\in (\lbq,1)$ and an index $a$. There exists a coupling between the families of triples $\{(Z^{\delta}_{\ell,a},X^{\delta}_{\ell,a},N^{\delta}_{\ell,a})\}_{\ell\geq 0}$ and $\{(Z_t,X_t,N_t)\}_{t\geq 0}$ such that the following holds for a constant $C>0$. 
{\color{black}
For large enough $\ul$, and every $\ell\geq \ul$ with $q_{\ell}\leq \ubq$,
}
\begin{align}
    \max _{\ul \leq j \leq \ell} \mathbb{E}\lt[\lt(X_{j,a}^{\delta}-X_{q_{j}}\rt)^{2}\rt] 
    &\leq C 
    \delta,
    \\
    \max _{\ul \leq j \leq \ell} \mathbb{E}\lt[\lt(N_{j,a}^{\delta}-N_{q_{j}}\rt)^{2}\rt] 
    &\leq C 
    \delta.
\end{align}
\end{lemma}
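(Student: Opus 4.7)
The plan is to construct both families on a common probability space and perform a discrete-Gronwall analysis of the Euler--Maruyama scheme hidden in~\eqref{eq:IAMP}. I first define $(X_t,N_t,Z_t)$ from $h\sim\mathcal L_h$ and a Brownian motion $(B_t)_{t\in[0,1)}$ via the Parisi SDE. Lemma~\ref{lem:BMlimit2} states that $(Z^\delta_{\ell,a})_{\ell\geq \ul}$ is a centered Gaussian martingale with initial variance $\xi'(\lbq)$ and incremental variances $\xi'(q^\delta_{\ell+1})-\xi'(q^\delta_\ell)$; Gaussianity and the martingale property force independent Gaussian increments, so this joint law matches that of $(Z_{q^\delta_\ell})_{\ell\geq \ul}$, and I enforce $Z^\delta_{\ell,a}=Z_{q^\delta_\ell}$ pathwise. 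Lemma~\ref{lem:RSconverge} and Proposition~\ref{prop:init} give $X^\delta_{\ul,a}\ed X_{\lbq}$, so I further couple $X^\delta_{\ul,a}=X_{\lbq}$. The initial $N$-discrepancy is then $\varepsilon_0\,\partial_x\Phi_{\gamma_*}(\lbq,X_{\lbq})$ with $\E[(N^\delta_{\ul,a}-N_{\lbq})^2]\leq \varepsilon_0^2$; from $\delta=\xi'(\lbq(1+\varepsilon_0)^2)-\xi'(\lbq)=2\lbq\,\xi''(\lbq)\varepsilon_0(1+o(1))$ we see $\varepsilon_0^2=O(\delta^2)$, which is absorbed into the target $C\delta$.

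Next, writing $a_\ell=\E[(X^\delta_{\ell,a}-X_{q^\delta_\ell})^2]$, the stochastic contributions cancel by the coupling and
\[
  X_{q^\delta_{\ell+1}}-X^\delta_{\ell+1,a} = (X_{q^\delta_\ell}-X^\delta_{\ell,a}) + \int_{q^\delta_\ell}^{q^\delta_{\ell+1}}\bigl[v(s,X_s)-v(q^\delta_\ell,X^\delta_{\ell,a})\bigr]\,\de s.
\]
Proposition~\ref{prop:phireg} gives $v(t,x)=\xi''(t)\gamma_*(t)\partial_x\Phi_{\gamma_*}(t,x)$ uniformly Lipschitz in $x$ on $[0,\ubq]\times\mathbb R$, and combined with the a priori bound $\E[(X_s-X_{q^\delta_\ell})^2]=O(s-q^\delta_\ell)$ this yields a recursion $a_{\ell+1}\leq (1+C\delta)a_\ell+C\delta^2$; discrete Gronwall over $O(1/\delta)$ steps gives $a_\ell\leq C\delta$. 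For $N$, the error increment is $\Delta_\ell=\int_{q^\delta_\ell}^{q^\delta_{\ell+1}}\sqrt{\xi''(s)}\bigl[u_\ell^\delta(X^\delta_{\ell,a})-u(s,X_s)\bigr]\,\de B_s$. Decomposing the bracket into a normalization piece $u_\ell^\delta(X^\delta_{\ell,a})-u(q^\delta_\ell,X^\delta_{\ell,a})$, an $X$-versus-$X^\delta$ piece $u(q^\delta_\ell,X^\delta_{\ell,a})-u(q^\delta_\ell,X_{q^\delta_\ell})$, and a stochastic-integral discretization piece $u(q^\delta_\ell,X_{q^\delta_\ell})-u(s,X_s)$, and using Lemma~\ref{lem:identity2} together with Step 2 and Lemma~\ref{lem:continuous} to obtain $\E[u(q^\delta_\ell,X^\delta_\ell)^2]=1/\xi''(q^\delta_\ell)+O(\sqrt\delta)$, each piece yields $\E[\Delta_\ell^2]=O(\delta^2)$. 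The cross term $\E[(N^\delta_{\ell,a}-N_{q^\delta_\ell})\Delta_\ell]$ vanishes since $N^\delta_{\ell,a}-N_{q^\delta_\ell}$ is measurable with respect to the Brownian filtration at time $q^\delta_\ell$ while $\Delta_\ell$ is a stochastic integral over $[q^\delta_\ell,q^\delta_{\ell+1}]$, so squared errors add across $O(1/\delta)$ steps to give $\E[(N^\delta_{\ell,a}-N_{q^\delta_\ell})^2]\leq C\delta$.

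The principal obstacle is keeping Lipschitz and boundedness constants uniform in $\ell$. This is exactly what the restriction $q^\delta_\ell\leq \ubq<1$ enables: as $t\to 1$, $\Phi_{\gamma_*}(t,x)\to|x|$ and derivatives in $x$ diverge, but Proposition~\ref{prop:phireg} provides $L^\infty$ control on $\partial_x^j\Phi_{\gamma_*}$ on $[0,\ubq]\times\mathbb R$, which furnishes Lipschitz constants for $v$ and for $u=\partial_{xx}\Phi_{\gamma_*}$. A secondary subtlety is the coupling of the normalization factor $\E[u(q^\delta_\ell,X^\delta_\ell)^2]$ inside $u_\ell^\delta$ to $1/\xi''(q^\delta_\ell)$: bounding this deviation requires Step 2, so the argument must close the $X$-estimate before feeding it into the $N$-estimate, and the $O(\sqrt\delta)$ rate it produces is what dictates the $O(\delta)$ (rather than a better rate) in the final bound.
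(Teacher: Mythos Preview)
Your approach is correct and matches the paper's: couple $Z^\delta_{\ell,a}=Z_{q^\delta_\ell}$ via the matching Gaussian martingale increments, run a discrete Gronwall argument for the $X$-error using Lipschitz regularity of $v$ on $[0,\ubq]\times\mathbb R$, and then sum the orthogonal stochastic-integral error increments for $N$ using the three-term decomposition of $u^\delta_\ell(X^\delta_\ell)-u(s,X_s)$ together with the normalization bound $|(\Sigma^\delta_j)^2-1|\leq C\sqrt\delta$. The paper's presentation differs only cosmetically (it sums the pathwise increments before squaring in the $X$-step, and uses the cruder bound $\varepsilon_0^2\leq \delta/\lbq$ for the $N$-base case rather than your $\varepsilon_0^2=O(\delta^2)$).
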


\begin{proof}
We prove the scaling limits for $X_{\ell}^{\delta}$ and $N_{\ell}^{\delta}$ separately, inducting over $\ell$ in each proof. We suppress the index $a$ as it is irrevelant.

\paragraph{Scaling limit for $X^{\delta}_{\ell}$}

We begin by checking the claim for $\ell=\ul$. Recalling that $\int_0^{q_{\ul+1}}\sqrt{\xi''(t)}\de B_{t}=Z_{\ul+1}$, we have
\begin{align*}
\E\lt[\lt(X^{\delta}_{\ul} - X_{q}\rt)^2\rt] &= \E\lt[\lt(Z^{\delta}_{\ul} -\int_0^{q_{\ul+1}}\sqrt{\xi''(t)}\de B_{t}\rt)^2\rt]  \\
&\le  2\E\lt[\lt(Z^{\delta}_{\ul} - Z^{\delta}_{\ul+1}\rt)^2\rt] + 2\E\lt[\lt(\int_{q}^{q_{\ul+1}}\sqrt{\xi''(t)}\de B_{t}\rt)^2\rt] \\
&= 4(\xi'(q_{\ul+1})-\xi'(q))\\
&\leq C\delta.
\end{align*} 

We continue using a standard self-bounding argument. Let $\ell  \ge \ul+1 $ such that $q_{\ell}\le \ubq$. Define $\Delta^{X}_{\ell} = X^{\delta}_{\ell} - X_{q_{\ell}}$. 
Then
\begin{align*}
    \Delta^X_{\ell}  - \Delta^X_{\ell-1} 
    &= 
    \int_{q^{\delta}_{\ell-1}}^{q^{\delta}_{\ell}} \big(v(q^{\delta}_{\ell-1}; X^{\delta}_{\ell}) - v(t; X_{t})\big) \de t + Z^{\delta}_{\ell}  - Z^{\delta}_{\ell-1} - \int_{q^{\delta}_{\ell-1}}^{q^{\delta}_{\ell}} \sqrt{\xi''(s)}\de B_s
    \\
    &=  
    \int_{q^{\delta}_{\ell-1}}^{q^{\delta}_{\ell}} \big(v(q^{\delta}_{\ell-1}; X^{\delta}_{\ell}) - v(t; X_{t})\big) \de t\
    \\
    &=   
    \int_{q^{\delta}_{\ell-1}}^{q^{\delta}_{\ell}} \big(v(q^{\delta}_{\ell-1}; X^{\delta}_{\ell}) - v(q^{\delta}_{\ell-1}; X_{t})\big) \de t
    + 
    \int_{q^{\delta}_{\ell-1}}^{q^{\delta}_{\ell}} \big(v(q^{\delta}_{\ell-1}; X_{t}) - v(t; X_{t})\big) \de t.
\end{align*}
The first term just above is at most $C \int_{q^{\delta}_{j-1}}^{q^{\delta}_{j}}|X_j^{\delta} - X_t|\de t$ since $v$ is Lipschitz in space uniformly for $t\in [0,1]$. For the second term we estimate 
\begin{align*} 
    \sum_{k=\ul+1}^\ell &\int_{q^{\delta}_{k-1}}^{q^{\delta}_{k}} \big|v(q^{\delta}_{k-1}; X_{t}) - v(t; X_{t})\big| \de t 
    \\
    &\le 
    \sum_{k=\ul+1}^\ell \int_{q^{\delta}_{k-1}}^{q^{\delta}_{k}} \Big\{ \big|v(q^{\delta}_{k-1}; X_{t}) - v(t; X_{t})\big| +\big|v(t; X_{t})-v(q^{\delta}_k; X_{t})\big|\Big\} \de t
    \\
    &\le 
    \delta \sum_{k=\ul+1}^\ell \sup_{q^{\delta}_{k-1} \le t \le q^{\delta}_{k}}  \Big\{\big|v(q^{\delta}_{k-1}; X_{t}) - v(t; X_{t})\big| + \big|v(t; X_{t})-v(q^{\delta}_k; X_{t})\big|\Big\}
    \\
    &\le 
    \delta \sup_{t_1,\cdots,t_k} \sum_{k=\ul+1}^\ell  \Big\{\big|v(q^{\delta}_{k-1}; X_{t_k}) - v(t_k; X_{t_k})\big| + \big|v(t_k; X_{t_k})-v(q^{\delta}_k; X_{t_k})\big|\Big\}
    \\
     &\le 
     C \delta,
\end{align*}
where the last inequality follows from {\color{black}the strong total variation of $v$}. Combining the bounds and summing over $j$, we find
\[
    \big|\Delta^X_{\ell}\big| \le |\Delta_{\ul}^X|+\sum_{j=\ul+1}^{\ell} \big|\Delta^X_{j}  - \Delta^X_{j-1}\big| 
    \le 
    C  \sum_{j=\ul+1}^\ell \int_{q^{\delta}_{j-1}}^{q^{\delta}_{j}}  |X_j^{\delta} - X_t| \de t + 2C \delta.
\]
Squaring and taking expectations,
\begin{align*}
    \E\big[(\Delta^X_{\ell})^2\big] 
    &\le 2C^2 
    \E\Big(\sum_{j=\ul+1}^\ell \int_{q^{\delta}_{j-1}}^{q^{\delta}_{j}}  |X_j^{\delta} - X_t| \de t\Big)^2 + 10C^2 \delta^2
    \\
    &\le 
    2C^2 (\ell-\ul) \delta \sum_{j=\ul+1}^\ell \int_{q^{\delta}_{j-1}}^{q^{\delta}_j}  \E|X_j^{\delta} - X_t|^2 \de t + 10C^2 \delta^2.
\end{align*}
Furthermore, $\E|X_j^{\delta} - X_t|^2 \le 2 \E|X_j^{\delta} - X_{q^{\delta}_{j}}|^2+2 \E|X_{q^{\delta}_{j}} -X_t|^2$. It is clear that $\E|X_t -X_s|^2 \le C|t-s|$ for all $t,s$, as $\xi''$ is bounded on $[0,1]$. Therefore 
\[\E\big[(\Delta^X_{\ell})^2\big] \le 4C^2 (\ell-\ul) \delta^2 \sum_{j=\ul+1}^\ell \E\big[(\Delta^X_{j})^2\big] + 4C^3 (\ell-\ul) \delta  \sum_{j=\ul+1}^\ell\int_{q^{\delta}_{j-1}}^{q^{\delta}_j} \delta \de t + 10C^2 \delta^2.\]
The middle term is proportional to $(\ell-\ul)^2 \delta^3$. Using $(\ell-\ul) \delta \le 1$ we obtain that for $\delta$ smaller than an absolute constant, it holds that 
\[\E\big[(\Delta^X_{\ell})^2\big] \le C \delta \sum_{j=\ul+1}^{\ell-1} \E\big[(\Delta^X_{j})^2\big] + C\delta,\]
for a different absolute constant $C$. This implies $\E\big[(\Delta^X_{\ell})^2\big] \le C \delta$ as desired. 

\paragraph{Scaling limit for $N^{\delta}_{\ell}$}

Again we begin by checking that $\ell=\ul$. We compute:
\begin{align*}
    \E\big[\big(N^{\delta}_{\underline{\ell}} - N_{q}\big)^2\big] 
    &= 
    \E\big[\big((1+\eps_0)\partial_x\Phi_{\gamma_*}(q, X_{\underline{\ell}}) - \partial_x\Phi_{\gamma_*}(q, X_{q})\big)^2\big]  
    \\
    &\le  
    2\eps_0^2\E\big[\big(\partial_x\Phi_{\gamma_*}(q, X_{\underline{\ell}})\big)^2\big] 
    + 2\E\big[\big(\partial_x\Phi_{\gamma_*}(q, X_{\underline{\ell}}) - \partial_x\Phi_{\gamma_*}(q, X_{q})\big)^2\big] 
    \\
    &= 
    C\eps_0^2 + C\E\big[\big(X^{\delta}_{\underline{\ell}} - X_{q}\big)^2\big]
    \\
    &\le 
    C\delta.
\end{align*}
Here we have used again the inequality $(x-z)^2\leq 2(x-y)^2+2(y-z)^2$ and the fact that derivatives of $\Phi_{\gamma_*}$ are bounded, as well as $\eps_0^2\leq \delta/q$. At the end we use the bound on $\E\lt[\lt(X^{\delta}_{\underline{\ell}} - X_{q}\rt)^2\rt]$ shown in the previous part of this proof. Next we turn to $\ell\geq \ul+1$. We have
\begin{align*}
    (N^{\delta}_{j+1}-N_{q^{\delta}_{j+1}})-(N^{\delta}_{j}-N_{q^{\delta}_{j}})
    &=
    u_j^{\delta}(X^{\delta}_{j})(Z^{\delta}_{j+1}-Z^{\delta}_{j}) - \int_{q^{\delta}_j}^{q^{\delta}_{j+1}} \sqrt{\xi''(t)} u(t,X_t)\de B_t
    \\
    &=
    \int_{q^{\delta}_j}^{q^{\delta}_{j+1}} \sqrt{\xi''(t)} \lt(u_j^{\delta}(X^{\delta}_{j})-u(t,X_t)\rt)\de B_t.
\end{align*}
and so
\begin{align}
\label{eq:bound_martingale_disc}
    \E\lt[\lt(N^{\delta}_\ell - N_{q^{\delta}_{\ell}}\rt)^2\rt] 
    &\leq 
    2\cdot\E\lt[\lt(N^{\delta}_{\ul} - N_{q}\rt)^2\rt]+2\cdot \E\lt[\lt(\sum_{j=\ul}^{\ell-1}  \int_{q^{\delta}_j}^{q^{\delta}_{j+1}} \sqrt{\xi''(t)} \lt(u_j^{\delta}(X^{\delta}_{j})-u(t,X_t)\rt)\de B_t\rt)^2\rt]
    \\
\nonumber
    &\leq 
    2C\delta+2 \sum_{j=\ul}^{\ell-1}\int_{q^{\delta}_j}^{q^{\delta}_{j+1}}  \E\lt[\lt(u_j^{\delta}(X^{\delta}_{j}) -  u(t,X_t)\rt)^2 \rt]\, \xi''(t) \de t.
\end{align}
Recall that $u_j^{\delta}(x) = u(q^{\delta}_{j}; x)/\Sigma^{\delta}_j$ for $j \ge 1$ where $\Sigma^{\delta}_j$ is given by
\[
    (\Sigma^{\delta}_j)^2 = \frac{\xi'(q^{\delta}_{j+1}) - \xi'(q^{\delta}_{j})}{\delta} \E[u(q^{\delta}_{j}; X^{\delta}_{j})^2].
\]
We first show the bound
\begin{equation}
\label{eq:sigma_bound}
    \big|(\Sigma^{\delta}_j)^2 - 1\big| \le C\sqrt{\delta}
\end{equation}
for $\delta$ small enough, which is of independent interest. 
Since $u$ is bounded and $\xi'''$ is bounded on $[0,1]$, we have
\[
    \big|(\Sigma^{\delta}_j)^2 - \xi''(q^{\delta}_{j}) \E[u(q^{\delta}_{j}; X^{\delta}_{j})^2]\big| \le C\delta.
\] 
Observe now that
\[
    \mathbb E\lt[|X^{\delta}_{j}-X_{q^{\delta}_{j}}|\rt]
    \leq 
    \sqrt{\delta}+\frac{\mathbb E\lt[|X^{\delta}_{j}-X_{q^{\delta}_{j}}|^2\rt]}{\sqrt{\delta}}\leq C\sqrt{\delta}.
\]
Since $u$ is Lipschitz in space and bounded, this implies
\[
    \big|(\Sigma^{\delta}_j)^2 - \xi''(q^{\delta}_{j}) \E[u(q^{\delta}_{j}; X_{q^{\delta}_{j}})^2]\big| 
    \le
    C\sqrt{\delta}.
\] 
Since $\E[N_t^2]=t$ for all $t \in [0,1]$ and $t \mapsto u(t,X_t)$ is a.s.\ continuous, Lebesgue's differentiation theorem implies that for all $t\in[0,1]$,
\[
    \xi''(t) \E[u(t; X_{t})^2]  = 1,
\]
and hence $\big|(\Sigma^{\delta}_j)^2 - 1| \le C\sqrt{\delta}$ for $\delta$ smaller than some absolute constant.
This implies the bound $|u^{\delta}_j(X^{\delta}_j) - u(q^{\delta}_{j}; X^{\delta}_j)| \le C \big|\frac{1}{\Sigma^{\delta}_j} - 1\big| \le C \sqrt{\delta}$.
Now, going back to Eq.~\eqref{eq:bound_martingale_disc}, we have
\begin{align*}
    \E\lt[\lt(N^{\delta}_\ell - N_{q^{\delta}_{\ell}}\rt)^2\rt]  
    &\le 
    2\sum_{j=\ul}^{\ell-1}\int_{q^{\delta}_j}^{q^{\delta}_{j+1}}  \E\lt[\lt(u_j^{\delta}(X^{\delta}_{j}) -  u(q^{\delta}_{j}; X^{\delta}_j)\rt)^2\rt]\, \xi''(t) \de t
    \\
    &~~~+
    2\sum_{j=\ul}^{\ell-1}\int_{q^{\delta}_j}^{q^{\delta}_{j+1}}  \E\lt[\lt(u(q^{\delta}_{j}; X^{\delta}_j) -  u(t,X_t)\rt)^2\rt] \xi''(t) \de t
\end{align*}
From what we just established the first term is at most $C (\ell-\ul)\delta^{2} \le C \delta$. To estimate the second term we compute:
\begin{align*}
    \sum_{j=\ul}^{\ell-1}\int_{q^{\delta}_j}^{q^{\delta}_{j+1}} & \E\lt[\lt(u(q^{\delta}_{j}; X^{\delta}_j) -  u(t,X_t)\rt)^2\rt] \xi''(t) \de t 
    \\ 
    &~~~\le 
    C\sum_{j=\ul}^{\ell-1}\int_{q^{\delta}_j}^{q^{\delta}_{j+1}}\E\lt[\lt(u(q^{\delta}_{j}; X^{\delta}_j) -  u(q^{\delta}_{j},X_{q^{\delta}_{j}})\rt)^2\rt]  \de t
    \\
    &~~+
    C\sum_{j=\ul}^{\ell-1}\int_{q^{\delta}_j}^{q^{\delta}_{j+1}}\E\lt[\lt(u(q^{\delta}_{j}; X_{q^{\delta}_{j}}) -  u(q^{\delta}_{j},X_t)\rt)^2\rt] \de t
    \\
    &~~+
    C\sum_{j=\ul}^{\ell-1}\int_{q^{\delta}_j}^{q^{\delta}_{j+1}}\E\lt[\lt(u(q^{\delta}_{j}; X_t) -  u(t,X_t)\rt)^2\rt]   \de t
    \\
    &= I +II + III. 
\end{align*}
Since $u$ is Lipschitz in space, we obtain $I \le C(\ell-\ul)\delta^2$. From $\E[|X_t-X_s|^2] \le C|t-s|$, we obtain $II \le C(\ell-\ul)\delta^2$. Finally, since $u$ is Lipschitz in time uniformly in space and $(\ell-\ul)\delta \le 1$, it follows that $III \le C\delta$. Altogether we obtain
\[
    \E\lt[\lt(N^{\delta}_\ell - N_{q^{\delta}_{\ell}}\rt)^2\rt] \le C\delta
\]
concluding the proof.
\end{proof}

We now extend Lemmas~\ref{lem:SDE}, \ref{lem:SDE1.5} to describe the joint scaling limit of multiple branches, which become independent at the branching time. Let $(B_t^{a})_{t\in [0,1],a\in \{1,2\}}$ be standard Brownian motions with $B_t^1=B_t^2$ for $t\leq q_B$ and with independent increments after time $q_B$. Couple $B_t^a$ with $(Z_{\ell,a}^{\delta})_{\ell\geq 0}$ via 
\[
    Z_{j,a}^{\delta}=\int_0^{q_j^{\delta}} \sqrt{\xi''(t)}\de B_{t}^a
\] 
and natural filtration $(\mathcal F_t)_{t\in [0,1]}$ with $\mathcal F_t=\sigma\lt((B_s^1,B_s^2)_{s\leq t}\rt)$. We consider for $a\in\{1,2\}$ the SDE
\[
    dX_t^a=\gamma_*(t)\partial_x\Phi_{\gamma_*}(t,X_t^a)\de t+\sqrt{\xi''(t)}\de B_t^a
\]
with initial condition $X_0^a=0$, and define
\begin{align*}
    N_t^a
    &\equiv
    \partial_x\Phi_{\gamma_*}(q,X_{\lbq}^a)+\int_{\lbq}^t \sqrt{\xi''(s)}u(s,X_s^a)\de B_s^a=\partial_x\Phi_{\gamma_*}(t,X_t^a),
    \\ 
    Z_t^a
    &\equiv
    \int_0^t \sqrt{\xi''(s)}\de B_s^a.
\end{align*}

\begin{restatable}{lemma}{SDE2}
\label{lem:SDE2}
Fix $\ubq\in (\lbq,1)$. There exists a coupling between the families of triples $\{(Z^{\delta}_{\ell,a},X^{\delta}_{\ell,a},N^{\delta}_{\ell,a})\}_{\ell\geq 0,a\in\{1,2\}}$ and $\{(Z_t^a,X_t^a,N_t^a)\}_{t\geq 0,a\in\{1,2\}}$ such that the following holds. For some $\delta_0>0$ and constant $C>0$, for every $\delta\leq\delta_0$ and $\ell\geq \ul$ with $q_{\ell}\leq \ubq$ we have

\begin{align*}
    \max _{\ul \leq j \leq \ell} \mathbb{E}\lt[\lt(X_{j,a}^{\delta}-X^a_{q_{j}}\rt)^{2}\rt] 
    &\leq 
    C \delta,
    \\
    \max _{\ul \leq j \leq \ell} \mathbb{E}\lt[\lt(N_{j,a}^{\delta}-N^a_{q_{j}}\rt)^{2}\rt] 
    &\leq 
    C \delta.
\end{align*}

\end{restatable}

\begin{proof}
We generate the desired ``grand coupling" by starting with $(B_t^1,B_t^2)$ as above, generating $(Z_t^1,Z_t^2)$, and then setting $Z_{j,a}^{\delta}=Z_{q_j^{\delta}}^a$ for each $a\in\{1,2\}$ and $j\leq \ubl$ as in the coupling of Lemma~\ref{lem:SDE1.5}. It follows from Lemma~\ref{lem:BMlimit2} that this results in the correct law for $(Z_{j,a}^{\delta})_{j\in\mathbb N,a\in \{1,2\}}.$ Now, all $3$ continuous-time functions in the coupling of Lemma~\ref{lem:SDE1.5} are determined almost surely by $Z_t$. Furthermore all $3$ discrete-time functions are determined almost surely by the sequence $Z_j^{\delta}$. Therefore the coupling just constructed between $\{Z^{\delta}_{\ell,a}\}_{\ell\geq 0,a\in\{1,2\}}$ and $\{Z_t^a\}_{t\geq 0,a\in\{1,2\}}$ automatically extends to a coupling of $\{(Z^{\delta}_{\ell,a},X^{\delta}_{\ell,a},N^{\delta}_{\ell,a})\}_{\ell\geq 0,a\in\{1,2\}}$ and $\{(Z_t^a,X_t^a,N_t^a)\}_{t\geq 0,a\in\{1,2\}}$. Since the two $a$-marginals of the coupling just constructed both agree with that of Lemma~\ref{lem:SDE1.5}, the claimed approximation estimates carry over as well, concluding the proof.
\end{proof}

\subsection{The Energy Gain of Incremental AMP}


Here we prove Lemma~\ref{lem:iampenergy}, stated for the branching case.

\begin{lemma}
\label{lem:iampenergy2}
\begin{equation}
\label{eq:iampenergy2}
    \lim_{\ubq\to 1}
    \lim_{\ul\to\infty}
    \plim_{N \rightarrow \infty}
    \frac{H_{N}\lt(\bn^{\ubl,a}\rt)-H_{N}\lt(\bn^{\ul,a}\rt)}{N} 
    = 
    \int_{\lbq}^{1} 
    \xi^{\prime \prime}(t) 
    \mathbb{E}
    \lt[u\lt(t, X_{t}\rt)\rt] 
    \mathrm{d} t.
\end{equation}
\end{lemma}

\begin{proof}
We give the main part of the proof for the ordinary (non-branching) version of the algorithm and explain at the end why the same arguments apply in the branching case. Recall that $\delta=\delta(\ul)\to 0$ as $\ul\to\infty$, which we will implicitly use throughout the proof. Observe also that $\langle h,\bn^{\ubl}-\bn^{\ul}\rangle_N\simeq 0$ because the values $(N_{\ell}^{\delta})_{\ell\geq\ul}$ form a martingale sequence. Therefore it suffices to compute the in-probability limit of $\frac{\wt{H}_{N}\lt(\bn^{\ubl}\rt)-\wt{H}_{N}\lt(\bn^{\ul}\rt)}{N}$. The key is to write 
\[
    \frac{\wt{H}_{N}\lt(\bn^{\ubl}\rt)-\wt{H}_{N}\lt(\bn^{\ul}\rt)}{N}=\sum_{\ell=\ul}^{\ubl-1}\frac{\wt{H}_{N}\lt(\bn^{\ell+1}\rt)-\wt{H}_{N}\lt(\bn^{\ell}\rt)}{N}
\]
and use a Taylor series approximation of the summand. In particular for $F\in C^3(\mathbb R)$, applying Taylor's approximation theorem twice yields
\begin{align*}F(1)-F(0)&=aF'(0)+\frac{1}{2}F''(0)+O(\sup_{a\in [0,1]}|F'''(a)|)\\
&= F'(0)+\frac{1}{2}(F'(a)-F'(0))+O(\sup_{a\in [0,1]}|F'''(a)|)\\
&=\frac{1}{2}(F'(1)+F'(0))+O(\sup_{a\in [0,1]}|F'''(a)|) .\end{align*}

Assuming {\color{black}$\sup_{\ell}\frac{\|\bn^{\ell}\|}{\sqrt{N}}\leq 1+\eta$, which holds with high probability for any $\eta>0$ if $\ul$ is large enough}, we apply this estimate with $F(a)=\wt{H}_N\lt((1-a)\bn^{\ell}+a\bn^{\ell+1}\rt)$. {\color{black}Recalling \eqref{eq:operator-norm-def},} the result is:
\begin{align*}
    \lt|\wt{H}_{N}\lt(\bn^{\ell+1}\rt)-\wt{H}_{N}\lt(\bn^{\ell}\rt) -\frac{1}{2}\lt\la \nabla \wt{H}_N(\bn^{\ell})+\nabla \wt{H}_N(\bn^{\ell+1}),\bn^{\ell+1}-\bn^{\ell}\rt\ra  \rt|
    &
    \\
    \leq  
    O\lt(\sup_{\|\bv\|\leq (1+\eta)\sqrt{N}}\left\|\nabla^3 \wt{H}_N(\bv)\right\|_{\op} \rt)\|\bn^{\ell+1}-\bn^{\ell}\|^3
    &
    \,.
\end{align*}
Proposition~\ref{prop:lip} implies that 
\[
  \sup_{|\bv|\leq (1+\eta)\sqrt{N}}\left\|\nabla^3 \wt{H}_N(\bv)\right\|_{\op} \leq O(N^{-1/2}
\]
with high probability. On the other hand $\plim_{N\to\infty}\|\bn^{\ell+1}-\bn^{\ell}\|=\sqrt{\delta N}$ for each $\ul\leq \ell\leq \ubl-1$. Summing and recalling that $\ubl-\ul\leq \delta^{-1}$ yields the high-probability estimate
\begin{align*}
  \sum_{\ell=\ul}^{\ubl-1}
  &
  \lt|\wt{H}_{N}\lt(\bn^{\ell+1}\rt)-\wt{H}_{N}\lt(\bn^{\ell}\rt) -\frac{1}{2}\lt\la  \nabla \wt{H}_N(\bn^{\ell})+\nabla \wt{H}_N(\bn^{\ell+1}),\bn^{\ell+1}-\bn^{\ell}\rt\ra  \rt|
  \\ 
  &\leq 
  \sum_{\ell=\ul}^{\ubl-1} O\lt(\sup_{\|\bx\|\leq (1+\eta)\sqrt{N}}\left\|\nabla^3 \wt{H}_N(\bx)\right\|_{\op} \rt)\cdot \sup_{\ell}\|\bn^{\ell+1}-\bn^{\ell}\|^3
  \\
  &\leq O(N\sqrt{\delta}).
\end{align*}
Because $\ul\to\infty$ implies $\delta\to 0$ this term vanishes in the limit, and it remains to show
\[
  \lim_{\ubq\to 1}\lim_{\ul\to\infty}\plim_{N \rightarrow \infty} \sum_{\ell=\ul}^{\ubl-1}\frac{1}{2}\lt\la  \nabla \wt{H}_N(\bn^{\ell})+\nabla \wt{H}_N(\bn^{\ell+1}),\bn^{\ell+1}-\bn^{\ell}\rt\ra_N  
  =
  \int_{q}^{1} \xi^{\prime \prime}(t) \mathbb{E}\lt[u\lt(t, X_{t}\rt)\rt] \mathrm{d}t.
\]
Next, observe by \eqref{eq:general_amp} that:
\begin{equation}
\label{eq:amprearrange}
\nabla \wt{H}_N(\bn^{\ell})=\bz^{\ell+1}+ \sum_{j=0}^\ell d_{\ell, j} \bn^{j-1}.
\end{equation}
Passing to the limiting Gaussian process $(Z^{\delta}_k)_{k\in\mathbb Z^+}$ via state evolution, and ignoring for now the constant number of branching updates,
\begin{align*}
  \plim_{N\to\infty}
  \lt\la 
  \nabla \wt{H}_N(\bn^{\ell}),\bn^{\ell+1}-\bn^{\ell}
  \rt\ra_N
  &=
  \mathbb E\lt[ 
  Z^{\delta}_{\ell+1}(N^{\delta}_{\ell+1}-N^{\delta}_{\ell})
  \rt]
  +
  \sum_{j=0}^{\ell} d_{\ell,j}
  \mathbb E\lt[
  N^{\delta}_{j-1}(N^{\delta}_{\ell+1}-N^{\delta}_{\ell})
  \rt],
  \\
  \plim_{N\to\infty}\lt\la \nabla \wt{H}_N(\bn^{\ell+1}),\bn^{\ell+1}-\bn^{\ell}\rt\ra_N 
  &=
  \mathbb E\lt[ Z^{\delta}_{\ell+2}(N^{\delta}_{\ell+1}-N^{\delta}_{\ell})\rt]
  + \sum_{j=0}^{\ell+1} d_{\ell+1,j}\mathbb E\lt[N^{\delta}_{j-1}(N^{\delta}_{\ell+1}-N^{\delta}_{\ell})\rt].
\end{align*}

As $(N^{\delta}_k)_{k\geq \mathbb Z^+}$ is a martingale process, it follows that the right-hand expectations all vanish. Similarly it holds that
\begin{align*}
  \mathbb E[Z_{\ell+2}^{\delta}(N_{\ell+1}^{\delta}-N_{\ell}^{\delta})]&=\mathbb E[Z_{\ell+1}^{\delta}(N_{\ell+1}^{\delta}-N_{\ell}^{\delta})]
  \\
  \mathbb E[Z_{\ell}^{\delta}(N_{\ell+1}^{\delta}-N_{\ell}^{\delta})]&=0.
\end{align*}
Rewriting and using Lemma~\ref{lem:BMlimit2} in the last step,
\begin{align*}
  \plim_{N\to\infty}\frac{1}{2}\lt\la \nabla \wt{H}_N(\bn^{\ell})+\nabla \wt{H}_N(\bn^{\ell+1}),\bn^{\ell+1}-\bn^{\ell}\rt\ra_N &=\mathbb E[(Z_{\ell+1}^{\delta}-Z^{\delta}_{\ell})(N_{\ell+1}^{\delta}-N_{\ell}^{\delta})]
  \\
  &=
  \mathbb E[u_{\ell}^{\delta}(X^{\delta}_{\ell})(Z_{\ell+1}^{\delta}-Z^{\delta}_{\ell})^2]\\
  &=
  \mathbb E\lt[\mathbb E[u_{\ell}^{\delta}(X^{\delta}_{\ell})(Z_{\ell+1}^{\delta}-Z^{\delta}_{\ell})^2|\mathcal F^{\delta}_{\ell}]\rt]\\
  &=
  (\xi'(q^{\delta}_{\ell+1})-\xi'(q^{\delta}_{\ell}))\cdot \mathbb E[u_{\ell}^{\delta}(X^{\delta}_{\ell})]\\
  &=
  (\xi'(q^{\delta}_{\ell+1})-\xi'(q^{\delta}_{\ell}))\cdot \frac{\mathbb E[u_{q^{\delta}_{\ell}}(X^{\delta}_{\ell})]}{\Sigma_{\ell}^{\delta}}.
\end{align*}

Recalling \eqref{eq:sigma_bound}, the fact that $u_t(x)$ is uniformly Lipschitz in $x$ for $t\in [0,\ubq]$, the fact that $\xi'(q^{\delta}_{\ell+1})-\xi'(q^{\delta}_{\ell})=\delta\xi''(q^{\delta}_{\ell})+O(\delta^2)$ and the coupling of Lemma~\ref{lem:SDE1.5}, it follows that 
\[
  \plim_{N\to\infty}\frac{1}{2}\lt\la \nabla \wt{H}_N(\bn^{\ell})+\nabla \wt{H}_N(\bn^{\ell+1}),\bn^{\ell+1}-\bn^{\ell}\rt\ra_N=\delta \xi''(q_{\ell}^{\delta})\mathbb E[u_{q_{\ell}^{\delta}}(X_{q_{\ell}^{\delta}})]+O_{\ubq}(\delta^{3/2}).
\]
Summing over $\ell$ and using continuity of $u_t(x)$ in $t$, it follows that 
\[
  \lim_{\ul\to\infty}\plim_{N \rightarrow \infty} \frac{\wt{H}_N(\bn^{\ubl})-\wt{H}_N(\bn^{\ul})}{N}=\int_{\lbq}^{\ubq}\xi''(t)\mathbb E[u(t,X_t)]\de t.
\]
Sending $\ubq\to 1$ now concludes the proof when there are no branching steps. Extending the proof to cover branching steps is not difficult and we explain it now. Everything up to \eqref{eq:amprearrange} is still valid, and if the number $|Q|$ of branching steps is $m$, then the full analysis applies to all but $m$ terms. However the simple uniform bound
\begin{align*}
  \wt{H}_N(\bn^{\ell+1})-\wt{H}_N(\bn^{\ell})&\leq \|\bn^{\ell+1}-\bn^{\ell}\|\cdot \sup_{\|\bx\|_N\leq 1+\frac{\eta}{2}}\|\nabla \wt{H}_N(\bx)\|\\
  &\leq 
  O(\sqrt{N\delta})\cdot O(\sqrt{N})
  \\
  &\leq 
  O(N\sqrt{\delta}) 
\end{align*}
holds with high probability. Here we have used Proposition~\ref{prop:1lip} to deduce $\|\bn^{\ell}\|_N,\|\bn^{\ell+1}\|_N\leq 1+o(1)$ with high probability, and also Proposition~\ref{prop:lip} and Equation~\eqref{BM5.2}. Therefore all telescoping terms, branching or not, uniformly contribute $O(N\sqrt{\delta})$ energy in probability. As a result, even when a constant number of non-branching terms are replaced by branching terms, the same analysis applies up to error $O(N\sqrt{\delta})$, yielding the same asymptotic energy for branching $\lbq$-IAMP and completing the proof.
\end{proof}

\footnotesize
\bibliographystyle{alpha}
\bibliography{all-bib}

\end{document}